\newcommand{\bra}[1]{\langle {#1} |}
\newcommand{\ket}[1]{| {#1} \rangle}
\newtheorem{lemma}{Lemma}
\newtheorem{definition}{Definition}
\newtheorem{theorem}{Theorem}
\newtheorem{proposition}{Proposition}
\newtheorem{corollary}{Corollary}
\begin{document}
\title{{\Large Doctorate Dissertation}\footnote{This title page is translated from Japanese into English.}\\[1cm]
{\LARGE Entanglement and Causal Relation \\
in distributed quantum computation}\\[6.2cm]
{\large A Dissertation Submitted for Degree of Doctor of Science}\\
{\normalsize December 2015}\\[0.5cm]
{\large Department of Physics, Graduate School of Science,\\The University of Tokyo}\\}
\author{{\large Seiseki Akibue}}
\date{\hfill}                                          

\maketitle

\newpage
\thispagestyle{empty}
\ 
\newpage

\pagenumbering{roman}
\begin{center}
\section*{Abstract}
\end{center}
Distributed quantum computation (DQC) is information processing performed over multiple quantum systems connected by a quantum network.
DQC is one of the most promising candidates for realizing a scalable quantum computer.
Quantum communication over the quantum network is indispensable for implementing joint quantum operations over several systems, which is necessary for performing efficient quantum computation in DQC.
Since quantum communication is equivalent to entanglement and classical communication as a resource in DQC, we investigate two different aspects of entanglement and classical communication in DQC.

In the first part of this thesis, we study how to improve the performance of quantum computation over a given quantum network resource by analyzing entanglement resources represented by quantum networks.
To date, quantum networks have been used mainly for quantum communication, i.e. transmitting quantum states between different nodes of the quantum network.
For this purpose, {\it quantum network coding} aiming to improve the performance of quantum communication over a given quantum network has been recently developed.
In contrast, we analyze what kinds of computation can be implemented over a given quantum network resource by introducing a new concept, quantum network coding for quantum computation.
This is because computation can be regarded as a general operations including communication as its special case and it is expected to reduce communication resources in DQC by computing and communicating simultaneously.

We consider a setting of networks where quantum communication for each edge of a network is restricted to sending just one-qubit, but classical communication is unrestricted.
Specifically, we analyze which $k$-qubit unitary operations are implementable over a certain class of networks described by two-dimensional lattices, {\it cluster networks}, by investigating transformations of a given cluster network into quantum circuits. We also analyze which $k$-qubit unitary operations are {\it not} implementable over the cluster networks by using a property of a class of joint quantum operations called {\it separable operations} (SEP). We show that any two-qubit unitary operation is implementable over the {\it butterfly network} and the {\it grail network},  which are fundamental primitive networks for classical network coding. Finally, we analyze probabilistic implementations of unitary operations over cluster networks and obtained necessary and sufficient conditions for implementability.

In the second part of this thesis, we study the role of quantum communication in DQC in terms of entanglement and causal relation in classical communication. 
We investigate resources substituting entanglement and classical communication consumed in entanglement assisted {\it local operations and classical communication} (LOCC).
We start with analyzing the amount of the entanglement resource required for a specific DQC task known as local state discrimination.
The task is discriminating a state from a given set of orthonormal basis states by LOCC with help of entanglement.
We show that entanglement required for the discrimination task allowing only one-round classical communication can be substituted by less entanglement by increasing the rounds of classical communication. 

Then we develop a new framework to describe deterministic joint quantum operations in two-party DQC, by using a causal relation between the outputs and inputs of the local operations without predefined causal order but still within quantum mechanics, called {\it ``classical communication" without predefined causal order} (CC*).
We show that local operations and CC* (LOCC*) is equivalent to SEP, which cannot create entanglement from separable states.
This result indicates that entanglement assisted LOCC implementing SEP can be simulated by LOCC*, where no entanglement is needed.
By considering the correspondence between LOCC* and a probabilistic version of LOCC called stochastic LOCC (SLOCC), we show that LOCC* can be interpreted to enhance the success probability of probabilistic operations in SLOCC. We also investigate the relationship between LOCC* and another formalism for deterministic joint quantum operations without predefined partial order (the quantum process formalism) recently developed by Ognyan Oreshkov et al.
Finally we construct an example of non-LOCC SEP by using LOCC*.

\newpage
\chapter*{Acknowledgments}
First, my deepest appreciation goes to my supervisor Professor Mio Murao for all discussions, encouragement, checking a lot of documents and her continuous support through the past five years. Her thought-provoking, accurate and instructive guidance have permitted me to grow my research and find the directions of my research. 
I deeply appreciate Dr. Go Kato and Dr. Masaki Owari as collaborators of a research in this thesis and their kind supports for my next career.
I also appreciate Dr. Akihito Soeda as a collaborator of a research in this thesis and his suggestive comments.
Their advices made enormous contribution to this thesis.

I am deeply grateful to Dr. Peter Turner for all discussions and checking a lot of documents, especially a proposal for Bourses du gouvernement fran\c{c}ais. I would like to appreciate Dr. Damian Markham for giving me an opportunity to stay his group in T\'el\'ecom ParisTech and do a collaborative research for half a year.
I would like to acknowledge to the examiners of this thesis, Dr. Hosho Katsura, Dr. Yasuhiko Arakawa, Dr. Fran\c{c}ois Le Gall, Dr. Naoki Kawashima and Dr. Masato Koashi.
 I would like to thank Yuki Amano and Yumiko Wada for their administrative supports.
I am also indebted to Dr. Harumichi Nishimura, Dr. Barbara Kraus, Dr. Rod Van Meter, Dr. Hiroyasu Tajima, Dr. Takahiko Satoh, Dr. Kenta Takata, Dr. Giulio Chiribella, Dr. Ognyan Oreshkov, Dr. Fabio Costa, Dr. \v{C}aslav Brukner, Dr. Romain All\'eaume, Dr. Eleni Diamanti, Dr. Tom Lawson, Dr. Marc Kaplan, Dr. Alexei Grinbaum, Ms. Christina Giarmatzi, Mr. Amin Baumeler and Mr. Issam Ibnouhsein for their valuable discussions and kind supports. 
I want to thank my colleagues, Dr. Michal Hajdusek, Dr. Yoshifumi Nakata, Dr. Takanori Sugiyama, Dr. Shojun Nakayama, Dr. Eyuri Wakakuwa, Mr. Kotaro Kato, Mr. Jisho Miyazaki, Mr. Kosuke Nakago, Dr. Fabian Furrer, Mr. Atsushi Shimbo, Mr. Hayata Yamasaki, Mr. Ryosuke Sakai, Mr. Hao Qin, Mr. Leonardo Disilvestro, Mr. Adrien Marie, Mr. Thrasyvoulos Karydis and Mr. Adel Sohbi for our discussions and their advice on my research.

I thank all people who I met in Paris when I did the collaborative research as Bourses du gouvernement fran\c{c}ais. Finally, my especial thanks goes to my family, my parents, Makine and Yoshihiro, my wife, Saori, and my son, Nagisa, for their devoted supports and encouragement.

\newpage
\chapter*{Publications}
\noindent \textbf{Journal Articles}
\begin{itemize}
\item A. Soeda, S. Akibue and M. Murao, Two-party LOCC convertibility of quadripartite states and Kraus-Cirac number of two-qubit unitaries, J. Phys. A: Math. and Theo., \textbf{47}, 424036, (2014).
\item S. Akibue and M. Murao, Network coding for distributed quantum computation over cluster and butterfly networks, arXiv:1503.07740, (2015).
\end{itemize}
\noindent \textbf{Conference Talks}
\begin{itemize}
\item \underline{S. Akibue}, G. Kato, M. Owari and M. Murao,  Globalness of separable maps in terms of  classical temporal correlations and quantum spatial correlations, 14th AQIS, Kyoto, Japan, (2014).
\item S. Akibue and \underline{M. Murao}, Implementability of unitary operators over the cluster network with free classical communication, 15th AQIS, Seoul, Korea, (2015).
\item \underline{S. Akibue}, G. Kato, M. Owari and M. Murao,  Globalness of separable maps in terms of time and space resources, 11th QPL, Kyoto, Japan, (2014).
\item \underline{S. Akibue} and D. Markham,  Multipartite correlations with no causal order, New Horizon in Quantum Information Science, Kyoto, Japan, (2014).
\item \underline{S. Akibue} and M. Murao,  Implementability of two-qubit unitary operators over the ladder network with free classical communication, 1st ParQ, Edinburgh, UK, (2013).

\end{itemize}
\noindent \textbf{Conference Proceedings}
\begin{itemize}
\item S. Akibue and M. Murao, Implementability of two-qubit unitary operations over the butterfly network and the ladder network with free classical communication, AIP conference proceedings, 0094-243X ; \textbf{1633}, pp.141f., (2014).
\end{itemize}

\tableofcontents
\pagenumbering{arabic}
\thispagestyle{empty}
\setcounter{page}{-1}
\part{Introduction}
\chapter{Quantum information science}
Quantum information science is an emerging interdisciplinary field of science intersecting quantum physics, information theory and computer science.
In this chapter, we briefly review a historical overview of quantum information science.
A concept of {\it distributed quantum computation} is introduced, and its potential contribution to future information technology and foundations of quantum information science is presented.

\section{Overview of quantum information science}
Quantum mechanics is one of the most significant discovery in science in the twentieth century.
In the early twentieth century, many physicists explored a new theory of physics to capture phenomena that cannot be explained by classical physics such as Newtonian mechanics and electromagnetism. Erwin Schr$\ddot{\rm o}$dinger and Werner Karl Heisenberg have led early developments in formulating quantum theory in mid-1920s. In 1930s, a mathematically rigorous and pragmatic framework of quantum mechanics was established by John von Neumann and Paul Dirac, respectively \cite{vonNeumann, Dirac}.
Quantum mechanics has improved the precision of predictions of empirical results. Moreover, it has substantially changed our understanding of nature since its axioms and consequences are very different from classical physics.
Nowadays, many subfields of physics such as condensed matter physics, optical physics and particle physics, are based on quantum mechanics.
Quantum mechanics has changed not only our understanding of nature but also that of more abstract concepts of {\it information} and {\it computation}.

In information theory, the main interest is to understand how much information we can transmit through a given {\it communication channel}, or just referred as a {\it channel}, physically implemented by an optical fiber, a LAN cable and so on.
A foundation of information theory is built by a paper written by Claude Shannon in 1948 \cite{Shannon}.
He has developed a formalism to describe the amount of information irrespective of the meaning it conveys and the physical systems carrying information.
And he has shown that the amount of information coincides with the optimal compression rate of information under certain setting.
He also defined {\it channel capacity} as the amount of the optimal information transmission rate in a single time by using a channel.
An extension of information theory considering quantum mechanical effects, quantum Shannon theory, has been developed \cite{Schumacher1, Schumacher2} and is still extensively developing.

In computer science, especially in computational complexity theory, understanding properties of computationally difficult problems in principle is the main interest.
A ``computationally difficult problem" is a problem that can be solved by following the right procedure, but takes an extremely long time or a large memory space to solve. For example, as far as we know, factoring a given 1000-bit number is such a typical problem.
The computational difficulty of a problem is evaluated by the optimal time length (time complexity) or the optimal amount of the memory size (space complexity) to solve the problem by using the {\it Turing machine}, a calculation model invented by Alan Turing in 1936 \cite{Turing}.
The easy problem is defined as the problem that can be solved efficiently by Turing machine, i.e. the optimal time length and the optimal amount of the memory size for solving the problem is a polynomial in the size of the input of the problem.

Since the computational power of currently widely used silicon-based computers can be regarded as same as Turing machine, computationally difficult problems are intractable by the silicon-based computers.
Then is it really hard to solve a computationally difficult problem no matter how we contrive to solve it? If the nature obeys classical physics, the answer is yes.
Because classical physics can be simulated by the Turing machine efficiently\footnote{It takes hours proportional to $Vt$ for Turing machine to simulate the time evolution of a physical system governed by classical physics, where $V$ is the volume of the physical system and $t$ is the time of the time evolution.}.
However, the nature is governed by quantum mechanics in a microscopic scale. It also seems difficult for Turing machine to simulate quantum mechanics.
In contrast, there  is a possibility that quantum mechanics is efficiently simulatable by using a quantum system.
This implies that {\it quantum computer}, which uses the power of quantum mechanical effects, can be faster than classical computers such as silicon-based computers and Turing machine \cite{Feynman}.

Indeed, quantum algorithms that run faster than all the known classical algorithms are proposed by \cite{Shor, Deutsch3, Grover}.
A significant difference between quantum computation and classical computation is the basic unit of information.
The basic unit of information of quantum computation is a quantum bit, called a {\it qubit}, and that of classical computation is a bit. A bit is a two-valued quantity, and a qubit is a two-level quantum system e.g. a spin of an electron, polarization of a photon.
While a bit can take only two states, 0 or 1, the state representing a qubit can be any superposition of 0 and 1 due to quantum mechanics.

We have seen that quantum mechanics provides a new paradigm to information theory and computer science.
Aside from them, quantum mechanics also provides new cryptographic systems \cite{BB84, Crypto}, of which security is guaranteed by the law of physics in contrast to the commonly used cryptographic systems based on computational complexity.
Quantum information science is a field of science to study information processing that uses quantum mechanical effects.
A variety of new information processing schemes have been discovered and rapid progresses in technologies are realizing the schemes.
Quantum information science also provides a new operational perspective of quantum mechanics by using frameworks developed in information theory and computer science.

\section{Distributed Quantum Computation (DQC)}
Distributed computation is computation over a networked computation system in which spatially separated computers are connected by communication channels in order to jointly perform a common task. Cluster computation is an example of distributed computation.
In this thesis, we use ``distributed computation" in a broader sense where each separated computer is not necessary to jointly perform a common task but they can perform their own task by communicating with each other. In this sense, telecommunication and internet are also examples of distributed computation.

{\it Distributed quantum computation} (DQC) is an information processing over multiple separated quantum systems connected by mediating quantum systems (quantum channels), which aims not only quantum computation but also more general distributed tasks, e.g. running a quantum cryptographic protocol.
There are two reasons to assert that DQC will be an infrastructure of the future information society.

First, a practical quantum computer will be based on DQC.
Many different kinds of physical systems have been studied in order to figure out their suitability for implementing quantum computation, such as ion traps \cite{iontrap}, nuclear magnetic resonance \cite{NMR}, quantum dots \cite{quantumdot} and linear optics \cite{linearoptics}. Small quantum computers consisting of several qubits have been already implemented in some systems \cite{Vandersypen, Politi}, however none of these have achieved computation on a scale large enough for practical applications.
Under such circumstances, it is said that DQC is one of the most promising candidates for a scalable quantum computer \cite{Lloyd}.
Moreover, once a practical quantum computer is constructed, it might be cloud computing between a server and end users since quantum computers are highly expensive and the size of the computer will be large.
A secure protocol using quantum communication between the server and the end users is proposed by \cite{BQC}, which can be regarded as DQC over the server and the end users.
Second, quantum cryptography using quantum communication \cite{BB84, Crypto}, which can be regarded as DQC in our definition, is a promising technology for a secure society.
In fact, some quantum cryptography systems have been already commercialized.

In some DQC tasks such as quantum cryptography, quantum communication between the spatially separated quantum systems is an indispensable subroutine. Quantum communication is also an essential resource for DQC for computation to obtain an advantage of quantum mechanical effects since DQC without quantum communication, namely, DQC consisting of a constant size of separated quantum systems connected by just classical communication can be efficiently simulated by classical computers\footnote{It is shown that there exists an advantage of quantum computation using quantum states with marginal quantum entanglement \cite{DQC1-0, DQC1-1, DQC1-2}. However, their protocol allows performing a global unitary operation during computation. In contrast, performing global unitary operations are impossible in DQC without using a quantum channel.}. 
If we need to perform quantum communication by transmitting a quantum state through a quantum channel with small capacity, DQC has to be suspended until all the necessary communications are done, which causes a delay called a {\it bottleneck}.
Quantum communication in DQC is not only transmitting a quantum state from one sender to one receiver via a quantum channel but also transmitting quantum states from many senders to many receivers via a quantum network consisting of quantum channels.
When the scale and complexity of a quantum network grow, the collision of communication pathways between the multiple separated quantum systems causes a serious bottleneck problem, limiting the total performance of DQC.

Some bottleneck problems can be resolved when we can optimize quantum communication so that we reduce the frequency in using quantum channels, however, some bottleneck problems cannot be resolved no matter how we challenge to optimize the communication owing to the restriction originating from the laws of physics.
For a given quantum network, we can define the performance of quantum communication as the set of possible quantum communication within the laws of physics.
The performance of quantum communication may limit the total performance of DQC.
In the first part of this thesis, we are going to investigate the first question:
\begin{itemize}
\item How does the topology of a quantum network consisting of quantum channels affect the performance of quantum communication?
\end{itemize}

The performance of quantum communication through a quantum channel from one sender to one receiver has been extensively studied in quantum Shannon theory \cite{QShannon}.
Recently, the performance of quantum communication through quantum channels from many senders to one receiver, called a multiple-access quantum channel, has been also studied in \cite{MQShannon1, MQShannon2}.
They concentrate on how the performance of quantum communication changes when the capacity of quantum channels is changed while the topology of a quantum network consisting of quantum channels is simple and fixed.

How about the performance of quantum communication over a quantum network consisting of quantum channels between many senders, many receivers and intermediate nodes in addition to senders and receivers?
A crude idea to tackle this problem is just transmitting packets of compressed quantum states and routing the packets at the intermediate nodes like a mail delivery.
However, in \cite{Ahlswede}, it has been shown that processing the packets at the intermediate nodes called {\it network coding} improves communication performance comparing to routing the packets.
In \cite{NCcomp, Soeda}, it was shown that the idea of network coding can be used for computation as well as communication, and communication can be regarded as a special case of computation.

By using the technique of network coding, we analyze implementability of a unitary operation over a given quantum network.
In this thesis, we concentrate on how the performance of quantum communication (and generally, quantum computation) changes when the topology of a quantum network consisting of quantum bipartite channels is changed while the capacity of the quantum channels is fixed.
A unitary operation is an elementary operation in quantum computation and the special class of unitary operations called permutation operations corresponds to transmitting quantum states.
For simplicity, we consider quantum channels are noiseless and have 1-qubit capacity.
We consider a one-shot scenario, i.e. we are allowed to use a given network only once, and concentrate on a {\it cluster network}, which is a certain class of the network consisting of intermediate nodes and the same number of senders and receivers.
The cluster network is a subclass of {\it $k$-pair network}, which has been an actively studied network in both classical network coding and quantum network coding \cite{NCsum, Kobayashi0, Kobayashi1, Kobayashi2}.
Analyzing implementability of a unitary operation over the cluster network can reveal a potential of {\it measurement based quantum computation} (MBQC), an extensively studied model of quantum computation \cite{Raussendorf}.

In the second part of this thesis, Role of entanglement and causal relation in DQC, we are going to investigate the second question:
\begin{itemize}
\item What is the role of quantum communication in DQC?
\end{itemize}
As we mentioned before, quantum communication is indispensable for some DQC tasks and necessary for obtaining a quantum advantage in DQC for computation.
However, how quantum communication enhances DQC has not been fully understood yet\footnote{For example, it is not obvious whether quantum communication is sufficient for obtaining a quantum advantage in DQC for computation. This consideration is rooted in an open problem, whether quantum computation is strictly faster than classical computation.}.
Interpreting the role of quantum communication in a variety of perspectives enriches our understanding DQC and provides a guidance for constructing a new protocol of quantum information processing outperforming the classical counterpart.

Quantum communication can be implemented by quantum teleportation \cite{teleportation} by using {\it quantum entanglement} and classical communication. An entangled state can be shared by quantum communication and classical communication can be performed by quantum communication. Therefore, a pair of entanglement and classical communication is equivalent to quantum communication.
Thus, we investigate the role of quantum entanglement and classical communication in DQC by comparing to another resource substituting them.
Replacing one resource by another resource in a specific task is a commonly used method in quantum information science in order to understand a role of the resource as presented in the next section.
We summarize what we do in the second part of this thesis in Fig.~\ref{fig:result}.

\begin{figure}[htbp]
 \begin{center}
  \includegraphics[width=95mm]{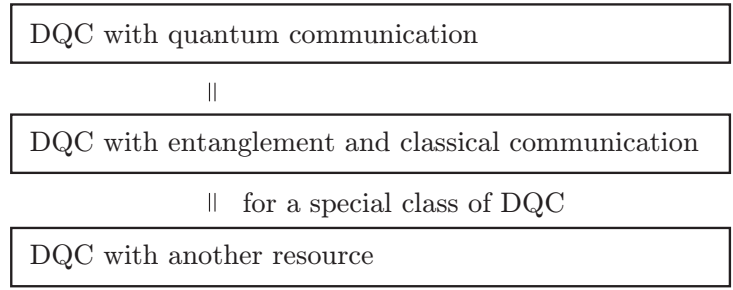}
 \end{center}
 \caption{{\bf An outline of the second part.} In the second part of this thesis, we investigate another resource substituting entanglement and classical communication (or quantum communication) for a task in a special class of DQC in order to understand a role of entanglement and classical communication in DQC.}
 \label{fig:result}
\end{figure}

\section{Quantum entanglement}
Quantum mechanics exhibits many counter-intuitive phenomena that cannot be described in classical physics.  One of such phenomena is the existence of  {\it nonlocal correlations} formulated by Bell and CHSH \cite{Bell}.   They have shown that a quantum mechanical state called an {\it entangled state} shared between spatially separated two parties can produce strong correlations that can never be achieved by any laws of physics based on local realism, e.g.~classical mechanics.   That is, entanglement has a power to enhance correlation in space.   However, entanglement shared between spacelike separated parties cannot be used for communication between the parties.   Indeed, Popescu and Rohrlich \cite{PRbox} have shown that the nonlocal correlations in quantum mechanics is strictly weaker than correlations imposed by the no-signaling condition based on the law of causality of special relativity.   For communication, two parties have to be  within a distance where light can travel, namely, {\it timelike} separated in both quantum and classical cases.

A power of entanglement concerning time also arises when it is accompanied by classical communication.   Quantum teleportation \cite{teleportation} is one of the examples.  Quantum teleportation is a protocol to transmit quantum information represented by an arbitrary and unknown quantum state from a party (sender) to another timelike separated party (receiver) by using shared entanglement and classical communication from the sender to the receiver.  As for transmitting quantum information, quantum teleportation achieves the same goal of direct quantum communication of quantum information, for example, directly sending a photon encoding quantum information through an optical fiber.   But there is an interesting extra property concerning a time-line of events in quantum teleportation.   Quantum communication between the parties (or quantum communication from a mediating third party to both two parties) is necessary in quantum teleportation to share a fixed entangled state between the sender and receiver,  but this event can be done ahead of time before the event that the sender decides what quantum information to send.  The time-limit of the event of the decision is determined  by the timing of classical communication.   In contrast in direct quantum communication, the event of decision should be before the event of quantum communication.  Thus we can slightly ``dodge'' the time-line of the event of quantum communication in transmitting quantum information by using entanglement and classical communication. 
This observation is summarized in Fig.~\ref{fig:timeline}.

\begin{figure}[htbp]
 \begin{center}
  \includegraphics[width=150mm]{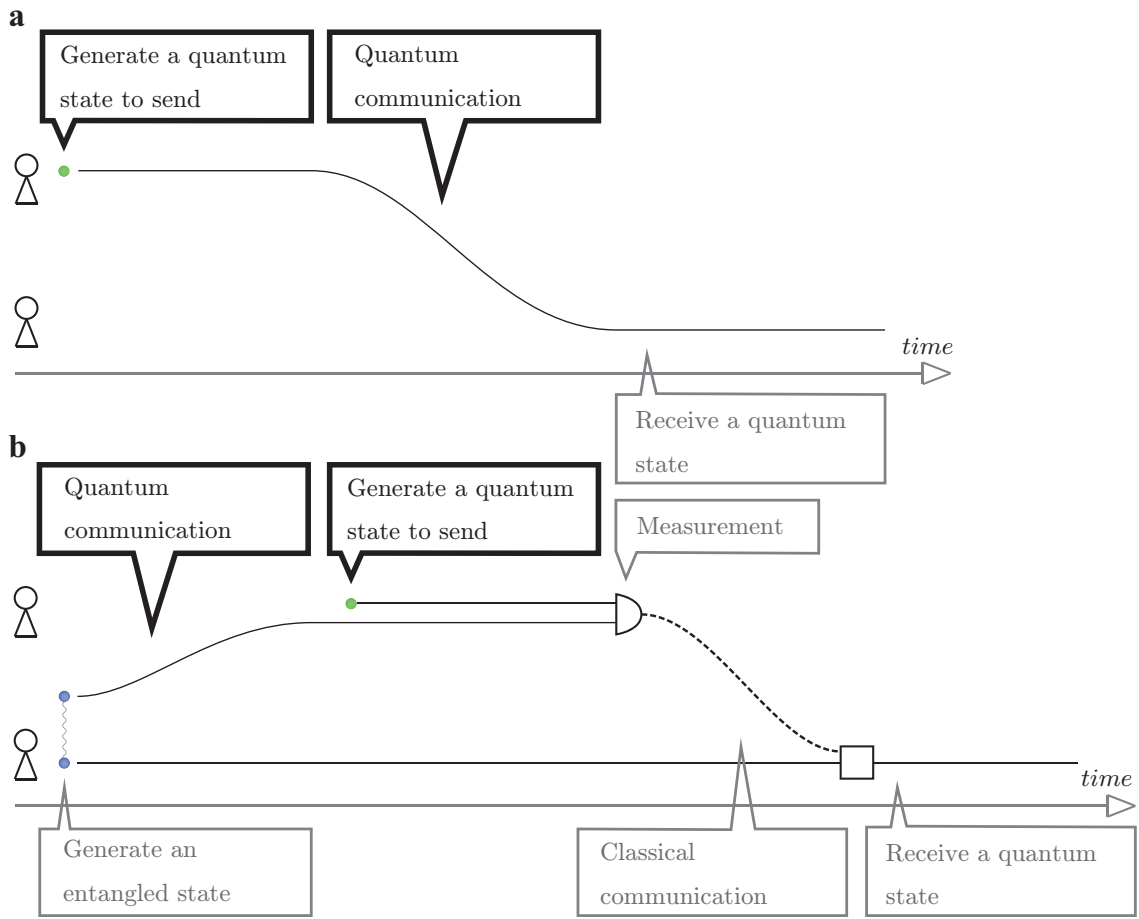}
 \end{center}
 \caption{ {\bf A time-line of events.} 
 ({\bf a}) A time-line of events in direct quantum communication. The event of decision of what quantum information to send should be before the event of quantum communication.
 ({\bf b}) A time-line of events in quantum teleportation. The event of decision can be after the event of quantum communication but should be before the event of classical communication. The detail protocol is shown in Fig.~\ref{fig:Tcircuit}.
}
 \label{fig:timeline}
\end{figure}

These two examples reveal aspects of the power of entanglement concerning space and time.
The power of entanglement can be also understood by analyzing how much entanglement is consumed in a specific information processing task.
For example, entanglement is indispensable for some information processing tasks, such as a specific type of quantum cryptography \cite{Crypto}, quantum teleportation \cite{teleportation} and super dense coding \cite{superdense}. It is considered to be necessary for giving quantum advantage in computation \cite{Vidal, Jozsa} and enhances performances in several information processing tasks, such as classical communication \cite{EAcomm} and communication complexity tasks \cite{CommCmplxty1, CommCmplxty2}.
Analyzing the cost for replacing entanglement by another resource in a specific information processing task \cite{EAcomm, CommCmplxty2} is another way to understand the power of entanglement.

\section{Space and time}
Spacetime is considered as the background of our description of nature in classical physics and standard quantum field theory, and is considered as dynamically interacting with energy in general relativity.
The difference in the treatment of spacetime is one of the reasons why the unification of quantum mechanics and general relativity into quantum gravity is difficult \cite{Kiefer}.
Space and time are also important notions in computer science. Time complexity (necessary time to compute a problem), space complexity (necessary amount of the memory size to compute a problem) and a tradeoff between them have been extensively studied \cite{Cobham, Nisan}.

DQC can be considered as a joint quantum operation of several parties implemented by connecting each party's quantum operation on a separated system well localized in a spacetime coordinate by using given resources, such as quantum communication, entanglement and classical communication.
For simplicity, we consider DQC between two parties, however, a generalization into several parties is straightforward.
We regard each local operation at a spacetime coordinate belonging to one of the two parties.  
Quantum communication and classical communication can connect two local operations at timelike separated spacetime coordinates linking the output and input of the two operations.
Special relativistic causality introduces a partial order between all the local operations by their spacetime coordinates.
We consider local operations are totally ordered since the partial order can be always extended to a total order.
Entanglement can be shared between any two spacetime coordinates if we assume that entanglement was generated at a spacetime coordinate far in the past, and it has been distributed from that spacetime coordinate.
Since quantum communication can be replaced by quantum teleportation using entanglement assisted classical communication, local operations can be connected by using entanglement and classical communication.
Some of entanglement is shared between different parties and some of entanglement is shared within each party.
We consider a resource substituting entanglement shared between different parties and classical communication between them to understand the role of {\it global} resources.
Examples of a joint quantum operation in DQC is given in Fig.~\ref{fig:DQC}.

\begin{figure}[htbp]
 \begin{center}
  \includegraphics[width=150mm]{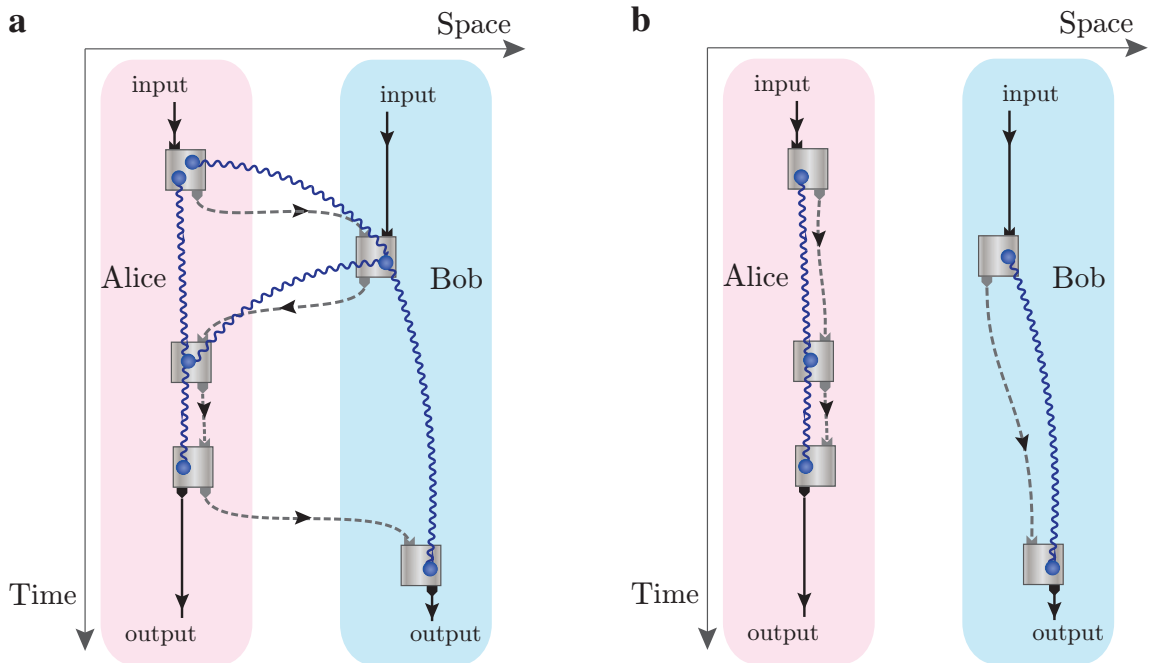}
 \end{center}
 \caption{{\bf Joint quantum operations in DQC.} Dotted arrows represent classical communication between local operations and circles connected by wavy lines represent entangled states between local operations. A box represents a local operation. The first party and the second party is named Alice and Bob, respectively. Entanglement shared between different parties and classical communication between them can be regarded as global resources.
({\bf a}) A joint quantum operation in DQC with global resources, . 
({\bf b}) A joint quantum operation in DQC without global resources. 
 We investigate a resource substituting global resources consumed in a joint quantum operation.}
 \label{fig:DQC}
\end{figure}

The first idea for the alternative resource substituting entanglement and classical communication in DQC is {\it less} entanglement but {\it more} classical communication.
If we are allowed to use more classical communication, it might be possible to reduce the entanglement usage.
We analyze the cost for substituting entanglement by classical communication in a certain DQC task, {\it local state discrimination}, which has been extensively studied in quantum information science \cite{Walgate, 9state, DiVincezo, JNiset}.
Entanglement can be considered as spatial resource since it can generate spatial quantum correlation. On the other hand, the rounds of classical communication between two parties can be considered as temporal resource.
We show that increasing the temporal resource enables decreasing the spatial resource, equivalently, increasing the spatial resource enables decreasing the temporal resource, which can be interpreted as a power of entanglement for parallelizing information processing. 
Analyzing the cost for substituting entanglement by classical communication in other tasks are also investigated in \cite{EAcomm, CommCmplxty2}.

The set of quantum operations in DQC implementable without entanglement between the parties but with classical communication between them is equivalent to a class of quantum operations called {\it local operations and classical communication} (LOCC) \cite{VVedral, MBPlenio, Horodecki, CLMOW12}, which is widely used in quantum information for investigating entanglement and nonlocal properties. 
Thus, if an operation implementable in DQC is an element of LOCC, entanglement consumed in DQC can be replaced by classical communication by definition. That is, we can substitute a global {\it quantum} resource, entanglement, by a completely {\it classical} resource, classical communication.
Can we substitute entanglement by a completely `classical' resource in DQC when we want to implement a quantum operation which is not an element of LOCC?
To study this question, we should clarify implicit assumptions made for classical communication and find out a way for generalization.

Classical communication can connect timelike separated local operations, which implies that special relativistic causality restricts the performance of classical communication, and the total performance of DQC.
However, the performance of classical communication and the total performance of DQC might be changed in the general relativistic spacetime, which allows {\it closed timelike curves} (CTC) as a solution of Einstein's field equations of gravitation, where we cannot define a partial order between spacetime coordinates \cite{CTC1}.
In 1957, Richard Feynman gave the argument as follows: how can we analyze the gravitational field provided by a ball whose position is put into a quantum mechanical superposition? \cite{Feynman2}
Such an argument even implies the possibility of realizing a quantum mechanical superposition of spacetime structures if we crudely combine a consequence of quantum mechanics and general relativity.

Since the existence of the partial order of local operations in the spacetime may not be a fundamental requirement of nature, alternative representations of communication that are not based on the assumption of (the existence of) the partial order have been proposed \cite{OFC, Hardy}.
And it has been shown that there is a possibility of performing joint quantum operations without fixing the partial order of local operations \cite{OFC, Chiribella1, Chiribella2} in the framework of quantum mechanics.
The partial order of the spacetime coordinates are referred to as {\it causal order} in \cite{OFC, Chiribella1, Chiribella2}.
Note that they have not tried to analyze quantum mechanics in a curved spacetime \cite{QFTcurve} in this framework, but tried to construct an purely operational formalism without an assumption of the causal order introduced by special relativity but still consistent within standard quantum mechanics. 
Such a challenge would reveal a potential of quantum mechanics and deepen our conceptual understanding of quantum mechanics and causality.
Furthermore, such {\it causally neutral} frameworks give a new insight into existing quantum information processing which have been described by the operational formalism implicitly respecting special relativistic causality.

In this thesis, we extend classical communication in such a way that we generalize classical communication into a causal relation between the classical outputs and classical inputs of the local operations, which we call {\it ``classical communication" without predefined causal order} denoted by CC*.
We name a new class of deterministic quantum operations in DQC with CC* but still within quantum mechanics by LOCC*.    We show that LOCC* is equivalent to a certain class of deterministic quantum operations in DQC with entanglement and classical communication known as {\it separable operations} denoted by SEP \cite{VGheorghiu}, which has been introduced for mathematical simplicity to analyze nonlocal quantum tasks in place of LOCC.
Note that there exist elements in SEP that are {\it not} implementable by LOCC \cite{9state, JNiset, DiVincezo, EChitambar}.
That is, if a quantum operation in DQC is an element of SEP but not an element of LOCC, the alternative classical resource is described by CC*. 
 Conventionally two assumptions are put on local operations: (a) they are partially ordered and (b) the choice of a local operation does not depend on resources connecting the local operation. However, when we substitute entanglement consumed in LOCC implementing an element in SEP by a completely `classical' resource, CC* is needed and the two assumptions of local operations have to be relaxed.
Our perspective of understanding the power of entanglement also gives a new characterization of the gap between SEP and LOCC, which has not been well understood \cite{CLMOW12}.

By considering the correspondence between LOCC* and a probabilistic version of LOCC called stochastic LOCC (SLOCC), we analyze the power of LOCC* in terms of enhancing the success probability of probabilistic operations in SLOCC.  We also investigate the relationship between LOCC* and the quantum process formalism for joint quantum operations without partial order developed in \cite{OFC}.
Furthermore, we give an example of the quantum operation which is an element of SEP but not an element of LOCC. Entanglement and classical communication within special relativistic spacetime are necessary to perform such an operations, but by using CC*, entanglement is not necessary.

\section{Organization of this thesis}
This thesis is composed of four parts.

In Part I, we review fundamental formulations of quantum mechanics and quantum information theory.
In Section \ref{sec:QI}, we review the most general physical process described by a  quantum instrument and two mathematical ways to represent the quantum instrument: the Kraus representation and the Choi-Jamiolkowski representation, which are extensively used in this thesis.
In Section \ref{sec:QCM}, we review a model of quantum computation called the circuit model, and its graphical representation.
In Section \ref{sec:LOCC}, we introduce a class of joint quantum operations called local operations and classical communication (LOCC) and related classes of LOCC called stochastic LOCC (SLOCC) and  separable operations (SEP).

In Part II, we study how to improve the performance of quantum computation over a given quantum network resource by analyzing entanglement resources represented by quantum networks.
In Chapter \ref{chap:pre}, we review network coding theory, a model of quantum computation called measurement based quantum computation (MBQC) and classifications of unitary operations in terms of the Kraus-Cirac decomposition and the operator Schmidt decomposition.
In Chapter \ref{chap:QCN}, we give a definition of a $(k,N)$-cluster network and analyze implementability of a $k$-qubit unitary operation over the cluster network in both a deterministic scenario and a probabilistic scenario. We also analyze implementability of a two-qubit unitary operation over the butterfly, grail and square networks.

In Part III, we study the role of quantum communication in DQC in terms of entanglement and causal relation in classical communication.
We investigate resources substituting entanglement and classical communication consumed in entanglement assisted LOCC.
In Chapter \ref{chap:spacetime}, we analyze the amount of the entanglement resource required for a specific DQC task known as local state discrimination.
In Chapter \ref{chap:LOCC*}, we develop a new framework to describe deterministic joint quantum operations in two-party DQC by using ``classical communication" without predefined causal order denoted by CC*.
We show that LOCC* is equivalent to SEP.
We also investigate the relationship between LOCC*, SLOCC and quantum processes.
By using LOCC*, we give an element that resides in the gap between SEP and LOCC.

In Part IV, we summarize the results and outlooks.

\chapter{Preliminaries}
\section{Notation}
The following notation will be used throughout this thesis.
\begin{table}[htb]
\begin{tabular}{ll}
$\overline{a}$&
the complex conjugate of $a$.\\
$a^T$&
the transpose of $a$. The transpose is basis dependent.\\
$a^{\dag}$&
the conjugate transpose of $a$.\\
$\mathcal{H}$&
a finite dimensional Hilbert space.\\
$\mathbf{L}(\mathcal{H})$&
the set of linear operators. \\
$\mathbb{I}_A$&
the identity operator on $\mathcal{H}_A$. \\
$\mathbf{U}(\mathcal{H})$&
the set of unitary operators. $\mathbf{U}(\mathcal{H})=\{M\in\mathbf{L}(\mathcal{H})|M^{\dag}M=\mathbb{I}\}.$\\
$\mathbf{U}_c$&
the set of unitary operators locally unitarily equivalent to\\& a two-qubit controlled phase operator.\\
$\mathbf{Pos}(\mathcal{H})$&
the set of positive semi-definite operators. $\mathbf{Pos}(\mathcal{H})=\{M\in\mathbf{L}(\mathcal{H})|M\geq 0\}.$\\
$\rm{tr}$&
the trace of a linear operator.\\
$\rm{det}$&
the determinant of a linear operator.\\
$\mathbf{D}(\mathcal{H})$&
the set of density operators. $\mathbf{D}(\mathcal{H})=\{M\in\mathbf{Pos}(\mathcal{H})|{\rm tr}[M]=1\}.$\\
$\mathbf{L}(\mathcal{H}_A:\mathcal{H}_B)$&
the set of linear operators. $\mathbf{L}:\mathcal{H}_A\rightarrow\mathcal{H}_B$.\\
$\mathbf{C}(\mathcal{H}_A:\mathcal{H}_B)$&
the set of linear CPTP maps. $\mathbf{C}:\mathbf{L}(\mathcal{H}_A)\rightarrow\mathbf{L}(\mathcal{H}_B)$\\
$\mathbf{U}(\mathcal{H}_A:\mathcal{H}_B)$&
the set of isometry operators. $\mathbf{U}(\mathcal{H}_A:\mathcal{H}_B)=\{M\in\mathbf{L}(\mathcal{H}_A:\mathcal{H}_B)|M^{\dag}M=\mathbb{I}_A\}.$\\
${\rm Sch}\#_B^A(\ket{\psi})$&
the Schmidt rank of $\ket{\psi}\in\mathcal{H}_A\otimes\mathcal{H}_B$.\\
${\rm Op}\#_B^A(M)$&
the operator Schmidt rank of $M\in\mathbf{L}(\mathcal{H}_A\otimes\mathcal{H}_B)$.\\
${\rm KC}\#(U)$&
the Kraus-Cirac number of a two qubit unitary operator $U\in\mathbf{U}(\mathbb{C}^2\otimes\mathbb{C}^2)$.\\

\end{tabular}
\end{table}

\section{Quantum information theory}
In this section, we review fundamental formulations of quantum mechanics and quantum information theory.

\subsection{Quantum mechanics}
\begin{screen}
\textbf{Postulate 1}\,\,\,
A state of a physical system can be described by a vector in a {\it Hilbert space}.
\end{screen}
A Hilbert space is a complex inner product space and is also a complete metric space with respect to the distance function induced by the inner product. An example of Hilbert space is $\mathbb{C}^n$ with respect to the inner product function $(x,y)=\sum_{i=1}^n \overline{x_i}y_i$, where $\overline{a}$ represents the complex conjugate of $a$.
In this thesis, we only consider the cases with a finite $d$-dimensional Hilbert space denoted by $\mathbb{C}^d$ for the reason quoted by Giulio Chiribella et al. \cite{Chiribella3} as follows.
\begin{quotation}
{\it "Another contribution of quantum information has been to shift the emphasis to finite dimensional systems, which allow for a simpler treatment but still possess all the remarkable quantum features. In a sense, the study of finite dimensional systems allows one to decouple the conceptual difficulties in our understanding of quantum theory from the technical difficulties of infinite dimensional systems."}
\end{quotation}
Consider a system $A$ described by the Hilbert space $\mathcal{H}_A$.
We denote a vector in the Hilbert space $\mathcal{H}_A$ by $\ket{\psi}$. We often denote the state by
\begin{equation}
\ket{\psi}_{\mathcal{H}_A}\,\,{\rm or}\,\, \ket{\psi}_A
\end{equation}
in order to specify the Hilbert space or the system that the state belongs to.
We call a two-dimensional Hilbert space a {\it qubit}, which is analogous to a classical bit.
Although the degree of freedom for a qubit is lager than that of a bit, the amount of encodable classical information in a qubit is as same as that to a bit  \cite{Holevo}.

\begin{screen}
\textbf{Postulate 2}\,\,\,
The Hilbert space of a composite physical system consisting of distinct physical systems is given by the tensor product of the Hilbert spaces of the component physical systems.
\end{screen}
Consider two systems $A$ and $B$ with respective their Hilbert spaces $\mathcal{H}_A$ and $\mathcal{H}_B$. If the system $A$ is in state $\ket{\psi}_A$ and the system $B$ in state $\ket{\phi}_B$, the state of total system is given by
\begin{equation}
\label{eq:productstate}
\ket{\psi}_A\otimes\ket{\phi}_B.
\end{equation}
We often use the abbreviated notation $\ket{\psi}_A\ket{\phi}_B$, $\ket{\psi \phi}_{AB}$ or $\ket{\psi,\phi}_{AB}$.
If the state of a composite system can be described in the form of Eq.\eqref{eq:productstate}, the state is called a {\it product state} or a {\it separable state}.
If for any $\ket{\psi}_A\in\mathcal{H}_A$ and $\ket{\phi}_B\in\mathcal{H}_B$ we have
\begin{equation}
\ket{\Psi}_{AB}\neq\ket{\psi}_A\ket{\phi}_B,
\end{equation}
then the state $\ket{\Psi}_{AB}\in\mathcal{H}_A\otimes\mathcal{H}_B$ is called an {\it entangled state}.
For example, an entangled state,
\begin{equation}
\ket{\Phi^+}_{AB}=\frac{1}{\sqrt{2}}(\ket{00}_{AB}+\ket{11}_{AB})
\end{equation}
is called an \textit{EPR state} named after Einstein, Podolsky and Rosen \cite{EPR}, where $\{\ket{0},\ket{1}\}$ is the computational basis.

\begin{screen}
\textbf{Postulate 3}\,\,\,
Any time evolution of a state of a closed quantum system is described by a {\it unitary operator} that corresponds to the time evolution operator of the Hamiltonian of the system.
\end{screen}
The state $\ket{\psi}\in\mathcal{H}$ of the system at time $t_1$ is related to the state $\ket{\psi'}\in\mathcal{H}$ of the system at time $t_2$ by a unitary operator $U\in\mathbf{U}(\mathcal{H})$,
\begin{equation}
\ket{\psi'}=U\ket{\psi}.
\end{equation}
When we control the time evolution of a quantum system that is described by a unitary operator $U$, we regard that a unitary operation $U$ is performed or implemented.

\begin{screen}
\textbf{Postulate 4}\,\,\,
Quantum measurements on a system with Hilbert space $\mathcal{H}$ are described by {\it Positive Operator-Valued Measure} (POVM), a set of positive semi-definite operators $\{M_m\in\mathbf{Pos}(\mathcal{H})\}_{m\in\Omega}$ satisfying the {\it completeness equation},
\begin{equation}
\sum_{m\in\Omega} M_m=\mathbb{I},
\label{eq:measurementcomplete}
\end{equation}
where $m$ corresponds to an index of the possible measurement outcomes. When we measure a state $\ket{\psi}\in\mathcal{H}$ by $\{M_m\}_{m\in\Omega}$, we obtain outcome $m$ with probability
\begin{equation}
p(m)=\bra{\psi}M_m\ket{\psi}.
\end{equation}
\end{screen}
When we perform a quantum measurement described by a POVM whose elements are projection operators, i.e. $M_m^2=M_m$ for all $m$, the measurement is said to be a {\it projective measurement} $\{M_m\}_{m\in\Omega}$.

If we do not have complete knowledge about a state of a quantum system but know a set of possible states $\{\ket{\psi_i}\}_i$ and their probabilities $\{p(i)\}_i$, or an {\it ensemble of states} $\{p(i), \ket{\psi_i}\}_i$,  such a state is described by using a {\it density operator} given by
\begin{equation}
\rho=\sum_i p(i)\ket{\psi_i}\bra{\psi_i}.
\end{equation}
A density operator is an operator on a Hilbert space that is non-negative and has trace equal to one. 
Ensembles of states and density operators are not in one-to-one correspondence. We cannot distinguish states representing different ensembles which are described by the same density operator. So the density operator determines the state of a physical system.
The state is called to be {\it pure} if and only if the state can be represented by a vector in the Hilbert space, that is, the rank of the density operator is one, otherwise the state is said to be {\it mixed}.
The state of a subsystem of a composite quantum system  is provided by the {\it reduced density operator}. Suppose we have physical systems $A$ and $B$, whose composite state is described by a density operator $\rho_{AB}$. The reduced density operator for system $A$ is given by
\begin{equation}
\rho_A={\rm tr}_B[\rho_{AB}]:=\sum_i \bra{i}_B\rho_{AB}\ket{i}_B,
\end{equation}
where ${\rm tr}_B$ is a map between operators known as the partial trace over system $B$, and $\{\ket{i}_B\}_i$ is an orthonormal basis of the Hilbert space of system B. Note that the trace operation does not depend on the orthonormal basis.
The definition of the entangled state is generalized for mixed states as follows.
If the state of a composite system $\rho_{AB}$ is described by
\begin{equation}
\rho_{AB}=\sum_i p(i)\rho_A^{(i)}\otimes\rho_B^{(i)},
\end{equation}
where $\rho_A^{(i)}\in \mathbf{D}(\mathcal{H}_A)$, $\rho_B^{(i)}\in \mathbf{D}(\mathcal{H}_B)$ and $p(i)$ is a probability distribution, the state is called a separable state. If not, the state is called an entangled state.

It is possible to reformulate the postulates of quantum mechanics in terms of density operators instead of state vectors. 

\begin{screen}
\textbf{Postulate 1'}\,\,\,
A state of a physical system can be completely described by a density operator $\rho\in\textbf{D}(\mathcal{H})$.
\end{screen}
\begin{screen}
\textbf{Postulate 2'}\,\,\,
The state of the composite physical system is given by $\rho\in\textbf{D}(\mathcal{H}_1\otimes\mathcal{H}_2\otimes\cdots\otimes\mathcal{H}_n)$, where $\{\mathcal{H}_i\}_i$ are Hilbert spaces of the component physical systems.
\end{screen}
\begin{screen}
\textbf{Postulate 3'}\,\,\,
The time evolution of a closed quantum system is given by $U\rho U^{\dag}$, where $U\in\textbf{U}(\mathcal{H})$.
\end{screen}
\begin{screen}
\textbf{Postulate 4'}\,\,\,
Quantum measurements are described by a POVM $\{M_m\in\mathbf{Pos}(\mathcal{H})\}_{m\in\Omega}$. An outcome $m$ is obtained with probability 
\begin{equation}
p(m)={\rm tr}[M_m\rho].
\end{equation}
When we perform a quantum measurement on a subsystem $\mathcal{H}_A$ of a composite system $\mathcal{H}_A\otimes\mathcal{H}_B$ described by a POVM $\{M_m\in\mathbf{Pos}(\mathcal{H}_A)\}_{m\in\Omega}$, the probability obtaining outcome $m$ is given by
\begin{equation}
p(m)={\rm tr}[(M_m\otimes \mathbb{I}_B)\rho_{AB}],
\end{equation}

and the state of the unmeasured system after the measurement is given by
\begin{equation}
\frac{1}{p(m)}{\rm tr}_{A}[(M_m\otimes \mathbb{I}_B)\rho_{AB}].
\end{equation}
\end{screen}

\subsection{Schmidt decomposition}
Suppose $\ket{\psi}_{AB}$ is a vector in a Hilbert space $\mathcal{H}_A\otimes\mathcal{H}_B$. There exists a set of orthonormal vectors $\{\ket{i}_{A}\in\mathcal{H}_A\}_i$ and a set of orthonormal vectors $\{\ket{i}_{B}\in\mathcal{H}_B\}_i$ such that
\begin{equation}
\ket{\psi}_{AB}=\sum_i \lambda_i \ket{i}_{A}\ket{i}_{B},
\end{equation}
where $\{\lambda_i\}_i$ are non-negative real numbers satisfying $\sum_i \lambda_i^2=1$, known as {\it Schmidt co-efficients}. The number of non-zero coefficients $|\{\lambda_i> 0\}|$ is called as the {\it Schmidt rank} and denoted by ${\rm Sch}\#_B^A(\ket{\psi}_{AB})=|\{\lambda_i> 0\}|$.

\section{Quantum operations}
\label{sec:QI}
A quantum operation is the most general physical process, consisting of unitary evolutions, measurements, discarding subsystems and attaching other subsystems, called {\it ancilla systems}. It represents any realizable physical process that a quantum system can undergo.

Mathematically, a quantum operation can be represented by a {\it quantum instrument}, which is described by a set of linear maps $ \{ \mathcal{M}_o:\mathbf{L}(\mathcal{H}_{in})\rightarrow\mathbf{L}(\mathcal{H}_{out}) \}_o$ that transforms a quantum input state $ \rho\in\mathbf{D}(\mathcal{H}_{in})$ to a quantum output state  given by 
\begin{equation}
\frac{1}{p(o)}\mathcal{M}_o ( \rho)
\end{equation}
 associated with a classical output $o$ ($o=1,2,\cdots, n$) with a probability  distribution 
 \begin{equation}
 p(o)={\rm tr}[\mathcal{M}_o ( \rho) ].
 \end{equation}
If we discard the classical output, a quantum output state is given by
\begin{equation}
\sum_o\mathcal{M}_o ( \rho).
\end{equation}
Each element of instrument $\mathcal{M}_o$ has to be a linear {\it completely positive} (CP) map and a sum of elements $\sum_o \mathcal{M}_o$ has to be a {\it trace preserving} (TP) map to describe quantum operations allowed in quantum mechanics.   A quantum operation conditioned by a classical input $i$ is denoted by $\{ \mathcal{M}_{o|i} \}_o$, a quantum operation without classical output is denoted by $\mathcal{M}_{|i}$ and a quantum operation without classical input and output is denoted by $\mathcal{M}$. $\mathcal{M}_{|i}$ and $\mathcal{M}$ are called deterministic quantum operations, which are described by linear CPTP maps. We define a set of linear CPTP maps from $\mathbf{L}(\mathcal{H}_{in})$ to $\mathbf{L}(\mathcal{H}_{out})$ as $\mathbf{C}(\mathcal{H}_{in}:\mathcal{H}_{out})$. Graphical representations of quantum operations are  given in Fig~\ref{fig:QI}.

\begin{figure}[htbp]
 \begin{center}
  \includegraphics[width=80mm]{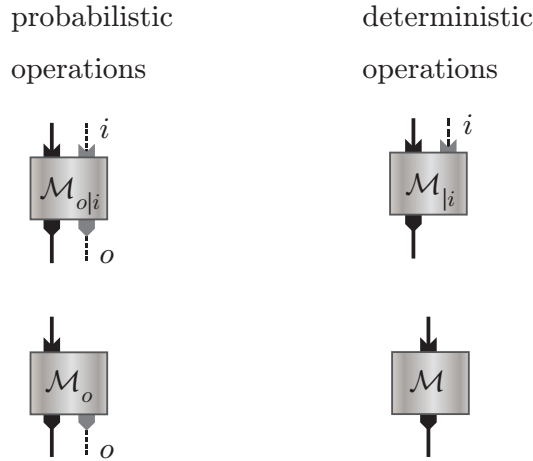}
 \end{center}
 \caption{{\bf Quantum operations}. Probabilistic operations have a classical output corresponding to an outcome of the measurement.  Deterministic operations does not give any classical output.}
 \label{fig:QI}
\end{figure}

There are several mathematical ways to represent the quantum instrument \cite{Choi, Krausrep, Stinespring}. 
We introduce the Kraus representation and the Choi-Jamiolkowski (CJ) representation, which are mainly used in this thesis.
We will show that any quantum instrument is physically realizable, i.e. it can be represented by a sequence of procedures consisting of attaching an ancilla system, applying a unitary time evolution and performing measurement.

\subsubsection{Kraus representation}
Any quantum instrument $\{\mathcal{M}_o:\mathbf{L}(\mathcal{H}_{in})\rightarrow\textbf{L}(\mathcal{H}_{out})\}_o$ can be represented by
\begin{equation}
\mathcal{M}_o(\rho)=\sum_{k} E_{k,o} \rho E_{k,o}^{\dag},
\label{CPTD}
\end{equation}
where $E_{k,o}\in\textbf{L}(\mathcal{H}_{in}:\mathcal{H}_{out})$ satisfying $\sum_{k,o} E_{k,o}^{\dag}E_{k,o}= \mathbb{I}_{in}$ are called \textit{Kraus operators}.
A deterministic quantum operation, a quantum instrument without classical output, $\{\mathcal{M}:\mathbf{L}(\mathcal{H}_{in})\rightarrow\textbf{L}(\mathcal{H}_{out})\}$ is a CPTP map,
\begin{equation}
\mathcal{M}(\rho)=\sum_{k} E_{k} \rho E_{k}^{\dag},
\end{equation}
where $E_{k}\in\textbf{L}(\mathcal{H}_{in}:\mathcal{H}_{out})$ satisfies $\sum_{k} E_{k}^{\dag}E_{k}= \mathbb{I}_{in}$.
We denote the set of all such CPTP maps by $\textbf{C}(\mathcal{H}_{in}:\mathcal{H}_{out})$.
If the quantum operation is deterministic and the number of Kraus operators is one, that is 
\begin{equation}
\mathcal{M}(\rho)=E \rho E^{\dag},
\end{equation}
where $E\in\textbf{L}(\mathcal{H}_{in}:\mathcal{H}_{out})$ satisfies $E^{\dag}E=I_{in}$. Such $E$ are called an \textit{isometry operator}, and the set of all such isometry operators are denoted by $\textbf{U}(\mathcal{H}_{in}:\mathcal{H}_{out})$.
When $\dim(\mathcal{H}_{in})=\dim(\mathcal{H}_{out})$, an isometry operator is equivalent to a unitary operator.

\subsubsection{Choi-Jamiolkowski (CJ) representation}
In Part III, we extensively use the CJ representation to represent quantum operations given by quantum instruments.    For a map representing an element of a quantum instrument $\mathcal{M}_{o|i}:\mathbf{L}(\mathcal{H}_{in})\rightarrow \mathbf{L}(\mathcal{H}_{out})$ where $i$ is an index of the classical input and $o$ is an index of the classical output,  the corresponding CJ operator $M_{o|i}\in \mathbf{L}(\mathcal{H}_{in}\otimes\mathcal{H}_{out})$  is given by
\begin{equation}
M_{o|i}=\sum_{k,l}\ket{k}\bra{l}\otimes\mathcal{M}_{o|i}(\ket{k}\bra{l}),
\end{equation}
where $\{\ket{k}\}_k$ is the computational basis on $\mathcal{H}_{in}$.
 The state of a quantum output for a quantum input $\rho_{in}$  by a linear map $\mathcal{M}_{o|i}$ is obtained by using the CJ operator ${M}_{o|i}$ as 
\begin{equation}
\mathcal{M}_{o|i}(\rho_{in})=\mathrm{tr}_{in}[M_{o|i}(\rho_{in}^T\otimes\mathbb{I}_{out})]
\end{equation}
where $\mathbb{I}_{out}$ is the identity operator on $\mathcal{H}_{out}$ and $\rho_{in}^T$ is the transposition of $\rho_{in}$ with respect to the computational basis.
Note that the output state does not depend on the choice of the computational basis.
$\mathcal{M}_{o|i}$ is completely positive (CP) if and only if $M_{o|i}$ is a positive semi-definite operator. 
$\sum_o\mathcal{M}_{o|i}$ is trace preserving (TP) if and only if $\mathrm{tr}_{out}[\sum_o M_{o|i}]=\mathbb{I}_{in}$.
The CJ representation is unique, i.e. a quantum instrument and a CJ operator are in one-to-one correspondence, while the Kraus representation is not unique.

\subsubsection{Implementation of a quantum instrument}
It is known that there exist physical implementations for any quantum instrument.
\begin{proposition}
A quantum instrument $\{\mathcal{M}_o: \mathbf{L}(\mathcal{H}_{in})\rightarrow\textbf{L}(\mathcal{H}_{out})\}_o$ is physically implementable. 
\end{proposition}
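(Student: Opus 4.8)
The plan is to prove this constructively by exhibiting an explicit Stinespring-type dilation built from a Kraus representation of the instrument. First I would invoke the fact, recalled just above in the excerpt, that any quantum instrument $\{\mathcal{M}_o:\mathbf{L}(\mathcal{H}_{in})\rightarrow\mathbf{L}(\mathcal{H}_{out})\}_o$ admits a Kraus representation $\mathcal{M}_o(\rho)=\sum_{k}E_{k,o}\rho E_{k,o}^{\dag}$ with $\sum_{k,o}E_{k,o}^{\dag}E_{k,o}=\mathbb{I}_{in}$. Relabelling the pair $(k,o)$ by a single index $j$ ranging over a finite set $J$, the completeness relation $\sum_{j}E_{j}^{\dag}E_{j}=\mathbb{I}_{in}$ is exactly the condition needed to define an isometry.

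The key construction: introduce an ancilla Hilbert space $\mathcal{H}_{anc}=\mathbb{C}^{|J|}$ with orthonormal basis $\{\ket{j}\}_{j\in J}$, fix a reference ancilla state $\ket{0}\in\mathcal{H}_{anc}$, and define $V:\mathcal{H}_{in}\rightarrow\mathcal{H}_{out}\otimes\mathcal{H}_{anc}$ by $V\ket{\psi}=\sum_{j\in J}(E_j\ket{\psi})\otimes\ket{j}$. One checks directly that $V^{\dag}V=\sum_{j}E_j^{\dag}E_j=\mathbb{I}_{in}$, so $V\in\mathbf{U}(\mathcal{H}_{in}:\mathcal{H}_{out}\otimes\mathcal{H}_{anc})$ is an isometry; equivalently, by attaching a suitable ancilla on the input side it can be completed to a genuine unitary, which is the form asserted by Postulate 3. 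Then, after applying $V$, one performs the projective measurement $\{P_o\}_o$ on $\mathcal{H}_{anc}$ where $P_o=\sum_{k}\ket{(k,o)}\bra{(k,o)}$ projects onto the subspace of ancilla labels whose second component equals $o$. A short computation shows that conditioned on outcome $o$ the (unnormalised) state of $\mathcal{H}_{out}$ is $\mathrm{tr}_{anc}[(\mathbb{I}_{out}\otimes P_o)V\rho V^{\dag}]=\sum_{k}E_{k,o}\rho E_{k,o}^{\dag}=\mathcal{M}_o(\rho)$, with outcome probability $p(o)=\mathrm{tr}[\mathcal{M}_o(\rho)]$, exactly matching the defining data of the instrument.

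Thus the procedure ``attach ancilla $\ket{0}$, apply the unitary extension of $V$, measure $\mathcal{H}_{anc}$ in the computational basis, coarse-grain the outcome to its $o$-component, and discard $\mathcal{H}_{anc}$'' realises the instrument; since each of these is an elementary operation permitted by Postulates 1--4, the instrument is physically implementable. I would close by remarking that the same argument applied to a deterministic instrument (single $o$) recovers the Stinespring dilation of a CPTP map.

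The only real subtlety — hardly an obstacle — is bookkeeping: being careful that the measurement on the ancilla is coarse-grained correctly so that the index $k$ is traced out while $o$ is retained, and that completing the isometry $V$ to a unitary (padding the input with a fresh ancilla in a fixed state and extending $V$ on the orthogonal complement) does not disturb the output statistics. The positivity and finiteness assumptions on the instrument guarantee $|J|<\infty$, so all Hilbert spaces involved remain finite-dimensional, consistent with the standing convention of the thesis.
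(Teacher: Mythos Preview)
Your proposal is correct and follows essentially the same approach as the paper: both construct the Stinespring isometry $V\ket{\psi}=\sum_{k,o}(E_{k,o}\ket{\psi})\ket{k,o}$ from the Kraus operators, extend it to a unitary by adjoining an ancilla in a fixed state, and then perform the coarse-grained projective measurement $\{\sum_k\ket{k,o}\bra{k,o}\}_o$ on the ancilla to recover the outcome $o$ with the correct probability and post-measurement state. The only cosmetic difference is that the paper writes the unitary directly on $\mathcal{H}_{in}\otimes\mathcal{H}_R$ and verifies the inner-product preservation explicitly, whereas you first define the isometry $V$ and then remark that it can be completed to a unitary.
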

\begin{proof}
 We construct a sequence of procedures realizing a physical process represented by the quantum instrument.
The sequence consists of attaching an ancilla system, applying unitary time evolution and performing measurement.
Let $\{E_{k,o}\in\textbf{L}(\mathcal{H}_{in}:\mathcal{H}_{out})\}_k$ be a set of Kraus operators of $\mathcal{M}_o$.
First, we prepare an initial state $\rho=\sum_i p_i\ket{x_i}\bra{x_i}\in\textbf{D}(\mathcal{H}_{in})$ and an ancilla system $\ket{0}_R\in\mathcal{H}_R$. 
Next, we apply a unitary operator $U\in\mathbf{U}(\mathcal{H}_{in}\otimes\mathcal{H}_R)$ such that
\begin{equation}
U\ket{\psi}_{in}\ket{0}_R=\sum_{k,o}(E_{k,o}\ket{\psi}_{in})\ket{k,o}_M
\end{equation}
for all $\ket{\psi}_{in}\in\mathcal{H}_{in}$, where $\{\ket{k,o}_M\}_{k,o}$ is a set of orthonormal vectors in $\mathcal{H}_M$.
 Note that the dimension of the ancilla system would be changed since the dimension of $\mathcal{H}_{in}$ and that of $\mathcal{H}_{out}$ are different in general. Thus, we denote the ancilla system after performing the unitary operator by $\mathcal{H}_M$, satisfying
\begin{equation}
\dim(\mathcal{H}_{in}\otimes\mathcal{H}_R)=\dim(\mathcal{H}_{out}\otimes\mathcal{H}_{M}).
\end{equation}
We can always construct such a unitary operator $U$ since for any $\ket{\phi}_{in},\ket{\psi}_{in}\in\mathcal{H}_{in}$,
\begin{eqnarray}
\bra{\phi}_{in}\bra{0}_RU^{\dag}U\ket{\psi}_{in}\ket{0}_R&=&\sum_{k,o}\bra{\phi}_{in}E_{k,o}^{\dag}E_{k,o}\ket{\psi}_{in}\\
&=&\bra{\phi}\psi\rangle,
\end{eqnarray}
holds.
Third, we perform a projective measurement $\{\sum_k\ket{k,o}_M\bra{k,o}_M\}_o$ on system $\mathcal{H}_M$. A measurement outcome $o$ is obtained with probability
\begin{eqnarray}
p(o)&=&{\rm tr}\left[(\mathbb{I}_{out}\otimes\sum_k\ket{k,o}_M\bra{k,o}_M) \sum_{a,b,c,d,i}p_iE_{a,b}\ket{x_i}_{in}\bra{x_i}_{in}E_{c,d}\otimes\ket{a,b}_M\bra{c,d}_M\right]\nonumber\\
&=&{\rm tr}\left[\mathcal{M}_o(\rho) \right].
\end{eqnarray}
The state in the system $\mathcal{H}_{out}$ after the measurement is given by
\begin{eqnarray}
\frac{1}{p(o)}{\rm tr}_{M}\left[(\mathbb{I}_{out}\otimes\sum_k\ket{k,o}_M\bra{k,o}_M) \sum_{a,b,c,d,i}p_iE_{a,b}\ket{x_i}_{in}\bra{x_i}_{in}E_{c,d}\otimes\ket{a,b}_M\bra{c,d}_M\right]\nonumber\\
=\frac{1}{p(o)}\mathcal{M}_o(\rho).\,\,\,\,\,\,\,\,\,
\end{eqnarray}
\end{proof}

\section{Quantum computation models}
\label{sec:QCM}
Quantum computation is a process to apply a quantum operation on input state and obtain a classical outputs. There are several models to describe the process. We introduce the circuit model in this section and another model, measurement based quantum computation (MBQC) in Section \ref{sec:MBQC}.
\subsection{Circuit model}
The circuit model of quantum computation describes a quantum operation given by a unitary operation as a combination of \textit{elementary quantum gates}, which corresponds to the elementary logic gates of classical electronic circuits in classical computation \cite{Deutsch2}.
Each wire of a quantum circuit represents a qubit and a sequence of elementary quantum gates are performed on the qubits.
In the first stage, we initialize qubits in a particular input state. In the second stage, we apply a unitary operation corresponding to the computation algorithm to the entire input. This differs from classical computation in that unitary ensures the computation is always reversible. In classical computation, irreversible gates such as the AND gate are used, whose inputs cannot be recovered from the output, meaning that some information about the initial state was lost. Another difference with classical computation is the impossibility of perfectly cloning or copying an unknown quantum state \cite{Noclone}. These facts make the construction of quantum algorithms different from that of classical algorithms. In final, output quantum states are measured.
We use the circuit notation of quantum operations given in Table \ref{table:qcnotation} in this thesis.

\begin{table}[H]
\begin{center}
\begin{tabular}{|c|c|}
\hline
\begin{minipage}{31mm}
\begin{center}
\setlength\unitlength{1truecm}
 \begin{picture}(3,0.5)(-0.45,-0.2)
  \put(0,0.1){\line(1,0){2}}
  \end{picture}
 \end{center}
\end{minipage}
& 
A wire carrying a single qubit (time goes left to right)
\\
\hline
\begin{minipage}{31mm}
\begin{center}
\setlength\unitlength{1truecm}
\begin{picture}(3,0.5)(-0.45,-0.2)
  \put(0,0.1){\line(1,0){2}}
  \put(0,0.15){\line(1,0){2}}
  \end{picture}
\end{center}
\end{minipage}
& 
A wire carrying a single classical bit
\\
\hline
\begin{minipage}{31mm}
\begin{center}
\scalebox{0.1}{\includegraphics{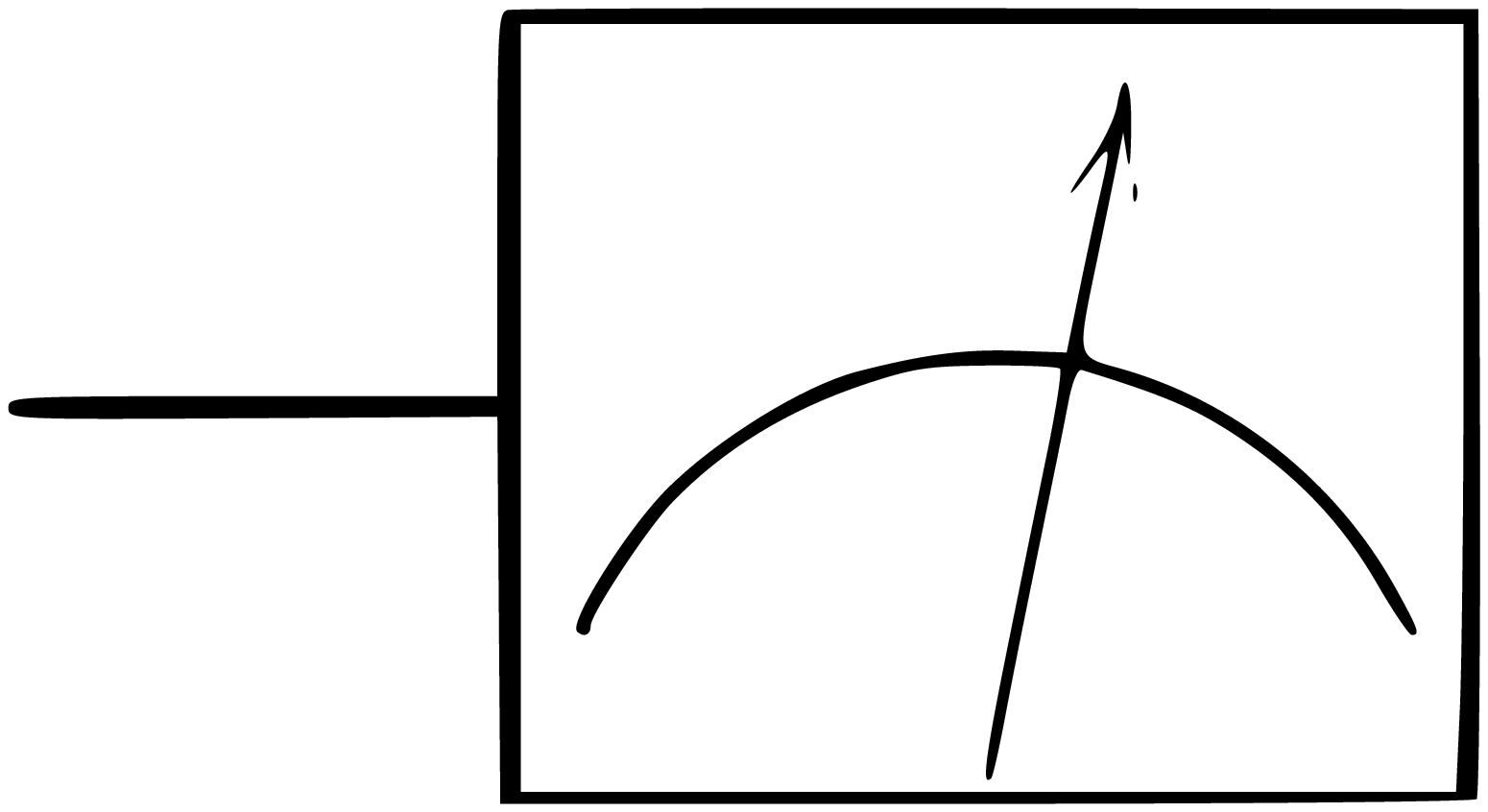}}
\end{center}
\end{minipage}
&
A projective measurement in the \textit{computational basis} $\{ \ket{i}\bra{i} \}_i$.
\\
\hline
\begin{minipage}{31mm}
\begin{center}
\scalebox{1}{\includegraphics{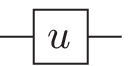}}
\end{center}
\end{minipage}
&
A single unitary gate $u$.
\\
\hline
\begin{minipage}{31mm}
\begin{center}
\scalebox{1}{\includegraphics{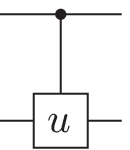}}
\end{center}
\end{minipage}
&
A controlled unitary gate $U=\ket{0}\bra{0}\otimes\mathbb{I}+\ket{1}\bra{1}\otimes u$.
\\
\hline
\begin{minipage}{31mm}
\begin{center}
\scalebox{0.06}{\includegraphics{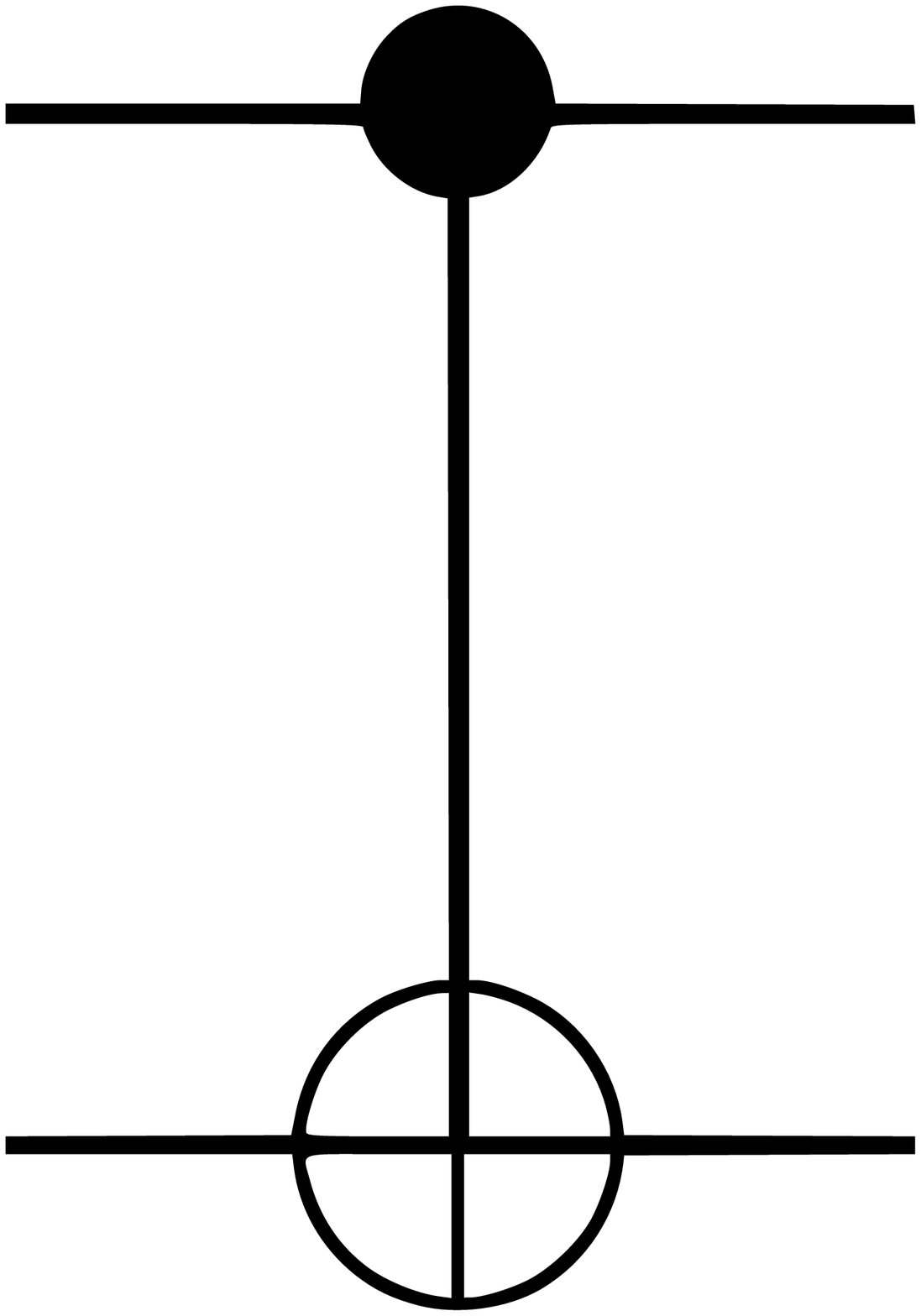}}
\end{center}
\end{minipage}
&
A CNOT gate $U^{CNOT} := \ket{00}\bra{00}+\ket{01}\bra{01}+\ket{10}\bra{11}+\ket{11}\bra{10}$.
\\
\hline
\begin{minipage}{31mm}
\begin{center}
\scalebox{0.1}{\includegraphics{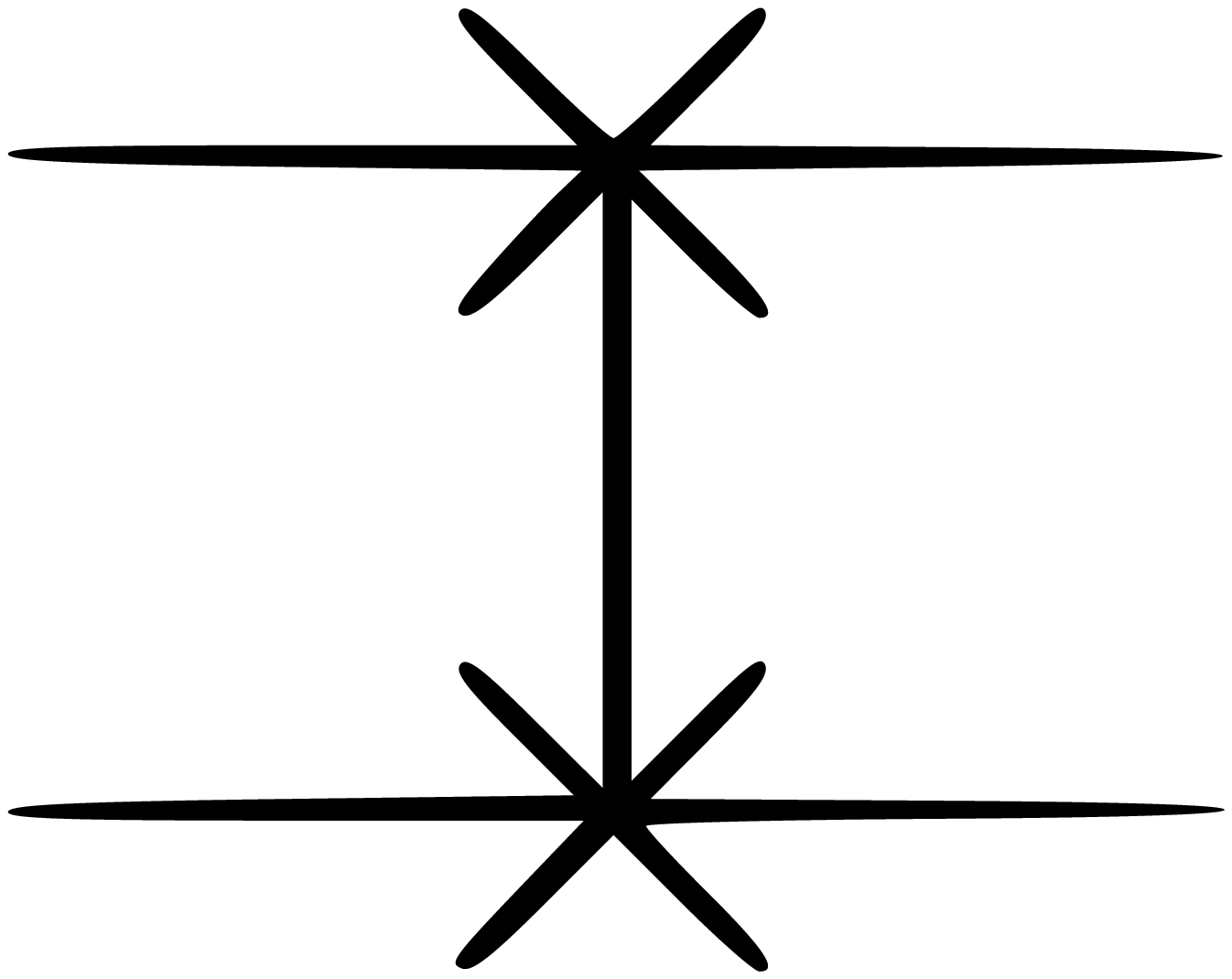}}
\end{center}
\end{minipage}
&
A SWAP operation $U^{SWAP} := \ket{00}\bra{00}+\ket{01}\bra{10}+\ket{10}\bra{01}+\ket{11}\bra{11} $.
\\
\hline
\end{tabular}
\end{center}
\caption{Notation of operations used in quantum circuits.}
\label{table:qcnotation}
\end{table}
Single qubit operators are described by $2\times 2$ unitary matrices in the computational basis, of which the Pauli operators
\begin{equation}
X:=\left(\begin{array}{cc}0 & 1 \\1 & 0\end{array}\right),\,\,\,\,\,\,\,\,Y:=\left(\begin{array}{cc}0 & -i \\i & 0\end{array}\right),\,\,\,\,\,\,\,\,Z:=\left(\begin{array}{cc}1 & 0 \\0 & -1\end{array}\right)
\end{equation}
are often used in this thesis. Any single qubit unitary operator $u$ can be decomposed as
\begin{equation}
u=R_z(\alpha)R_y(\beta)R_z(\gamma),
\label{eq:Eulerdec}
\end{equation}
where $R_z(\theta)=\exp(-i\frac{\theta}{2} Z)$ and $R_y(\theta)=\exp(-i\frac{\theta}{2} Y)$, which is called the Euler decomposition.

We also introduce the Hadamard gate and the phase gate as
\begin{equation}
H:=\frac{1}{\sqrt{2}}\left(\begin{array}{cc}1 & 1 \\1 & -1\end{array}\right),\,\,\,\,\,\,\,\,S:=\left(\begin{array}{cc}1 & 0 \\0 & i\end{array}\right).
\end{equation}
Useful properties of single qubit operators are often used, such as
\begin{equation}
HXH=Z,\,\,\,\,\,\,\,\,SXS^{\dag}=Y,\,\,\,\,\,\,\,\,(SH)Z(SH)^{\dag}=Y.
\end{equation}
For example, these properties are used to easily check,
\begin{equation}
(I\otimes H)U^{CNOT}(I\otimes H)=U^{CZ},
\end{equation}
where $U^{CNOT}$ is the controlled NOT gate (controlled X gate) given by,
\begin{equation}
U^{CNOT}=\left(\begin{array}{cccc}1 & 0 & 0 & 0 \\0 & 1 & 0 & 0 \\0 & 0 & 0 & 1 \\0 & 0 & 1 & 0\end{array}\right),
\end{equation}
and $U^{CZ}$ is the controlled Z gate given by,
\begin{equation}
U^{CZ}=\left(\begin{array}{cccc}1 & 0 & 0 & 0 \\0 & 1 & 0 & 0 \\0 & 0 & 1 & 0 \\0 & 0 & 0 & -1\end{array}\right).
\end{equation}
The controlled NOT gate and the controlled Z gate are special cases of the {\it controlled unitary} operators. The two-qubit controlled unitary operator $U$ is described by
\begin{equation}
U=\ket{0}\bra{0}_C\otimes\mathbb{I}_T+\ket{1}\bra{1}_C\otimes u_T=\left(\begin{array}{cccc}1 & 0 & 0 & 0 \\0 & 1 & 0 & 0 \\0 & 0 & u_{0,0} & u_{0,1} \\0 & 0 & u_{1,0} & u_{1,1}\end{array}\right),
\end{equation}
where $u=\left(\begin{array}{cc} u_{0,0} & u_{0,1} \\ u_{1,0} & u_{1,1}\end{array}\right)$ is a single qubit unitary operator. 
The first system $\mathcal{H}_C$ is called a {\it controlled qubit} and the second system $\mathcal{H}_T$ is called a {\it target qubit}.
Properties of the controlled unitary operators are shown in Part II.

A unitary operation representing an algorithm in the circuit model can be decomposed into a set of smaller unitary `gates' acting on subsets of qubits. A \textit{universal gate set} is a set of unitary gates into which any unitary operation can be decomposed. There exists a variety of universal gate sets. For example, the two-qubit CNOT gate along with arbitrary single qubit unitary gates form a universal gate set.

\section{Local Operations and Classical Communication (LOCC)}
\label{sec:LOCC}
In this section, we introduce a class of joint quantum operations between distant parties where they cannot directly apply any joint quantum operations, but can transmit classical information to each other depending on the outcomes of local quantum operations on their respective subsystems. Such a class of joint quantum operations are referred to as {\it local operations and classical communication}, which is abbreviated to LOCC. 
We also introduce two classes of joint quantum operations related to LOCC.
At the end of this section, we review a more general class of operations implementable by LOCC assisted by pre-shared entanglement.

\subsection{LOCC}
LOCC is a set of deterministic joint quantum operations consisting of a sequence of local quantum operations and classical communication. Probabilistic quantum operations can be also applied in the sequence, however, to make the joint operation to be deterministic, we need to discard (sum up) all the measurement outcomes after all the operations in the sequence are completed. LOCC is widely used in quantum information science for describing practical experimental settings, where global quantum communication is much harder to implement than classical communication. For example, quantifying quantum entanglement \cite{VVedral, MBPlenio} and quantifying the globalness of unitary operations \cite{Soedaglobal} are investigated in terms of LOCC.
We first introduce the simplest class, bipartite one-way LOCC. Then we introduce a more general class, bipartite two-way LOCC, and present the most general class, multipartite two-way LOCC.

\subsubsection{Bipartite one-way LOCC}
We consider the scenario that Alice first performs a local operation represented by $\{ \mathcal{A}_o \}_o$, then sends a classical output $o$ to Bob and Bob performs a local operation represented by $\mathcal{B}_{|o}$ conditioned by the classical input $o$ .   Since we assume that Alice and Bob are acting on different quantum systems at different spacetime coordinates,  the joint quantum operation is described by a tensor product of two local operations. By taking averages over $o$,  we obtain a deterministic joint quantum operation given by 
\begin{equation}
\mathcal{M}=\sum_{o} \mathcal{A}_o\otimes\mathcal{B}_{|o}.
\label{eq:oneLOCC}
\end{equation}
One-way LOCC from Alice to Bob is the set of quantum operations in the form of Eq.\eqref{eq:oneLOCC}. One-way LOCC from Bob to Alice can be defined in a similar way. 

\subsubsection{Bipartite two-way LOCC}
A deterministic joint operation represented by local operations and more general two-way (finite-round) classical communication between two parties is defined by a sequence of Alice's local operations given by $\{ \mathcal{A}_{o_N|i_N}^{(N)}\circ\cdots\circ \mathcal{A}_{o_1}^{(1)} \}_{o_N,\cdots,o_1}$ and another sequence of Bob's local operations  given by $\{ \mathcal{B}_{|i_N'}^{(N)}\circ\cdots\circ \mathcal{B}_{o_1'|i_1'}^{(1)} \}_{o_N',\cdots,o_1'}$.   Here $\circ$ denotes a  connection between two local operations linking a quantum output of a local operation and a quantum input of the next local operation.  The indices $i_k$ and $i'_k$ are classical inputs of the $k$-th operations and $o_k$ and $o'_k$ are classical outputs of the $k$-th operations of Alice and Bob, respectively.   We assume that Alice sends $o_1$ to Bob and Bob receives the classical message as $i'_1$, i.e. $i'_1=o_1$, and similarly, Bob sends $o'_1$ to Alice and so on.
Thus,  bipartite two-way LOCC between the two parties is defined by a set of joint quantum operations represented by

\begin{eqnarray}
\mathcal{M}
&=& \sum_{i_2,\cdots,i_N,o_1,\cdots,o_N} \mathcal{A}_{o_N|i_N}^{(N)}\circ\cdots\circ \mathcal{A}_{o_1}^{(1)}\otimes\mathcal{B}_{|o_N}^{(N)}\circ\cdots\circ\mathcal{B}_{i_2|o_1}^{(1)}.
\end{eqnarray}
An example for $N=3$ is shown in Fig.\ref{fig:twoCC}.  We refer to bipartite two-way LOCC as just bipartite LOCC.
 
 \begin{figure}
 \centering
  \includegraphics[height=.45\textheight]{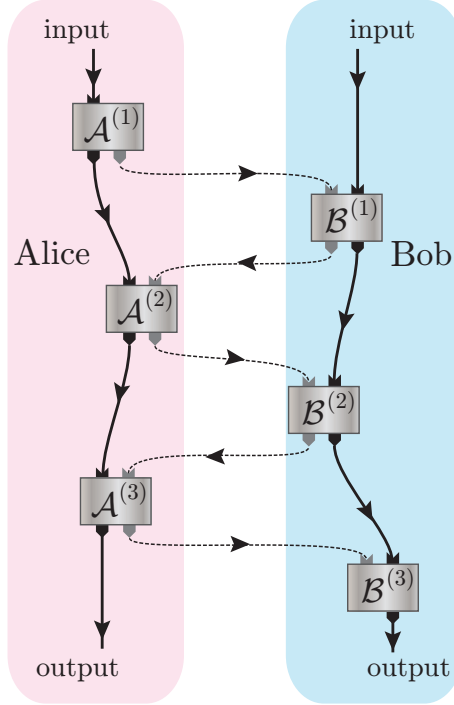}
  \caption{{\bf A bipartite LOCC protocol.} Dotted arrows represent classical communication between local operations and solid arrows represent quantum communication between local operations. A box represents a local operation.}
\label{fig:twoCC}
\end{figure} 

\subsubsection{Multipartite two-way LOCC}
We consider a situation in which multiple (more than two) parties collectively perform local operations, depending on the outcomes of the previous operations.
A set of such deterministic joint quantum operations is the most general LOCC, called multipartite two-way LOCC.
We denote multipartite two-way LOCC as $\mathbf{LOCC}(\mathcal{H}_{I_1},\cdots,\mathcal{H}_{I_N} : \mathcal{H}_{O_1},\cdots,\mathcal{H}_{O_N} )$, where there are $N$ parties and a sequence of local operations performed by the $i$-th party denoted by a quantum operation from $\mathbf{L}(\mathcal{H}_{I_i})$ to $\mathbf{L}(\mathcal{H}_{O_i})$.
Since an element of LOCC is a deterministic quantum operation by definition, we obtain
\begin{equation}
\mathbf{LOCC}(\mathcal{H}_{I_1},\cdots,\mathcal{H}_{I_N} : \mathcal{H}_{O_1}, \cdots,\mathcal{H}_{O_N})\subsetneq\mathbf{C}(\mathcal{H}_{I_1}\otimes \cdots\otimes \mathcal{H}_{I_N} :\mathcal{H}_{O_1}\otimes \cdots\otimes\mathcal{H}_{O_N}).
\end{equation}
In this thesis, we refer to multipartite two-way LOCC as just LOCC.
Note that local operations in LOCC are totally ordered.

\subsection{Separable operation}
We introduce a \textit{separable operation}, which does not describe a physical process but a mathematically conceptual process \cite{VGheorghiu}.
In spite of its clear operational meaning, analysis of quantum information processing tasks under the restriction of LOCC is hard in general since the mathematical structure of LOCC is highly complicated \cite{CLMOW12}.
The separable operation is known to be used to analyze nonlocal quantum tasks in place of LOCC \cite{Akibue, Anthony, DistinguishbySEP} since it has a simpler mathematical structure than LOCC and the class of separable operations is a slightly larger class of deterministic quantum operations than LOCC.
However, only limited number of examples of deterministic quantum operations that are separable operations but {\it not} included in LOCC are known so far \cite{9state, JNiset, DiVincezo, EChitambar}.

\begin{definition}
A CPTP map $\Phi\in {\mathbf C}(\mathcal{H}_{I_1}\otimes\cdots\otimes\mathcal{H}_{I_N} :\mathcal{H}_{O_1}\otimes\cdots\otimes\mathcal{H}_{O_N})$, is said to be a separable operation, if and only if there exists linear operators $\{E_k^{(1)}\in \textbf{L}(\mathcal{H}_{I_1},\mathcal{H}_{O_1})\},\cdots,\{E_k^{(N)}\in \textbf{L}(\mathcal{H}_{I_N},\mathcal{H}_{O_N})\}$ such that
\begin{equation}
\Phi(\rho)=\sum_{k} (E_k^{(1)}\otimes\cdots\otimes E_k^{(N)})\rho (E_k^{(1)}\otimes\cdots\otimes E_k^{(N)})^{\dag}
\label{eq:sCPTP}
\end{equation}
for all $\rho\in \textbf{D}(\mathcal{H}_{I_1}\otimes\cdots\otimes\mathcal{H}_{I_N})$. 
We refer to the set of separable operations as SEP and denote as $\mathbf{SEP}(\mathcal{H}_{I_1},\cdots,\mathcal{H}_{I_N}: \mathcal{H}_{O_1},\cdots,\mathcal{H}_{O_N})$.
\end{definition}
Following relations are obtained by definition of each class.
\begin{eqnarray}
\mathbf{LOCC}(\mathcal{H}_{I_1},\cdots,\mathcal{H}_{I_N} : \mathcal{H}_{O_1}, \cdots,\mathcal{H}_{O_N})&\subsetneq&\mathbf{SEP}(\mathcal{H}_{I_1},\cdots,\mathcal{H}_{I_N} : \mathcal{H}_{O_1}, \cdots,\mathcal{H}_{O_N}),\nonumber\\\\
\mathbf{SEP}(\mathcal{H}_{I_1},\cdots,\mathcal{H}_{I_N} : \mathcal{H}_{O_1}, \cdots,\mathcal{H}_{O_N})&\subsetneq&\mathbf{C}(\mathcal{H}_{I_1}\otimes \cdots\otimes \mathcal{H}_{I_N} :\mathcal{H}_{O_1}\otimes \cdots\otimes\mathcal{H}_{O_N}).\nonumber\\
\end{eqnarray}

\subsection{Stochastic LOCC}
{\it Stochastic LOCC} (SLOCC) is a set of probabilistic joint quantum operations consisting of a sequence of local quantum operations and classical communication. It is not necessary to average over all the measurement outcomes in SLOCC in contrast to LOCC. Bipartite SLOCC is defined by a set of linear CP maps represented by
\begin{eqnarray}
\mathcal{M}
&=& \sum_{(i_1,\cdots,i_N,o_1,\cdots,o_N)\in\mathbb{M}} \mathcal{A}_{o_N|i_N}^{(N)}\circ\cdots\circ \mathcal{A}_{o_1}^{(1)}\otimes\mathcal{B}_{|o_N}^{(N)}\circ\cdots\circ\mathcal{B}_{i_2|o_1}^{(1)},
\end{eqnarray}
where $\mathbb{M}$ is a subset of the set of all the measurement outcomes $\overline{\mathbb{M}}$. Note that an element of SLOCC is not necessary to be TP.
For an input state $\rho$, the probability of obtaining a map in SLOCC $\mathcal{M}$ is given by
\begin{equation}
{\rm tr}[\mathcal{M}(\rho)],
\end{equation}
and the final state after post-selection is given by
\begin{equation}
\frac{\mathcal{M}(\rho)}{{\rm tr}[\mathcal{M}(\rho)]}.
\end{equation}
We denote multipartite SLOCC as $\mathbf{SLOCC}(\mathcal{H}_{I_1},\cdots,\mathcal{H}_{I_N}: \mathcal{H}_{O_1},\cdots,\mathcal{H}_{O_N})$.
By definition, SLOCC is strictly larger than LOCC, i.e.
\begin{eqnarray}
\mathbf{LOCC}(\mathcal{H}_{I_1},\cdots,\mathcal{H}_{I_N} : \mathcal{H}_{O_1}, \cdots,\mathcal{H}_{O_N})&\subsetneq&\mathbf{SLOCC}(\mathcal{H}_{I_1},\cdots,\mathcal{H}_{I_N} : \mathcal{H}_{O_1}, \cdots,\mathcal{H}_{O_N}).\nonumber\\
\end{eqnarray}
In a practical experimental setting, it is not difficult to perform post-selection in a LOCC protocol, which gives a SLOCC protocol, and such a probabilistic operation has a stronger power than a deterministic operation, such as increasing entanglement, discriminating non-orthonormal states perfectly, and enhancing the computational power \cite{Aaronson}.

We can show that any TP elements of SLOCC are included in LOCC as follows.
\begin{proof}
Let $\{M_m\in\mathbf{Pos}(\mathcal{H}_{I_1}\otimes\cdots\otimes\mathcal{H}_{O_N})\}_{m\in\mathbb{M}}$ be the CJ operator of an element of SLOCC, where $\mathbb{M}$ is the set of all the measurement outcomes in a sequence of SLOCC. Then there exists a set of measurement outcomes $\overline{\mathbb{M}}$ such that $\mathbb{M}\subseteq\overline{\mathbb{M}}$ and an element of LOCC whose CJ operator is given by $\{M_m\}_{m\in\overline{\mathbb{M}}}$, i.e.
\begin{equation}
{\rm tr}_{O_1,\cdots,O_N}\left[\sum_{m\in\overline{\mathbb{M}}}M_m\right]=\mathbb{I}_{I_1,\cdots,I_N}.
\end{equation}
If the element of SLOCC is TP, then we obtain
\begin{equation}
{\rm tr}_{O_1,\cdots,O_N}\left[\sum_{m\in\mathbb{M}}M_m\right]=\mathbb{I}_{I_1,\cdots,I_N}.
\end{equation}
These two equations imply
\begin{equation}
{\rm tr}_{O_1,\cdots,O_N}\left[\sum_{m\in\overline{\mathbb{M}}\setminus\mathbb{M}}M_m\right]=0
\end{equation}
and
\begin{equation}
\forall m\in\overline{\mathbb{M}}\setminus\mathbb{M},\,\,\, M_m=0
\end{equation}
since $M_m$ is a positive semi-definite operator. Therefore, the LOCC map $\{M_m\}_{m\in\overline{\mathbb{M}}}$ represents the same map as the SLOCC map $\{M_m\}_{m\in\mathbb{M}}$.

\end{proof}
That is,
\begin{eqnarray}
\mathbf{LOCC}(\mathcal{H}_{I_1},\cdots,\mathcal{H}_{I_N} : \mathcal{H}_{O_1}, \cdots,\mathcal{H}_{O_N})=\,\,\,\,\,\,\,\,\,\,\,\,\,\,\,\,\,\,\,\,\,\,\,\,\,\,\,\,\,\,\,\,\,\,\,\,\,\,\,\,\,\,\,\,\,\,\,\,\,\,\,\,\,\,\,\,\,\,\,\,\,\,\,\,\,\,\,\,\,\,\,\,\,\,\,\,\,\,\,\,\,\,\,\,\,\,\,\,\,\,\,\,\,\,\,\,\,\,\,\,\,\,\,\,\,\nonumber\\
\mathbf{SLOCC}(\mathcal{H}_{I_1},\cdots,\mathcal{H}_{I_N} : \mathcal{H}_{O_1}, \cdots,\mathcal{H}_{O_N})\cap{\mathbf C}(\mathcal{H}_{I_1}\otimes\cdots\otimes\mathcal{H}_{I_N} :\mathcal{H}_{O_1}\otimes\cdots\otimes\mathcal{H}_{O_N}).\nonumber\\
\end{eqnarray}

\subsection{Entanglement assisted LOCC}
When LOCC accompanies with pre-shared entanglement, called {\it entanglement assisted LOCC}, it can implement general global quantum operations acting on the systems of the multiple parties.
We review two types of global quantum operations, {\it quantum teleportation} and a controlled unitary operation, which appear in this thesis.
\subsubsection{Quantum teleportation}
Entanglement shared between spacelike separated parties cannot be used for communication between parties and a finite amount of classical communication cannot be used for exact transmission of an arbitrary quantum state.
However, if entanglement is accompanied by classical communication, exact transmission of an arbitrary quantum state can be achieved.
Such a process is called {\it quantum teleportation} \cite{teleportation}.
In a quantum teleportation protocol, there are two parties, a sender and a receiver.
The sender has a quantum state to be transmitted and an entangled state is pre-shared between the sender and the receiver.
First, the sender performs an appropriate measurement on her system.
Second, she sends the measurement outcome to the receiver.
Third, the receiver performs an appropriate unitary operation on his system corresponding to the measurement outcome.
After all the procedures are done, the state of receiver's system is transformed into a quantum state that was initially possessed by the sender and the state of sender's system is transformed into a particular state corresponding to the measurement outcome not depending on the quantum state initially possessed by her.

We give a quantum circuit representation of the quantum teleportation protocol $\mathcal{T}$ to transmit a one-qubit state by using an EPR state in Fig.~\ref{fig:Tcircuit}.
\begin{figure}
\begin{center}
  \includegraphics[height=.15\textheight]{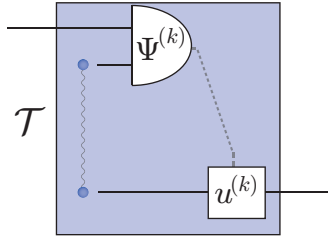}
  \end{center}
  \caption{{\bf A quantum circuit representation of the quantum teleportation protocol $\mathcal{T}$.} $\Psi^{(k)}$ represents the projective measurement $\{\ket{\Psi^{(k)}}\bra{\Psi^{(k)}}\}_k$, $\{\ket{\Psi^{(k)}}\}_k=\{ (u^{(k)}\otimes \mathbb{I})\ket{\Phi^+} \}$ is the Bell basis, $\ket{\Phi^+}$ is an EPR state and $\{u^{(k)}\}_k=\{I,Z,X,ZX\}$ is a set of operations to be applied conditional on the measurement outcome specified by $k$.}
\label{fig:Tcircuit}
\end{figure}

Note that the quantum teleportation protocol is achieved by entanglement assisted bipartite one-way LOCC.
Performing quantum teleportation in both directions, it is possible to implement any global quantum operations between spatially separated parties.
Entanglement consumed in entanglement assisted LOCC is used for quantifying the globalness of a quantum operation \cite{SoedaEPR, DanEPR}.

\subsubsection{Controlled unitary operation}
As we see, it is possible to implement any two-qubit unitary operations between two parties by bi-directional quantum teleportation using two EPR states and bipartite two-way LOCC.
Only one EPR state is sufficient for implementing a two-qubit controlled unitary operation whereas one EPR state is not sufficient for implementing the SWAP operation.
We give a quantum circuit representation of performing a two-qubit controlled unitary operation $U=\ket{0}\bra{0}\otimes\mathbb{I}+\ket{1}\bra{1}\otimes u$ in Fig.~\ref{fig:CUcircuit}.

\begin{figure}
\begin{center}
  \includegraphics[height=.3\textheight]{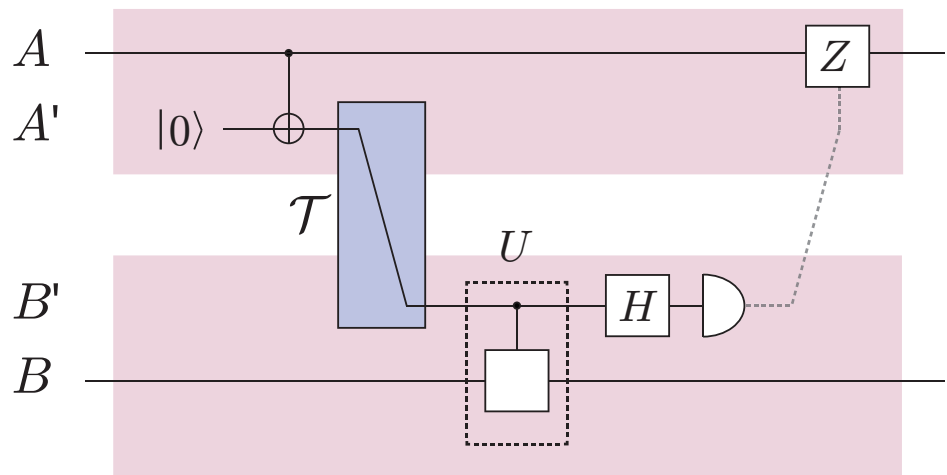}
  \end{center}
  \caption{{\bf A quantum circuit representation of the LOCC protocol implementing a two-qubit controlled unitary operation $U$.} Qubits in the first shaded region are possessed by the first party, those in the second shaded region are possessed by the second party. The protocol consists of introducing an ancillary qubits $\mathcal{H}_{A'}$ at the first party, teleporting the ancillary qubit state from the first party to the second party  represented by qubit $\mathcal{H}_{B'}$, applying $U$ on a controlled qubit $\mathcal{H}_{B'}$  and a target qubit $\mathcal{H}_{B}$ at the second party, performing Hadamard operations and measurements in the computational basis on $\mathcal{H}_{B'}$ and finally applying conditional $Z$ operations depending on the measurement outcome.}
\label{fig:CUcircuit}
\end{figure}

One EPR state and 3-bits of classical communication are consumed in the protocol presented in Fig.~\ref{fig:CUcircuit}. However, one EPR state and 2-bits of classical communication are shown to be sufficient for implementing a two-qubit controlled unitary operation \cite{Eisert}.
The optimal amount of entanglement required for implementing a two-qubit controlled unitary operation is shown in \cite{SoedaEPR, DanEPR}.
Note that one EPR state and 2-bits of classical communication are also sufficient for implementing a controlled unitary operation $U=\ket{0}\bra{0}_C\otimes\mathbb{I}_T+\ket{1}\bra{1}_C\otimes u_T$, where the dimension of the controlled system $\mathcal{H}_C$ is $2$ while the dimension of the target system $\mathcal{H}_T$ can be higher than $2$.

\part{Quantum computation over quantum networks}
\chapter{Preliminaries of Part II}
\label{chap:pre}
In this part, we analyze the first question:
\begin{itemize}
\item How does the topology of a quantum network consisting of quantum channels affect the performance of quantum communication?
\end{itemize}
More specifically, we concentrate on how the performance of quantum communication (generally quantum computation) changes when the topology of a quantum network consisting of quantum channels is changed while the capacity of quantum channels is fixed.
For simplicity, we consider quantum channels are noiseless and have 1-qubit capacity.
We consider a one-shot scenario, i.e. we are allowed to use a given network only once, and concentrate on a {\it cluster network}, which is a certain class of the network consisting of intermediate nodes and the same number of senders and receivers.
The cluster network is a subclass of $k$-pair network, which has been an actively studied network in both classical information theory \cite{NCsum} and quantum information theory \cite{Kobayashi0, Kobayashi1, Kobayashi2}.
We use the technique of {\it network coding} introduced in the following.

\section{Network coding theory}
In this section, we briefly review classical and quantum network coding theory, which is a basis of our analysis.

\subsection{Classical network coding}
In classical (network) information theory, \textit{network coding}, which incorporates processing at each network node in addition to routing, provides efficient transmission protocols that can resolve the bottleneck problem \cite{Ahlswede}.
Consider a communication task over the {\it butterfly network} and the {\it grail network} presented in Fig.~\ref{fig:cgrail} that aims to transmit single bits $x$ and $y$ from $i_1$ to $o_2$ and $i_2$ to $o_1$ simultaneously via nodes $n_1$, $n_2$, $n_3$ and $n_4$. The directed edges denote transmission channels with $1$-bit capacity. One of the channels in each network exhibits the bottleneck without network coding shown in Fig.~\ref{fig:cgrail}. Network coding has been already implemented in wireless network protocols and satellite communication protocols.

\begin{figure}
 \begin{center}
  \includegraphics[height=.3\textheight]{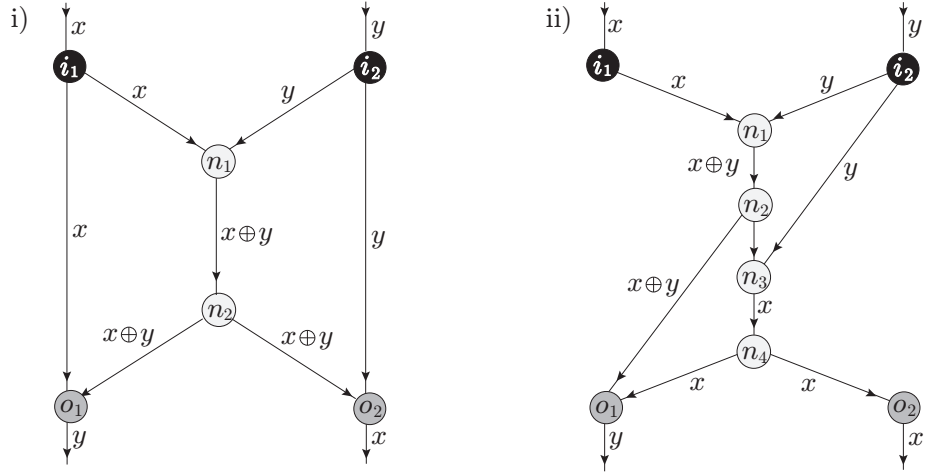}
  \end{center}
  \caption{{\bf Network coding for a classical communication task over i) the butterfly network and ii) the grail network.} Two bits of information $x, y \in \{0, 1\}$ are given at the input nodes $i_1$ and $i_2$, respectively. $x \oplus y$ denotes addition of $x$ and $y$ modulus $2$.}
\label{fig:cgrail}
\end{figure}

Such a communication task over a general two input-output network given by a graph $G$ has been shown to be achievable if and only if $G$ contains an essential substructure \cite{ButterflyGrail}. The butterfly network and the grail network are known to be two such essential substructures.

\subsection{Quantum network coding}
Quantum communication with \textit{quantum network coding} has been studied by analogy to classical network coding \cite{Iwama, Leung, Hayashi,Kobayashi0, Kobayashi1, Kobayashi2}. 
 $k$-pair quantum communication over a network is a unicast communication task to faithfully transmit a $k$-qubit state given at distinct input nodes $\{i_1, i_2,\cdots,i_k\}$ to distinct output nodes $\{o_{1},o_{2},\cdots,o_k\}$ through a given network.
Two examples of $2$-pair quantum communication over a butterfly network and a grail network are shown in Fig.~\ref{fig:qgrail}.
The setting of network coding can be further classified into three cases based on how one treats classical communication, namely, 

\begin{screen}
\begin{enumerate}
\item{The case where each channel can be used for either 1-bit classical communication or 1-qubit quantum communication \cite{Iwama, Leung}.}
\item{The case where each channel can be used for either 2-bit classical communication or 1-qubit quantum communication \cite{Hayashi, Soeda}.}
\item{The case where the channels only restrict the capacity of quantum communication, and classical communication is freely allowed between any two nodes \cite{Kobayashi0, Kobayashi1, Kobayashi2}.}
\end{enumerate}
\end{screen}

\begin{figure}
 \begin{center}
  \includegraphics[height=.33\textheight]{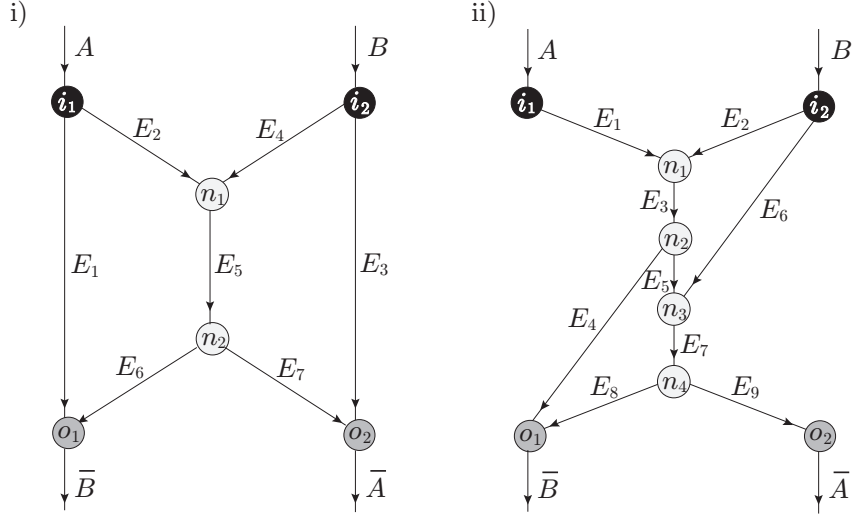}
  \end{center}
  \caption{{\bf The butterfly and the grail quantum network.} i) The butterfly network and ii) the grail network with the input nodes ($i_1$ and $i_2$), output nodes ($o_1$ and $o_2$) and the repeater nodes ($n_1,n_2,n_3$ and $n_4$). The directed edges $A,B,\bar{A},\bar{B}, E_1, E_2,\cdots,E_9$ represent quantum or classical channels. Quantum channels have 1-qubit capacity. Meanwhile there are some settings about classical channel capacity.
Our task is to transmit a given two-qubit state $|{\rm input}\rangle_{i_1,i_2}$ from $i_1$ to $o_2$ and from $i_2$ to $o_1$ simultaneously by using the channels and local quantum operations at each nodes.}
\label{fig:qgrail}
\end{figure}

In quantum network coding, perfect multicast communications are not allowed by the no-cloning theorem \cite{Noclone}. No-cloning theorem also makes it impossible to perform $k$-pair communication over the networks by using a simple extension of classical network coding.
Indeed, in the setting where each edge can be used for either 1-bit classical communication or 1-qubit quantum communication, perfect quantum 2-pair communication over the butterfly network has been shown to be impossible \cite{Iwama, Leung}. However, it has been shown that in the setting where each edge can be used for either 2-bit classical communication or 1-qubit quantum communication, perfect quantum 2-pair communication over the butterfly network is possible, if and only if input nodes share two EPR pairs \cite{Hayashi}. In the setting where each edge has 1-qubit channel capacity and classical communication is freely allowed between any nodes, however, it has been shown that there exists a quantum network coding protocol to achieve the 2-pair quantum communication over the butterfly and grail networks perfectly \cite{Kobayashi0, Kobayashi1, Kobayashi2}.
In this thesis, we focus on the third setting, where classical communication is freely allowed between any two nodes. This setting is justified in practical situations, where classical communication is easier to implement experimentally than quantum communication.
Before proceeding to our research, we review the technique used in quantum network coding in the third setting and note two important points related to our research.

In \cite{Kobayashi2}, Kobayashi {\it et al.} have shown that if classical $k$-pair communication, which is a unicast communication task to faithfully transmit $k$-bit given at distinct input nodes to distinct output nodes simultaneously, is possible, quantum $k$-pair communication over the same network is possible in the quantum setting where free classical communication  is allowed \footnote{Note that a network is a graph consisting nodes (vertices) and channels (edges) as we define later. Thus, if the graph is the same, we say the network is the same although the capacity of channels is different in the quantum setting and the classical setting.}.   Their implementation protocol is composed of two stages. In the first stage, the encoding stage, a unitary operation $U_{encoding}$ described by a gate sequence consisting only of CNOT gates is performed to create a large entangled resource state over the qubits at all nodes of the graph.  The CNOT gates correspond to the operation of copying and exclusive disjunction used for implementing $x \rightarrow x, x$ and $x, y \rightarrow x \oplus y$ in the classical setting of network coding.  In the second stage, the disentangling stage, two kinds of disentangling operations are performed, denoted by maps $\Gamma_{d2}$ and $\Gamma_{d3}$ consisting of the Hadamard gate, measurements in the computational basis, and conditional $Z$ operators depending on the measurement outcomes. 
Using the notation for quantum circuits presented in Table \ref{table:qcnotation} and the notation for nodes and edges presented in Fig. \ref{fig:qgrail}, the gate sequence of $U_{encoding}$ for 2-pair communication over the butterfly network is presented in Fig. \ref{fig:encodingstage}.  Quantum circuits corresponding to $\Gamma_{d2}$ and $\Gamma_{d3}$ are shown in Figures \ref{fig:two} and \ref{fig:three}, respectively.

\begin{figure}[H]
 \begin{center}
  \includegraphics[width=60mm]{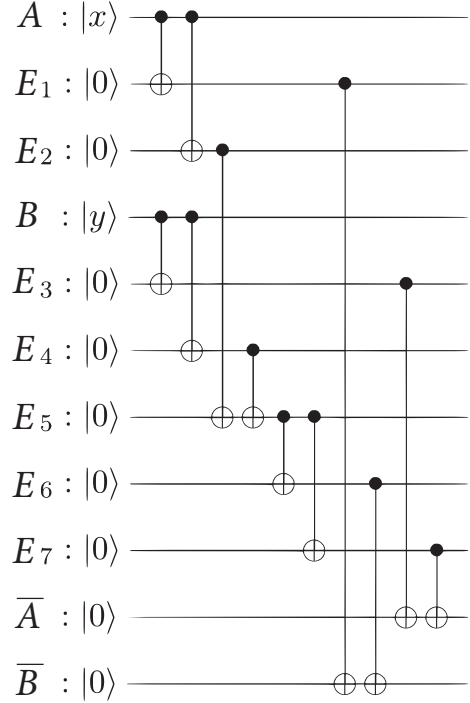}
 \end{center}
 \caption{{\bf The quantum circuit for $U_{encoding}$ in the encoding stage of the protocol presented by the Kobayashi {\it et al.} protocol for $2$-pair communication over the butterfly network.}  The indices of the Hilbert spaces of the qubits transmitted along the corresponding edges are also denoted by $A$, $B$, $\bar{A}$, $\bar{B}$, $E_1, \cdots, E_7$.  Operations on the qubits in the same node are considered local operations. An input state of the qubits $A$ and $B$ is given by $\sum_{x,y =0,1} \lambda_{x,y}\ket{x}_A\ket{y}_B$, where $\{ \ket{x}_A \}$ and $\{ \ket{y}_B \}$ denote the computational bases of qubit $A$ and $B$, respectively, and $\sum_{x,y} |\lambda_{x,y}|^2 =1$.}
 \label{fig:encodingstage}
\end{figure}

\begin{figure}[H]
  \begin{center}
   \includegraphics[width=45mm]{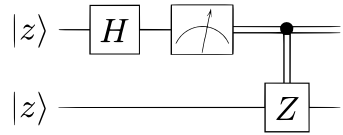}
  \end{center}
  \caption{{\bf A quantum circuit for the map $\Gamma_{d2}$.} It disentangles the first qubit of a two-qubit  state $\sum_{z =0,1} \alpha_z \ket{z}_1 \ket{z}_2 $ to obtain $\sum_{z}\alpha_z \ket{z}_2$ at the second qubit  for all $\alpha_z$ satisfying $\sum_{z} | \alpha_z|^2  =1$, where $\{ \ket{z} \}$ is the computational basis.}
  \label{fig:two}
 \end{figure}
 \begin{figure}[H]
  \begin{center}
   \includegraphics[width=60mm]{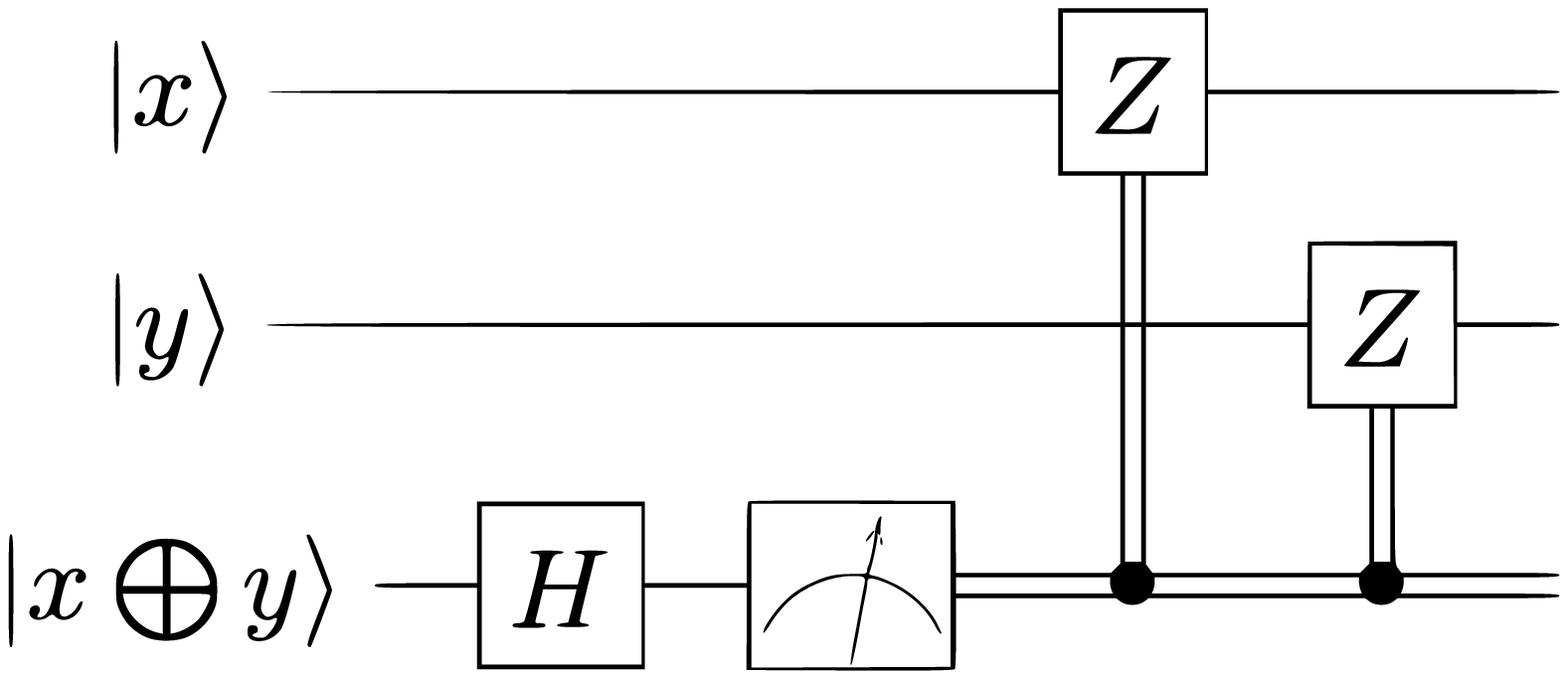}
  \end{center}
  \caption{{\bf A quantum circuit for the map $\Gamma_{d3}$.} It disentangles the third qubit from a three-qubit state $\sum_{x,y =0,1} \lambda_{x,y}\ket{x}_1\ket{y}_2\ket{x \bigoplus y}_3$ to obtain $\sum_{x,y}\lambda_{x,y}\ket{x}_1\ket{y}_2$} on the first two qubits for all $\lambda_{x,y}$ satisfying $\sum_{x,y} | \lambda_{x,y}|^2  =1$, where $\{ \ket{x} \}$ and $\{ \ket{y} \}$ denote the computational basis. 
  \label{fig:three}
\end{figure}

We give an overview of the Kobayashi et al. protocol for 2-pair communication over the butterfly network. We write the initial state of two qubits $A$ and $B$ given at the input nodes $i_1$ and $i_2$, respectively, by
\begin{equation}
\ket{input (\lambda)}_{AB}^{i_1i_2}=\sum_{x,y=0,1} \lambda_{x,y}\ket{x}_{A}^{i_1}\ket{y}_{B}^{i_2}
\label{eqn:initial}
\end{equation}
where $\lambda$ specifies a set of coefficients $\lambda = \{ \lambda_{x,y} \}_{x,y=0,1}$ satisfying $\sum_{x,y} |\lambda_{x,y}|^2 =1$, superscripts $i_1$ and $i_2$ specify the nodes where qubits belong, and $\{ \ket{x} \}$ and $\{ \ket{y} \}$ are the computational basis.  All the other qubits $E_1, \cdots, E_7, \bar{A}$ and $\bar{B}$ are in the fixed state $\ket{0}$.   In general, the initial state given by $\ket{input (\lambda)}$ represents an entangled state of qubits $A$ and $B$.  If necessary, the input state can be restricted to a product state by imposing $\lambda_{x,y} = \mu_x \nu_y$ for $x,y=0,1$ satisfying $\sum_{x} |\mu_x |^2 =1$ and $\sum_{y} |\nu_y |^2 =1$. 

By performing $U_{encoding}$ in the encoding stage, the initial state is transformed to an entangled state of 11 qubits, $A$, $B$, $E_1, \cdots, E_7$, $\bar{A}$ and $\bar{B}$, given by
\begin{eqnarray}
\sum_{x,y=0,1}\lambda_{x,y}\ket{x}_{A}^{i_1}\ket{x}_{E_1}^{o_1}\ket{x}_{E_2}^{n_1}\ket{y}_{B}^{i_2}\ket{y}_{E_3}^{o_2}\ket{y}_{E_4}^{n_1}\ket{x\oplus y}_{E_5}^{n_2}\ket{x\oplus y}_{E_6}^{o_1}\ket{x\oplus y}_{E_7}^{o_2}\ket{x}_{\bar{A}}^{o_2}\ket{y}_{\bar{B}}^{o_1}.
\label{eq2}
\end{eqnarray}
In the disentangling stage, the disentangling operation $\Gamma_{d2}$ on qubits $A$ and $E_1$ is performed to disentangle the qubit $A$.  The resulting state is given by
\begin{eqnarray}
\sum_{x,y=0,1}\lambda_{x,y}\ket{x}_{E_1}^{o_1}\ket{x}_{E_2}^{n_1}\ket{y}_{B}^{i_2}\ket{y}_{E_3}^{o_2}\ket{y}_{E_4}^{n_1}\ket{x\oplus y}_{E_5}^{n_2}\ket{x\oplus y}_{E_6}^{o_1}\ket{x\oplus y}_{E_7}^{o_2}\ket{x}_{\bar{A}}^{o_2}\ket{y}_{\bar{B}}^{o_1}.
\end{eqnarray}
This disentangling operation can be performed when both of the two qubits $A$ and $E_1$ are at the node $i_1$, but it can be also performed after the qubit $E_1$ is transmitted from $i_1$ to $o_1$ using the edge $E_1$.  In this case, 1-bit classical communication is required from the node $i_1$ to the node $o_1$ to perform the conditional $Z$ operation on the qubit $E_1$.   Similarly, the disentangling operation $\Gamma_{d2}$ is performed on the pairs of qubits $\{B, E_3 \}$, $\{ E_2, E_1 \}$, $\{ E_4, E_3 \}$ and $\{ E_5, E_6 \}$ to disentangle qubits $B$, $E_2$, $E_4$, and $E_5$.  Then another disentangling operation $\Gamma_{d3}$ is performed on the sets of qubits $\{ E_1, \bar{B}, E_6 \}$ and $\{ \bar{A}, E_3, E_7 \}$  to disentangle the qubits $E_6$ and $E_7$.  Then we obtain a state of four qubits $E_3$, $\bar{A}$ at the node $o_2$ and $E_1$, $\bar{B}$ at the node $o_1$ given by 
\begin{eqnarray}
\ket{\phi (\lambda)}^{o_2o_1}= \sum_{x,y=0,1}\lambda_{x,y}\ket{x}_{E_1}^{o_1}\ket{y}_{E_3}^{o_2}\ket{x}_{\bar{A}}^{o_2}\ket{y}_{\bar{B}}^{o_1}.
\label{eq4}
\end{eqnarray}

At last, the disentangling operation $\Gamma_{d2}$ is performed on the pairs $\{ E_1, \bar{A}\}$ and $\{ E_3, \bar{B}\}$ to disentangle $E_1$ and $E_3$. Then we obtain the output state
\begin{eqnarray}
\ket{ output (\lambda)}^{o_2o_1} =\sum_{x,y=0,1}\lambda_{x,y}\ket{x}_{\bar{A}}^{o_2}\ket{y}_{\bar{B}}^{o_1}.
\end{eqnarray}
Thus, we achieve $\ket { output ( \lambda )}_{\bar{A}\bar{B}} = \ket{input (\lambda)}_{A, B}$. Note that 7-bits of classical communication is required in the disentangling stage.

At the end of this section, we note two important points of quantum network coding with free classical communication which is related to our research.
\begin{itemize}
\item Applying CNOT gates instead of the operation of copying and exclusive disjunction used for implementing $x \rightarrow x, x$ and $x, y \rightarrow x \oplus y$ in the classical setting of network coding enables to perform quantum communication. However, it is possible to perform a more general controlled unitary operations instead of CNOT by using quantum network, which is a key idea for implementing a unitary operation over a quantum network.
\item In \cite{Kobayashi2}, they have shown that if classical $k$-pair communication is possible, then quantum $k$-pair communication is possible. However, we do not know the converse. Particularly, we do not know whether a quantum communication is possible or not over the {\it square network} presented in Fig.~\ref{fig:square}, where a classical communication is impossible.
\end{itemize}

\begin{figure}
 \begin{center}
  \includegraphics[height=.25\textheight]{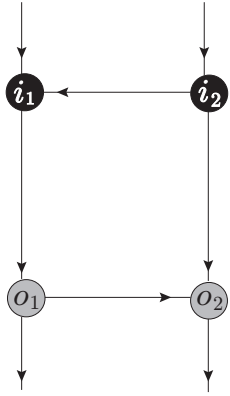}
  \end{center}
  \caption{{\bf The square network with the input nodes ($i_1$ and $i_2$) and output nodes ($o_1$ and $o_2$).} We focus on the setting where quantum channels described by directed edges have 1-qubit capacity and classical communication is freely allowed. The task is to transmit a given two-qubit state $|{\rm input}\rangle_{i_1,i_2}$ from $i_1$ to $o_2$ and from $i_2$ to $o_1$ simultaneously. It has been an open problem whether the task is achievable or not. We show that the task is impossible by using a tool we develop in this Part.}
\label{fig:square}
\end{figure}

\section{Measurement Based Quantum Computation (MBQC)}
\label{sec:MBQC}
Quantum computation over quantum networks is related to a model of quantum computation, {\it measurement based quantum computation} (MBQC).
MBQC was first proposed in 2001 by Raussendorf and Briegel in a much acclaimed paper \cite{Raussendorf}. They showed that a particular highly entangled multi-qubit state called a {\it cluster state} combined with single qubit measurements and classical communication are sufficient for performing universal quantum computation. In the first stage, a cluster state is prepared by initializing all qubits in the $\ket{+}=\frac{1}{\sqrt{2}}(\ket{0}+\ket{1})$ state and then applying controlled Z gates between pairs of neighboring qubits on a square lattice, more generally, one can prepare graph states similarly for neighbors corresponding to the edges of graph.
A definition of a graph state $\ket{\phi}$ is given as follows.
Note that the order in the product has no bearing on the definition since all the controlled Z gates are commutative.
\begin{definition}
For a given graph $G=\{\mathcal{V},\mathcal{E}\}$, the graph state $\ket{\phi}\in\mathbb{C}^{2^{|\mathcal{V}|}}$ is a $|\mathcal{V}|$-qubit state such that
\begin{equation}
\ket{\phi}=\prod_{(u,v)\in\mathcal{E}}U^{CZ}_{u,v} \left(\otimes_{i=1}^{|\mathcal{V}|}\ket{+}_i\right),
\label{eq:clusterstate}
\end{equation}
where $U^{CZ}_{u,v}$ is the controlled Z gate acting only on the $u$-th qubit and the $v$-th qubit.
\end{definition}

In the second stage, we perform a projective measurement $\{\ket{+_{\alpha}}\bra{+_{\alpha}}, \ket{-_{\alpha}}\bra{-_{\alpha}}\}$, where
\begin{equation}
\ket{\pm_{\alpha}}=\frac{1}{\sqrt{2}}(\ket{0}\pm e^{i\alpha}\ket{1}),
\label{eq:projMBQC}
\end{equation}
on all single qubits in a certain sequence, sending the classical outcome bits to the measurement in the next step. The entanglement between measured and unmeasured qubits ensures that the quantum state of the remaining qubits is transformed according to the algorithm given by the initial state and the measurement pattern. The final qubits to be measured define the output of the computation.

Since we use only local measurements, all of the entanglement that we need for the computation must be prepared in the first stage. This model highlights the role of entanglement in quantum computation.

\section{Classifications of unitary operators}
In this section, we first review {\it controlled unitary operation}, since it plays a central role in quantum computation over a quantum network.
In order to characterize the class of implementable unitary operations over a quantum network, we introduce two ways to classify unitary operators, the Kraus-Cirac decomposition and the operator Schmidt decomposition.
Note that the operator Schmidt decomposition is applicable to unitary operators on any non-prime dimensional Hilbert space while the Kraus-Cirac decomposition is applicable only to two-qubit unitary operators.

\subsection{Controlled unitary operation}
A fully controlled unitary operator $U\in\mathbf{U}(\mathcal{H}_c\otimes\mathcal{H}_t)$ is a unitary operator that acts on a control system $\mathcal{H}_c$ and a target system $\mathcal{H}_t$ such that
\begin{equation}
U=\sum_{i=1}^{C}\ket{i}\bra{i}_c\otimes u^{(i)},
\end{equation}
where $\{\ket{i}_c\}_i$ is an orthonormal basis of $\mathcal{H}_c$ and $u^{(i)}\in\mathbf{U}(\mathcal{H}_t)$ is a unitary operator.
$U^{CNOT}$ and $U^{CZ}$ are special cases of two-qubit controlled unitary operators.
A controlled unitary operator is similar to a classical switch where the gate acting on the target system changes depending on the state of the control system. The $i$-th unitary operator $u^{(i)}$ of the set $\{u^{(i)}\}$ is performed on the target system when the state of the control system is prepared to be $\ket{i}_c$. However, unlike the classical switch, we can prepare the state of the control system to be any superposition of states, which creates entanglement between the control system and the target system.

Any two-qubit controlled unitary operators $U$ defined by
\begin{eqnarray}
U:=|0\rangle\langle0|\otimes \mathbb{I}+|1\rangle\langle1|\otimes u
\label{controlledu}
\end{eqnarray}
where $u \in \textbf{U}(\mathbb{C}^2)$, is locally unitarily equivalent to a two-qubit controlled phase operator $U^{(\theta)}$.   $U^{(\theta)}$ can be written by
\[
U^{(\theta)}:=\ket{00}\bra{00}+\ket{01}\bra{01}+\ket{10}\bra{10}+e^{i \theta}\ket{11}\bra{11}.
\]
That is, for all two-qubit controlled unitary operators $U$, there exists $u_1, u_2, u_3, u_4 \in \textbf{U}(\mathbb{C}^2)$ and $\theta\in\mathbb{R}$ such that $U = (u_3 \otimes u_4)  U^{(\theta)}(u_1 \otimes u_2)$. We give a proof of this statement in the following.
\begin{proof}
$u_1, u_2, u_3, u_4 \in \textbf{U}(\mathbb{C}^2)$ and $\theta$ are given by the following.   
\begin{enumerate}
\item{Diagonalize $u$, and obtain $u=e^{i\alpha}(|\psi\rangle\langle\psi|+e^{i\beta}|\psi'\rangle\langle\psi'|)$, where $\{e^{i\alpha},e^{i(\alpha+\beta)}\}$ are the eigenvalues of $u$ and $\{\ket{\psi},\ket{\psi'}\}$ the eigenvectors.}
\item{Set $u_1=\mathbb{I}$,$u_2=|0\rangle\langle\psi|+|1\rangle\langle\psi'|$,
$u_3=\begin{pmatrix}
1&0\\
0&e^{i\alpha}
\end{pmatrix}$,$u_4=|\psi\rangle\langle0|+|\psi'\rangle\langle1|$
and $\theta=\beta$.}
\end{enumerate}
\end{proof}
We summarize the property of the local unitary equivalence of two-qubit controlled unitary operators in Fig.~\ref{fig:LUcontrolphase}.
The set of two-qubit unitary operations that is locally unitarily equivalent to a controlled phase operator is denoted by $\mathbf{U}_c$.

\begin{figure}[H]
  \begin{center}
   \includegraphics[width=140mm]{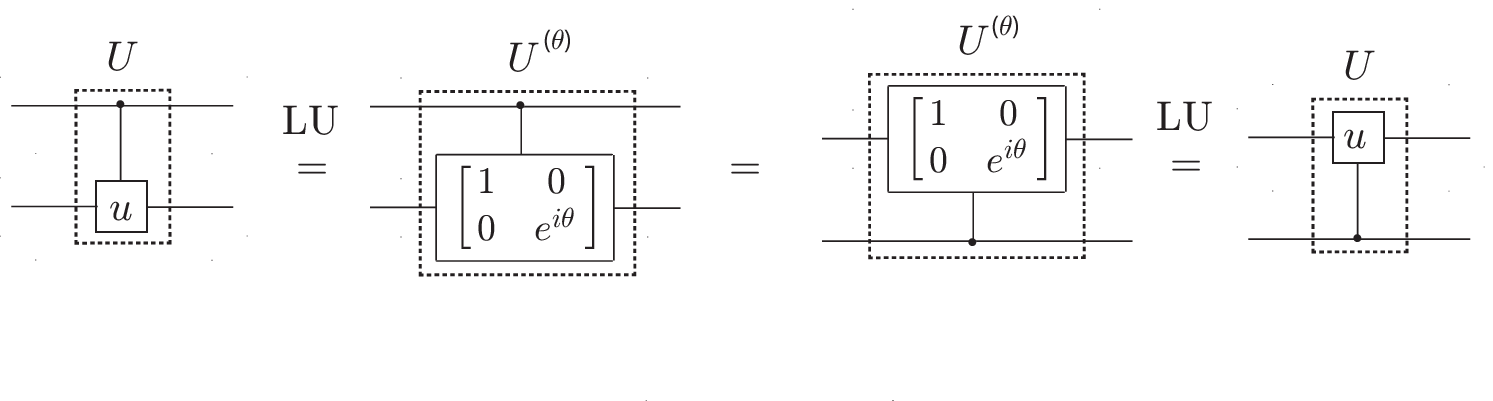}
  \end{center}
  \caption{{\bf Quantum circuits representing the property of local unitary equivalence of a two-qubit controlled unitary operator and a controlled phase operator.}  $\overset{\mathrm{LU}}{=}$ represents local unitary equivalence. Note that the controlled qubit and the target qubit can be flipped in a local unitary equivalence class.}
  \label{fig:LUcontrolphase}
 \end{figure}

\subsection{Kraus-Cirac decomposition}
A general two-qubit unitary operator $U \in \mathbf{U}(\mathcal{H}_{\mathcal{I}_Q} : \mathcal{H}_{\mathcal{O}_Q})$ where $\mathcal{H}_{\mathcal{I}_Q} = \mathcal{H}_{\mathcal{O}_Q}= \mathbb{C}^2 \otimes \mathbb{C}^2$  can be decomposed into a canonical form called the Kraus-Cirac decomposition introduced in \cite{KCdec1,KCdec2, KCdec3} given by
\begin{equation}
U=(u \otimes u^\prime) e^{ i(x X \otimes X+y Y \otimes Y+z Z \otimes Z)} (w \otimes w^\prime).
\label{KCD}
\end{equation}
where $u$, $u^\prime$, $w$ and $w^\prime$ are single qubit unitary operators and $X$, $Y$ and $Z$ are the Pauli operators on $\mathbb{C}^2$ and $x,y,z \in \mathbb{R}$.  Since $u$, $u^\prime$, $w$ and $w^\prime$ represent local unitary equivalence, we focus on analyzing implementability of the two-qubit global unitary part \begin{eqnarray}
U_{global} (x,y,z) := e^{ i(x X \otimes X+y Y \otimes Y+z Z \otimes Z)}
\label{gloablpartofKCD}
\end{eqnarray}
In Eq.~(\ref{gloablpartofKCD}), the parameters $x,y,z$ in $0 \leq x < \pi/2$ (or $0 \leq x \leq \pi/4$ if $z=0$), $0 \leq y \leq \min\{x, \pi/2 - x \}$ and $0 \leq z \leq y $ cover all two-qubit global unitary operators up to the local unitarily equivalence (the Weyl chamber \cite{KCdec2}).
We define the Kraus-Cirac number of a two-qubit unitary operator $U$ as the number of non-zero parameters $x,y,z$ in $U_{global}(x,y,z)$ and denote by ${\rm KC}\#(U)$.  ${\rm KC}\#(U)$ characterizes nonlocal properties (globalness) of $U$ \cite{SoedaAkibue}.   

\subsection{Operator Schmidt decomposition}
The operator Schmidt decomposition can be applied to any unitary operation acting on a finite (non-prime number) dimensional Hilbert space.
The set of linear operators $\mathbf{L}(\mathcal{H}_A)$ forms a Hilbert space with respect to the inner product $(M,N)=\frac{1}{\dim(\mathcal{H}_A)}{\rm tr}(M^{\dag}N)$. Thus we can apply the Schmidt decomposition to operators, such that for any linear operator $M\in\mathbf{L}(\mathcal{H}_A\otimes \mathcal{H}_B)$, there exists a set of orthonormal operators $\{P_i\in\mathbf{L}(\mathcal{H}_A)\}_i$ and  $\{Q_i\in\mathbf{L}(\mathcal{H}_B)\}_i$ satisfying
\begin{equation}
M=\sum_i \lambda_i P_i\otimes Q_i,
\end{equation}
where $\{\lambda_i\}$ are non-negative real numbers known as the {\it operator Schmidt coefficients} \cite{SchmidtDecomp}. The number of non-zero coefficients $|\{\lambda_i> 0\}|$ is known as the {\it operator Schmidt rank}. We denote the operator Schmidt rank of a linear operator $M$ by ${\rm Op}\#_B^A(M)$.

Elements of two-qubit unitary operators $\mathbf{U}(\mathcal{H}_A\otimes\mathcal{H}_B)$ can be classified by the Kraus-Cirac number and the operator Schmidt rank as follows. We summarize the classifications in Table \ref{table:KCnumber}.
\begin{itemize}
\item $U$ with ${\rm KC}\#(U)=0$ is a product of local unitary operators and satisfies ${\rm Op}\#_B^A(U)=1$.
\item $U$ with  ${\rm KC}\#(U)=1$ is locally unitarily equivalent to a controlled unitary operator and satisfies ${\rm Op}\#_B^A(U)=2$.
\item $U$ with  ${\rm KC}\#(U)=2$ is locally unitarily equivalent to a special class of two-qubit unitary operators called a matchgate $U^{match}$ defined by
\begin{equation}
U^{match}=\left(\begin{array}{cccc}u^{(1)}_{0,0} & 0 & 0 & u^{(1)}_{0,1} \\0 & u^{(2)}_{0,0} & u^{(2)}_{0,1} & 0 \\0 & u^{(2)}_{1,0} & u^{(2)}_{1,1} & 0 \\u^{(1)}_{1,0}& 0 & 0 & u^{(1)}_{1,1}\end{array}\right),
\end{equation}
where $u^{(i)}=\left(\begin{array}{cc}u^{(i)}_{0,0} & u^{(i)}_{0,1} \\u^{(i)}_{1,0}& u^{(i)}_{1,1}\end{array}\right)$ is a single-qubit unitary operator whose determinant is equal to 1. A sequence of the matchgates acting only on the nearest neighbor of one-dimensionally alligned qubits corresponds to a physical model of noniteracting fermions \cite{TerhalDiVincenzo} and is efficiently simulatable on a classical computer \cite{Valiant, JozsaMiyake}. $U^{match}$ satisfies ${\rm Op}\#_B^A(U)=4$.
\item The rest of two-qubit unitary operators including the SWAP operator have ${\rm KC}\#(U)=3$ and satisfy ${\rm Op}\#_B^A(U)=4$.
\end{itemize}
\begin{table}[H]
\begin{center}
\begin{tabular}{|c|c|c|}
\hline
${\rm KC}\#(U)$
& 
${\rm Op}\#_B^A(U)$
&
class
\\
\hline
0
& 
1
&
LU
\\
\hline
1
& 
2
&
LU C-phase
\\
\hline
2
& 
4
&
LU matchgate
\\
\hline
3
& 
4
&
$SU(4)$
\\
\hline
\end{tabular}
\end{center}
\caption{{\bf Classification of two-qubit unitary operators $\mathbf{U}(\mathcal{H}_A\otimes\mathcal{H}_B)$ by the Kraus-Cirac number and the operator Schmidt rank.} We refer the class with Kraus-Cirac number 0 as local unitary operators. The class with Kraus-Cirac number 1 is locally unitarily equivalent to controlled phase gates. The class with Kraus-Cirac number 2 is locally unitarily equivalent to matchgates.}
\label{table:KCnumber}
\end{table}

\chapter{Computation over the cluster network}
\label{chap:QCN}
In $k$-pair quantum communication, the output state $\left \vert {\rm output}\right \rangle_{o_{1}\cdots o_k}$ at the output nodes can be regarded as a state obtained by performing a $k$-qubit unitary operation $U$ on the input state $\left \vert {\rm input}\right \rangle_{i_1\cdots i_k}$ given at the input nodes
\begin{equation}
\label{eq:quantumcoding}
\left \vert {\rm output}\right \rangle_{o_1\cdots o_k} = U \left \vert {\rm input}\right \rangle_{i_1\cdots i_k},
\end{equation}
where $U$ is a permutation operator \footnote{A $k$-qubit permutation operator $U$ is defined by $U=\sum_{i_1,\cdots,i_k}\ket{i_{\sigma (1)},\cdots, i_{\sigma (k)}}\bra{i_1,\cdots, i_k}$, where $\sigma$ represents a permutation. An example of a two-qubit permutation operator is a SWAP operator $U^{SWAP}$, defined in Table \ref{table:qcnotation}.}.
We do not need to restrict the $k$-qubit unitary operator $U$ in Eq.\eqref{eq:quantumcoding} to be a permutation operator, it can be a general unitary operator.   This leads to the idea of quantum computation over a quantum network, which aims to perform a unitary operation on a state given at distinct input nodes and to faithfully transmit the resulting state to the distinct output nodes efficiently over the network at the same time \cite{NCcomp, Soeda}. By computing and communicating simultaneously, quantum computation over the network may reduce communication resources in DQC. Since communication can be naturally regarded as a special class of computation, investigating the capability of quantum computation gives us a new insight of quantum network coding, which originally aims at just quantum communication.

We investigate implementability of a unitary operation over a {\it cluster network}, which is a special class of networks with $k$ input nodes and $k$ output nodes, which we call $k$-pair network, as a first step to analyze more general network. The cluster network contains the grail network as its special case and relates to the butterfly network.
We focus on the setting where classical communication is freely allowed between any two nodes.    We present which class of unitary operators is implementable over cluster networks in this setting by investigating transformations of cluster networks into quantum circuits.  The transformation method of cluster networks provides constructions of quantum network coding to implement any two-qubit unitary operations over the grail and butterfly networks, which are fundamental primitive networks for classical network coding.  We also analyze probabilistic implementation of $N$-qubit unitary operations over the cluster network to understand the properties of quantum network coding for quantum computation when the requirement of deterministic implementations are relaxed but that of exact implementations are kept.

We denote the Hilbert space of a set of qubits specified by a set $\mathcal{Q}$  by $\mathcal{H}_{\mathcal{Q}}$  and the Hilbert space corresponding to a qubit $Q_k$ specified by an index $k$ by $\mathcal{H}_{Q_k} =\mathbb{C}^2$.   In our setting where quantum communications are restricted but classical communications are unrestricted, quantum communication of a qubit state between two nodes is replaced by teleportation between two nodes.  Since any direction of classical communications is allowed, quantum communication of a single qubit state can be achieved by sharing a maximally entangled two-qubit state between the nodes and the direction of quantum communication is not limited.  Thus what we can do over a given network in principle is equivalent to perform \textit{local operations} (at each nodes) \textit{and classical communication} (LOCC) assisted by the {\it resource state} that consists of a set of maximally entangled two-qubit states (the EPR states) $\ket{\Phi^{+}}=\frac{1}{\sqrt{2}}(\ket{00}+\ket{11})$ shared between the nodes connected by edges.  

We investigate which unitary operations are implementable by LOCC assisted by the resource state for a given network where nodes are represented by a two-dimensional lattice.  We consider that a node represented by $v_{i,j}$ is on the coordinate of the two-dimensional lattice $(i,j)$ and edges connect nearest neighbor nodes. We call these networks {\it cluster networks}. In Discussion of this part, we examine implementability of a unitary operation over {\it generalized cluster network}, where nodes are represented by a two-dimensional lattice and edges connect nearest neighbor horizontal nodes but edges can connect any pair of vertical nodes. We first give a formal definition of a cluster network.

\begin{definition}
A  network $G=\{\mathcal{V},\mathcal{E},\mathcal{I},\mathcal{O} \}$ is a $(k,N)$-cluster network if and only if, 
\begin{eqnarray}
\mathcal{V}&=&\{v_{i,j};\,1\leq i\leq k,1\leq j\leq N\}\nonumber\\
\mathcal{I}&=&\{v_{i,1};\,1\leq i\leq k\}\nonumber\\
\mathcal{O}&=&\{v_{i,N};\,1\leq i\leq k\}\nonumber\\
\mathcal{E}&=&\mathcal{S} \cup \mathcal{K}
\end{eqnarray}
where
\begin{eqnarray}
\mathcal{S}&=& \{(v_{i,j},v_{i+1,j});\,1\leq i\leq k-1,1\leq j\leq N)\}, \nonumber\\
\mathcal{K}&=& \{(v_{i,j},v_{i,j+1});\,1\leq i\leq k,1\leq j\leq N-1)\}, 
\end{eqnarray}
$k \geq 1$ and  $N \geq 1$.  $\mathcal{V}$ represents the set of all nodes, $\mathcal{I}$ and $\mathcal{O}$ represent $k$ input nodes and $k$ output nodes, respectively.  $\mathcal{E}$ represents the set of all edges and $\mathcal{S}$ and $\mathcal{K}$ represent the sets of vertical and horizontal edges, respectively.
\end{definition}

Next we define the resource state corresponding to the $(k,N)$-cluster network.  We introduce qubits  $S_{i,j}^1$ at node $v_{i,j}$ and $S_{i+1,j}^2$ at node $v_{i+1,j}$ to represent an EPR pair corresponding to an edge $ (v_{i,j},v_{i+1,j}) \in \mathcal{S}$.  Similarly, we introduce qubits specified by $K_{i,j}^1$ at node $v_{i,j}$ and $K_{i,j+1}^2$ at node $v_{i,j+1}$ to represent an EPR pair corresponding to an edge $ (v_{i,j},v_{i,j+1}) \in \mathcal{K}$.   We denote the set of all qubits in the resource state by $\mathcal{R}=\{ S_{i,j}^1, S_{i+1,j}^2|1\leq i \leq k-1, 1\leq j\leq N\}\cup\{K_{i,j}^1, K_{i,j+1}^2| 1\leq i \leq k, 1\leq j\leq N-1 \}$. The resource state $\ket{\Phi}_{\mathcal{R}}$ corresponding to a cluster network is defined by the following.

\begin{definition}
For a given $(k, N)$-cluster network, the resource state $\ket{\Phi}_{\mathcal{R}}\in\mathcal{H}_{\mathcal{R}}$ is defined by
\begin{eqnarray}
\ket{\Phi}_{\mathcal{R}}&=&\otimes_{i=1}^{k-1}\otimes_{j=1}^{N}\ket{\Phi^+}_{S_{i,j}^1,S_{i+1,j}^2}\nonumber\\
&&\otimes_{i=1}^{k}\otimes_{j=1}^{N-1}\ket{\Phi^+}_{K_{i,j}^1,K_{i,j+1}^2}.
\label{resourcestate}
\end{eqnarray}
\end{definition}

For example, the $(3,3)$-cluster network and the corresponding resource state are shown in Fig.~\ref{fig:cluster1}. 
Note that the resource state for a cluster network represented by Eq.~(\ref{resourcestate}) is different from the cluster states used in MBQC, defined by Eq.\eqref{eq:clusterstate}. While we can convert  the resource state for a cluster network into a cluster state by performing a projective measurement on all qubits at each node, a cluster state cannot be converted to the resource state for the corresponding cluster network by LOCC.

\begin{figure}
\begin{center}
  \includegraphics[height=.6\textheight]{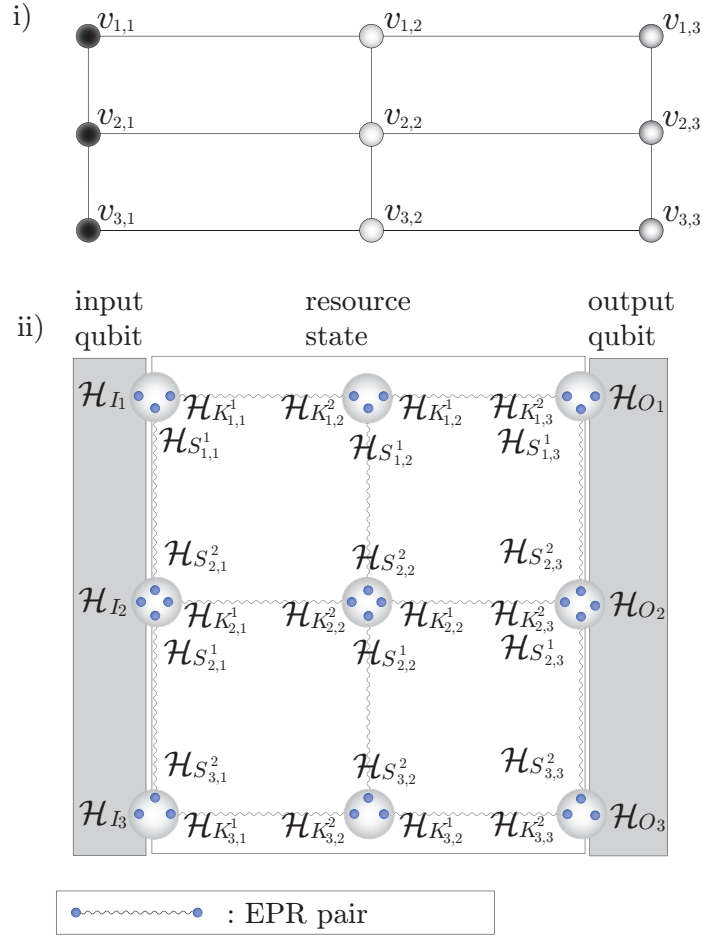}
  \end{center}
\caption{{\bf An example of a cluster network.} i) The $(3,3)$-cluster network with the input nodes $\mathcal{I}=\{ v_{1,1},v_{2,1},v_{3,1} \}$, output nodes $\mathcal{O} = \{ v_{1,3},v_{2,3},v_{3,3} \}$ and $3$ repeater nodes $\{ v_{1,2},v_{2,2},v_{3,2} \}$, and ii) the corresponding resource state. Note that the resource states of the cluster networks are different from the cluster states used in MBQC \cite{Raussendorf}.
}
\label{fig:cluster1}
\end{figure}

Finally we define implementability of a unitary operation over a $k$-pair network. In addition to resource qubits $\mathcal{R}$,
We introduce input qubits $I_i$ at the input node $v_{i,1} \in \mathcal{I}$, output qubits $O_i$ at the output node $v_{i,N} \in \mathcal{O}$, a set of input qubits $\mathcal{I}_Q =\{I_i|1\leq i\leq k\}$ and a set of output qubits $\mathcal{O}_Q =\{O_i|1\leq i\leq k\}$ for a $(k,N)$-cluster network.
 Note that each input and output node has only one input or output qubit since we concentrate on implementability of a unitary operation, and the state of output qubits is initially set to be in $\ket{0}\in\mathcal{H}_{\mathcal{O}_Q}$.

\begin{definition}
For a $(k,N)$-cluster network specified by $G=\{\mathcal{V},\mathcal{E},\mathcal{I},\mathcal{O} \}$, a unitary operation $U \in\mathbf{U}(\mathcal{H}_{\mathcal{I}_Q}:\mathcal{H}_{\mathcal{O}_Q})$ is deterministically implementable over the network if and only if there exists a LOCC map $\Gamma$ such that for any pure state $\ket{\psi}\in \mathcal{H}_{ \mathcal{I}_Q}$,
\begin{equation}
\Gamma(\ket{\psi}\bra{\psi}\otimes\ket{\Phi}\bra{\Phi}_{\mathcal{R}})=U\ket{\psi}\bra{\psi}U^{\dag},
\end{equation}
where LOCC map $\Gamma$ consists of local operations on each node and classical communications and $\mathbf{U}(\mathcal{H}_{\mathcal{I}_Q}:\mathcal{H}_{\mathcal{O}_Q})$ is the set of unitary operations from $\mathcal{H}_{\mathcal{I}_Q}$ to $\mathcal{H}_{\mathcal{O}_Q}$.
\end{definition}
Note that the main difference between this network computation model for implementing a unitary operation over a cluster network and standard MBQC is that any operations inside each node are allowed including adding arbitrary ancilla states in this model whereas only certain projective measurements on the cluster state in each node are allowed in MBQC.  Thus the full set of implementable unitary operations over a ($k,N$)-cluster network is larger than a set of operations implementable by MBQC using the corresponding cluster states converted from the resource state for the ($k,N$)-cluster network by LOCC. In Chapter 5, of this part, we investigate a difference between MBQC and computation over the butterfly network and discuss a potential contribution of our result toward MBQC.

\section{Possible computation}

We propose a method to convert a $(k,N)$-cluster network into quantum circuits representing a class of unitary operators implementable by LOCC assisted by the resource state corresponding to a given cluster network. By using the converted circuit, it is  easier to construct a network coding protocol since a set of implementable unitary operators are represented by a set of parameters of the converted circuit instead of a complicated LOCC protocol. The class of implementable unitary operators represented by the converted circuit is a subset of that over the cluster network  in general since the conversion method does not guarantee to give all possible constructions.  However, in some cases,  the constructions given by the conversion methods cover all possible implementable unitary operations as will be shown in the next section.   

We define a set of vertically aligned nodes $\mathcal{V}^v_j := \{ v_{i,j} \}_{i=1}^{k}$ and a set of vertically aligned edges  $\mathcal{S}_j:=\{ (v_{i,j},v_{i+1,j}) \}_{i=1}^{k-1}$ where $1 \leq j \leq N$.   We also define a set of horizontally aligned nodes $\mathcal{V}^h_i := \{ v_{i,j} \}_{j=1}^N$ and a set of horizontally aligned edges  $\mathcal{K}_i:=\{ (v_{i,j},v_{i,j+1}) \}_{j=1}^{N-1}$ where $1 \leq i \leq k$.  We consider that the EPR pairs given for a set of vertically aligned edges  $\mathcal{S}_j$ are used for implementing global unitary operations between nodes whereas each EPR pair given for a set of horizontal aligned edges  $\mathcal{K}_i$ is used for teleporting a qubit state from node $v_{i,j}$ to node $v_{i,j+1}$.

We investigate which kinds of global operations are implementable between the nodes  in $\mathcal{V}^v_j$ if only one EPR pair is given for each edge and LOCC between the nodes is allowed.   In this case, a two-qubit controlled unitary operation
\begin{eqnarray}
C_{ l;n} (\{ u_n^{(a)} \}_{a=0,1} ):=\sum_{a=0}^1 \ket{a}\bra{a}_l \otimes u_n^{(a)},
\end{eqnarray}
where $l$ represents the vertical coordinate of the node $v_{l,j}$ of a control qubit and $n$ represents the vertical coordinate of the node $v_{n,j}$ of a target qubit, and $u_n^{(a)} (a=0,1)$ are single qubit unitary operations on the target qubit, can be performed by using the method to implement a controlled unitary operation using a EPR pair proposed by \cite{Eisert}. 
If $n \neq l \pm 1$,  all EPR pairs represented by edges between  $l$ and $n$ are consumed to implement the two-qubit control unitary.  When we do not specify the single qubit unitary operations $\{ u_n^{(a)} \}$ on the target qubit we denote a two-qubit controlled unitary operation simply by $C_{ l;n}$.

In addition to the two-qubit control unitary operations, we can perform  three-qubit fully controlled unitary operations defined by 
\begin{eqnarray}
C_{l,m;n} (\{ u_n^{(a b)} \}_{a,b=0,1} ):=\sum_{a=0}^1 \sum_{b=0}^1 \ket{a b}\bra{a b}_{lm} \otimes u_n^{(a b)},
\end{eqnarray}
where $l$ and $m$ represent the vertical coordinates of the nodes $v_{l,j}$ and $v_{m,j}$ of two control qubits, respectively,  and $n$  represents the vertical coordinates of the node $v_{n,j}$ of a target qubit, and $u_n^{(a b)} (a,b=0,1)$
represents single qubit operations on the target qubit.
(See Appendix A.1 for details of the LOCC protocol implementing three-qubit fully controlled unitary operations.)
Note that the indices $l$, $n$ and $m$ should be taken such that $l<n<m$ or $m<n<l$. Similarly to the case of a two-qubit controlled unitary operation, we denote a three-qubit fully controlled unitary operation by $C_{ l,m;n} $ when we do not specify the target single qubit operations.  On the other hand, any four-qubit fully controlled unitary, where three of the four qubits are control qubits and the rest of the qubit is a target qubit, is not implementable on  qubits that are all in different nodes of $\mathcal{V}^v_j$ in a $(k,N)$-cluster network, if a single EPR pair is given for each edge in $\mathcal{S}_j$.  Obviously any single qubit unitary operations  can be implemented on any qubit.

We present a  protocol to convert a given $(k,N)$-cluster network into quantum circuits. First  (step 1 to step 3), we construct quantum circuits of unitary operations that are implementable on qubits in a set of vertically aligned nodes $\mathcal{V}_j^v$ by LOCC assisted by the EPR pairs given for a set of vertically aligned edges $\mathcal{S}_j$ for a certain $j$.   Then (step 4),  we repeat the procedure given by the first part (step 1 to step 3) for different $j$ of $1 \leq j \leq N$.

The conversion protocol: 
\begin{enumerate}
\item Draw $k$ horizontal wire segments where each of the wire segments corresponds to a set of qubits at vertically aligned nodes $\mathcal{V}^v_j$.

\item Symbols representing two-qubit controlled unitary operations $C_{ l;n}$ or three-qubit fully controlled unitary operations $C_{ l,m;n}$ are added on the horizontal wire segments according to the following rules.

\begin{itemize}
\item To represent  $C_{l;n}$, draw a black dot representing a control qubit on the $l$-th wire, draw a vertical segment from the dot to the $n$-th wire segment and draw a box representing a target unitary operation on the $n$-th wire segment at the end of the vertical segment.  Write index $l$ at the side of the vertical segment between the horizontal wire segments. An example is shown in Fig.~\ref{fig:convertedcircuitex1} i).
\item To represent $C_{ l,m;n}$, draw two black dots representing control qubits on the $l$-th and $m$-th wire segments, draw vertical segments from each dot to the $n$-th wire and draw a box representing an arbitrary target unitary operation on the $n$-th wire segment at the end of the vertical segment.  Write indices $l$ and $m$ at the sides of the vertical segment between the horizontal wire segments. An example is shown in Fig.~\ref{fig:convertedcircuitex1} ii)
\item  Multiple gates of $C_{ l;n}$ or $C_{l,m;n}$ can be added as long as there are only one type of index appearing between the horizontal wire segments and no target unitary operation represented by a box is inserted between two black dots on a horizontal wire segment.
 An example of possible circuits generated in this protocol is shown in Fig.~\ref{fig:convertedcircuitex1} iii). We also give an example of circuits that do not follow the rule in Fig.~\ref{fig:convertedcircuitex1} iv).
\end{itemize}

\item Arbitrary single qubit unitary operations represented by boxes are inserted between before and after the sequence of $C_{ l;n}$ and $C_{ l,m;n}$ (but not during the sequence) obtained by step 2.

\item Repeat step 1 to step 3 for each $1 \leq j \leq N$ and connect all the $i$-th horizontal wire segments.
\end{enumerate}

In Appendix A.2, we show that a unitary operation represented by the quantum circuit obtained by step 1 to step 3 of the conversion protocol is implementable in a set of vertically aligned nodes $\mathcal{V}_j^v$, namely, it is implementable by LOCC assisted by $(k-1)$ EPR pairs corresponding to a set of vertically aligned edges $\mathcal{S}_j$.  As examples, quantum circuits converted from the $(2,3)$-cluster and $(3,2)$-cluster networks are shown in Fig.~\ref{fig:convertedcircuit}.

\begin{figure}
\begin{center}
  \includegraphics[height=.15\textheight]{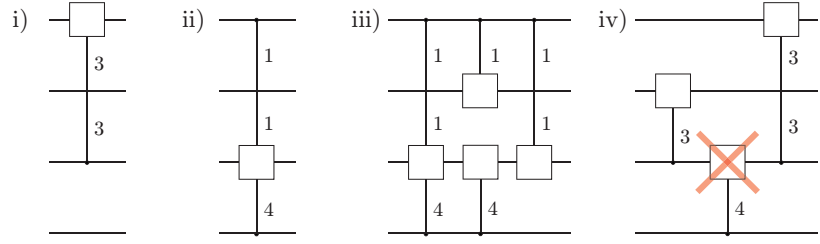}
  \end{center}
 \caption{ i) A symbol representing $C_{3;1}$. ii) A symbol representing $C_{1,4;3}$. iii) An example of circuits generated in step 2 of the conversion protocol. The index in the upper region is 1, that of index in the middle region is 1 and that of index in the lower region is 4.  iv) This conversion is forbidden since there is a target unitary operation inserted between two black dots representing controlled qubits.}
\label{fig:convertedcircuitex1}
\end{figure}

\begin{figure}
\begin{center}
  \includegraphics[height=.6\textheight]{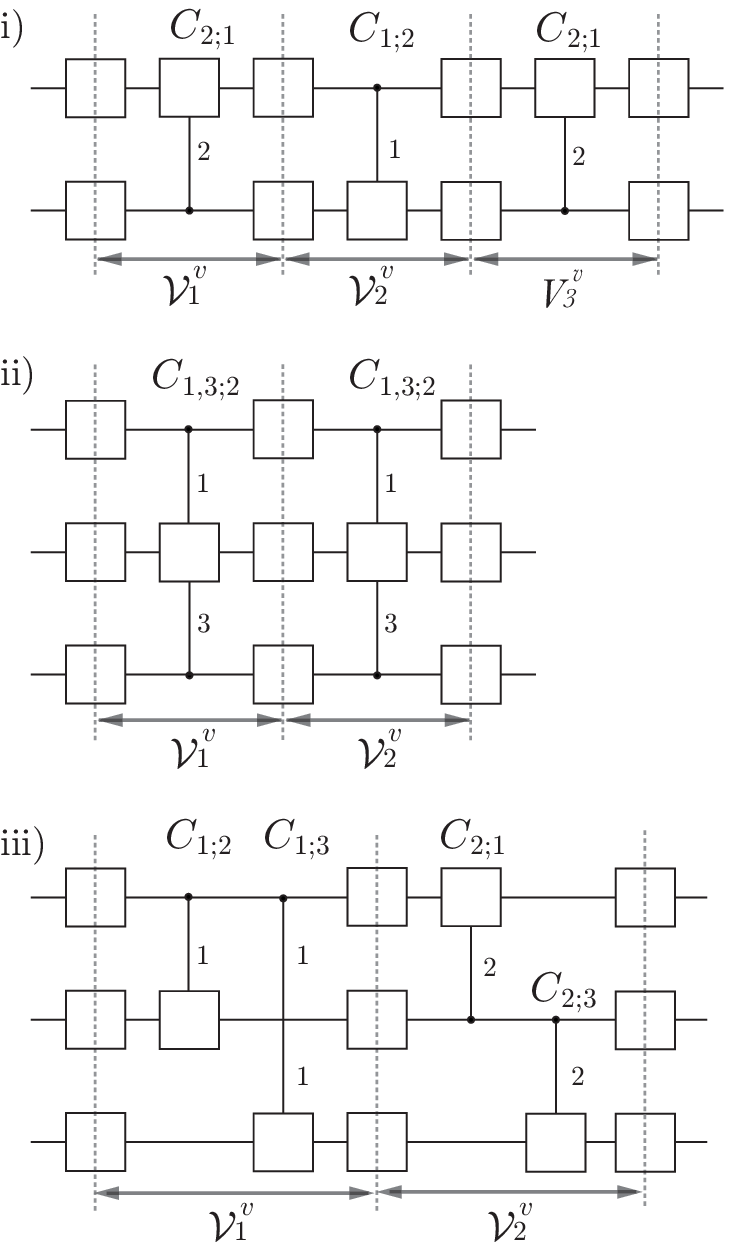}
  \end{center}
  \caption{i) A converted quantum circuit from the $(2,3)$-cluster network.  It is obtained by connecting three segments of circuits generated in step 1 to  step 3 of the protocol corresponding to $\mathcal{V}_1^v$, $\mathcal{V}_2^v$ and $\mathcal{V}_3^v$. It consists of two-qubit controlled unitary operations defined by $C_{ l;n}=\ket{0}\bra{0}_l\otimes u_n^{(0)}+\ket{1}\bra{1}_l\otimes u_n^{(1)}$, where $l$ denotes the wire segment of the control qubit and $u_n^{(i)}$ are arbitrary single qubit unitary operations on the $n$-th qubit, and arbitrary single qubit unitary operations are represented by boxes.
  ii) A converted quantum circuit from the $(3,2)$-cluster network.  It consists of three-qubit fully controlled unitary operations defined by $C_{ l,m;n}=\ket{00}\bra{00}_{l,m}\otimes u_n^{(00)}+\ket{01}\bra{01}_{l,m}\otimes u_n^{(01)}+\ket{10}\bra{10}_{l,m}\otimes u_n^{(10)}+\ket{11}\bra{11}_{l,m}\otimes u_n^{(11)}$, where $l, m$ denotes the wire segments of the control qubits and $u_n^{(ij)}$ are arbitrary single qubit unitary operations on the $n$-th qubit. iii) Another converted quantum circuit obtainable from the $(3,2)$-cluster network.}
\label{fig:convertedcircuit}
\end{figure}

Our conversion method generates infinitely many quantum circuits in general. However for special cluster networks, standard forms of quantum circuits can be obtained.   In Appendix A.3,  we show that  any converted circuit obtained from a $(2,3)$-cluster network can be simulated by the circuit presented in Fig.~\ref{fig:convertedcircuit} i), 
and any converted circuit obtained from a $(3,2)$-cluster network can be simulated by the circuit presented in Fig.\ref{fig:convertedcircuit} ii).

\section{Upper bound of computation}

In this section, we derive the condition for $k$-qubit unitary operations to be implementable over a given cluster network. We show that our conversion method presented in the previous section gives all implementable unitary operations over the $(k,N)$-cluster network for $k=2,3$.
\begin{theorem}
\label{theorem:det}
If i) a $k$-qubit unitary operation $U$ is deterministically implementable over the $(k,N)$-cluster network ($k\geq 2, N\geq 1$), then ii)  the matrix representation of $U$ in terms of the computational basis $U^M$ can be decomposed into
\begin{equation}
U^M=V_1^MV_2^M\cdots V_N^M,
\label{eq:udec}
\end{equation}
where each $V_i^M$ is a $2^k$ by $2^k$ unitary matrix such that
\begin{eqnarray}
V_i^M&=&\sum_{a_1=0}^1\sum_{a_2=0}^1\cdots\sum_{a_{k-1}=0}^1E^{(a_1)}_{1,i}\otimes E^{(a_1,a_2)}_{2,i}\otimes E^{(a_2,a_3)}_{3,i}\nonumber\\
&&\otimes\cdots\otimes E^{(a_{k-2},a_{k-1})}_{k-1,i}\otimes E^{(a_{k-1})}_{k,i},
\label{eq:dec}
\end{eqnarray}
where $E_{i,j}^{(m,n)}$ and $E_{i,j}^{(m)}$ are 2 by 2 complex matrices.
\end{theorem}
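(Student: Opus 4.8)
\emph{Strategy and column factorization.} Since, as established above, using the $(k,N)$-cluster network is equivalent to performing LOCC among its $kN$ nodes assisted by the resource state $\ket{\Phi}_{\mathcal R}$, and $\ket{\Phi}_{\mathcal R}$ contains exactly one EPR pair per edge, I would argue by a two-level ``cut'' analysis. The first level uses the vertical cuts separating columns $1,\dots,j$ from columns $j+1,\dots,N$. Each horizontal edge $(v_{i,j},v_{i,j+1})$ carries a single EPR pair, and the only way for a wire qubit to move from column $j$ to column $j+1$ is to consume that pair by teleportation; consequently any protocol over the network can be reorganized, column by column, into a joint operation $V_j$ carried out by the $k$ nodes of column $j$ using \emph{only} the $k-1$ vertical EPR pairs $\mathcal S_j$ together with classical communication, followed by teleportation of the $k$ resulting wire qubits to column $j+1$. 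Because $U$ is a deterministic unitary and teleportation with Pauli correction is unitary, each $V_j$ may be taken to be a $k$-qubit unitary, and composing the columns gives $U=V_N\cdots V_1$; after relabelling the factors and absorbing into the matrices $E^{(\cdot)}_{\cdot,\cdot}$ the basis-dependent transpositions introduced by teleportation, this is exactly Eq.~(\ref{eq:udec}) with each $V_i^M$ the matrix of one column unitary.

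\emph{Within-column structure.} The second level uses the horizontal cuts inside a single column. Fix a column $j$ and the bipartition of its rows into $\{1,\dots,r\}$ against $\{r+1,\dots,k\}$; among the $k-1$ vertical edges, exactly the edge $(v_{r,j},v_{r+1,j})$ crosses it, so $V_j$ is implemented by LOCC across this bipartition assisted by a single EPR pair. I would invoke the key property of separable operations that also underlies the non-implementability results of this part: if a separable map is applied to a pure state of Schmidt rank $d$ across a cut, its output has Schmidt number at most $d$, hence a pure output has Schmidt rank at most $d$. Feeding a maximally entangled state between each side's wire qubits and a local reference system and running the protocol, the output pure state restricted to the wires and references is the Choi state of $V_j$, whose Schmidt rank across the cut equals ${\rm Op}\#(V_j)$; the input to the protocol has Schmidt rank $2$ across the cut (one crossing EPR pair, all other EPR pairs, wires, ancillas and references being local to one side), and LOCC is separable, so ${\rm Op}\#(V_j)\le 2$ across this cut for every $r=1,\dots,k-1$. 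An operator on $\mathcal H_1\otimes\cdots\otimes\mathcal H_k$ whose operator Schmidt rank is at most $2$ across every such consecutive cut is, by iterating the operator Schmidt decomposition from one end to the other (a tensor-train decomposition), of precisely the matrix-product form $\sum_{a_1,\dots,a_{k-1}}E^{(a_1)}_1\otimes E^{(a_1,a_2)}_2\otimes\cdots\otimes E^{(a_{k-1})}_k$ appearing in Eq.~(\ref{eq:dec}).

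\emph{Main obstacle.} The delicate step is the column factorization: one must rule out that interleaving operations between different columns, or the freely available classical communication, can effectively enlarge a column's entanglement resources, reroute quantum information across columns in a way that violates the layered structure, or leave the column unitary ill-defined because of classical side-information passed from earlier columns. Once the layered decomposition and the separable-operation bound on Schmidt rank are in hand, the within-column argument and the tensor-train structure theorem are routine; the only bookkeeping needed there concerns the transpositions from teleportation and the ancillas added at each node, which remain local to their column and hence cross none of the cuts.
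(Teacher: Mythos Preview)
Your within-column step is sound once the column factorisation is granted, but the column factorisation itself is the whole difficulty, and your operational sketch does not establish it. The claim that ``the only way for a wire qubit to move from column $j$ to column $j+1$ is to consume that pair by teleportation'' and that therefore the protocol \emph{can be reorganised} into column-by-column unitaries is not justified: the horizontal EPR pair need not be used as a teleportation channel at all, classical communication can flow back and forth between columns arbitrarily many times, and the operation performed in column $j$ can depend on measurement outcomes obtained in later columns. Your sentence ``because $U$ is a deterministic unitary and teleportation with Pauli correction is unitary, each $V_j$ may be taken to be a $k$-qubit unitary'' is circular, since the existence of a well-defined $V_j$ (independent of the classical side-information) is exactly what must be proved. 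You correctly flag this as the main obstacle, but you do not close it.

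The paper closes this gap by a different mechanism. It passes to the Choi picture (feeding half of $k$ reference EPR pairs as input), so that implementing $U$ becomes a separable map from the pure resource state $\ket{\Phi}_{\mathcal R}$ to the pure Choi state $\ket{U}$. The key tool is then Lemma~\ref{lemma:subunitary}: if a bipartite separable map sends a maximally entangled state to a maximally entangled state of the same Schmidt rank, then \emph{every} Kraus operator, on the appropriate block, is proportional to a unitary. Applying this to the cut between columns $\{1,\dots,j\}$ and $\{j+1,\dots,N\}$ for $j=1,2,\dots$ shows that the partial product of node Kraus operators over the first $j$ columns is proportional to a unitary for each $j$ and each outcome $m$; this is what yields the factorisation $U^M=V_1^M\cdots V_N^M$ with each factor unitary, without any need to reorganise the protocol. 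The matrix-product structure of $V_i^M$ then drops out by expanding each node's Kraus operator in the computational basis of the vertical-edge qubits, rather than via a separate Schmidt-rank/tensor-train argument. Your Schmidt-rank argument for the row cuts is a legitimate alternative for that second step, but it cannot substitute for Lemma~\ref{lemma:subunitary} in the first.
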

 
To prove Theorem \ref{theorem:det}, we use Lemma \ref{lemma:subunitary} represented in Appendix A.4 about a class of bipartite {\it separable maps}  that preserves entanglement.
A bipartite separable map $\Gamma_{sep}$ is a completely positive and trace preserving (CPTP) map as follows:
\begin{equation}
\Gamma_{sep}(\rho_{AB})=\sum_k (A_k\otimes B_k)\rho_{AB}(A_k\otimes B_k)^{\dag},
\end{equation}
where $\sum_k (A_k\otimes B_k)^{\dag}(A_k\otimes B_k)=\mathbb{I}_A\otimes \mathbb{I}_B$.
Since quantum network coding is equivalent to performing LOCC assisted by the resource state in our setting, we have to analyze multipartite LOCC. However, the analysis of multipartite LOCC is extremely difficult. Thus, we analyze multipartite separable maps, which are much easier to analyze than LOCC due to their simple structure. Note that a set of separable maps is exactly larger than that of LOCC \cite {9state, JNiset, DiVincezo, EChitambar}.
 
{\it Proof of Theorem 1.}
Denote by $\mathcal{H}_{\mathcal{I}_Q}=\otimes_{i=1}^k\mathcal{H}_{I_i}$ and $\mathcal{H}_{\mathcal{O}_Q}=\otimes_{i=1}^k\mathcal{H}_{O_i}$ the Hilbert spaces of $k$ input qubits and $k$ output qubits, respectively.    By introducing another  ancillary Hilbert space $\mathcal{H}_{I'_i}$ at the input nodes $v_{i,1}$, denote the Hilbert space of $k$ qubits by $\mathcal{H}_{\mathcal{I}'_Q}=\otimes_{i=1}^k\mathcal{H}_{I'_i}$. A joint state of $k$ copies of the EPR pairs in $\mathcal{H}_{\mathcal{I}_Q} \otimes \mathcal{H}_{\mathcal{I}'_Q}$ is denoted by 
$$\ket{\mathbb{I}}:= \frac{1}{\sqrt{D}} \sum_i\ket{i}_{\mathcal{I}_Q}\ket{i}_{\mathcal{I}'_Q}=\otimes_{i=1}^k\ket{\Phi^+}_{I_i,I'_i}$$ 
where $D=\dim(\mathcal{H}_{\mathcal{I}_Q})=2^k$.    If $U\in\mathbf{U}(\mathcal{H}_{\mathcal{I}_Q}:\mathcal{H}_{\mathcal{O}_Q})$ is deterministically implementable over a $(k,N)$-cluster network for $k \geq 2$ and $N \geq1$, 
and we consider applying $U$ on $\ket{\mathbb{I}}$. That is, there exists a LOCC map $\Gamma$ such that
\begin{equation}
\frac{1}{D} \sum_{i,j}\Gamma(\ket{i}\bra{j}_{\mathcal{I}_Q}\otimes\ket{\Phi}\bra{\Phi}_{\mathcal{R}})\otimes \ket{i}\bra{j}_{\mathcal{I}'_Q}=\ket{U}\bra{U},
\label{loccmapfornecessity}
\end{equation}
where $\ket{\Phi}_{\mathcal{R}}$ is the resource state of the $(k,N)$-cluster network and $\ket{U}$ is defined by
\begin{eqnarray}
\ket{U}:=(U \otimes \mathbb{I}) \ket{\mathbb{I}} \in \mathcal{H}_{\mathcal{O}_Q} \otimes \mathcal{H}_{\mathcal{I}'_Q}.
\label{defUket}
\end{eqnarray}

By defining a map represented by the left hand side of Eq.~(\ref{loccmapfornecessity}) as $\Gamma'(\ket{\Phi}\bra{\Phi}_{\mathcal{R}}):= \frac{1}{D} \sum_{i,j}\Gamma(\ket{i}\bra{j}_{\mathcal{I}_Q}\otimes\ket{\Phi}\bra{\Phi}_{\mathcal{R}})\otimes \ket{i}\bra{j}_{\mathcal{I}'_Q}$, where  $\Gamma^\prime$ is also a LOCC map if we assume two qubits belonging to $\mathcal{H}_{I_i}$ and $\mathcal{H}_{I'_i}$ are in the same input node for all $i$.  Since any LOCC maps are separable maps, there exists a separable map $\Gamma^\prime_{sep}$ satisfying
\begin{equation}
\Gamma^\prime_{sep} (\ket{\Phi}\bra{\Phi}_{\mathcal{R}})=\ket{U}\bra{U},
\label{eq:separable}
\end{equation}
if $U$ is deterministically implementable over a $(k,N)$-cluster network.
Since $\Gamma^\prime_{sep}$ is a map from a pure state to a pure state, the action of $\Gamma^\prime_{sep}$ represented by Eq.\eqref{eq:separable} can be equivelently given by the existence of a set of linear operators (the Kraus operators) $\{A_{i,j}^m\}_m$ for each node $v_{i,j}$ and a probability distribution $\{p_m\}$ 
such that
\begin{eqnarray}
\label{eq:LOCC11}
\forall m;\,\, \otimes_{i=1}^k\otimes_{j=1}^N A_{i,j}^m\ket{\Phi}_{\mathcal{R}}&=&\sqrt{p_m}\ket{U},\\
\sum_m \otimes_{i=1}^k\otimes_{j=1}^N (A_{i,j}^{m\dag}A_{i,j}^m)&=&\mathbb{I},
\end{eqnarray}
where
\begin{eqnarray}
A_{i,1}^m&\in&\mathbf{L}(\mathcal{H}_{v_{i,1}}:\mathcal{H}_{I'_i})\nonumber\\
&&(1\leq i\leq k),\nonumber\\
A_{i,j}^m&\in&\mathbf{L}(\mathcal{H}_{v_{i,j}}:\mathbb{C})\nonumber\\
&&(1\leq i\leq k,2\leq j\leq N-1),\nonumber\\
A_{i,N}^m&\in&\mathbf{L}(\mathcal{H}_{v_{i,N}}:\mathcal{H}_{O_i})\nonumber\\
&&(1\leq i\leq k),
\end{eqnarray}
and $\mathcal{H}_{v_{i,j}}$ is the Hilbert space of qubits of the resource state at node $v_{i,j}$ defined by
\begin{equation}
\mathcal{H}_{v_{i,j}}=\mathcal{H}_{S_{i,j}}\otimes\mathcal{H}_{K_{i,j}}
\end{equation}
\begin{subnumcases}
{\mathcal{H}_{S_{i,j}}=}
\mathcal{H}_{S_{1,j}^1} & ($i=1$) \nonumber\\
\mathcal{H}_{S_{i,j}^1}\otimes\mathcal{H}_{S_{i,j}^2} & ($2\leq i\leq k-1$) \nonumber\\
\mathcal{H}_{S_{k,j}^2} & ($i=k)$\nonumber
\end{subnumcases}
\begin{subnumcases}
{\mathcal{H}_{K_{i,j}}=}
\mathcal{H}_{K_{i,1}^1} & ($j=1$) \nonumber\\
\mathcal{H}_{K_{i,j}^1}\otimes\mathcal{H}_{K_{i,j}^2} & ($2\leq j\leq N-1$) \nonumber\\
\mathcal{H}_{K_{i,N}^2} & ($j=N)$\nonumber
\end{subnumcases}

First, letting $E_m=\otimes_{i=1}^k A_{i,1}^m$, $F_m=\otimes_{i=1}^k \otimes_{j=2}^N A_{i,j}^m$ and applying Lemma \ref{lemma:subunitary} presented in  Appendix A.4, we obtain for all $\{m| p_m\neq 0\}$,
\begin{eqnarray}
\exists \alpha_{1,m}>0,\exists V_{1,m}^M\in\mathbf{U}(\mathbb{C}^{D});\,\,E_m^M=\alpha_{1,m} V_{1,m}^M,
\end{eqnarray}
where $\mathbf{U}(\mathbb{C}^{D})$ is the set of $D$ by $D$ unitary matrices and $E_m^M$ is a $D$ by $D$ matrix satisfying $$(E_m^M)_{a,b}=\bra{a}_{\mathcal{I}'_Q}(\otimes_{i=1}^k A_{i,1}^m)\ket{A_b}_{S_{*,1}^*,K_{*,1}^1}$$ and $$\ket{A_b}_{S_{*,1}^*,K_{*,1}^1}=\otimes_{i=1}^{k-1}\ket{\Phi^+}_{S_{i,1}^1,S_{i+1,1}^2}\otimes \ket{b}_{K_{1,1}^1,\cdots,K_{k,1}^1}.$$
Let
\begin{eqnarray}
A_{1,1}^m&=&\sum_{a_1=0}^1 \bra{a_1}_{S_{1,1}^1}\otimes E_{1,1}^{(a_1),m}\\
A_{i,1}^m&=&\sum_{a_1=0}^1\sum_{a_2=0}^1 \bra{a_1}_{S_{i,1}^1}\bra{a_2}_{S_{i,1}^2}\otimes E_{i,1}^{(a_1,a_2),m}\nonumber\\
&&(2\leq i \leq k-1)\\
A_{k,1}^m&=&\sum_{a_1=0}^1 \bra{a_1}_{S_{k,1}^2}\otimes E_{k,1}^{(a_1),m}
\end{eqnarray}
where $E_{1,1}^{(a_1),m}\in\mathbf{L}(\mathcal{H}_{K_{1,1}^1}:\mathcal{H}_{I'_1})$, $E_{i,1}^{(a_1,a_2),m}\in\mathbf{L}(\mathcal{H}_{K_{i,1}^1}:\mathcal{H}_{I'_i})$ and $E_{k,1}^{(a_1),m}\in\mathbf{L}(\mathcal{H}_{K_{k,1}^1}:\mathcal{H}_{I'_k})$. Thus, $V_{1,m}^M$ can be decomposed into
\begin{eqnarray}
V_{1,m}^M&=&\sum_{a_1,\cdots,a_{k-1}=0}^1 E^{(a_1),m}_{1,1}\otimes E^{(a_1,a_2),m}_{2,1}\otimes\cdots\nonumber\\
&&\otimes E^{(a_{k-2},a_{k-1}),m}_{k-1,1}\otimes E^{(a_{k-1}),m}_{k,1}.
\end{eqnarray}
 Note that we identify a linear operation and its matrix representation in the computational basis, e.g., $E_{1,1}^{(a_1),m}$ is a 2 by 2 complex matrix.

Next, letting $E_m=\otimes_{i=1}^k \otimes_{j=1}^2 A_{i,j}^m$, $F_m=\otimes_{i=1}^k \otimes_{j=3}^N A_{i,j}^m$ and using Lemma \ref{lemma:subunitary} represented in Appendix A.4, we obtain for all $\{m|p_m\neq 0\}$,
\begin{equation}
\exists \alpha_{2,m}>0,\exists V_{2,m}^M\in\mathbf{U}(\mathbb{C}^{D});\,\,E_m^M=\alpha_{2,m} V_{2,m}^M,
\end{equation}
where $E_m^M$ is a $D\times D$ matrix such that 
$$(E_m^M)_{a,b}=\bra{a}_{\mathcal{I}'_Q}(\otimes_{i=1}^k\otimes_{j=1}^2 A_{i,j}^m)\ket{A_b}_{S_{*,1}^*,S_{*,2}^*,K_{*,1}^1,K_{*,2}^*}$$ 
and 
\begin{eqnarray}
\ket{A_b}_{S_{*,1}^*,S_{*,2}^*,K_{*,1}^1,K_{*,2}^*}=\otimes_{i=1}^{k-1}\ket{\Phi^+}_{S_{i,1}^1,S_{i+1,1}^2}  \nonumber \\
\otimes_{i=1}^{k-1}\ket{\Phi^+}_{S_{i,2}^1,S_{i+1,2}^2}
\otimes_{i=1}^{k}\ket{\Phi^+}_{K_{i,1}^1,K_{i,2}^2} \nonumber \\
\otimes \ket{b}_{K_{1,2}^1,\cdots,K_{k,2}^1}. \nonumber
\end{eqnarray}
Let
\begin{eqnarray}
A_{1,2}^m&=&\sum_{a_1=0}^1 \bra{a_1}_{S_{1,2}^1}\otimes E_{1,2}^{(a_1),m}\\
A_{i,2}^m&=&\sum_{a_1=0}^1\sum_{a_2=0}^1 \bra{a_1}_{S_{i,2}^1}\bra{a_2}_{S_{i,2}^2}\otimes E_{i,2}^{(a_1,a_2),m}\nonumber\\
&&(2\leq i \leq k-1)\\
A_{k,2}^m&=&\sum_{a_1=0}^1 \bra{a_1}_{S_{k,2}^2}\otimes E_{k,2}^{(a_1),m},
\end{eqnarray}
where $E_{1,2}^{(a_1),m}\in\mathbf{L}(\mathcal{H}_{K_{1,2}^1}\otimes \mathcal{H}_{K_{1,2}^2}:\mathbb{C})$, $E_{i,2}^{(a_1,a_2),m}\in\mathbf{L}(\mathcal{H}_{K_{i,2}^1}\otimes\mathcal{H}_{K_{i,2}^2}:\mathbb{C})$ and $E_{k,2}^{(a_1),m}\in\mathbf{L}(\mathcal{H}_{K_{k,2}^1}\otimes\mathcal{H}_{K_{k,2}^2}:\mathbb{C})$. By a straightforward calculation, $V_{2,m}^M$ are shown to be decomposed into
\begin{eqnarray}
V_{2,m}^M&=&V_{1,m}^M\sum_{a_1,\cdots,a_{k-1}=0}^1 E'^{(a_1),m}_{1,2}\otimes E'^{(a_1,a_2),m}_{2,2}\otimes\cdots\nonumber\\
&&\otimes E'^{(a_{k-2},a_{k-1}),m}_{k-1,2}\otimes E'^{(a_{k-1}),m}_{k,2},
\end{eqnarray}
where $E'^{(a_1),m}_{1,2},E'^{(a_1,a_2),m}_{i,2}$, and $E'^{(a_1),m}_{k,2}$ are $2\times 2$ complex matrices.

Iterating this procedure, we obtain for all $\{m|p_m\neq 0\}$,
\begin{equation}
\exists \alpha>0,\exists W^M\in\mathbf{U}(\mathbb{C}^{D});\,\,E_m^M=\alpha W^M,F_m^M=\frac{\sqrt{p_m}}{\alpha}\overline{W^M},
\end{equation}
where $W^M$ and $\overline{W^M}$ can be decomposed into
\begin{eqnarray}
W^M=V_1^MV_2^M\cdots V_{N-1}^M\\
\overline{W^M}=U^{M\dag}V_N^M,
\end{eqnarray}
and $V_i^M=\sum_{a_1,\cdots,a_{k-1}=0}^1E^{(a_1)}_{1,i}\otimes E^{(a_1,a_2)}_{2,i}\otimes\cdots\otimes E^{(a_{k-2},a_{k-1})}_{k-1,i}\otimes E^{(a_{k-1})}_{k,i}\in\mathbf{U}(\mathbb{C}^{D})$.
$U^M$ can be decomposed into the form of Eq.\eqref{eq:udec} since $\overline{V_i^M}$ and $V_i^{M\dag}$ can be decomposed into the form of Eq.\eqref{eq:dec}.
\begin{flushright}
$\Box$
\end{flushright}

In the case of the $(2,N)$-cluster networks, which we call $N$-bridge {\it ladder networks}, $V_i$ is locally unitarily equivalent to the two-qubit controlled unitary operation since its operator Schmidt rank is $2$ \cite{SchmidtDecomp}. Thus,  statements i) and ii) in Theorem 1 are equivalent since a sequence of $N$ two-qubit controlled unitary operations is implementable by the converted circuit presented in Fig.~4.3 i). Then we obtain the following theorem for the ladder networks.

\begin{theorem}
A unitary operation $U$ is deterministically implementable over the $N$-bridge ladder network if and only if
$\textsc{KC\#}(U)\leq N$, where $\textsc{KC\#} (U)$ is the Kraus-Cirac number of a two-qubit unitary operator $U$, which is the number of non-zero parameters $x,y,z$ in Eq.~\eqref{gloablpartofKCD} characterizing the global part of $U$.
\label{theorem:main}
\end{theorem}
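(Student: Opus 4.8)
The plan is to prove the two implications of the ``if and only if'' separately: Theorem~\ref{theorem:det} (applied with $k=2$) drives the necessity direction, the converted-circuit construction of the previous section (whose implementability over cluster networks is established in Appendix~A.2) drives the sufficiency direction, and the operator-Schmidt/Kraus--Cirac classification summarised in Table~\ref{table:KCnumber} is the bridge both times.

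\textbf{Sufficiency.} Assume $\textsc{KC\#}(U)\le N$. Writing $U$ in its Kraus--Cirac form $U=(u\otimes u')\,e^{i(xX\otimes X+yY\otimes Y+zZ\otimes Z)}\,(w\otimes w')$, I would first observe that $X\otimes X$, $Y\otimes Y$ and $Z\otimes Z$ pairwise commute, so the central factor splits as $e^{ixX\otimes X}e^{iyY\otimes Y}e^{izZ\otimes Z}$, and each factor $e^{i\theta P\otimes P}$ is locally unitarily equivalent to $e^{i\theta Z\otimes Z}=\ket{0}\bra{0}\otimes e^{i\theta Z}+\ket{1}\bra{1}\otimes e^{-i\theta Z}$, which is a two-qubit controlled unitary operation. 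Hence, absorbing all local unitaries into single-qubit ``boxes'', $U$ equals a product of at most $\textsc{KC\#}(U)\le N$ two-qubit controlled unitary operations with arbitrary single-qubit unitaries inserted before, after and between them (padding with identities to exactly $N$ factors if $\textsc{KC\#}(U)<N$). This is precisely a circuit of the form produced by the conversion protocol applied to the $(2,N)$-cluster network, i.e.\ the $N$-bridge generalisation of the circuit in Fig.~\ref{fig:convertedcircuit} i), so by Appendix~A.2 it is implementable by LOCC assisted by the corresponding resource state, which is exactly implementability over the $N$-bridge ladder network.

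\textbf{Necessity.} Suppose $U$ is deterministically implementable over the $(2,N)$-cluster network. By Theorem~\ref{theorem:det} with $k=2$, $U^M=V_1^M V_2^M\cdots V_N^M$ with $V_i^M=\sum_{a=0}^{1}E^{(a)}_{1,i}\otimes E^{(a)}_{2,i}$, and (as shown inside the proof of Theorem~\ref{theorem:det}) each $V_i$ is a genuine two-qubit unitary operator. Its operator Schmidt rank is therefore at most $2$, so by Table~\ref{table:KCnumber} it is either a product of local unitaries, with $\textsc{KC\#}(V_i)=0$, or locally unitarily equivalent to a controlled-phase gate, with $\textsc{KC\#}(V_i)=1$; in all cases $\textsc{KC\#}(V_i)\le1$. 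It then suffices to invoke subadditivity of the Kraus--Cirac number under composition, $\textsc{KC\#}(V_1\cdots V_N)\le\sum_{i=1}^N\textsc{KC\#}(V_i)\le N$, which yields $\textsc{KC\#}(U)\le N$.

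\textbf{Main obstacle.} The only nontrivial step is this subadditivity of $\textsc{KC\#}$. For $N\ge3$ it is automatic, since $\textsc{KC\#}$ of any two-qubit unitary is at most $3$, and for $N=1$ it reduces to the implication ``operator Schmidt rank $\le2$ $\Rightarrow$ $\textsc{KC\#}\le1$'' already used; the single genuine case is $N=2$, namely that a product of two two-qubit unitaries of operator Schmidt rank $\le2$ has a vanishing Weyl-chamber coordinate. I would establish this by absorbing local unitaries and using $e^{i\theta Z\otimes Z}\overset{\mathrm{LU}}{=}$ (controlled phase) to reduce to $U\overset{\mathrm{LU}}{=}e^{i\phi_1 Z\otimes Z}(a\otimes b)e^{i\phi_2 Z\otimes Z}$, and then examining a Makhlin-type local-unitary invariant (equivalently, the eigenvalue structure of $U^T U$ in the magic basis) to force one canonical coordinate to zero; alternatively, this subadditivity is exactly the characterisation of the Kraus--Cirac number as the minimal number of controlled-unitary gates needed to synthesise $U$ up to local unitaries, which is available from \cite{SoedaAkibue} and can be cited directly. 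The remaining ingredients — commutation of the Cartan generators, the controlled-unitary form of $e^{i\theta Z\otimes Z}$, and reading off $\textsc{KC\#}$ from the operator Schmidt rank via Table~\ref{table:KCnumber} — are routine.
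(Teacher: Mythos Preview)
Your proposal is correct and follows essentially the same route as the paper. The paper isolates precisely the step you flag as the ``main obstacle'' into a standalone Lemma (Lemma~\ref{lemma:KC1}, quoted from \cite{SoedaAkibue}) asserting that $\{U:\textsc{KC\#}(U)\le k\}=\{$products of $k$ elements of $\mathbf{U}_c\}$ for $k=1,2,3$; both directions of the theorem then follow immediately from Theorem~\ref{theorem:det} and the converted-circuit implementability, exactly as you outline.
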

This theorem is proven by using the following lemma relating the Kraus-Cirac number of a two-qubit unitary operation and the decomposition of the unitary operation into controlled unitary operations  shown in \cite{SoedaAkibue}.

\begin{lemma}
\label{lemma:KC1}
Consider a set of two-qubit unitary operators $\textbf{U}_c$ that is locally unitarily equivalent to a controlled unitary operator.
The decomposition of a two-qubit unitary operator $U$ into a shortest sequence of two-qubit unitary operators in $\textbf{U}_c$ depends on the Kraus-Cirac number $\textsc{KC\#}(U)$ of $U$ as
\begin{eqnarray}
\{U\in SU(4)|\textsc{KC\#}(U)\leq1\}&=&\{U|U\in\textbf{U}_c\}\nonumber\\
\{U\in SU(4)|\textsc{KC\#}(U)\leq2\}&=&\{UV|U,V\in\textbf{U}_c\}\nonumber\\
\{U\in SU(4)|\textsc{KC\#}(U)\leq3\}&=&\{UVW|U,V,W\in\textbf{U}_c\}. \nonumber
\end{eqnarray}
\end{lemma}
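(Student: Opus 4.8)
The plan is to prove the three set equalities by establishing, for each $k\in\{1,2,3\}$, the two inclusions $\{U\in SU(4):\textsc{KC\#}(U)\le k\}\subseteq\{\text{products of }k\text{ factors from }\mathbf{U}_c\}$ and its reverse. First I would record two facts that make the bookkeeping clean: (i) by the classification in Table~\ref{table:KCnumber} together with the earlier observation that every controlled-phase operator equals $e^{i\theta Z\otimes Z}$ up to local unitaries, the set $\mathbf{U}_c$ coincides with $\{U:\textsc{KC\#}(U)\le1\}$; and (ii) $\mathbf{U}_c$ contains the local unitaries as degenerate controlled unitaries and is closed under left and right multiplication by local unitaries. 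Fact (ii) lets me pad any short product with identity factors, so that ``products of exactly $k$ factors'' and ``products of at most $k$ factors'' describe the same set, which is what matches the $\le k$ on the Kraus--Cirac side. This already disposes of the $k=1$ line.

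For the inclusion $\subseteq$ (achievability) I would start from the Kraus--Cirac decomposition $U=(u\otimes u')\,U_{global}(x,y,z)\,(w\otimes w')$. Since $X\otimes X$, $Y\otimes Y$ and $Z\otimes Z$ commute, the global part factors as $U_{global}=e^{ixX\otimes X}\,e^{iyY\otimes Y}\,e^{izZ\otimes Z}$, a product of exactly $\textsc{KC\#}(U)$ nontrivial factors. Each nontrivial factor lies in $\mathbf{U}_c$: conjugating by $H\otimes H$ sends $e^{i\theta Z\otimes Z}$ to $e^{i\theta X\otimes X}$ and by $SH\otimes SH$ to $e^{i\theta Y\otimes Y}$ (using $HZH=X$ and $SXS^\dagger=Y$), so each is locally equivalent to a controlled-phase gate. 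Absorbing the outer local unitaries into the first and last factors (legitimate by Fact (ii)) and padding with identities yields $U$ as a product of $k$ elements of $\mathbf{U}_c$.

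For the reverse inclusion $\supseteq$ (optimality) it suffices, by induction, to show that multiplying a unitary by one element of $\mathbf{U}_c$ raises its Kraus--Cirac number by at most $1$. Writing the $\mathbf{U}_c$ factor in canonical form $(a\otimes b)\,e^{i\phi Z\otimes Z}\,(c\otimes d)$ and using that $\textsc{KC\#}$ is a local-unitary invariant, this reduces to the single claim $\textsc{KC\#}\!\left(e^{i\phi Z\otimes Z}V\right)\le\textsc{KC\#}(V)+1$. To control this I would pass to the magic basis $Q$, in which local unitaries act as real orthogonal matrices and the nonlocal content of any $U$ is captured by the local invariant $m(U)=(Q^\dagger UQ)(Q^\dagger UQ)^T$. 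The Kraus--Cirac number is read off the spectrum $\{e^{2i\lambda_j}\}$ of $m(U)$, where the $\lambda_j$ run over $\{x{-}y{+}z,\,-x{+}y{+}z,\,x{+}y{-}z,\,-x{-}y{-}z\}$: the value $0$ corresponds to $m(U)=\mathbb{I}$; the value $\le1$ to two doubly degenerate conjugate eigenvalues; the value $\le2$ to $\mathrm{tr}\,m(U)$ being real; and $3$ to the generic case. The task is then to show that replacing $V$ by $e^{i\phi Z\otimes Z}V$ can move the spectrum of $m$ across at most one of these thresholds.

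The hard part is exactly this last step. The obstruction is that the local unitaries interposed between $e^{i\phi Z\otimes Z}$ and the canonical form of $V$ do not commute with the Pauli axes, so the $Z\otimes Z$ axis of the added factor is rotated relative to the canonical directions $X\otimes X$, $Y\otimes Y$, $Z\otimes Z$ of $V$; one cannot simply add the coordinates. I would handle this by bounding how much a single factor of operator-Schmidt rank $2$ can enlarge the invariants of $m$ --- for instance tracking $\mathrm{tr}\,m$ and $\mathrm{tr}\,m^2$ (the Makhlin invariants) and showing that one such factor can cross only one class boundary --- or, equivalently, by invoking the known characterization of the minimal number of CNOT gates needed to realize a two-qubit unitary, since a single element of $\mathbf{U}_c$ has exactly the nonlocal power of one CNOT dressed by local unitaries ($\textsc{KC\#}\le1$). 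Either route establishes the subadditivity, and combined with achievability it gives all three equalities.
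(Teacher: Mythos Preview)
The paper does not prove this lemma: it simply states the result and attributes it to \cite{SoedaAkibue}. So there is no in-paper argument to compare against, and your proposal is being measured against an external citation.

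Your achievability direction ($\subseteq$) is complete and correct: the commuting factorisation $U_{global}(x,y,z)=e^{ixX\otimes X}e^{iyY\otimes Y}e^{izZ\otimes Z}$ together with the conjugations by $H\otimes H$ and $SH\otimes SH$ exhibits each factor as an element of $\mathbf{U}_c$, and absorbing the outer locals handles the rest.

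For the optimality direction ($\supseteq$) you correctly isolate the only nontrivial case --- the $k=2$ line, since for $k=3$ every two-qubit unitary already has $\textsc{KC\#}\le3$ --- and you correctly identify the criterion $\textsc{KC\#}(W)\le2\Leftrightarrow\mathrm{tr}\,m(W)\in\mathbb{R}$ in the magic basis. But you then stop at ``I would handle this by bounding\ldots'', which leaves a gap. The calculation is in fact short and you should carry it out: write $Q^\dagger U Q=O_1D_1O_2$ and $Q^\dagger V Q=O_3D_2O_4$ with $O_i\in O(4)$ real and $D_j=\cos\theta_j\,\mathbb{I}+i\sin\theta_j\,\Sigma$, $\Sigma=\mathrm{diag}(1,1,-1,-1)$ (this is the magic-basis form of any controlled-phase). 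Then $m(UV)$ is orthogonally conjugate to $D_1OD_2^2O^TD_1$ with $O=O_2O_3$, and using $\Sigma^2=\mathbb{I}$, $\mathrm{tr}\,\Sigma=0$, $\mathrm{tr}(O\Sigma O^T)=0$ one gets
\[
\mathrm{tr}\,m(UV)=4\cos2\theta_1\cos2\theta_2-\sin2\theta_1\sin2\theta_2\,\mathrm{tr}(\Sigma O\Sigma O^T),
\]
which is real since $\Sigma$ and $O$ are real. That closes the argument. Your alternative of ``invoking the known characterisation of the minimal number of CNOT gates'' is not wrong, but it amounts to citing exactly the kind of result the paper cites, so it does not add independent content.
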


{\it Proof of Theorem 2.}
Since \textsc{KC\#}$(U)$ is less than or equal to $N$ if and only if $U$ can be decomposed into $N$ two-qubit controlled unitary operations as shown in Lemma \ref{lemma:KC1}, and $N$ two-qubit controlled unitary operations are deterministically implementable over $N$-bridge ladder network, Theorem 3 is straigtforwardly shown.
\begin{flushright}
$\Box$
\end{flushright}

 We also show that statements  i) and ii) are equivalent in the case of  the $(3,N)$-cluster networks in Appendix A.5.

\section{Butterfly, grail and square networks}

For classical network coding, it has been shown that there exists a network coding protocol over a $2$-pair network, which has two input nodes and two output nodes, if and only if the network has at least one of the butterfly, grail and identity substructures \cite{ButterflyGrail}.
Thus any classical network coding protocol over a $2$-pair network can be reduced into a combination of protocols over the butterfly, grail or identity networks, and these networks are fundamental primitive networks for classical network coding.   As  a first step to investigate implementability of quantum computation over general $2$-pair quantum networks, we investigate  implementability of two-qubit unitary operations over the butterfly and grail networks in this section by using the method converting a $(k, N)$-cluster network into quantum networks introduced in Section 4.1.

By constructing a protocol for implementing $U_{global}  (x,y,z)$ defined in Eq.\eqref{gloablpartofKCD} for arbitrary $x,y,z$, we obtain the following theorem.

\begin{figure}
\begin{center}
  \includegraphics[height=.23\textheight]{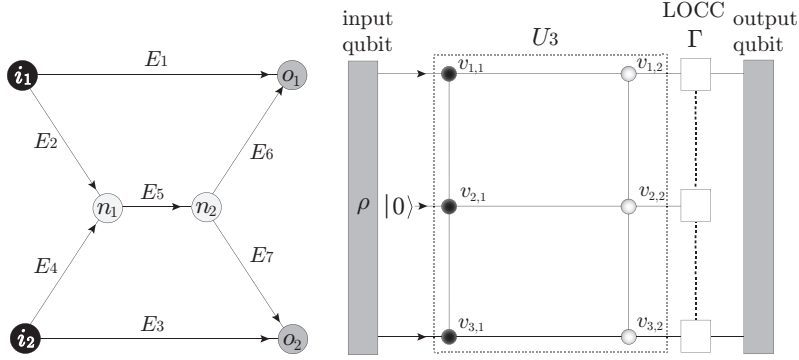}
  \end{center}
  \caption{The nodes $i_1$, $i_2$, $o_1$, $o_2$, $n_1$ and $n_2$ of the butterfly network correspond to the nodes $v_{1,1}$, $v_{3,1}$, $v_{1,2}$, $v_{3,2}$, $v_{2,1}$ and $v_{2,2}$ of a $(3,2)$-cluster network each other. The {\it two}-qubit unitary operation $U_{global} (x,y,z) = e^{ i(x X \otimes X+y Y \otimes Y+z Z \otimes Z)}$ is implementable over a $(3,2)$-cluster network by fixing an input state at the node $v_{2,1}$ at $\ket{0}$, performing an appropriate a {\it three-qubit} unitary operation $U_3$ and performing an appropriate LOCC map $\Gamma$ consisting of a measurement on the qubit at the output node $v_{2,2}$ and the conditional operations on the other output nodes $v_{1,2}$ and $v_{3,2}$ depending on the measurement outcome.
 }
\label{fig:butterflycluster}
\end{figure}

\begin{theorem}
Any two-qubit unitary operation is deterministically implementable over the butterfly network.
\end{theorem}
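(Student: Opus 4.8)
The plan is to combine the Kraus-Cirac normal form with the conversion of the butterfly network into a circuit coming from a $(3,2)$-cluster network. By the Kraus-Cirac decomposition~\eqref{KCD}, any two-qubit unitary $U$ factors as $(u\otimes u')\,U_{global}(x,y,z)\,(w\otimes w')$ with single-qubit unitaries $u,u',w,w'$ and parameters $x,y,z$, and the four single-qubit factors can be absorbed into free local operations at the input and output nodes. So it suffices to implement $U_{global}(x,y,z)=e^{i(xX\otimes X+yY\otimes Y+zZ\otimes Z)}$ over the butterfly network for every $(x,y,z)$ in the Weyl chamber. Under the node identification of Fig.~\ref{fig:butterflycluster} ($i_1\leftrightarrow v_{1,1}$, $i_2\leftrightarrow v_{3,1}$, $o_1\leftrightarrow v_{1,2}$, $o_2\leftrightarrow v_{3,2}$, $n_1\leftrightarrow v_{2,1}$, $n_2\leftrightarrow v_{2,2}$) the butterfly network carries precisely the resource state of the $(3,2)$-cluster network, the only difference being that $n_1$ and $n_2$ are repeater nodes; accordingly the qubit at $v_{2,1}$ is simply an ancilla that node prepares in $\ket{0}$, and the qubit at $v_{2,2}$ is measured and discarded as part of that node's local operation. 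Thus the goal reduces to finding a three-qubit unitary $U_3$ implementable over the $(3,2)$-cluster network together with an LOCC post-processing $\Gamma$ that measures the qubit at $v_{2,2}$ and applies conditional single-qubit operations at $v_{1,2}$ and $v_{3,2}$, so that $\Gamma\!\left(U_3(\ket{\psi}\bra{\psi}\otimes\ket{0}\bra{0})U_3^{\dag}\right)=U_{global}(x,y,z)\,\ket{\psi}\bra{\psi}\,U_{global}(x,y,z)^{\dag}$ for all $\ket{\psi}\in\mathbb{C}^2\otimes\mathbb{C}^2$.

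For $U_3$ I would take the circuit generated by the conversion protocol for the $(3,2)$-cluster network (Fig.~\ref{fig:convertedcircuit}): a single-qubit layer, a three-qubit fully controlled unitary $C_{1,3;2}$ on the first column (controls at $v_{1,1},v_{3,1}$, target at $v_{2,1}$), another single-qubit layer, a second $C_{1,3;2}$ on the second column (controls at $v_{1,2},v_{3,2}$, target at $v_{2,2}$), and a final single-qubit layer, with the qubits moved between the columns by teleportation along the horizontal edges. The essential feature is that the single-qubit unitaries inserted between the two controlled gates act on the control wires and rotate them out of the computational basis; as a result the two $C_{1,3;2}$ gates do not merely compose into one controlled gate, and---together with the freedom in the four target unitaries of each $C_{1,3;2}$---this generates the third ``layer'' of nonlocality needed to realize a unitary of Kraus-Cirac number $3$. (If the ancilla were only used to record a classical function of the inputs, as in classical network coding, one would reach only Kraus-Cirac number $\le 2$.) For $\Gamma$ I would measure the qubit at $v_{2,2}$ in the basis $\{\ket{+},\ket{-}\}$ and, on outcome $s$, apply $Z^{s}$ at each of $v_{1,2}$ and $v_{3,2}$.

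To finish, I would (i) propagate $\ket{\psi}\otimes\ket{0}$ through $U_3$ and $\Gamma$, obtaining for each outcome $s$ an effective two-qubit operator on $(o_1,o_2)$ of the form $(\mathrm{local})\cdot M_s\cdot(\mathrm{local})$, with $M_s$ determined by the eight target single-qubit unitaries; (ii) verify that outcome $s=1$ differs from $s=0$ only through the replacement $M_0\mapsto M_0(Z\otimes Z)$, which---because $U_{global}(x,y,z)$ commutes with $Z\otimes Z$---is exactly cancelled by the conditional $Z^{s}\otimes Z^{s}$ correction, so the protocol is deterministic; and (iii) choose the single-qubit parameters so that the resulting deterministic channel equals $U_{global}(x,y,z)$. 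Step (iii) is the main obstacle: it amounts to showing that the operators produced by this ``two-column, one-mediating-qubit, one-measurement'' construction sweep out the whole three-parameter Weyl chamber, which has to be checked by an explicit computation and for which a clean closed form of $U_3$ has to be pinned down. This is subtler than the ladder case of Theorem~\ref{theorem:main}, where implementability follows from the clean decomposition of Lemma~\ref{lemma:KC1} into controlled unitaries; here the ancilla at $n_1$/$n_2$ and the final measurement both carry genuine nonlocal weight, so the verification that two columns already suffice is the crux.
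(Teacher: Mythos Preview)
Your high-level plan coincides with the paper's: reduce via the Kraus--Cirac decomposition to $U_{global}(x,y,z)$, identify the butterfly network with the $(3,2)$-cluster network (Fig.~\ref{fig:butterflycluster}), prepare an ancilla $\ket{0}$ at the middle input node, implement a three-qubit unitary $U_3$ of the converted-circuit form (two $C_{1,3;2}$ gates separated by single-qubit layers, with horizontal teleportation in between), and finish with a measurement on the middle output qubit followed by conditional single-qubit corrections at $v_{1,2},v_{3,2}$. The paper then completes exactly the step you flag as ``the main obstacle'': it writes down explicit target unitaries for both fully controlled gates (the first with $u^{(00)}=u^{(11)}=\mathbb{I}$, $u^{(01)}=u^{(10)}=Z$; the second carrying the parameters $y,z$), an explicit single-qubit gate $u(x)$ on the middle wire, and verifies the whole protocol on the Bell basis, which is the eigenbasis of $U_{global}$ (Appendix~A.6). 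So your outline is correct in structure but stops short of the constructive core.

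Two remarks on details. First, the paper's concrete choices differ from yours: it measures the middle qubit in the \emph{computational} basis and corrects with $X\otimes X$, rather than your $\{\ket{+},\ket{-}\}$ measurement with $Z\otimes Z$ correction. The two schemes are Hadamard-conjugate on the middle wire, so either can be made to work, but only after the single-qubit layers are adjusted accordingly. Second, your step~(ii) is not a free-standing fact: the relation $M_1=M_0(Z\otimes Z)$ between the two measurement branches is \emph{not} automatic for generic $C_{1,3;2}$ choices---it holds only once the gate parameters are fixed so that the state before measurement lies in a particular stabilizer sector. In the paper this is handled not by a symmetry argument but by direct verification that each Bell eigenvector $\ket{\Phi^{(i)}}$ emerges (for either outcome, after correction) multiplied by the correct eigenvalue $\lambda_i$ of $U_{global}(x,y,z)$, with outcome probabilities independent of $i$. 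Thus steps~(ii) and~(iii) in your plan cannot be cleanly separated; the substantive gap is precisely the explicit construction and its verification, which the paper supplies.
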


\begin{proof}
For  implementability of $U_{global} (x,y,z)$ over the butterfly network represented by the left hand side of Fig.~\ref{fig:butterflycluster}, we consider a  $(3,2)$-cluster network represented by the right hand side of Fig.~\ref{fig:butterflycluster} by assigning the nodes $\{ i_1, n_1, i_2, o_1, n_2, o_2 \}$ of the butterfly network  to the nodes $\{ v_{1,1}, v_{2,1}, v_{3,1}, v_{1,2}, v_{2,2}, v_{3,2} \}$ of the $(3,2)$-cluster network, respectively.   In this assignment, the correspondence of the edges of the butterfly network 
and the horizontal and vertical sets of edges $\mathcal{K}_1, \mathcal{S}_1, \mathcal{S}_2$ of the $(3,2)$-cluster network is given by 
\begin{eqnarray}
\{ E_1, E_5, E_3\} &\leftrightarrow& \mathcal{K}_1, \nonumber \\
\{ E_2, E_4\} &\leftrightarrow& \mathcal{S}_1, \nonumber \\
\{ E_6, E_7\}, &\leftrightarrow& \mathcal{S}_2.
\end{eqnarray}
Thus any two-qubit unitary operation is deterministically implementable over the butterfly network if any $U_{global} (x,y,z) $ in the form of Eq.~(\ref{gloablpartofKCD}) is deterministically implementable over the $(3,2)$-cluster network where input states are given at nodes $v_{1,1}$ and $v_{3,1}$ and output states are obtained at nodes $v_{1,2}$ and $v_{3,2}$, since the topology of the butterfly network is the same as that of the $(3,2)$-cluster network.  

We construct a protocol implementing two-qubit unitary $U_{global}  (x,y,z)$ by setting a fixed input state at node $v_{2,1}$ and arbitrary two-qubit input state at nodes $v_{1,1}$ and $v_{3,1}$ as a three-qubit input state at input nodes $\mathcal{I}=\{v_{1,1}, v_{2,1}, v_{3,1} \}$, and implementing a three-qubit unitary operation denoted by $U_3$ over the $(3,2)$-cluster network followed by an LOCC map denoted by $\Gamma$ performed at output nodes $\mathcal{O}=\{v_{1,2}, v_{2,2}, v_{3,2} \}$. 
Recall that a unitary operation of represented by the quantum circuit shown in Fig.~\ref{fig:convertedcircuit} ii) is implementable over the $(3,2)$-cluster network.    That is, two three-qubit fully controlled unitary operations $C_{1,3;2}$ are implementable, one at nodes $\mathcal{I}$ and another at nodes $\mathcal{O}$.   The following protocol shows that by choosing appropriate parameters for one of the three-qubit fully controlled unitary operations and one of single-qubit local unitary operations in $U_3$, we can implement $U_{global} (x,y,z)$ with arbitrary $x, y, z$.

The protocol for implementing $U_{global} (x,y,z)$: 
\begin{enumerate}
\item An arbitrary two-qubit input state $\rho$ is given for qubits at input nodes $v_{1,1}$ and $v_{3,1}$ and a fixed input state $\ket{0}$ is prepared for the qubit at node $v_{2,1}$.  
\item Implement $U_3$ of which the quantum circuit representation is given by the left shaded part of Fig.\ref{fig:butterflycircuit} over the $(3,2)$-cluster network.   
\begin{enumerate}
\item All single-qubit unitary operations appearing in the circuit representation of $U_3$
are trivially performed at each node.  
\item The first fully controlled unitary operation implemented at input nodes $\mathcal{I}$ using the EPR pairs represented by  vertical edges $\mathcal{S}_1$ is given by  $C_{1,3;2} (\{ u_n^{(a b)} \}_{a,b=0,1} )$ where $u_n^{(00)}=u_n^{(11)}=\mathbb{I}$ and $u_n^{(01)}=u_n^{(10)}=Z$.
\item To transmit a qubit state from input nodes $v_{i,1}$ to output node $v_{i,2}$ for $i=1,2,3$, quantum teleportation is performed for each $i$ by using the  EPR pair represented by a horizontal edge in $\mathcal{K}_1$.
\item  The second fully controlled unitary operation implemented at output nodes $\mathcal{O}$ contains parameters $y$ and $z$ and  is given by
$C'_{1,3;2} (\{ w_n^{(a b)} \}_{a,b=0,1} )$ where
\begin{eqnarray}
w_n^{(00)}=w_n^{(11)}=e^{i(z-y)}\ket{0}\bra{0}-ie^{i(z+y)}\ket{1}\bra{1}, \nonumber \\
w_n^{(01)}=w_n^{(10)}=e^{-i(z-y)}\ket{0}\bra{0}-ie^{-i(z+y)}\ket{1}\bra{1}. \nonumber
\end{eqnarray}
\item After implementing $C'_{1,3;2} (\{ w_n^{(a b)} \}_{a,b=0,1} )$, a single-qubit unitary operation parameterized by $x$ given by
\begin{equation}
u(x) = \frac{1}{\sqrt{2}} \left(\begin{array}{cc} e^{i x} & -i e^{- i x} \\e^{i x} & i e^{- i x}\end{array}\right)
\end{equation}
is performed at node $v_{2,2} \in \mathcal{O}$. 
\end{enumerate}
\item Perform an LOCC map $\Gamma$ at output nodes $\mathcal{O}$ of which the quantum circuit representation is given by the right shaded part of Fig.\ref{fig:butterflycircuit}.  The map  $\Gamma$ consists of the following three steps.
\begin{enumerate}
\item Perform a projective measurement on the qubit at node $v_{2,2}$  in the computational basis $\{ \ket{0}\bra{0}, \ket{1}\bra{1} \}$.
\item Classically communicate the measurement outcome $k \in \{ 0,1 \}$ from node $v_{2,2}$ to $v_{1,2}$ and also to $v_{3,2}$.
\item If $k=1$, perform a conditional operation $X$ on output qubits at nodes $v_{1,2}$ and $v_{3,2}$, otherwise do nothing.
\end{enumerate}
\end{enumerate}

This protocol maps any input state $\rho$ given at input nodes $v_{1,1}$ and $v_{3,1}$ to   
\begin{eqnarray}
U_{global}  (x,y,z)  \rho U_{global}^{\dag} (x,y,z) =\Gamma(U_3(\rho\otimes\ket{0}\bra{0})U_3^{\dag})
\end{eqnarray}
at output nodes $v_{1,2}$ and $v_{3,2}$ where $\ket{0}$ represents the fixed input state at node $v_{2,1}$.  See Appendix A.6 for details of calculations.   It is straightforward to translate the protocol over the $(3,2)$-cluster network to a protocol to implement $U_{global} (x,y,z) $ over the butterfly network by using the correspondence of vertices and edges.  Thus, $U_{global} (x,y,z) $ is deterministically implementable over the butterfly network.
\end{proof}

\begin{figure}
\begin{center}
  \includegraphics[height=.2\textheight]{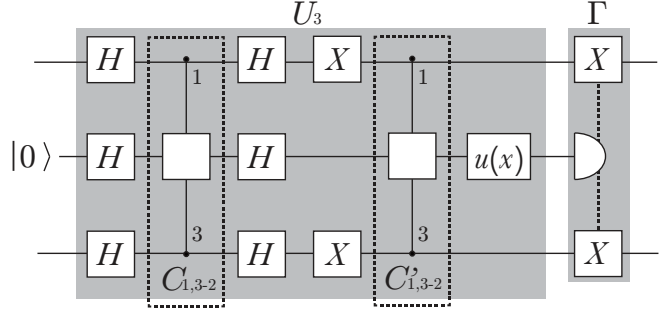}
  \end{center}
  \caption{ A quantum circuit representation of a three-qubit unitary operation $U_3$ (the left shaded part) and an LOCC map $\Gamma$ (the right shaded part) used in a protocol for implementing a two-qubit unitary operation $U_{global} (x,y,z) = e^{ i(x X \otimes X+y Y \otimes Y+z Z \otimes Z)}$ on the first and third qubits.  The input state of the second qubit is fixed in $\ket{0}$.
The single qubit unitary operations represented by boxes are given by $H=(\ket{0}\bra{0}+\ket{0}\bra{1} +\ket{1}\bra{0} - \ket{1}\bra{1})/\sqrt{2}$, $u(x)=H(e^{ix}\ket{0}\bra{0}-ie^{-ix}\ket{1}\bra{1}$) and $X= \ket{0}\bra{1} +\ket{1}\bra{0}$.  The target single-qubit unitary operations of the first three-qubit fully controlled unitary operation $C_{ 1,3;2} (\{ u_n^{(a b)} \}_{a,b=0,1} )$  are given by $u_n^{(00)}=u_n^{(11)}=\mathbb{I}$ and $u_n^{(01)}=u_n^{(10)}=Z$.  The target single-qubit unitary operations of the second three-qubit fully controlled unitary operation $C'_{1,3;2} (\{ w_n^{(a b)} \}_{a,b=0,1} )$  are  given by $w_n^{(00)}=w_n^{(11)}=e^{i(z-y)}\ket{0}\bra{0}-ie^{i(z+y)}\ket{1}\bra{1}$ and $w_n^{(01)}=w_n^{(10)}=e^{-i(z-y)}\ket{0}\bra{0}-ie^{-i(z+y)}\ket{1}\bra{1}$.  The half circle symbol represents a projective measurement in the computational basis $\{ \ket{k} \bra{k} \}_{k=0,1}$.  The single qubit operations (boxes) connected to the measurement symbol by dotted lines represent conditional unitary operations performed only if the measurement result is $k=1$ and do nothing (or perform $\mathbb{I}$) if $k=0$.}
\label{fig:butterflycircuit}
\end{figure}

For implementability of $U_{global} (x,y,z)$ over the grail network, we consider a $(2,3)$-cluster network by a assigning the nodes $\{n_1,n_2,o_1,i_2,n_3,n_4 \}$ of the grail network  to the nodes $\{ v_{1,1}, v_{1,2}, v_{1,3}, v_{2,1}, v_{2,2}, v_{2,3} \}$ of the $(2,3)$-cluster network, respectively (Fig.~\ref{fig:grailcluster}).  The $(2,3)$-cluster network can be converted to a quantum circuit containing three controlled NOT gates and arbitrary single unitary operations that are inserted between the controlled NOT gates.   It is shown that any two-qubit unitary operations $U_{global}  (x,y,z)$ can be decomposed into three controlled NOT gates and single unitary operations inserted between the controlled NOT gates \cite{3CNOT}. Thus any two-qubit controlled unitary operation is deterministically implementable over the grail network.

\begin{figure}
\begin{center}
  \includegraphics[height=.25\textheight]{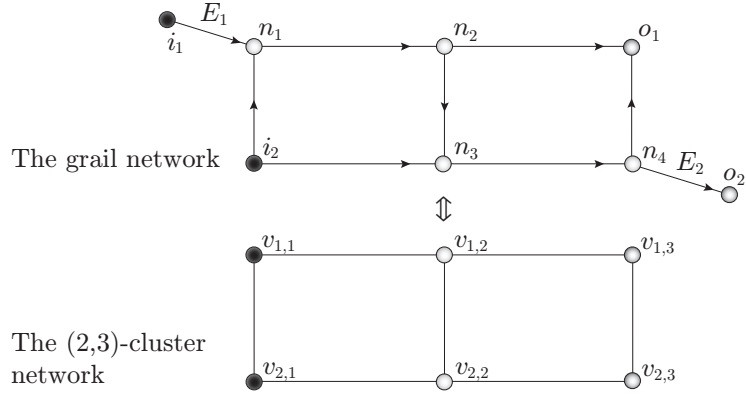}
  \end{center}
  \caption{ The nodes $n_1$, $n_2$, $o_1$, $i_2$, $n_3$ and $n_4$ of the grail network correspond to the nodes $v_{1,1}$, $v_{1,2}$, $v_{1,3}$, $v_{2,1}$, $v_{2,2}$ and $v_{2,3}$ of a $(2,3)$-cluster network, respectively. The set of all unitary operations implementable over the $(2,3)$-cluster network is also implementable over the grail network, since we can use the edges $E_1$ and $E_2$ for just teleporting qubits and the rest of the network forms the $(2,3)$-cluster network, with which any two-qubit unitary operation is implementable.
}  
\label{fig:grailcluster}
\end{figure}

\section{Probabilistic computation}

In this section, we investigate the probabilistic implementation of unitary operations.
There is no classical network coding protocol to  send single bits from $v_{1,1}$ to $v_{2,2}$ and from $v_{2,1}$ to $v_{1,2}$ over  a $(2,2)$-cluster network since there  is no butterfly, grail or identity substructure. 
This task corresponds to  implementing a SWAP operator in quantum network coding.
 It is interesting to  know whether there exists a task that is not achievable by classical network coding but a corresponding task is achievable in a quantum setting or not.  We give a negative result in this section.
 Using Theorem 2, we see that a SWAP operator is not deterministically implementable over  a $(2,2)$-cluster network, which is a 2-bridge ladder network, since the Kraus-Cirac number of the SWAP operator is $3$.
Furthermore, we show that the SWAP operator is not implementable even {\it probabilistically} in this section.

\begin{theorem}
\label{theorem:prob}
A $k$-qubit unitary operation $U$ is probabilistically implementable over the $(k,N)$-cluster network ($k\geq 2, N\geq 1$) if and only if the matrix representation of $U$ in terms of the computational basis $U^M$ can be decomposed into
\begin{equation}
U^M=F_1^MF_2^M\cdots F_{N}^M,
\label{eq:Sunitary}
\end{equation}
where each $F_i^M$ is a $2^k$ by $2^k$ complex matrix that can be decomposed in the same way as Eq.~(\ref{eq:dec})
\end{theorem}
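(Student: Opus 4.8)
The plan is to establish both implications. The ``only if'' direction is a direct adaptation of the proof of Theorem~\ref{theorem:det} with the unitarity conclusions removed, while the ``if'' direction requires an explicit $\mathbf{SLOCC}$ protocol that, in contrast to the deterministic case, can be made to work for every $k$.

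For ``only if'', suppose $U$ is probabilistically implementable. A probabilistic implementation is an element of $\mathbf{SLOCC}$; restricting to a single post-selected outcome $m$ (possible because a linear map sending every $\ket{\psi}$ to a scalar multiple of $U\ket{\psi}$ in fact sends it to a \emph{fixed} multiple $c\,U\ket{\psi}$), there are local operators $A_{i,j}^m$ at each node $v_{i,j}$ with $\otimes_{i,j}A_{i,j}^m\bigl(\ket{\psi}\otimes\ket{\Phi}_{\mathcal{R}}\bigr)=c\,U\ket{\psi}$ for all $\ket{\psi}$, with $c\neq0$. Feeding one half of each of $k$ EPR pairs into the input nodes exactly as in the proof of Theorem~\ref{theorem:det} turns this into $\otimes_{i,j}A_{i,j}^m\,\ket{\Phi}_{\mathcal{R}}=c\,\ket{U}$ with $\ket{U}=(U\otimes\mathbb{I})\ket{\mathbb{I}}$. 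I would then contract this tensor network column by column: applying the column-$j$ operators $\otimes_i A_{i,j}^m$ and contracting the vertical EPR pairs $\mathcal{S}_j$ internal to that column produces a $2^k\times 2^k$ matrix $F_j^M$, which --- after expanding each $A_{i,j}^m$ in the computational basis of its vertical-bond qubits precisely as in the proof of Theorem~\ref{theorem:det} --- is automatically of the form of Eq.~(\ref{eq:dec}); and the horizontal EPR pairs linking column $j$ to column $j+1$ implement ordinary matrix multiplication of these matrices via $(\mathbb{I}\otimes M)\ket{\Phi^+}=(M^T\otimes\mathbb{I})\ket{\Phi^+}$. Matching the result against $c\,\ket{U}$ entry by entry yields $U^M=F_1^M\cdots F_N^M$, after absorbing $c$ into one factor and, if the chain of EPR contractions forces it, transposing each $F_j^M$ (which preserves the form of Eq.~(\ref{eq:dec}) up to reindexing). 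Unlike in Theorem~\ref{theorem:det}, nothing here requires a partial product of columns to be proportional to a unitary and no matrix is ever inverted, so Lemma~\ref{lemma:subunitary} is not needed and the argument is not restricted to small $k$.

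For ``if'', assume $U^M=F_1^M\cdots F_N^M$ with each $F_i^M$ of the form of Eq.~(\ref{eq:dec}). I would realize $F_i^M$ on one column of the network (ordering the columns so the factors compose as $F_1^M\cdots F_N^M$) by a post-selected ``sequential generation'' protocol using the $k-1$ vertical EPR pairs on that column and free classical communication. Writing $F_i^M=\sum_{a_1,\dots,a_{k-1}}E^{(a_1)}_{1,i}\otimes E^{(a_1,a_2)}_{2,i}\otimes\cdots\otimes E^{(a_{k-1})}_{k,i}$ --- after harmlessly rescaling $F_i$ so that each local piece below is a legitimate sub-normalized operation --- the top node applies the single Kraus operator $\sum_{a_1}E^{(a_1)}_{1,i}\otimes\ket{a_1}$ to a fresh ancilla and teleports the ancilla down one vertical edge, the next node applies $\sum_{a_1,a_2}\bra{a_1}\otimes E^{(a_1,a_2)}_{2,i}\otimes\ket{a_2}$ and teleports again, and so on, the bottom node applying $\sum_{a_{k-1}}\bra{a_{k-1}}\otimes E^{(a_{k-1})}_{k,i}$. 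Since teleportation of the ancilla is deterministic, post-selecting the favourable branch at every node realizes $\ket{\psi}\mapsto F_i^M\ket{\psi}$ on the physical qubits of that column up to the rescaling, and teleporting the physical qubits along the horizontal edges chains the columns into $F_1^M\cdots F_N^M=U^M$. Hence the unnormalized output on input $\ket{\psi}$ is $c\,U\ket{\psi}$ with a fixed $c\neq0$ (the product of the rescaling factors; nonzero because $U$ is invertible), so the success probability equals $|c|^2$, positive and independent of $\ket{\psi}$, and after renormalization the output is exactly $U\ket{\psi}$.

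The main obstacle is the ``if'' direction: showing that an arbitrary matrix of the form of Eq.~(\ref{eq:dec}) --- built from \emph{unconstrained} $2\times2$ complex matrices rather than unitaries --- is realizable on a single column using only the nearest-neighbour vertical EPR pairs. The conceptual point is that a ``generalized controlled operation'' $\sum_a\ket{a}\bra{a}\otimes M^{(a)}$ with arbitrary $M^{(a)}$ is not an isometry and hence cannot be performed deterministically, but it \emph{can} be performed probabilistically by post-selection; this is precisely what removes the $k=2,3$ restriction present in the deterministic Theorem~\ref{theorem:det}. Once the sequential-generation protocol is in place the remaining work is routine --- verifying that each rescaled local operation completes to a valid quantum instrument, absorbing the teleportation corrections, and (as in the ``only if'' direction) tracking transposes and complex conjugates through the EPR contractions so that the factors multiply to $U^M$ itself.
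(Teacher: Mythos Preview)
Your proposal is correct and follows essentially the same route as the paper. For the ``only if'' direction you do exactly what the paper does: reduce to a single SLOCC branch $\otimes_{i,j}A_{i,j}\ket{\Phi}_{\mathcal{R}}=\sqrt{p}\,\ket{U}$ and then expand each $A_{i,j}$ in the computational basis of its vertical-bond qubits, column by column, exactly as in the proof of Theorem~\ref{theorem:det} but without invoking Lemma~\ref{lemma:subunitary}. For the ``if'' direction the paper is terser than you: it simply observes that the expansion used in Theorem~\ref{theorem:det} is a bijection between the node-local operators $A_{i,j}$ and the tensor pieces $E^{(\cdots)}_{i,j}$, so given any decomposition of the form~\eqref{eq:dec} one reads off local operators at each node, rescales them so they are valid Kraus operators, and embeds them as one branch of an SLOCC instrument. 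Your sequential ancilla-teleportation protocol is one explicit realization of these same local operators; it uses exactly the available vertical and horizontal EPR pairs and is a fine constructive alternative to the paper's one-line ``correspondence'' argument.
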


\begin{proof}
Similar to the case of deterministic implementation, we consider applying $U \in\mathbf{U}(\mathcal{H}_{\mathcal{I}_Q}:\mathcal{H}_{\mathcal{O}_Q})$ on a part of $k$ maximally entangled states $\ket{\mathbb{I}} \in \mathcal{H}_{\mathcal{I}_Q} \otimes \mathcal{H}_{\mathcal{I}'_Q}$.  Then $U$ is probabilistically implementable over the $(k,N)$-cluster network ($k\geq 2, N\geq 1$) if and only if there exists a stochastic LOCC (SLOCC) map $\Gamma''$ and non-zero probability $p>0$ such that
\begin{equation}
\Gamma''(\ket{\Phi}\bra{\Phi}_{\mathcal{R}})=p\ket{U}\bra{U},
\label{eq:SLOCC}
\end{equation}
where $\ket{\Phi}_{\mathcal{R}}$ is the resource state of the $(k,N)$-cluster network and $\ket{U} \in \mathcal{H}_{\mathcal{O}_Q} \otimes \mathcal{H}_{\mathcal{I}'_Q}$ is defined by Eq.~(\ref{defUket}).
Eq.~\eqref{eq:SLOCC} is equivalent to the statement that there exists a set of  linear operators $\{A_{i,j}\}$ and non-zero probability $p>0$ such that
\begin{equation}
\label{eq:SLOCC1}
\otimes_{i=1}^k\otimes_{j=1}^N A_{i,j}\ket{\Phi}_{\mathcal{R}}=\sqrt{p}\ket{U}.
\end{equation}
The conditions of $\{A_{i,j}\}$ given by Eq.~\eqref{eq:SLOCC1} is similar to the conditions of Kraus operators $\{A_{i,j}^m\}_m$ given by Eq.~\eqref{eq:LOCC11} presented in the proof of Theorem 1. The index $m$ is dropped in Eq.~\eqref{eq:SLOCC1} since the map we consider is SLOCC instead of LOCC considered in Theorem 1. By taking the correspondence between $A_{i,j}$ and $A_{i,j}^m$, we obtain a decomposition of the form presented in Eq.~\eqref{eq:Sunitary}.

\end{proof}

\begin{lemma}
A SWAP operation $U^{SWAP}=|00\rangle\langle 00|+|01\rangle\langle 10|+|10\rangle\langle 01|+|11\rangle\langle 11|$ is not implementable over the $2$-bridge ladder network with non-zero probability.
\end{lemma}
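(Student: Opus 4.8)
The plan is to combine Theorem~\ref{theorem:prob} with a short argument in linear algebra. Since the $2$-bridge ladder network is exactly the $(2,2)$-cluster network, Theorem~\ref{theorem:prob} says that $U^{SWAP}$ is implementable over it with nonzero probability if and only if its matrix $U^{SWAP,M}$ factorizes as $U^{SWAP,M}=F_1F_2$ where each $F_i\in\mathbf{L}(\mathbb{C}^2\otimes\mathbb{C}^2)$ has the form $F_i=E^{(0)}_{1,i}\otimes E^{(0)}_{2,i}+E^{(1)}_{1,i}\otimes E^{(1)}_{2,i}$, i.e.\ ${\rm Op}\#(F_i)\le 2$ (cf.\ Eq.~\eqref{eq:dec} with $k=2$). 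So I would suppose such $F_1,F_2$ exist and derive a contradiction. Because $U^{SWAP}$ is invertible, both $F_i$ are invertible. Throughout I would use the fact that, writing $M\in\mathbf{L}(\mathbb{C}^2\otimes\mathbb{C}^2)$ in block form $M=\sum_{a,b}\ket{a}\bra{b}\otimes M_{ab}$ with respect to the first qubit, ${\rm Op}\#(M)$ equals $\dim{\rm span}\{M_{ab}\}$.

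The first step is to exploit a symmetry. Since $U^{SWAP}(g\otimes h)=(h\otimes g)U^{SWAP}$ for all single-qubit operators $g,h$, replacing $(F_1,F_2)$ by $\bigl((g_1\otimes g_2)F_1(h_1\otimes h_2),\ (h_1^{-1}\otimes h_2^{-1})F_2(g_2^{-1}\otimes g_1^{-1})\bigr)$ with $g_i,h_i$ invertible still gives a factorization of $U^{SWAP}$ into two operators of operator Schmidt rank $\le 2$ (multiplying by product operators changes neither the product nor the operator Schmidt ranks). I would use this freedom to normalize $F_1$. Writing $F_1=A_0\otimes B_0+A_1\otimes B_1$ with $\{A_i\}$ and $\{B_i\}$ linearly independent, the subspace ${\rm span}(A_0,A_1)$ must contain an invertible operator: a two-dimensional subspace of $\mathbf{L}(\mathbb{C}^2)$ consisting only of singular matrices is necessarily $\{\ket{z}\bra{w}:\ket{z}\in\mathbb{C}^2\}$ or $\{\ket{w}\bra{z}:\ket{z}\in\mathbb{C}^2\}$ for some fixed $\ket{w}$, and in either case $F_1$ would have a nontrivial kernel or a range of dimension $<4$, contradicting invertibility. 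Normalizing that invertible operator to $\mathbb{I}$ on the first qubit, putting the remaining one into Jordan form by conjugating the first qubit, and absorbing invertible factors on the second qubit, $F_1$ takes, in block form with respect to the first qubit, one of the two normal forms $\left(\begin{smallmatrix}D_0 & 0\\ 0 & D_1\end{smallmatrix}\right)$ with $D_0,D_1$ invertible, or $\left(\begin{smallmatrix}\mathbb{I} & Q\\ 0 & \mathbb{I}\end{smallmatrix}\right)$ with $Q\ne 0$.

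It then remains to compute. In block form with respect to the first qubit, $U^{SWAP}=\left(\begin{smallmatrix}\ket{0}\bra{0} & \ket{1}\bra{0}\\ \ket{0}\bra{1} & \ket{1}\bra{1}\end{smallmatrix}\right)$. Since normalization only applies the above symmetry, I may assume $F_1$ is in one of the two normal forms, and then $F_2=F_1^{-1}U^{SWAP}$ is forced. For $F_1=\left(\begin{smallmatrix}D_0 & 0\\ 0 & D_1\end{smallmatrix}\right)$ the blocks of $F_2$ are $D_0^{-1}\ket{0}\bra{0}$, $D_0^{-1}\ket{1}\bra{0}$, $D_1^{-1}\ket{0}\bra{1}$, $D_1^{-1}\ket{1}\bra{1}$, which span all of $\mathbf{L}(\mathbb{C}^2)$ because $\{D_0^{-1}\ket{0},D_0^{-1}\ket{1}\}$ and $\{D_1^{-1}\ket{0},D_1^{-1}\ket{1}\}$ are bases of $\mathbb{C}^2$; hence ${\rm Op}\#(F_2)=4$. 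For $F_1=\left(\begin{smallmatrix}\mathbb{I} & Q\\ 0 & \mathbb{I}\end{smallmatrix}\right)$ we have $F_1^{-1}=\left(\begin{smallmatrix}\mathbb{I} & -Q\\ 0 & \mathbb{I}\end{smallmatrix}\right)$, so the blocks of $F_2$ are $\ket{0}\bra{0}-Q\ket{0}\bra{1}$, $\ket{1}\bra{0}-Q\ket{1}\bra{1}$, $\ket{0}\bra{1}$, $\ket{1}\bra{1}$; modulo the two-dimensional span of $\ket{0}\bra{1}$ and $\ket{1}\bra{1}$ (the operators annihilating $\ket{0}$, which also contains both $Q$-terms) the first two reduce to $\ket{0}\bra{0}$ and $\ket{1}\bra{0}$, so again the four blocks span $\mathbf{L}(\mathbb{C}^2)$ and ${\rm Op}\#(F_2)=4$. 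Both possibilities contradict ${\rm Op}\#(F_2)\le 2$, which proves the lemma. The step I expect to be the main obstacle is the normal-form classification --- checking that invertibility forces the Schmidt-operator pencil of $F_1$ to be regular and that exhausting its Jordan structure really yields only the two listed forms, so that the subsequent computation is genuinely exhaustive; the remaining calculations are routine.
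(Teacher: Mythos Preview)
Your argument is correct and genuinely different from the paper's. Both start from Theorem~\ref{theorem:prob} and reduce to the impossibility of a factorization $U^{SWAP}=F_1F_2$ with ${\rm Op}\#(F_i)\le 2$, but then diverge. The paper translates the problem into Schmidt ranks of a four-qubit state, ${\rm Sch}\#_{1,2}^{3,4}=4$, ${\rm Sch}\#_{1,3}^{2,4}=2$, ${\rm Sch}\#_{1,4}^{2,3}=2$, and then rules this out by running through the nine Verstraete--Dehaene--De~Moor--Verschelde SLOCC families of four-qubit states (Appendix~A.7); along the way it also invokes an external result \cite{inverseSchmidtrank} on the operator Schmidt rank of inverses. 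Your route is a direct pencil argument: the SWAP symmetry $(g_1\otimes g_2)\,U^{SWAP}\,(g_2^{-1}\otimes g_1^{-1})=U^{SWAP}$ lets you bring the first-tensor-factor pencil of $F_1$ into the Kronecker/Jordan normal form $\{\mathbb{I},J\}$, after which a block computation of $F_2=F_1^{-1}U^{SWAP}$ forces ${\rm Op}\#(F_2)=4$ in both cases. Your proof is more elementary and self-contained (no four-qubit classification, no cited inverse-rank result); the paper's detour, on the other hand, connects the lemma to the Schmidt-rank structure of multipartite states, which is of independent interest.

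Two small remarks. First, the case ${\rm Op}\#(F_1)=1$ is not covered by your normal-form argument but is immediate, since then $F_2=(A^{-1}\otimes B^{-1})U^{SWAP}$ already has operator Schmidt rank $4$; equivalently, ${\rm Op}\#(F_1F_2)\le{\rm Op}\#(F_1)\cdot{\rm Op}\#(F_2)$ forces both ranks to equal $2$. Second, your concern about the normal-form step is well placed but harmless: the key fact you use---that a two-dimensional subspace of $\mathbf{L}(\mathbb{C}^2)$ consisting only of singular matrices is a ruling of the determinantal quadric, hence of the form $\ket{w}\bra{\cdot}$ or $\ket{\cdot}\bra{w}$---is correct, and once the pencil is regular the reduction to $\{\mathbb{I},J\}$ with $J$ in Jordan form is the standard Weierstrass canonical form, so the two listed cases are exhaustive.
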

\begin{proof}
By using Theorem 4, the SWAP operation is probabilistically implementable over the $(2,2)$-cluster network ($2$-bridge ladder network) if and only if there exists a linear operation $P,Q\in\mathbf{L}(\mathcal{H}_1\otimes\mathcal{H}_2)$ and $E_{i,j}^{(k)}\in\mathbf{L}(\mathcal{H}_i)$ such that
\begin{eqnarray}
U^{SWAP}&=&PQ,\\
P&=&E_{1,1}^{(0)}\otimes E_{2,1}^{(0)}+E_{1,1}^{(1)}\otimes E_{2,1}^{(1)}\label{eq:opsch1}\\
Q&=&E_{1,2}^{(0)}\otimes E_{2,2}^{(0)}+E_{1,2}^{(1)}\otimes E_{2,2}^{(1)},\label{eq:opsch2}
\end{eqnarray}
where $\mathcal{H}_i=\mathbb{C}^2$.
 For any linear operations $M$, there exists the operator Schmidt decomposition, and we can define the {\it operator Schmidt rank} ${\rm Op}\#_1^2(M)$, which is the number of non-zero coefficients of the operator Schmidt decomposition. Since $P$ and $Q$ can be decomposed into Eq.\eqref{eq:opsch1} and Eq.\eqref{eq:opsch2}, we can derive
\begin{eqnarray}
{\rm Op}\#_1^2(P)&\leq& 2,\\
{\rm Op}\#_1^2(Q)&\leq& 2.
\end{eqnarray}
Since ${\rm Op}\#_1^2(U^{SWAP})=4$, ${\rm Op}\#_1^2(P)={\rm Op}\#_1^2(Q)=2$. In \cite{inverseSchmidtrank}, it is shown that if ${\rm Op}\#_1^2(P)=2$ and $P$ is invertible, ${\rm Op}\#_1^2(P^{-1})=2$. Thus, the SWAP operation is probabilistically implementable if and only if there exists linear operations $P,Q\in\mathbf{L}(\mathcal{H}_1\otimes\mathcal{H}_2)$ such that
\begin{eqnarray}
Q=U^{SWAP}P,\\
{\rm Op}\#_1^2(P)= 2,\,\,{\rm rank}(P)=4\\
{\rm Op}\#_1^2(Q)= 2,\,\,{\rm rank}(Q)=4.
\end{eqnarray}
In general, we can regard $P$ as a matrix representation of a four qubit pure state $\ket{\Phi}_{1,2,3,4}$;
\begin{equation}
P=\sum_{i=1}^4\bra{i}_{1,2}\ket{\Phi}_{1,2,3,4}\bra{i}_{1,2}.
\end{equation}
Then, the following correspondences are obtained,
\begin{eqnarray}
{\rm rank}(P)=4 &\Leftrightarrow&{\rm Sch}\#_{1,2}^{3,4}(\ket{\Phi})=4,
\label{eq:4qubit1}\\
{\rm Op}\#_1^2(P)=2 &\Leftrightarrow&{\rm Sch}\#_{1,3}^{2,4}(\ket{\Phi})=2,
\label{eq:4qubit2}\\
{\rm Op}\#_1^2(U^{SWAP}P)=2 &\Leftrightarrow&{\rm Sch}\#_{1,4}^{2,3}(\ket{\Phi})=2,
\label{eq:4qubit3}
\end{eqnarray}
where ${\rm Sch}\#_{1,2}^{3,4}(\ket{\Phi})$ is a Schmidt number in terms of a  partition between qubit $1,2$ and qubit $3,4$. We show that there is no four qubit state  simultaneously satisfying Eqs. \eqref{eq:4qubit1}, \eqref{eq:4qubit2}, and \eqref{eq:4qubit3} in Appendix A.7.
\end{proof}

Note that we have shown that $U$ can be decomposed into a particular form represented by Eq.\eqref{eq:udec} {\it if} $U$ is deterministically implementable in Theorem 1 and that $U$ can be decomposed into a particular form represented by Eq.\eqref{eq:Sunitary} {\it if and only if} $U$ is probabilistically implementable in Theorem 4.
And each factor $F_i^M$ in Eq.\eqref{eq:Sunitary} can be a non-unitary complex matrix while each factor $V_i^M$ in Eq.\eqref{eq:udec} must be a unitary matrix.
Whether there exists a difference between implementable unitary operations in the deterministic implementation and in the probabilistic implementation or not is an open problem.

\chapter{Summary and Discussions of Part II}
\section{Summary}
We have investigated implementability of $k$-qubit unitary operations over the $(k,N)$-cluster networks where inputs and outputs of quantum computation are given in all separated nodes and quantum communication between nodes is restricted to sending just one-qubit while classical communication is freely allowed. We consider a one-shot scenario where we can use a given cluster network only once and exact implementation without error is required.
We have presented a method to obtain quantum circuit representations of unitary operations implementable over a given cluster network.   For the $(k,N)$-cluster networks with $k=2,3$, we have shown that our method provides all implementable unitary operations over the cluster network. As a first step to find the fundamental primitive networks of network coding for quantum settings, we have shown that both of the butterfly and grail networks are sufficient resources for implementing arbitrary two-qubit unitary operations, meanwhile the $(2,2)$-cluster network is not sufficient to implement arbitrary two-qubit unitary operations even probabilistically. 
\section{Discussions}
\begin{itemize}
\item {\bf Asymptotic implementability of unitary operations over the cluster networks}

In addition to the one-shot scenario, which we have focused on in this part, an {\it asymptotic scenario} is often considered in information science.
For example, a capacity of a channel is usually given by the optimal transmission rate via the channel when asymptotically many uses of the channel is allowed.
In a similar way, implementable unitary operations with asymptotically many uses of a given cluster network can be considered.
However, there is a remarkable gap between the one-shot and a multiple-shot scenario including the asymptotic scenario. For example, the set of unitary operations implementable over the square network ($(2,2)$-cluster network) is the set of the unitary operators whose Kraus-Cirac number is smaller than or equal to $2$ in the one-shot scenario. Meanwhile, it is easy to verify that arbitrary two-qubit unitary operations are implementable over the square network if the network can be used twice. The network coding scheme for this case is presented in Fig.~\ref{fig:superactivation}. That is, the SWAP operation, corresponding to quantum communication from $i_1$ and $i_2$ to $o_2$ and $o_1$ respectively, is always implementable with double uses of the square network.

\begin{figure}
\begin{center}
  \includegraphics[height=.25\textheight]{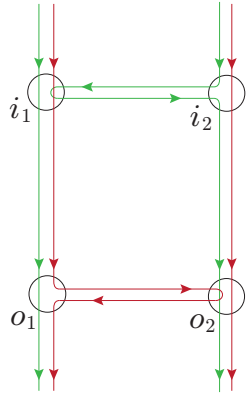}
  \end{center}
  \caption{ {\bf Quantum computation with double uses of the square network.} In this case, any two-qubit unitary operations are implementable over the square network. The green paths and the red paths represent the first pair of qubits and the second pair of qubits, respectively. By performing a given two-qubit unitary operation at nodes $i_1$ on the first pair of qubits and $o_2$ on the second pair of qubits, any two-qubit unitary operation can be applied for two sets of initial states.
}  
\label{fig:superactivation}
\end{figure}

\item {\bf Implementability over a generalized cluster network.}

Our results can be applied to a {\it generalized cluster network} defined as follows.
\begin{definition}
A  network $G=\{\mathcal{V},\mathcal{E},\mathcal{I},\mathcal{O} \}$ is a generalized cluster network if and only if for some $k\geq 1$ and $N\geq 1$, 
\begin{eqnarray}
\mathcal{V}&=&\{v_{i,j};\,1\leq i\leq k,1\leq j\leq N\}\nonumber\\
\mathcal{I}&=&\{v_{i,1};\,1\leq i\leq k\}\nonumber\\
\mathcal{O}&=&\{v_{i,N};\,1\leq i\leq k\}\nonumber\\
\mathcal{E}&=&\mathcal{S}_{sub} \cup \mathcal{K}
\end{eqnarray}
where
\begin{eqnarray}
\mathcal{S}_{sub}&\subseteq&\mathcal{S}_{comp}, \nonumber\\
\mathcal{S}_{comp}&=& \{(v_{m,j},v_{n,j});\,1\leq m<n\leq k,1\leq j\leq N)\}, \nonumber\\
\mathcal{K}&=& \{(v_{i,j},v_{i,j+1});\,1\leq i\leq k,1\leq j\leq N-1)\}.
\end{eqnarray}
\end{definition}
In this case, if there exists a loop of vertical edges $\mathcal{L}\subseteq \mathcal{S}_{sub}$ such that for some $j$, $L$ and $\{i_m\}_{m=1}^L$,
\begin{equation}
\mathcal{L}=\{e_1=(v_{{i_1},j},v_{{i_2},j}),e_2=(v_{{i_2},j},v_{{i_3},j}),\cdots,e_L=(v_{{i_L},j},v_{{i_1},j})|e_m\neq e_n \,\,{\rm if}\,\, m\neq n\},
\end{equation} 
we can perform a cyclic permutation that transmits a qubit state from $v_{i_1,j}$ to $v_{i_2,j}$, from $v_{i_2,j}$ to $v_{i_3,j}$ and so on (by consuming EPR states corresponding to the looped vertical edges for teleportation) in addition to performing controlled unitary operations presented in Section 4.1.
Thus, quantum computation over a generalized cluster network with a loop of vertical edges has more capability than that without a loop.
Note that the implementable unitary operations over a generalized cluster network are still restricted by Theorem \ref{theorem:det} and \ref{theorem:prob}.

\item {\bf Quantum computation and MBQC over the butterfly network.}

In MBQC, a graph that has a {\it generalized flow} (gflow) is extensively studied since a unitary operation is always implementable over a graph state corresponding to such a graph irrespective of the angle $\alpha$ of each projective measurement defined in Eq.\eqref{eq:projMBQC} with appropriate measurement corrections \cite{gflow1, gflow2}.
\begin{definition}
$(g,\prec)$ is a gflow of a graph $(G,I,O)$, where $g:O^c\rightarrow 2^{I^c}$ and $\prec$ is a strict partial order over $V$, if and only if
\begin{itemize}
\item if $j\in g(i)$ then $i\prec j$
\item if $j\in Odd(g(i))$ then $j=i$ or $i\prec j$
\item $i\in Odd(g(i))$,
\end{itemize}
where $X^c=V\setminus X$,  $Odd(K)=\{u\in V,|N_G(u)\cup K|=1\,\,{\rm mod}\,\, 2\}$ and $N_G(u)$ is the set of vertices neighboring $u$.
\end{definition}
In \cite{gflow3}, it was shown that a graph has a gflow if and only if the graph has a {\it focused gflow} defined as follows.
\begin{definition}
$(g,\prec)$ is a focused gflow of a graph $(G,I,O)$, where $g:O^c\rightarrow 2^{I^c}$ and $\prec$ is a strict partial order over $V$, if and only if
\begin{itemize}
\item if $j\in g(i)$ then $i\prec j$
\item for all $u\in O^c$, $Odd(g(u))\cap O^c=\{u\}$.
\end{itemize}
\end{definition}
The butterfly network does not have a focused gflow as shown below while any two-qubit unitary operations are implementable over the butterfly network in our scenario. 
\begin{proposition}
The butterfly network does not have a focused gflow.
\end{proposition}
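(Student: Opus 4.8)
The plan is to read the butterfly network as an open graph $(G,I,O)$ in the sense of measurement-based quantum computation, using the identification already fixed in Fig.~\ref{fig:butterflycluster}: $G$ has vertex set $\{i_1,i_2,n_1,n_2,o_1,o_2\}$, input set $I=\{i_1,i_2\}$, output set $O=\{o_1,o_2\}$, and the (undirected) edges are those of the associated $(3,2)$-cluster network, giving the neighbourhoods $N(i_1)=\{n_1,o_1\}$, $N(i_2)=\{n_1,o_2\}$, $N(n_1)=\{i_1,i_2,n_2\}$, $N(n_2)=\{n_1,o_1,o_2\}$, $N(o_1)=\{i_1,n_2\}$, $N(o_2)=\{i_2,n_2\}$. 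Thus $O^c=\{i_1,i_2,n_1,n_2\}$ and $I^c=\{n_1,n_2,o_1,o_2\}$. I will argue by contradiction, assuming a focused gflow $(g,\prec)$ exists.

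First I would note that, for a fixed $u\in O^c$, the condition $Odd(g(u))\cap O^c=\{u\}$ with $g(u)\subseteq I^c$ is nothing but a system of four affine equations over $\mathbb{F}_2$ in the four unknowns $s_{n_1},s_{n_2},s_{o_1},s_{o_2}$ forming the indicator vector of $g(u)\subseteq I^c$: one equation per vertex $v\in O^c$, expressing the parity of $|N(v)\cap g(u)|$. I then solve this system for the two bottleneck vertices $u=n_1$ and $u=n_2$. The equations coming from $i_1$ and $i_2$ force $s_{o_1}=s_{o_2}=s_{n_1}$; the equation coming from $n_2$ then reduces $s_{n_1}+s_{o_1}+s_{o_2}$ to $s_{n_1}$; and the equation coming from $n_1$ pins down $s_{n_2}$. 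A short elimination gives the unique solutions $g(n_1)=\{n_2\}$ and $g(n_2)=\{n_1,o_1,o_2\}$. I emphasize that these values are forced regardless of $\prec$, so there is no need ever to compute $g(i_1)$ or $g(i_2)$.

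The final step is to feed these forced correction sets into the order axiom "$j\in g(i)\Rightarrow i\prec j$": from $g(n_1)=\{n_2\}$ we obtain $n_1\prec n_2$, while from $n_1\in g(n_2)$ we obtain $n_2\prec n_1$. Transitivity then yields $n_1\prec n_1$, contradicting irreflexivity of the strict partial order $\prec$. Hence no focused gflow exists; in particular, by the equivalence established in \cite{gflow3}, the butterfly network has no gflow at all.

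I expect the only delicate point to be bookkeeping: extracting the adjacency of the undirected butterfly graph correctly from the network picture, and reading the stated definition of $Odd$ as $Odd(K)=\{v:|N_G(v)\cap K|\text{ is odd}\}$. Once the six neighbourhoods are written down, the $\mathbb{F}_2$-linear algebra is routine and the whole obstruction is concentrated in the single parity mismatch between the two bottleneck vertices $n_1$ and $n_2$; in the write-up I would display the two $4\times4$ systems explicitly so that the forced values $g(n_1)=\{n_2\}$ and $g(n_2)=\{n_1,o_1,o_2\}$ can be verified by inspection.
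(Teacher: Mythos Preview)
Your proposal is correct and follows essentially the same approach as the paper: both arguments pin down the forced values $g(n_1)=\{n_2\}$ and $g(n_2)=\{n_1,o_1,o_2\}$ from the focused condition, then derive the order cycle $n_1\prec n_2\prec n_1$. The only cosmetic difference is that you package the parity constraints as an $\mathbb{F}_2$ linear system and solve it uniformly, whereas the paper eliminates possibilities by direct case analysis (and uses strictness of $\prec$ upfront to exclude $n_1\in g(n_1)$ rather than deriving this from the $n_2$-parity equation as you do).
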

\begin{proof}
The butterfly network consists of $I=\{i_1,i_2\}$, $O=\{o_1,o_2\}$, $V=\{n_1,n_2\}\cup I\cup O$ as presented in Fig.~\ref{fig:qgrail}.
$g(n_1)$ does not include $n_1$ since $\prec$ is a strict partial order. If $g(n_1)$ includes $o_1$ or $o_2$, $i_1\in Odd(g(n_1))$ or $i_2\in Odd(g(n_1))$. This contradicts the condition of the focused gflow. Thus, $g(n_1)=\{n_2\}$, i.e. $n_1\prec n_2$.
On the other hand, $\{n_1,o_1\}\subseteq g(n_2)$ or $n_1\notin g(n_2)\wedge o_1\notin g(n_2)$ since $i_1\notin Odd(g(n_2))$. Similarly, $\{n_1,o_2\}\subseteq g(n_2)$ or $n_1\notin g(n_2)\wedge o_2\notin g(n_2)$ since $i_2\notin Odd(g(n_2))$. Thus, $g(n_2)=\{n_1,o_1,o_2\}$ since $n_2\in Odd(g(n_2))$. This is a contradiction to $n_1\prec n_2$.
\end{proof}
This implies that if we restrict the angle of a set of projective measurements in MBQC over the butterfly network, we can implement two-qubit unitary operations. It is unknown how to characterize an implementable unitary operations over a given network in MBQC where the angle of each projective measurement can be restricted. Our results give an upper bound of such an implementable unitary operations over the cluster network.

\end{itemize}

\part{Role of entanglement and causal relation in DQC}

Note that in this part, we mainly consider a bipartite scenario. Thus, we abbreviate bipartite (one-way or two-way) LOCC and bipartite SEP as LOCC and SEP, respectively.
An extension to a multipartite scenario is given in discussions of this part.

\chapter{Resources for state discrimination}
\label{chap:spacetime}
State discrimination is a task to discriminate states chosen from a given set of states $\{\ket{\psi_i}\}_i$ by performing a measurement.
If the set contains non-orthogonal states, it is trivially impossible to discriminate them perfectly.
However, even if the set contains only orthogonal states, it can be impossible to discriminate them perfectly when the state is shared between two parties and the measurement they can perform is somewhat restricted to be local. State discrimination under the restriction of operations to be local is called local state discrimination.
Local state discrimination can be used for characterizing the non-locality of the set of states.
It is shown that any two orthogonal pure states can be distinguished by using LOCC followed by local measurements \cite{Walgate} even if the states are maximally entangled states.
However, several sets of orthogonal product states cannot be distinguished by using LOCC followed by local measurements \cite{9state, DiVincezo, JNiset}. Since any sets of orthogonal product states can be distinguished by using operations in SEP followed by local measurements, such results demonstrate the gap between LOCC and SEP.
We give an example of local state discrimination tasks that cannot be achieved by LOCC but be achieved by SEP as follows.

{\bf Example:} In~\cite{9state}, Bennett et al. have shown that a set of product states of two three-dimensional systems shared between Alice and Bob defined by
\begin{align}
&\bigr|\psi_{1\left(2\right)}\bigr\rangle_{AB} =\bigr|0\bigr\rangle_{A}\bigr|0\pm1\bigr\rangle_{B},\,\,\,\,\,\,\,\,\,
\bigr|\psi_{3\left(4\right)}\bigr\rangle_{AB} =\bigr|0\pm1\bigr\rangle_{A}\bigr|2\bigr\rangle_{B}\nonumber \\
&\bigr|\psi_{5\left(6\right)}\bigr\rangle_{AB} =\bigr|2\bigr\rangle_{A}\bigr|1\pm2\bigr\rangle_{B},\,\,\,\,\,\,\,\,\,
\bigr|\psi_{7\left(8\right)}\bigr\rangle_{AB} =\bigr|1\pm2\bigr\rangle_{A}\bigr|0\bigr\rangle_{B},\,\,\,\,\,\,\,\,\,
\bigr|\psi_{9}\bigr\rangle_{AB} =\bigr|1\bigr\rangle_{A}\bigr|1\bigr\rangle_{B}\label{eq:nine_states},
\end{align}
 where $\{ \ket{j} \}_{j=0}^2$ is the computational basis of a three-level system,  $\ket{a\pm b} :=(\ket{a}\pm\ket{b})/\sqrt{2}$, indices $A$ and $B$ represents Alice's share of the state and Bob's share of the state respectively, is not  deterministically and perfectly distinguishable by any protocol described by a map belonging to LOCC followed by local measurements.

In this chapter, we focus on a set of orthonormal basis states and analyze the amount of entanglement required for discriminating the states by using one-way LOCC and two-way LOCC. We assume classical communication of one-way LOCC is from Alice to Bob. Note that in both one-way LOCC and two-way LOCC, the discrimination task is achievable for any set of orthonormal basis states if sufficient amount of entanglement is shared between Alice and Bob since Alice can teleport her share of the state to Bob and then Bob can perform joint measurement to distinguish the state on the received state and his share of the state.
As a result, we show that the amount of entanglement required for two-way LOCC is less than the lower bound for one-way LOCC in terms of a certain entanglement measure called the Schmidt rank.
That is, entanglement and classical communication required for local state discrimination by using one-way LOCC can be substituted by less entanglement and more rounds of classical communication corresponding to two-way LOCC.
We consider an orthonormal basis on a bipartite system $\mathcal{H}_{A}\otimes\mathcal{H}_{B}$, and entanglement resource $\bigr|\Phi\bigr\rangle$ on the ancilla system $\mathcal{H}_{A'}\otimes\mathcal{H}_{B'}$ in this chapter. We assume that Alice and Bob hold system represented by Hilbert spaces $\mathcal{H}_{AA'}:=\mathcal{H}_{A}\otimes\mathcal{H}_{A'}$ and $\mathcal{H}_{BB'}:=\mathcal{H}_{B}\otimes\mathcal{H}_{B'}$, respectively.

\section{Entanglement resource for one-way LOCC}
In this section, we focus on characterizing the amount of entanglement
resource to discriminate a state from a set of orthonormal basis states by one-way LOCC in terms
of the Schmidt rank. Note that we do not assume the orthonormal basis
as a product basis; thus, all the results in this section are valid
for the cases of all possibly entangled orthonormal basis. Although we assume
that one-way LOCC means one-way LOCC starting from Alice
($\mathcal{H}_{AA'}$) to Bob ($\mathcal{H}_{BB'}$) in this subsection, all the results
can be easily extend to one-way LOCC from Bob to Alice. 

For an orthonormal basis $\left\{ \bigr|\psi_{j}\bigr\rangle_{AB}\right\} _{j=1}^{d_{A}d_{B}}$
on $\mathcal{H}_{A}\otimes\mathcal{H}_{B}$, we define $r_{min}\left(\left\{ \bigr|\psi_{j}\bigr\rangle_{AB}\right\} _{j=1}^{d_{A}d_{B}}\right)$
as the minimum of the Schmidt rank of a state $\bigr|\Phi\bigr\rangle_{A'B'}\in\mathcal{H}_{A'}\otimes\mathcal{H}_{B'}$
such that a set $\left\{ \bigr|\psi_{j}\bigr\rangle_{AB}\otimes\bigr|\Phi\bigr\rangle_{A'B'}\right\} _{j=1}^{d_{A}d_{B}}\subset\mathcal{H}_{AA'}\otimes\mathcal{H}_{BB'}$
can be perfectly discriminated by one-way LOCC, namely,
\begin{align}
 & r_{min}\left(\left\{ \bigr|\psi_{j}\bigr\rangle_{AB}\right\} _{j=1}^{d_{A}d_{B}}\right) 
:=  \min_{\bigr|\Phi\bigr\rangle}\Big\{ {\rm Sch}\#_{B'}^{A'}\left(\bigr|\Phi\bigr\rangle\right)\ \Big|\ \exists\mathcal{H}_{A'}\otimes\mathcal{H}_{B'},\ \mbox{s.t. } \bigr|\Phi\bigr\rangle\in\mathcal{H}_{A'}\otimes\mathcal{H}_{B'},\nonumber \\
 & \qquad \mbox{and }  \left\{ \bigr|\psi_{j}\bigr\rangle_{AB}\otimes\bigr|\Phi\bigr\rangle_{A'B'}\right\} _{j=1}^{d_{A}d_{B}}\ \mbox{is one-way LOCC distinguishable }\Big\},\label{eq:def_r_min}
\end{align}
where ${\rm Sch}\#_{B'}^{A'}\left(\bigr|\Phi\bigr\rangle\right)$ is the Schmidt
rank of $\bigr|\Phi\bigr\rangle$. Thus, $r_{min}\left(\left\{ \bigr|\psi_{j}\bigr\rangle_{AB}\right\} _{j=1}^{d_{A}d_{B}}\right)$
is the optimal entanglement resource to discriminate $\left\{ \bigr|\psi_{j}\bigr\rangle_{AB}\right\} _{j=1}^{d_{A}d_{B}}$
by one-way LOCC in terms of the Schmidt rank. 

Our main result in this section is stated as the following theorem which completely
characterizes $r_{min}\left(\left\{ \bigr|\psi_{j}\bigr\rangle_{AB}\right\} _{j=1}^{d_{A}d_{B}}\right)$
 and a proof is shown in Appendix B.1.

\begin{theorem}\label{theorem local discrimination} 
For any orthonormal basis $\left\{ \bigr|\psi_{j}\bigr\rangle_{AB}\right\} _{j=1}^{d_{A}d_{B}}\subset\mathcal{H}_{A}\otimes\mathcal{H}_{B}$,
\[
r_{min}\left(\left\{ \bigr|\psi_{j}\bigr\rangle_{AB}\right\} _{j=1}^{d_{A}d_{B}}\right)=d_{min}\left(\left\{ \bigr|\psi_{j}\bigr\rangle_{AB}\right\} _{j=1}^{d_{A}d_{B}}\right),
\]
 where $d_{min}\left(\left\{ \bigr|\psi_{j}\bigr\rangle_{AB}\right\} _{j=1}^{d_{A}d_{B}}\right)$
is defined by
\[
d_{min}\left(\left\{ \bigr|\psi_{j}\bigr\rangle_{AB}\right\} _{j=1}^{d_{A}d_{B}}\right):=\min_{\mathcal{H}_{A}=\bigoplus_{k}\mathcal{M}_{k}}\max_{k}\left\{ \dim\mathcal{M}_{k}\ \Big|\ \forall j,\exists k,\ \mbox{s.t.}\ \bigr|\psi_{j}\bigr\rangle_{AB}\in\mathcal{M}_{k}\otimes\mathcal{H}_{B}\right\} .
\]
\end{theorem}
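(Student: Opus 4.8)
The plan is to establish the two inequalities $r_{min} \leq d_{min}$ and $r_{min} \geq d_{min}$ separately. For the upper bound $r_{min}\left(\left\{ \bigr|\psi_{j}\bigr\rangle_{AB}\right\} \right)\leq d_{min}\left(\left\{ \bigr|\psi_{j}\bigr\rangle_{AB}\right\} \right)$, I would start from an optimal orthogonal decomposition $\mathcal{H}_{A}=\bigoplus_{k}\mathcal{M}_{k}$ achieving $d_{min}$, so that each basis state $\bigr|\psi_{j}\bigr\rangle_{AB}$ lives entirely in some block $\mathcal{M}_{k(j)}\otimes\mathcal{H}_{B}$. The idea is for Alice to use an entanglement resource $\bigr|\Phi\bigr\rangle_{A'B'}$ of Schmidt rank $m:=\max_k \dim\mathcal{M}_k$, and to teleport her share of the state \emph{block by block}: within each $\mathcal{M}_k$ she needs only enough entanglement to teleport a $\dim\mathcal{M}_k$-dimensional system, which is at most $m$. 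Concretely, Alice first performs the projective measurement $\{\Pi_k\otimes\mathbb{I}_B\}_k$ where $\Pi_k$ projects onto $\mathcal{M}_k$ (this is nondestructive on the basis since each $\bigr|\psi_j\bigr\rangle$ sits in one block), learns $k$, then teleports the $\dim\mathcal{M}_k$-dimensional component of her system to Bob using $\bigr|\Phi\bigr\rangle$ (whose Schmidt rank $m \geq \dim\mathcal{M}_k$ suffices for exact teleportation of that component), sends $k$ together with the teleportation correction bits to Bob via one-way classical communication, and finally Bob holds the full state $\bigr|\psi_j\bigr\rangle_{AB}$ (up to a known correction) and measures in the $\left\{ \bigr|\psi_{j}\bigr\rangle\right\}$ basis. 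This is one-way LOCC from Alice to Bob using entanglement of Schmidt rank $m = d_{min}$, which gives the bound.

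For the lower bound $r_{min}\geq d_{min}$, I would argue contrapositively: suppose $\left\{ \bigr|\psi_{j}\bigr\rangle_{AB}\otimes\bigr|\Phi\bigr\rangle_{A'B'}\right\}$ is perfectly distinguishable by one-way LOCC from Alice to Bob. A one-way LOCC discrimination protocol consists of Alice applying a POVM $\{M_a\}$ on $\mathcal{H}_{AA'}$, communicating $a$, and Bob distinguishing the resulting ensemble of conditional states on $\mathcal{H}_{BB'}$; perfect discrimination forces, for each outcome $a$, the post-measurement states $\{(M_a^{1/2}\otimes\mathbb{I})(\bigr|\psi_j\bigr\rangle\otimes\bigr|\Phi\bigr\rangle)\}_j$ that are nonzero to be mutually orthogonal on Bob's side. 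The key structural fact I would extract is that Alice's measurement operators $M_a$, restricted to the $\mathcal{H}_A$ factor after tracing out the role of $\mathcal{H}_{A'}$ against $\bigr|\Phi\bigr\rangle$, must "resolve" the states enough that Bob can finish — and this forces Alice's effective operation on $\mathcal{H}_A$ to have rank (or Kraus rank) bounded in terms of the Schmidt rank of $\bigr|\Phi\bigr\rangle$. From any successful protocol with resource Schmidt rank $r$ I would construct a decomposition $\mathcal{H}_A = \bigoplus_k \mathcal{M}_k$ with each block of dimension $\leq r$ such that every $\bigr|\psi_j\bigr\rangle$ is supported in one block, by grouping outcomes $a$ according to which "coherent subspace" of $\mathcal{H}_A$ the operator $M_a$ (viewed through $\bigr|\Phi\bigr\rangle$) couples to; this would show $d_{min}\leq r$, hence $d_{min}\leq r_{min}$.

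The main obstacle I anticipate is the lower bound, specifically making rigorous the passage from "one-way LOCC distinguishes the product set" to "there is a block decomposition of $\mathcal{H}_A$ with blocks of size $\leq \mathrm{Sch}\#(\bigr|\Phi\bigr\rangle)$." The subtlety is that Alice's POVM elements $M_a$ act on the \emph{joint} space $\mathcal{H}_A\otimes\mathcal{H}_{A'}$ and can be highly entangling between $\mathcal{H}_A$ and $\mathcal{H}_{A'}$, so one cannot naively read off a decomposition of $\mathcal{H}_A$ alone. I expect the right tool is to write $\bigr|\Phi\bigr\rangle = \sum_{\ell=1}^r \sqrt{\lambda_\ell}\bigr|\ell\bigr\rangle_{A'}\bigr|\ell\bigr\rangle_{B'}$ in its Schmidt basis and observe that $\langle \ell|_{B'} \otimes \mathbb{I}$ applied to the post-measurement state gives, for each $\ell$, an operator $A^{(a)}_\ell := \langle\ell|_{A'} M_a^{1/2}|\ell\rangle_{A'}\cdot\sqrt{\lambda_\ell}$ wait — more carefully, the operators $\bigl(\mathbb{I}_A\otimes\langle\ell|_{A'}\bigr)M_a^{1/2}$ form a set of at most $r$ operators per outcome $a$ acting $\mathcal{H}_A\otimes\mathcal{H}_{A'} \to \mathcal{H}_A$, and Bob's ability to distinguish forces these to map the $\bigr|\psi_j\bigr\rangle$ into mutually (Bob-side) orthogonal states. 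Controlling how these at-most-$r$ operators can split $\mathcal{H}_A$, and checking they respect the basis-block structure, is the delicate combinatorial/linear-algebraic heart of the argument, and I would expect to need a careful inductive grouping of measurement outcomes; the details are deferred to Appendix~B.1 as the statement indicates.
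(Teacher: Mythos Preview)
Your upper bound argument is correct and matches the paper's protocol exactly: project onto the optimal blocks $\mathcal{M}_k$, teleport the at-most-$d_{min}$-dimensional piece, and let Bob finish.

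For the lower bound, your general direction is right but you are missing the key structural step that makes the argument clean. The paper does not attempt any ``inductive grouping of measurement outcomes'' or track the individual operators $(\mathbb{I}_A\otimes\langle\ell|_{A'})M_a^{1/2}$. Instead, after reducing without loss of generality to rank-one POVM elements $|\xi_x\rangle\langle\xi_x|_{AA'}$, it defines for each outcome $x$ the single positive operator on $\mathcal{H}_A$
\[
S_x := \sum_{k} \lambda_k^2 \,\langle e_k|_{A'}\bigl(|\xi_x\rangle\langle\xi_x|_{AA'}\bigr)|e_k\rangle_{A'},
\]
where $\sum_k \lambda_k |e_k\rangle_{A'}|f_k\rangle_{B'}$ is the Schmidt decomposition of $|\Phi\rangle$. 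This $S_x$ manifestly has $\mathrm{rank}\,S_x \le \mathrm{Sch}\#(|\Phi\rangle)$. The crucial observation you are missing is that the perfect-distinguishability condition $\langle\tilde\psi_{jx}|\tilde\psi_{j'x}\rangle = \mu_j\delta_{jj'}$ translates exactly into
\[
\bigl(S_x\otimes I_B\bigr)|\psi_j\rangle = \mu_j\,|\psi_j\rangle,
\]
so the basis $\{|\psi_j\rangle\}$ is a \emph{simultaneous eigenbasis} of all the $S_x\otimes I_B$. This forces the family $\{S_x\}_x$ to be mutually commuting, and their common spectral projectors $\{Q_h\}$ on $\mathcal{H}_A$ give the desired decomposition $\mathcal{H}_A=\bigoplus_h\mathcal{M}_h$ with each $|\psi_j\rangle\in\mathcal{M}_h\otimes\mathcal{H}_B$ and $\max_h\dim\mathcal{M}_h\le\max_x\mathrm{rank}\,S_x\le\mathrm{Sch}\#(|\Phi\rangle)$. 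No combinatorics or induction is needed; the eigenvalue equation does all the work, and this is the idea your proposal does not yet contain.
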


\section{Entanglement resource for two-way LOCC}
In this section, we focus on the case of orthogonal product bases and study the amount
of entanglement necessary to discriminate one of the basis states by LOCC. In particular
as an example, we consider the basis given by the nine states defined by Eq. \eqref{eq:nine_states}
and their generalization. We study the entanglement
resource necessary to discriminate a state from the nine states by LOCC. 

For the nine states $\left\{ \bigr|\psi_{j}\bigr\rangle_{AB}\right\} _{j=1}^{9}$
defined by Eq. \eqref{eq:nine_states}, we can easily see that there
is no non-trivial subspace $\mathcal{M}\subset\mathcal{H}_{A}$ satisfying
for all $j$, either $\bigl|\psi_{j}\bigr\rangle\in\mathcal{M}\otimes\mathcal{H}_{B}$
or $\bigl|\psi_{j}\bigr\rangle\in\mathcal{M}^{\perp}\otimes\mathcal{H}_{B}$,
where $\mathcal{M}^{\perp}$ is the orthogonal complement of $\mathcal{M}$.
Thus, from Theorem \ref{theorem local discrimination} , the optimal entanglement
resource $\bigl|\Phi\bigr\rangle$ necessary to discriminate the nine
states by one-way LOCC satisfies ${\rm Sch}\#_{B'}^{A'}(\ket{\Phi})=3$.
On the other hand, we construct a two-way LOCC protocol by which
the nine states can be discriminated by consuming an entanglement resource with ${\rm Sch}\#_{B'}^{A'}(\ket{\Phi})=2$. The protocol
can be described as follows: First, we extend the dimension of $\mathcal{H}_{B}$
from $3$ to $d$ by adding additional states $\bigl|i\bigr\rangle_{B}$
for $3\le i\le d$. Then, apply a global unitary $V_{d}$ given by
\begin{eqnarray}
V_{d}&:=&\bigl|0\bigr\rangle\bigl\langle0\bigr|_{A}\otimes\left(\bigl|3\bigr\rangle\bigl\langle1\bigr|+\bigl|1\bigr\rangle\bigl\langle3\bigr|+\bigl|0\bigr\rangle\bigl\langle0\bigr|+\bigl|2\bigr\rangle\bigl\langle2\bigr|+\bigl|4\bigr\rangle\bigl\langle4\bigr|+\cdots+\bigl|d\bigr\rangle\bigl\langle d\bigr|\right)_{B}\nonumber\\
&&+\left(\bigl|1\bigr\rangle\bigl\langle1\bigr|+\bigl|2\bigr\rangle\bigl\langle2\bigr|\right)_{A}\otimes I_{B}.
\label{eq:ent9state}
\end{eqnarray}
As a result, the nine states $\left\{ \bigr|\psi_{j}\bigr\rangle_{AB}\right\} _{j=1}^{9}$
are transformed into 
\begin{align}
\{ &\bigr|\psi'_{1\left(2\right)}\bigr\rangle_{AB} =\bigr|0\bigr\rangle_{A}\bigr|0\pm3\bigr\rangle_{B},\,\,\,\,\,\,\,\,\,
\bigr|\psi'_{3\left(4\right)}\bigr\rangle_{AB}  =\bigr|0\pm1\bigr\rangle_{A}\bigr|2\bigr\rangle_{B}, \nonumber \\
&\bigr|\psi'_{5\left(6\right)}\bigr\rangle_{AB}  =\bigr|2\bigr\rangle_{A}\bigr|1\pm2\bigr\rangle_{B},\,\,\,\,\,\,\,\,\,
\bigr|\psi'_{7\left(8\right)}\bigr\rangle_{AB}  =\bigr|1\pm2\bigr\rangle_{A}\bigr|0\bigr\rangle_{B},\,\,\,\,\,\,\,\,\,
\bigr|\psi'_{9}\bigr\rangle_{AB}  =\bigr|1\bigr\rangle_{A}\bigr|1\bigr\rangle_{B} \}.
\label{eq:nine_states-1}
\end{align}
These states are distinguishable by LOCC. We use a graphical representation of the states in Fig.~\ref{fig:nine_states-1} in order to show a LOCC protocol given in Fig.~\ref{fig:nine_states-d}. Note that we show how the protocol works when $\ket{\psi'_9}=\bigr|1\bigr\rangle_{A}\bigr|1\bigr\rangle_{B}$ is given. When another state is given, the flow of the protocol will be changed since Alice and Bob can change measurements depending on their past measurement outcomes.
\begin{figure}[htbp]
 \begin{center}
  \includegraphics[width=80mm]{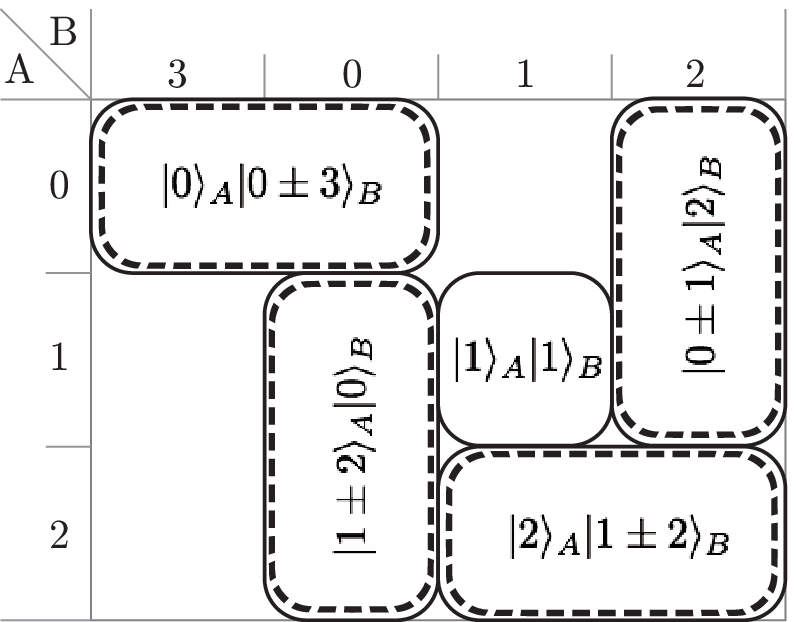}
 \end{center}
 \caption{{\bf A graphical representation of states.} After applying $V_d$, the nine states are transformed into the states represented in this figure.}
 \label{fig:nine_states-1}
\end{figure}
\begin{figure}[htbp]
 \begin{center}
  \includegraphics[width=120mm]{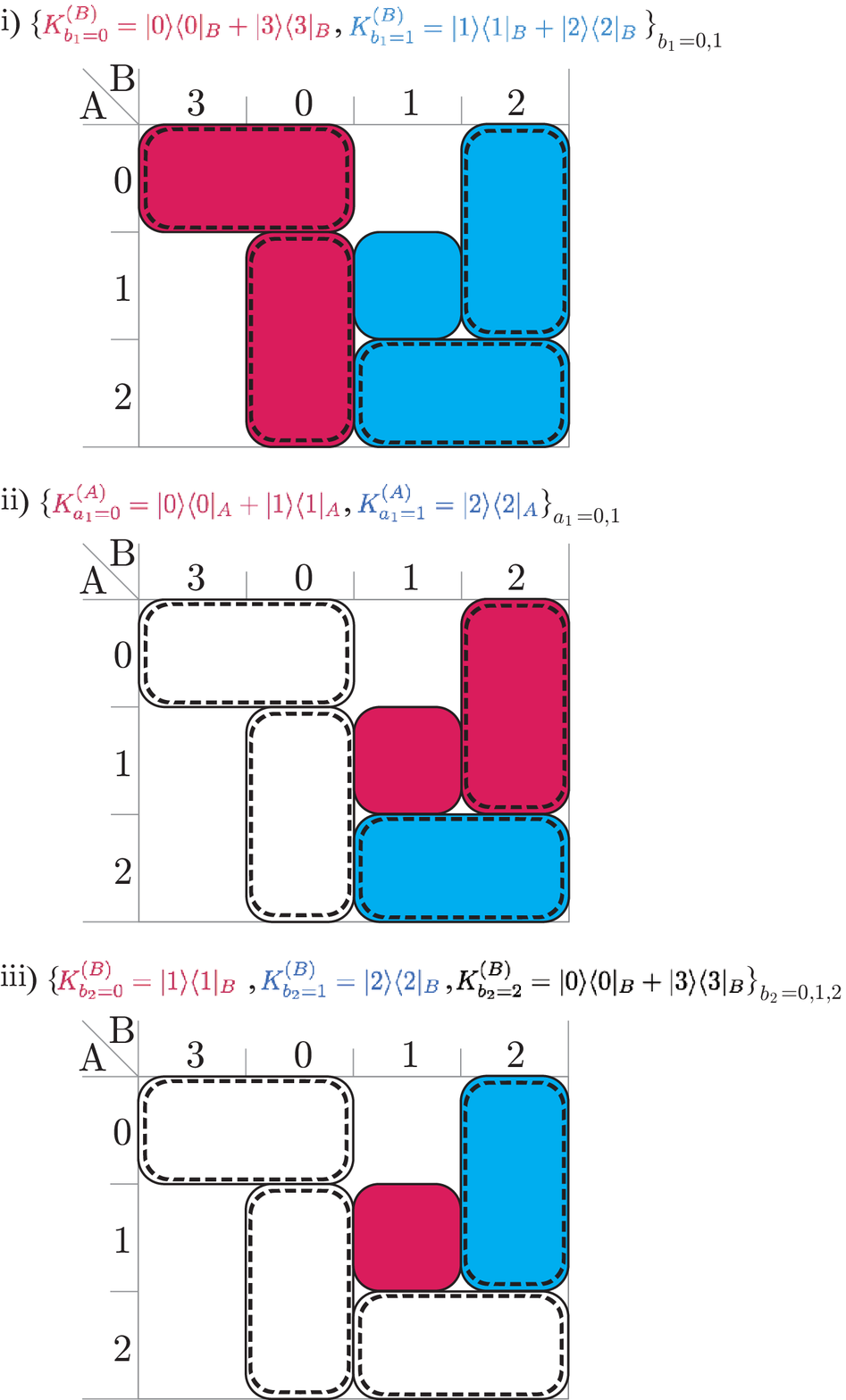}
 \end{center}
 \caption{ {\bf A LOCC protocol for discrimination.} We see how the protocol works when $\ket{\psi'_9}=\bigr|1\bigr\rangle_{A}\bigr|1\bigr\rangle_{B}$ is given. i) First, Bob performs a measurement of which Kraus operators are given by $\{K^{(B)}_{b_1=0}=\ket{0}\bra{0}_B+\ket{3}\bra{3}_B,K^{(B)}_{b_1=1}=\ket{1}\bra{1}_B+\ket{2}\bra{2}_B\}_{b_1=0,1}$ and obtains a measurement outcome $b_1=1$. ii) Second, Alice performs a measurement of which Kraus operators are given by $\{K^{(A)}_{a_1=0}=\ket{0}\bra{0}_A+\ket{1}\bra{1}_A,K^{(A)}_{a_1=1}=\ket{2}\bra{2}_A\}_{a_1=0,1}$ and obtains a measurement outcome $a_1=0$. iii) Third, Bob performs a measurement of which Kraus operators are given by $\{K^{(B)}_{b_2=0}=\ket{1}\bra{1}_B,K^{(B)}_{b_2=1}=\ket{2}\bra{2}_B, K^{(B)}_{b_2=2}=\ket{0}\bra{0}_B+\ket{3}\bra{3}_B\}_{b_2=0,1,2}$, obtains a measurement outcome $b_2=0$ and they know $\ket{\psi'_9}$ is given.}
 \label{fig:nine_states-d}
\end{figure}

$V_{d}$ can be implemented with an EPR state since the operator Schmidt rank of $V_{d}$ is 2 \cite{Cohen}. 

The above discussion can be generalized to a general orthogonal
product basis $\left\{ \bigr|\psi_{j}\bigr\rangle_{AB}\right\} _{j=1}^{d_{A}d_{B}}$ defined by
\begin{align}
\bigl|\psi_{1\left(2\right)}\bigr\rangle & =\frac{1}{\sqrt{2}}\bigl|1\bigr\rangle\otimes(\bigl|1\bigr\rangle\pm\bigl|2\bigr\rangle),\nonumber \\
\bigl|\psi_{3\left(4\right)}\bigr\rangle & =\frac{1}{\sqrt{2}}(\bigl|d_A-1\bigr\rangle\pm\bigl|d_A\bigr\rangle)\otimes\bigl|1\bigr\rangle,\nonumber \\
\bigl|\psi_{m_{1}+5}\bigr\rangle & =\sum_{k=2}^{d_{A}-2}e^{i\frac{2\pi}{d_{A}-3}m_{1}\cdot\left(k-2\right)}\bigl|k\bigr\rangle\otimes\bigl|1\bigr\rangle, \nonumber \\
\bigl|\psi_{d_{A}+m_{2}+2}\bigr\rangle & =\sum_{k=2}^{d_{A}-1}e^{i\frac{2\pi}{d_{A}-2}m_{2}\cdot\left(k-2\right)}\bigl|k\bigr\rangle\otimes\bigl|2\bigr\rangle, \nonumber \\
\bigl|\psi_{2d_{A}+m_{3}}\bigr\rangle & =\bigl|d_A\bigr\rangle\otimes\sum_{k=2}^{d_{B}}e^{i\frac{2\pi}{d_{B}-1}m_{3}\cdot\left(k-2\right)}\bigl|k\bigr\rangle, \nonumber \\
\bigl|\psi_{2d_{A}+d_{B}-1+(d_{A}-1)m_{5}+m_{4}}\bigr\rangle & =\sum_{k=1}^{d_A-1}e^{i\frac{2\pi}{d_A-1}m_{4}\cdot\left(k-1\right)}\bigl|k\bigr\rangle\otimes\bigl|m_{5}+3\bigr\rangle \label{eq generalized nine states}
\end{align}
where $m_i$ for $i=1,2,3,4,5$ satisfies $0\leq m_{1}\le d_{A}-4$, $0\leq m_{2}\le d_{A}-3$, $0\leq m_{3}\le d_{B}-2$,
$0\leq m_{4}\le d_{A}-2$, and $0\leq m_{5}\le d_{B}-3$. 
 A graphical representation of the orthogonal product basis is given in Fig.~\ref{fig:gninestate}.
\begin{figure}[htbp]
 \begin{center}
  \includegraphics[width=90mm]{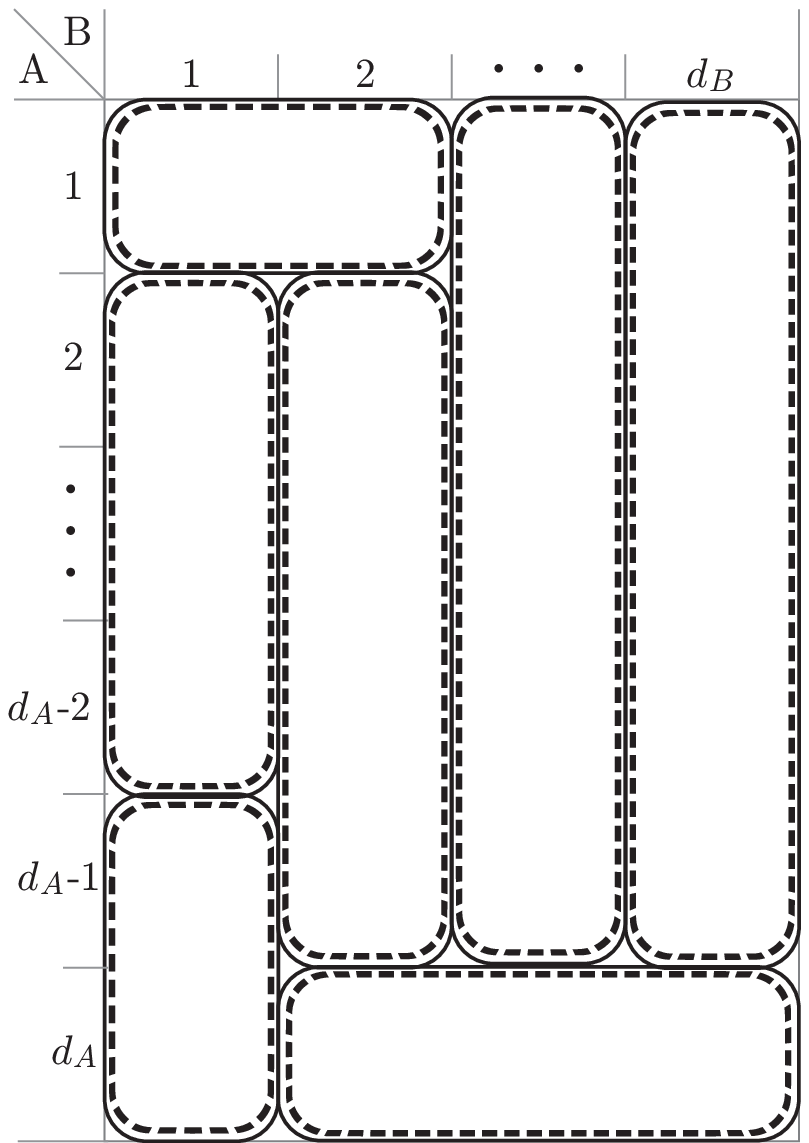}
 \end{center}
 \caption{{\bf A graphical representation of a general orthogonal product basis.}}
 \label{fig:gninestate}
\end{figure}
We consider local state discrimination of $\left\{ \bigr|\psi_{j}\bigr\rangle_{AB}\right\} _{j=1}^{d_{A}d_{B}}$.
Theorem \ref{theorem local discrimination} guarantees that in order
to discriminate a state from this basis states by one-way LOCC starting from Alice to Bob an entanglement resource with the Schmidt rank
$d_{A}$ is necessary. Similarly, to discriminate a state from the same basis states by one-way LOCC
starting from Bob to Alice an entanglement resources
with the Schmidt rank $d_{B}$ is necessary. On the other hand, by
means of the similar discussion as the nine states, it is enough to
use an entanglement resource with the Schmidt rank 2 in case of two-way LOCC. 
Therefore, in the limit of large $d_A$, there is infinite gap between one-way LOCC and two-way LOCC in terms of entanglement resource in terms of the Schmidt rank to distinguish the orthogonal product basis defined by Eq. (\ref{eq generalized nine states}).

\chapter{Resources for SEP}
\label{chap:LOCC*}
In this chapter, we show that a causal relation between the classical outputs and classical inputs of the local operations without predefined partial order, which we call ``classical communication" without predefined causal order (CC*), characterizing a special class of deterministic quantum operations, separable operations.
In Section 7.1, we extend LOCC into a joint quantum operation between two parties implemented without using shared entanglement but with local operations connected by CC*. 
We name a new class of deterministic joint quantum operations obtained by this extension but still within quantum mechanics by LOCC*.
We show that LOCC* with CC* respecting partial order reduces LOCC. 
In Section 7.2, we show that LOCC* can be represented by the $\infty$-shaped loop in Fig.~\ref{fig:loopCC} and is equivalent to SEP \cite{VGheorghiu}, which has been introduced for mathematical simplicity to analyze nonlocal quantum tasks in place of LOCC.  
In Section 7.3, by considering the correspondence between LOCC* and a probabilistic version of LOCC called stochastic LOCC (SLOCC), we analyze the power of CC* in terms of enhancing the success probability of probabilistic operations in SLOCC. 
In Section 7.4, we also investigate the relationship between LOCC* and the quantum process formalism for joint quantum operations without partial order developed in \cite{OFC}.
In Section 7.5, by using our framework, we give an example of the quantum operation which is an element of SEP but not an element of LOCC, i.e., to perform such an operation within special relativistic spacetime, entanglement and classical communication are necessary.

\begin{figure}[htbp]
 \begin{center}
  \includegraphics[width=60mm]{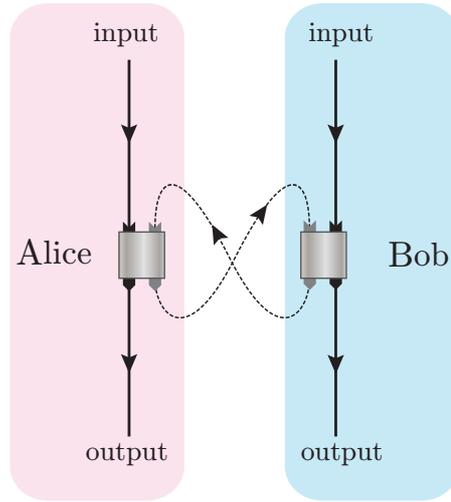}
 \end{center}
 \caption{{\bf A deterministic joint quantum operation consisting of local operations connected by the $\infty$-shaped loop.} Any element of LOCC* can be represented by this joint quantum operation and vice versa as shown in Results.}
 \label{fig:loopCC}
\end{figure}

\section{LOCC*}
We consider a deterministic joint quantum operation (a CPTP map) taking only quantum input and output denoted by $\mathcal{M}$ implemented by two parties, Alice and Bob,  who are connected only by classical communication.  We set Alice performs the first operation.   As a simplest case, we consider to link a classical output  $o$ of Alice's local operation represented by $\{ \mathcal{A}_o \}_o$ and a classical input $i$ of Bob's local operation represented by $\mathcal{B}_{|i}$.   Since Alice and Bob are acting on different quantum systems at different spacetime coordinates,  the joint quantum operation is described by a tensor product of two local operations.  Ability of classical communication indicates that the spacetime coordinate of Alice's local operation and that of Bob's local operation are timelike separated, and Bob's coordinate is in the future cone of Alice's coordinate.   Linking the classical output of Alice $o$ and the classical input of Bob $i$ means that they are perfectly correlated, namely $i=o$ for all $o$.   By taking averages over $o$ and $i$,  we obtain a deterministic joint quantum operation given by 
\begin{equation}
\mathcal{M}=\sum_{o,i}\delta_{i,o} \mathcal{A}_o\otimes\mathcal{B}_{|i}=\sum_{o} \mathcal{A}_o\otimes\mathcal{B}_{|o},
\label{eq:LOCC1}
\end{equation}
where  $\delta_{o,i}$ denotes the Kronecker delta.  This is the simplest case of LOCC called one-way LOCC. 

A deterministic joint operation represented by more general  finite-round LOCC between two parties is defined by  connecting a sequence of Alice's local operations given by $\{ \mathcal{A}^{(N)}_{o_N|i_N}\circ\cdots\circ \mathcal{A}^{(1)}_{o_1} \}_{o_N,\cdots,o_1}$ and another sequence of Bob's local operations  given by $\{\mathcal{B}^{(N)}_{o'_N|i'_N}\circ\cdots\circ \mathcal{B}^{(1)}_{o'_1|i'_1} \}_{o'_N,\cdots,o'_1}$.   Here $\circ$ denotes a connection between two local operations at timelike separated coordinates linking a quantum output of a local operation and a quantum input of the next local operation of the {\it same} party and introduces a total order for each party's local operations.  The indices $i_k$ and $i'_k$ are classical inputs of the $k$-th operations and $o_k$ and $o'_k$ are classical outputs of the $k$-th operations of Alice and Bob, respectively.   In LOCC, not only local operations within the parties are totally ordered, but also the spacetime coordinates of all local operations of Alice and Bob are totally ordered alternately.   Then $o_1$ of Alice's local operation and $i'_1$ of Bob's local operation are linked by classical communication and setting $i'_1=o_1$, and similarly, $o'_1$ of Bob's local operation and $i_{2}$ of Alice's local operation are linked by setting $i_{2}= o'_1$ and so on.
Thus,  finite-round LOCC between the two parties is defined by a set of the joint quantum operation represented by

\begin{eqnarray}
\mathcal{M}&=&\sum_{i_1,\cdots,i'_N,o_1,\cdots, o'_N} 
\delta_{o_N,i'_N} \cdots \delta_{o_1,i'_1} \mathcal{A}^{(N)}_{o_N|i_N}\circ\cdots\circ \mathcal{A}^{(1)}_{o_1|i_1}\otimes \mathcal{B}^{(N)}_{o'_N|i'_N}\circ\cdots\circ \mathcal{B}^{(1)}_{o'_1|i'_1}  \nonumber\\
\label{eq:LOCC2} \\
&=& \sum_{i_1,\cdots,i_N,o_1,\cdots,o_N, o'_N} \mathcal{A}^{(N)}_{o_N|i_N}\circ\cdots\circ \mathcal{A}^{(1)}_{o_1|i_1}\otimes\mathcal{B}^{(N)}_{o_N'|o_N}\circ\cdots\circ \mathcal{B}^{(1)}_{i_2|o_1}
\end{eqnarray}
 where we define $\mathcal{A}^{(1)}_{o_1} := \sum_{i_1} \mathcal{A}^{(1)}_{o_1|i_1}$. An example for $N=3$ is shown in Fig.\ref{fig:twoCC}.  

Now we develop a framework to investigate local operations and classical communication between the parties without the assumption of the existence of  predefined ordering of the spacetime coordinates. 
That is, we keeps the local spacetime coordinates  and the totally ordered structure of local operations within each party but do not assign the {\it global} spacetime coordinate across the parties, and allow connecting any classical inputs and outputs of local operations between different parties as long as the resulting deterministic joint quantum operation is in quantum mechanics. This relaxation allows a generalization of classical communication to a conditional probability distribution $p(i_1,\cdots,i'_N|o_1,\cdots,o'_N)$ linking the classical outputs and the classical inputs of local  operations.
The  corresponding joint quantum operation is represented by
\begin{equation}
\mathcal{M}=\sum_{i_1,\cdots,i'_N,o_1,\cdots,o'_N} p(i_1,\cdots,i'_N|o_1,\cdots,o'_N)\mathcal{A}^{(N)}_{o_N|i_N}\circ\cdots\circ \mathcal{A}^{(1)}_{o_1|i_1}\otimes \mathcal{B}^{(N)}_{o'_N|i'_N}\circ\cdots\circ \mathcal{B}^{(1)}_{o'_1|i'_1} .
\label{eq:LOCC*0}
\end{equation}
This generalization does not guarantee the joint quantum operation represented by Eq.\eqref{eq:LOCC*0} to be TP  whereas its CP property is preserved.   Since we  investigate deterministic joint quantum operations, we require $p(i_1,\cdots,i'_N|o_1,\cdots,o'_N)$ to keep the form of Eq.\eqref{eq:LOCC*0} to represent a CPTP maps.   We call a set of CPTP maps in the form of  Eq.\eqref{eq:LOCC*0}  with $p(i_1,\cdots,i'_N|o_1,\cdots,o'_N)$  as LOCC*.  

For one-way LOCC,  such a generalization corresponds to replacing the delta function $\delta_{o,i}$ in Eq.\eqref{eq:LOCC1} by a conditional probability distribution $ p(i|o) $. This is equivalent to replacing a perfect classical channel by a general noisy classical  channel. However, it is not the case for multiple-round LOCC.  A deterministic joint quantum operation is LOCC if and only if it can be decomposed in the form of Eq.(\ref{eq:LOCC*0}) with $p(i_1,\cdots,i'_N|o_1,\cdots,o'_N)$ respecting {\it causal order} of the classical inputs and outputs of local operations imposed by special relativistic no-signaling conditions. $p(i_1,\cdots,i'_N|o_1,\cdots,o'_N)$ respecting causal order can be regarded as classical communication respecting causal order.
The definition of the causal order of $p(i_1,\cdots,i'_N|o_1,\cdots,o'_N)$ and relationship between LOCC and causal order are given in Appendix B.2.

It is easy to check that without loss of generality, any CPTP map in LOCC* can be implemented by just one local operation performed by each party connected by a conditional probability distribution, since by letting $o_A:=(o_1,\cdots,o_N)$, $i_A:=(i_1,\cdots,i_N)$, $o_B:=(o_1',\cdots,o_N')$ and $i_B:=(i_1',\cdots,i_N')$, we can regard the sequence of Alice's local operations $\{ \mathcal{A}^{(N)}_{o_N|i_N}\circ\cdots\circ \mathcal{A}^{(1)}_{o_1|i_1} \}_{o_N,\cdots,o_1}$ as one quantum instrument $\{\mathcal{A}_{o_A|i_A}\}_{o_A}$ conditioned by the classical input $i_A$, the same with Bob's local operations and $p(i_A,i_B|o_A,o_B)$ is still a conditional probability distribution.   Thus LOCC* is simply defined by a set of deterministic joint quantum operations (= CPTP maps) $\mathcal{M}$ given in the form of
\begin{eqnarray}
\mathcal{M}=\sum_{i_A,i_B,o_A,o_B}p(i_A,i_B|o_A,o_B)\mathcal{A}_{o_A|i_A}\otimes\mathcal{B}_{o_B|i_B},
\label{eq:LOCC*}
\end{eqnarray}
where $p(i_A,i_B|o_A,o_B)$ is a  conditional probability distribution satisfying
\begin{eqnarray}
p(i_A,i_B|o_A,o_B)\geq 0 \mathrm{~~~~~and~~~~~} \sum_{i_A,i_B}p(i_A,i_B|o_A,o_B)=1,
\end{eqnarray}
where $\{ \mathcal{A}_{o_A|i_A} \}_{o_A}$ is Alice's local operation with a classical input $i_A$ and a classical output $o_A$, and $\{ \mathcal{B}_{o_B|i_B} \}_{o_B}$ is Bob's local operation with a classical input $i_B$ and a classical output $o_B$ .
We call $p(i_A,i_B|o_A,o_B)$ in Eq.~\eqref{eq:LOCC*}  linking the classical outputs and classical inputs as CC*, {\it ``classical communication'' without predefined causal order}, with respect to local operations $\{\{\mathcal{A}_{o_A|i_A}\}_{o_A},\{\mathcal{B}_{o_B|i_B}\}_{o_B}\}$.   Note that we can also ``collapse'' the sequential local operations in  multi-round LOCC to represent a joint quantum operation in the form of Eq.~\eqref{eq:LOCC*}.  However in this case if we regard the combined  operations $\{\mathcal{A}_{o_A|i_A}\}_{o_A}$ and $\{\mathcal{B}_{o_B|i_B}\}_{o_B}$ as operations localized in the spacetime of LOCC, the  corresponding $p(i_A,i_B|o_A,o_B)$ cannot be interpreted as classical communication respecting causal order.

Further, we show that LOCC* can be always represented by a $\infty$-shaped  loop shown in Fig.\ref{fig:loopCC}.   It is easy to see that if $p(i_A,i_B|o_A,o_B)=\delta_{i_A,o_B}\delta_{i_B,o_A}$, LOCC* defined by Eq.(\ref{eq:LOCC*})  reduces to 
\begin{eqnarray}
\mathcal{M}=\sum_{a,b}\mathcal{A}_{a|b}\otimes\mathcal{B}_{b|a}.
\label{eq:LOSC}
\end{eqnarray}
In Appendix B.4, we show  the converse, all elements of LOCC* can be decomposed into this form, is also true.   From this form of LOCC*,  it is possible to interpret that CC* in LOCC* can be {\it looped}, namely, Alice's classical input is Bob's classical output and Bob's classical input is Alice's classical output.

\begin{table}
\begin{center}
i) $K_{a|b}^{(A)}$:~~
\renewcommand
\arraystretch{1.1}
\begin{tabular}{|c|p{22mm}|p{22mm}|p{22mm}|}\hline
\backslashbox[1cm]{a\rule[-0mm]{0pt}{0mm}}{b} & 1 & 2 & 3 \\ \hline
1 & $\ket{1}_{A'}\bra{0}_A$ & $\ket{2}_{A'}\bra{0}_A$ & $\ket{3}_{A'}\bra{0+1}_A$ \\ \hline
2 & $\ket{8}_{A'}\bra{1-2}_A$ & $\ket{9}_{A'}\bra{1}_A$ & $\ket{4}_{A'}\bra{0-1}_A$ \\ \hline
3 & $\ket{7}_{A'}\bra{1+2}_A$ & $\ket{6}_{A'}\bra{2}_A$ & $\ket{5}_{A'}\bra{2}_A$ \\ \hline
\end{tabular}
\vspace{0.5cm}
\\ii)$ K_{b|a}^{(B)}$:~~
\begin{tabular}{|c|p{22mm}|p{22mm}|p{22mm}|}\hline
\backslashbox{a}{b} & 1 & 2 & 3 \\ \hline
1 & $\ket{1}_{B'}\bra{0+1}_B$ & $\ket{2}_{B'}\bra{0-1}_B$ & $\ket{3}_{B'}\bra{2}_B$ \\ \hline
2 & $\ket{8}_{B'}\bra{0}_B$ & $\ket{9}_{B'}\bra{1}_B$ & $\ket{4}_{B'}\bra{2}_B$ \\ \hline
3 & $\ket{7}_{B'}\bra{0}_B$ & $\ket{6}_{B'}\bra{1-2}_B$ & $\ket{5}_{B'}\bra{1+2}_B$ \\ \hline
\end{tabular}
\caption{ Tables of the Kraus operators $ K_{a|b}^{(A)}$ of Alice's local operations $\{ \mathcal{A}_{a|b} \}_a$ in i) and $K_{b|a}^{(B)}$ of Bob's local operation $\{\mathcal{B}_{b|a} \}_b$ in  ii).   In the Kraus operator representation, the deterministic joint quantum operation $\mathcal{M}=\sum_{a,b}\mathcal{A}_{a|b}\otimes\mathcal{B}_{b|a}$ transforms any quantum input $\rho_{AB}$ on systems $A$ and $B$ into a quantum output on systems $A'$ and $B'$ as ${\rho'} _{A' B'}=\mathcal{M}(\rho_{AB}) =   \sum_{a,b} K_{a|b}^{(A)} \otimes K_{b|a}^{(B)} \rho_{AB} (K_{a|b}^{(A)}  \otimes K_{b|a}^{(B)})^\dagger$ where $\sum_a (K_{a|b}^{(A)})^\dagger K_{a|b}^{(A)} = \mathbb{I}_A$ for any $b$ and $\sum_b (K_{b|a}^{(B)})^\dagger K_{b|a}^{(B)} = \mathbb{I}_B$ for $a$ with identity operators $\mathbb{I}_A$ and $\mathbb{I}_B$ on system $A$ and $B$, respectively.}
\label{table:9state} 
\end{center}
\end{table}

\section{LOCC* and SEP}

We show that LOCC* provides a new characterization of a set of deterministic joint quantum operations corresponding to SEP by proving that LOCC* is equivalent to SEP.  We start with investigating inclusion relations between LOCC and LOCC*.   It is is easy to verify that LOCC is a subset of LOCC*  by definition of LOCC*.   We show that LOCC* is strictly larger than LOCC by constructing the following  example based on the nine-state discrimination \cite{9state}.

  We show that the the nine-state  {\it can} be deterministically  and perfectly distinguishable by a map in LOCC* followed by local measurements by presenting constructions of Alice's local operation $\{\mathcal{A}_{a|b} \}_a$  and Bob's local operation $\{ \mathcal{B}_{b|a} \}_b$  in the form of Eq.(\ref{eq:LOSC}).   The constructions of $\{\mathcal{A}_{a|b} \}_a$ and  $\{ \mathcal{B}_{b|a} \}_b$ in the Kraus operator representations are given in Table~\ref{table:9state}  i) and ii), respectively.   It is easy to check that for any $\ket{\psi_k }\in \{  \ket{\psi_i} \}_{i=1}^9$, the corresponding $\mathcal{M}$  in LOCC* with the constructions of the local operations transforms
\begin{eqnarray}
\ket{\psi_k}\bra{\psi_k}  \rightarrow 
\sum_{a,b}\mathcal{A}_{a|b}\otimes\mathcal{B}_{b|a} (\ket{\psi_k}\bra{\psi_k})
= \ket{k}_{A'}\bra{k} \otimes \ket{k}_{B'} \bra{k}
\end{eqnarray}
where indices $A'$ and $B'$ denote nine-dimensional output systems for Alice and Bob.    Once Alice and Bob obtain the output state $\ket{k}_{A'}\bra{k} \otimes \ket{k}_{B'} \bra{k}$, they can find out the classical output $k$ by individually performing projective measurements in the basis given by $\{ \ket{j} \}_{j=1}^{9}$. Note that the nine states can be probabilistically distinguished without error by a stochastic LOCC (SLOCC) protocol, which indicates LOCC* is closely related to SLOCC as it will be shown later.

We can further show that LOCC* is equivalent to a well known class of CPTP maps called {\it separable map} (SEP) as summarized in Fig.~\ref{fig:classes}.  SEP is a set of maps representing deterministic joint quantum operations $\mathcal{M}$ that can be written by
\begin{eqnarray}
 \mathcal{M}=\sum_{k}\mathcal{E}^A_{k}\otimes\mathcal{E}^B_{k},
\label{eq:SEP}
\end{eqnarray}
where all the elements of local operations without classical inputs  $\mathcal{E}^A_{k}$ and $\mathcal{E}^B_{k}$ are completely positive.   The set of nine states can be deterministically and perfectly distinguished by using a map in SEP of which Kraus operator representation is given by $\{ \ket{k}_{A'}\otimes \ket{k}_{B'}  \bra{\psi_k}_{AB} \}$ followed by local projective measurements in the basis given by $\{ \ket{k} \}_{k=1}^{9}$.  SEP  is equivalent to a set of CPTP maps that cannot transform any separable states into entangled states \cite{VVedral, MBPlenio}. Therefore, SEP does not have a power to create entanglement between two parties if the quantum input is not entangled. This can be interpreted as a characterization of operational effects of SEP since the characterization is based on the effect of SEP.
 The class SEP includes the class LOCC \cite{CLMOW12}.      Due to the mathematical simplicity of its structure, the class SEP is often used for proving a quantum task to be not implementable by LOCC protocols by showing that the task is not implementable even by using a stronger class of operations, SEP.  However, the gap between LOCC and SEP has not been clear. 

It is easy to see that LOCC* is a subset of SEP from the form of Eq.\eqref{eq:LOCC*},  since $\mathcal{A}_{o_A|i_A}$ and $\mathcal{B}_{o_B|i_B}$ are also completely positive and  $p(i_A,i_B|o_A,o_B)$  is not negative, so we can always transform a map in LOCC* into the form of \eqref{eq:SEP}.   To prove that SEP is a subset of LOCC*,  we use the fact that an element of SEP is implementable by SLOCC with a {\it constant}  success probability \cite{CLMOW12}.  In Appendix B.5, we show that it is possible to reduce the failure probability to be zero in a LOCC* protocol and there always exists a map in LOCC* corresponding to a map in SEP.  Note that the LOCC* map obtained by Appendix B.5 is different from the simpler LOCC* map given in Table~\ref{table:9state} in the case of nine-state discrimination. 
 LOCC* gives a new characterization of SEP in terms of operational resources in the sense that SEP can be interpreted as a set of operations consisting local operations and CC*. Note that in LOCC*, two assumptions of local operations, (a) they are partially ordered and (b) the choice of a local operation does not depend on resources connecting the local operation, are relaxed and the local operations are connected by CC*.


So far, we have considered the implementation of a deterministic joint quantum operation when no entanglement is shared across the parties.   Now we consider the effect of entanglement shared between the parties.  Since any deterministic joint quantum operations can be implementable by entanglement assisted LOCC, LOCC* can be implementable by entanglement assisted (standard) classical communication respecting causal order.  The results shown in this part suggest that entanglement assisted LOCC implementing SEP can be simulated by LOCC*, where no entanglement is needed.

\begin{figure}
 \centering
  \includegraphics[height=.30\textheight]{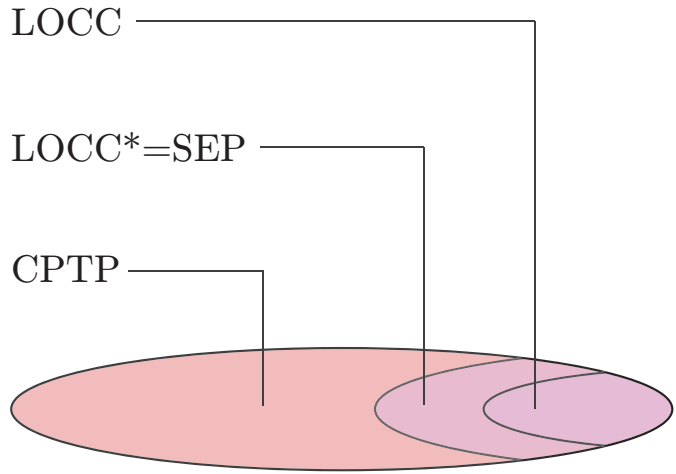}
  \caption{ The inclusion relation between the classes of  deterministic joint quantum operations CPTP, SEP, LOCC*, and LOCC.   LOCC* is equivalent to the set of separable maps (SEP). LOCC is strictly smaller than SEP. LOCC* is strictly smaller than the set of CPTP.}
\label{fig:classes}
\end{figure} 

\section{LOCC* and local post selection}
In this part, we  present the relationship between LOCC* and SLOCC.  SLOCC is a set of (linear) CP maps consisting of local operations and standard classical communication .   In contrast to LOCC, SLOCC contains the CP maps representing the cases where particular measurement outcomes are post-selected.   Since we use a linear map to represent SLOCC, a SLOCC element is not TP but trace decreasing (TD) in general.
We define a class of linear CP maps called SLOCC* that can be simulated by SLOCC.
That is, SLOCC* is the set of linear CP maps from two input systems $X$ and $Y$ to two output system $A$ and $B$ such that 
\begin{equation}
\mathcal{M}=c\Gamma,
\end{equation}
where $\Gamma$ is an element of SLOCC and a positive constant  $c > 0$. Note that SLOCC* contains not only TD maps but also trace increasing maps.  It is easy to see that LOCC* (or SEP) is a subset of SLOCC* since any element of LOCC* (or SEP) is implementable by SLOCC with a constant  success probability independent of inputs.   In the following, we show that a superset of LOCC* where the TP condition is removed from LOCC* is equivalent to SLOCC*.  That is, SLOCC* is equivalent to the set of linear CP maps that can be decomposed into the form of Eq.~\eqref{eq:LOCC*}.
By definition, SLOCC* is equivalent to the set of linear CP maps that can be decomposed into the form of Eq.~\eqref{eq:SEP}, where $k$ corresponds to a measurement outcome.  Without loss of generality, we can restrict $\mathcal{E}_k^A$ and $\mathcal{E}_k^B$ in Eq.~\eqref{eq:SEP} to be CPTD maps since for all linear CP maps $\mathcal{E}^A_k$, there exists natural number $M$ such that  $\frac{1}{M}\mathcal{E}^A_k$ is a CPTD map and 
$\mathcal{M}$ given by Eq.~\eqref{eq:SEP} can be represented by $\sum_{i=1}^M\sum_{k}\tilde{\mathcal{E}^A_{i,k}}\otimes\tilde{\mathcal{E}^B_{i,k}}$, where $\tilde{\mathcal{E}^A_{i,k}}=\frac{1}{M}\mathcal{E}^A_k$ and $\tilde{\mathcal{E}^B_{i,k}}=\mathcal{E}^B_{k}$.
By applying the same technique described in Appendix B.5, any element of SLOCC* can be decomposed into  the form of Eq.~\eqref{eq:LOSC}.   Thus, any SLOCC* element can be decomposed into Eq.~\eqref{eq:LOCC*}.
We summarize the inclusion relation of sets of linear CP maps as shown in Fig.~\ref{fig:classes2}.

Any linear CP maps that can be decomposed into the form of Eq.~\eqref{eq:LOCC*} is simulatable by LOCC with post-selection (SLOCC), and vice versa. A connection between post-selection and the causal structure of the spacetime has been discussed in the context of the probabilistic closed time-like curve (p-CTC) in  \cite{pCTC1, pCTC2, pCTC3}.   Our result indicates the connection between post-selection and CC* without causal order.    In the standard spacetime,  a map in SEP but not in LOCC is implementable by LOCC assisted by entanglement.  Thus we can regard that entanglement accompanied with classical communication is used only for enhancing the success probability of a task achievable without entanglement $p < 1$ to $p=1$ for implementing a map in SEP (=LOCC*,  but not in LOCC).  In this sense,  CC* without causal order can be understood as alternative characterizations of a power of entanglement for enhancing success probability by effectively changing the partial ordering properties of the spacetime.

\begin{figure}
 \centering
  \includegraphics[height=.30\textheight]{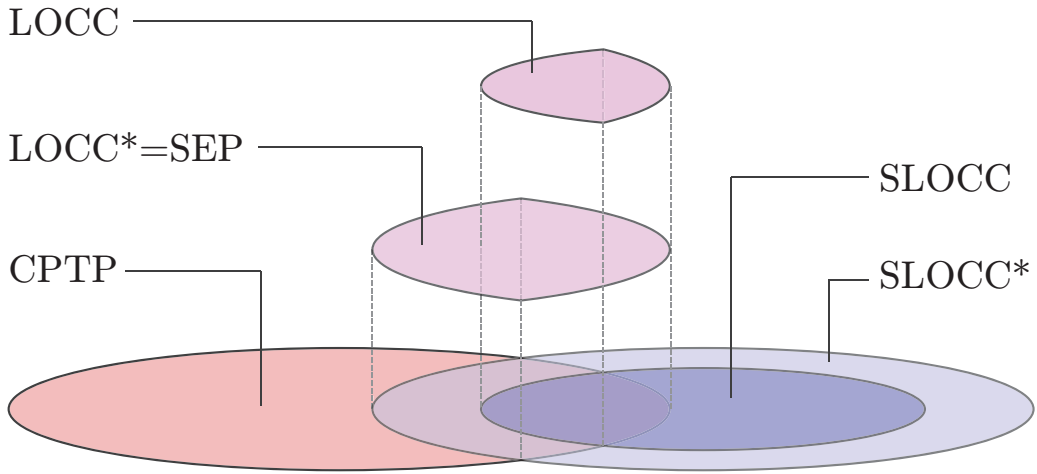}
  \caption{Class of linear CP maps. The intersection of SLOCC and the set of CPTP maps (TP SLOCC elements) is LOCC since any TP SLOCC element can be implemented by a LOCC protocol. The intersection of SLOCC* and the set of CPTP maps (TP SLOCC* elements) is LOCC*.}
\label{fig:classes2}
\end{figure} 

\section{LOCC* and quantum processes}
We discuss correspondences between our formalism of LOCC* and other formalisms of quantum operations without the assumptions of  the partial order of local operations,  {\it higher order formalisms} developed by \cite{OFC, Chiribella2}.   In  the higher order formalism, effects of quantum communication linking local operations of two parties are described as a map that  transforms local operations into a deterministic joint quantum operation.   The map, called  a {\it quantum process},  is a higher order map (supermap) transforming a quantum operation to another quantum operation, whereas a (normal) map  transforms a quantum state to another quantum state.    In \cite{OFC}, the requirements for a quantum process to be consistent with quantum mechanics but without predefined causal order of local operations have been derived.   They have shown that there exists quantum  processes not implementable by {\it quantum} communication linking partially ordered local operations.  Such requirements for a quantum process to be consistent with quantum mechanics can be interpreted as a new kind of causality, which is different from the special relativistic causality but based only on quantum mechanics.  The quantum process shows a new possibility to speed up quantum computers \cite{Qswitchspeed} and implementations of the quantum process are discussed for several settings \cite{Qswitch, Nakago}.

A crucial difference between the higher order formalisms and our formalism of LOCC* is that the local operations are linked by quantum communication in the higher order formalisms,  but we only allow classical communication between the parties.   To compare the two formalisms,  we consider a special type of quantum processes where quantum communication between the parties are restricted to transmitting a probabilistic mixture of ``classical'' states, namely, a set of fixed mutually orthogonal states.  Such a process is called a {\it classical} quantum process (CQP).  In \cite{OFC},  it is shown that a deterministic joint operation described by local operations linked by a classical quantum process does not exhibit the new causality as a (fully) quantum process does. In Appendix B.6, we show that the deterministic joint operations in this case reduce to a probabilistic mixture of two types of operations in one-way LOCC from Alice to Bob and from Bob to Alice.   We denote a set of such deterministic joint quantum operations as LOCQP for comparing to LOCC*.

Since LOCQP is a set of probabilistic mixtures of one-way LOCC, LOCC* is a larger set than LOCQP.  Therefore, CC* used in implementing non-LOCC quantum operations cannot be a classical quantum process linking two local operations.   The gap between LOCC* and LOCQP originates from the conditions on local operations imposed for restricting CC* and a classical quantum process.  CC* in LOCC* is only required to guarantee the joint quantum operation to be deterministic for {\it some} choices of local operations, whereas a classical quantum process in LOCQP is required to guarantee the joint quantum operation to be deterministic for {\it arbitrary} choices of  local operations.  Hence CC* is less restricted than a classical quantum process.   Considering that CC* is simulated by entanglement assisted classical communication, we can conclude that entanglement provides a power to waver restrictions on classical communication linking local operations originated from both the causality in special relativity and the restriction for classical quantum processes when it is accompanied by classical communication.


\section{LOCQP and SEP}
In bipartite cases, we have shown that LOCQP is equivalent to a set of probabilistic mixtures of one-way LOCC since the classical quantum process is just a probabilistic mixtures of classical communication.
A quantum process is called {\it causally separable} if the quantum process represents a probability mixtures of normal communication channels, otherwise, it is called {\it causally non-separable}.
There exists a classical quantum process $W$ that is causally non-separable in {\it tripartite} cases as shown in Appendix B.8.
This implies that tripartite LOCQP not only contains a simple probability mixture of one-way LOCC but also possibly contains an element of a larger set such as two-way LOCC and even non-LOCC separable operations.
In this section, we give an example of the tripartite non-LOCC separable operations consisting of local operations linked by tripartite LOCQP.
Very few classes of non-LOCC separable operations are known \cite{9state, JNiset, DiVincezo, EChitambar} and the gap between SEP and LOCC has not been clarified so far. LOCQP can be used for a new tool to generate operations in non-LOCC SEP.

In Appendix B.8, we show that CC* defined by
\begin{eqnarray}
p(x,y,z|a,b,c)=\frac{1}{2}
\,\,\,if\,(x,y,z,a,b,c)&=&(0,0,0,0,0,0)\nonumber\\
&=&(0,0,0,1,1,1)\nonumber\\
&=&(0,0,1,0,1,0)\nonumber\\
&=&(0,0,1,1,0,1)\nonumber\\
&=&(0,1,0,0,1,1)\nonumber\\
&=&(0,1,0,1,0,0)\nonumber\\
&=&(0,1,1,0,0,1)\nonumber\\
&=&(0,1,1,1,1,0)\nonumber\\
&=&(1,0,0,0,0,1)\nonumber\\
&=&(1,0,0,1,1,0)\nonumber\\
&=&(1,0,1,0,1,1)\nonumber\\
&=&(1,0,1,1,0,0)\nonumber\\
&=&(1,1,0,0,1,0)\nonumber\\
&=&(1,1,0,1,0,1)\nonumber\\
&=&(1,1,1,0,0,0)\nonumber\\
&=&(1,1,1,1,1,1).\label{eq:nonSEP}
\end{eqnarray}
corresponds to a causally non-separable classical quantum process.
A joint map $\Gamma$ obtained by using this CC* is given by
\begin{equation}
\Gamma=\sum_{x,y,z,a,b,c}p(x,y,z|a,b,c)A_{a|x}\otimes B_{b|y}\otimes C_{c|z},
\end{equation}
where we use the CJ representations of local operations for Alice, Bob and Charlie, for example, and $\{A_{a|x}\}_a$ is the CJ representation of a local quantum operation of Alice in $\mathbf{C}(\mathcal{H}_1:\mathcal{H}_2)$.
We can prove that $\Gamma$ is a separable map that is not in a finite round LOCC with
\begin{eqnarray}
A_{a|x}&=&H^x\ket{a}\bra{a}_1H^x\otimes\ket{xa}\bra{xa}_2\\
B_{b|y}&=&H^y\ket{b}\bra{b}_3H^y\otimes\ket{yb}\bra{yb}_4\\
C_{c|z}&=&H^z\ket{c}\bra{c}_5H^z\otimes\ket{zc}\bra{zc}_6.
\end{eqnarray}
We denote $\ket{000}$, $\ket{101}$, $\ket{+10}$, $\ket{-11}, \ket{100}$, $\ket{001}$, $\ket{-10}$, $\ket{+11}$ as $\ket{\mathbf{0}}$, $\ket{\mathbf{1}}$, $\ket{\mathbf{2}}$, $\ket{\mathbf{3}}$, $\ket{\mathbf{4}}$, $\ket{\mathbf{5}}$, $\ket{\mathbf{6}}$, $\ket{\mathbf{7}}$ and $\ket{x}\bra{x}$ as $[x]$. Then the joint map $\Gamma$ is written as
\begin{eqnarray}
\Gamma&=&\frac{1}{2}\big([\mathbf{000}]+[\mathbf{111}]+[\mathbf{012}]+[\mathbf{103}]+[\mathbf{031}]+[\mathbf{120}]+[\mathbf{023}]+[\mathbf{132}]\nonumber\\
&&+[\mathbf{201}]+[\mathbf{310}]+[\mathbf{213}]+[\mathbf{302}]+[\mathbf{230}]+[\mathbf{321}]+[\mathbf{222}]+[\mathbf{333}]\big).
\end{eqnarray}
We show the existence of a LOCC protocol implies a contradiction.
First, we show the following lemma about a condition of LOCC operators.
\begin{lemma}
If there exist non-zero positive semidefinite operators $A_i\in Pos^*(\mathcal{H}_{1}\otimes\mathcal{H}_{2})$, $B_i\in Pos^*(\mathcal{H}_{3}\otimes\mathcal{H}_{4})$ and $C_i\in Pos^*(\mathcal{H}_{5}\otimes\mathcal{H}_{6})$ such that
\begin{equation}
\sum_iA_i\otimes B_i\otimes C_i =\Gamma,
\end{equation}
where $Pos^*(\mathcal{H})$ is a set of positive operators with at least one non-zero diagonal element, then
\begin{eqnarray}
\forall i,\,\,(A_i, B_i, C_i) \in\{&(\alpha[\mathbf{0}],\beta[\mathbf{0}],\gamma[\mathbf{0}]),&\nonumber\\
&(\alpha[\mathbf{1}],\beta[\mathbf{2}],\gamma[\mathbf{0}]),&\nonumber\\
&(\alpha[\mathbf{2}],\beta[\mathbf{3}],\gamma[\mathbf{0}]),&\nonumber\\
&(\alpha[\mathbf{3}],\beta[\mathbf{1}],\gamma[\mathbf{0}]),&\nonumber\\
&(\alpha[\mathbf{0}],\beta[\mathbf{3}],\gamma[\mathbf{1}]),&\nonumber\\
&(\alpha[\mathbf{1}],\beta[\mathbf{1}],\gamma[\mathbf{1}]),&\nonumber\\
&(\alpha[\mathbf{2}],\beta[\mathbf{0}],\gamma[\mathbf{1}]),&\nonumber\\
&(\alpha[\mathbf{3}],\beta[\mathbf{2}],\gamma[\mathbf{1}]),&\nonumber\\
&(\alpha[\mathbf{0}],\beta[\mathbf{1}],\gamma[\mathbf{2}]),&\nonumber\\
&(\alpha[\mathbf{1}],\beta[\mathbf{3}],\gamma[\mathbf{2}]),&\nonumber\\
&(\alpha[\mathbf{2}],\beta[\mathbf{2}],\gamma[\mathbf{2}]),&\nonumber\\
&(\alpha[\mathbf{3}],\beta[\mathbf{0}],\gamma[\mathbf{2}]),&\nonumber\\
&(\alpha[\mathbf{0}],\beta[\mathbf{2}],\gamma[\mathbf{3}]),&\nonumber\\
&(\alpha[\mathbf{1}],\beta[\mathbf{0}],\gamma[\mathbf{3}]),&\nonumber\\
&(\alpha[\mathbf{2}],\beta[\mathbf{1}],\gamma[\mathbf{3}]),&\nonumber\\
&(\alpha[\mathbf{3}],\beta[\mathbf{3}],\gamma[\mathbf{3}])&|\alpha,\beta,\gamma>0\}.
\label{eq:elements}
\end{eqnarray}
\end{lemma}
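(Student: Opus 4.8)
The plan is to analyze the diagonal structure of the joint map $\Gamma$ and to show that any separable decomposition $\sum_i A_i \otimes B_i \otimes C_i = \Gamma$ is forced, term by term, into the sixteen listed forms. First I would observe that $\Gamma$, written in the $\{\ket{\mathbf{0}},\dots,\ket{\mathbf{7}}\}$ notation, is a sum of sixteen rank-one projectors $[\mathbf{ijk}]$ where the triples $(i,j,k)$ take values in $\{0,1,2,3\}^3$ (after identifying the labels appropriately); the key combinatorial fact is that these sixteen triples form a structure where (i) the first two coordinates determine the third, and (ii) for each fixed value of the third coordinate $k$, the map $i \mapsto j$ is a bijection of $\{0,1,2,3\}$. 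I would make this structure explicit at the start, since it is exactly what drives the rigidity.

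Next I would exploit positivity and the diagonal. Restricting the equation $\sum_i A_i\otimes B_i\otimes C_i = \Gamma$ to diagonal entries in the computational basis of $\mathcal{H}_1\otimes\cdots\otimes\mathcal{H}_6$: the right-hand side is supported only on those sixteen diagonal positions. For any $i$, each of $A_i, B_i, C_i$ is positive semidefinite with at least one nonzero diagonal entry (this is the $Pos^*$ assumption), so $A_i\otimes B_i\otimes C_i$ has a nonzero diagonal entry, and since it is a positive contribution to $\Gamma$ (no cancellation of diagonal entries among positive operators), its diagonal support must be contained in the sixteen allowed positions. I would then argue that the diagonal support of $A_i$ must be a single label in $\{\mathbf{0},\mathbf{1},\mathbf{2},\mathbf{3}\}$ and likewise for $B_i$ in $\{\mathbf{0},\mathbf{1},\mathbf{2},\mathbf{3}\}$ and $C_i$ in $\{\mathbf{0},\mathbf{1},\mathbf{2},\mathbf{3}\}$: if, say, $A_i$ had two distinct diagonal labels $a \ne a'$ in its support, then because $B_i, C_i$ each have some diagonal label, we would get diagonal entries of $\Gamma$ at both $(a,b,c)$ and $(a',b,c)$ with the same $(b,c)$, contradicting that the first two coordinates of an allowed triple determine the third combined with the bijectivity of $i\mapsto j$ for fixed $k$ — more carefully, fixing $C_i$'s label pins down $k$, then for that $k$ the pair $(a,b)$ is forced to be one specific pair, so $A_i$ cannot carry two labels. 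Once the diagonal supports of $A_i, B_i, C_i$ are singletons, say $(\mathbf{a},\mathbf{b},\mathbf{c})$, that triple must be one of the sixteen allowed ones, giving the list in Eq.~\eqref{eq:elements}.

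The remaining point is to upgrade from "diagonal support is a single label" to "the operator itself is $\alpha[\mathbf{a}]$", i.e.\ that $A_i$ is actually the rank-one projector onto that label times a positive scalar, with no off-diagonal content. Here I would use positivity again: a positive semidefinite operator whose diagonal is supported on a single index $\mathbf{a}$ must be supported (as an operator) entirely on the span of $\ket{\mathbf{a}}$, hence equals $\alpha[\mathbf{a}]$ with $\alpha \ge 0$, and $\alpha > 0$ since $A_i \ne 0$. I expect the main obstacle to be the bookkeeping in the middle step: carefully ruling out that any $A_i$, $B_i$, or $C_i$ carries a diagonal entry outside $\{\mathbf{0},\mathbf{1},\mathbf{2},\mathbf{3}\}$ (the labels $\mathbf{4},\dots,\mathbf{7}$ never appear on the right) and that it carries exactly one such label, using the precise combinatorics of which sixteen triples occur — this is where the specific structure of the CC* table in Eq.~\eqref{eq:nonSEP} enters and must be checked against the six nonlocal Pauli-eigenbasis labels. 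The rest is routine positivity arguments, and the eventual contradiction with finite-round LOCC (presumably via a standard argument that no LOCC operator can realize such a rigid set of orthogonal one-dimensional blocks) would be carried out after this lemma.
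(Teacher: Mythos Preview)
Your proposal is correct and follows essentially the same route as the paper: restrict to diagonal entries in the $\{\ket{\mathbf{0}},\ldots,\ket{\mathbf{7}}\}$ basis (not the computational basis---note $\ket{\mathbf{2}}=\ket{+10}$ involves $\ket{\pm}$, so be precise here), use positivity to force each term's diagonal support into the sixteen allowed triples, exploit the Latin-square property that any two coordinates determine the third to pin each of $A_i,B_i,C_i$ to a single label, and then upgrade via the PSD-with-single-diagonal-entry $\Rightarrow$ rank-one observation. The paper's version is marginally slicker in that it sandwiches $\Gamma$ with $\bra{\mathbf{a}}_A\bra{\mathbf{b}}_B$ to obtain $\sum_i \langle\mathbf{a}|A_i|\mathbf{a}\rangle\langle\mathbf{b}|B_i|\mathbf{b}\rangle\,C_i=\tfrac12[\mathbf{c}]$, which directly forces each contributing $C_i$ to be a scalar multiple of $[\mathbf{c}]$ without the separate diagonal-to-rank-one step, but the content is the same.
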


\begin{proof}
\begin{equation}
\bra{\mathbf{00}}_{AB}\Gamma\ket{\mathbf{00}}_{AB}=\sum_i\bra{\mathbf{0}}A_i\ket{\mathbf{0}}\bra{\mathbf{0}}B_i\ket{\mathbf{0}}C_i=\frac{1}{2}[\mathbf{0}]_C
\end{equation}
implies
\begin{equation}
\forall i\in\{i|\bra{\mathbf{0}}A_i\ket{\mathbf{0}}\neq0\wedge\bra{\mathbf{0}}B_i\ket{\mathbf{0}}\neq0\},\exists p>0,\,\,C_i=p[\mathbf{0}]_C,
\end{equation}
where $\mathcal{H}_C=\mathcal{H}_5\otimes\mathcal{H}_6$
By a similar calculation, we obtain
\begin{eqnarray}
\forall i\in\{i|\bra{\mathbf{1}}A_i\ket{\mathbf{1}}\neq0\wedge\bra{\mathbf{2}}B_i\ket{\mathbf{2}}\neq0\},\exists p>0,&\,\,C_i=p[\mathbf{0}]_C\nonumber\\
\forall i\in\{i|\bra{\mathbf{2}}A_i\ket{\mathbf{2}}\neq0\wedge\bra{\mathbf{3}}B_i\ket{\mathbf{3}}\neq0\},\exists p>0,&\,\,C_i=p[\mathbf{0}]_C\nonumber\\
\forall i\in\{i|\bra{\mathbf{3}}A_i\ket{\mathbf{3}}\neq0\wedge\bra{\mathbf{1}}B_i\ket{\mathbf{1}}\neq0\},\exists p>0,&\,\,C_i=p[\mathbf{0}]_C\nonumber\\
\forall i\in\{i|\bra{\mathbf{0}}A_i\ket{\mathbf{0}}\neq0\wedge\bra{\mathbf{3}}B_i\ket{\mathbf{3}}\neq0\},\exists p>0,&\,\,C_i=p[\mathbf{1}]_C\nonumber\\
\forall i\in\{i|\bra{\mathbf{1}}A_i\ket{\mathbf{1}}\neq0\wedge\bra{\mathbf{1}}B_i\ket{\mathbf{1}}\neq0\},\exists p>0,&\,\,C_i=p[\mathbf{1}]_C\nonumber\\
\forall i\in\{i|\bra{\mathbf{2}}A_i\ket{\mathbf{2}}\neq0\wedge\bra{\mathbf{0}}B_i\ket{\mathbf{0}}\neq0\},\exists p>0,&\,\,C_i=p[\mathbf{1}]_C\nonumber\\
\forall i\in\{i|\bra{\mathbf{3}}A_i\ket{\mathbf{3}}\neq0\wedge\bra{\mathbf{2}}B_i\ket{\mathbf{2}}\neq0\},\exists p>0,&\,\,C_i=p[\mathbf{1}]_C\nonumber\\
\forall i\in\{i|\bra{\mathbf{0}}A_i\ket{\mathbf{0}}\neq0\wedge\bra{\mathbf{1}}B_i\ket{\mathbf{1}}\neq0\},\exists p>0,&\,\,C_i=p[\mathbf{2}]_C\nonumber\\
\forall i\in\{i|\bra{\mathbf{1}}A_i\ket{\mathbf{1}}\neq0\wedge\bra{\mathbf{3}}B_i\ket{\mathbf{3}}\neq0\},\exists p>0,&\,\,C_i=p[\mathbf{2}]_C\nonumber\\
\forall i\in\{i|\bra{\mathbf{2}}A_i\ket{\mathbf{2}}\neq0\wedge\bra{\mathbf{2}}B_i\ket{\mathbf{2}}\neq0\},\exists p>0,&\,\,C_i=p[\mathbf{2}]_C\nonumber\\
\forall i\in\{i|\bra{\mathbf{3}}A_i\ket{\mathbf{3}}\neq0\wedge\bra{\mathbf{0}}B_i\ket{\mathbf{0}}\neq0\},\exists p>0,&\,\,C_i=p[\mathbf{2}]_C\nonumber\\
\forall i\in\{i|\bra{\mathbf{0}}A_i\ket{\mathbf{0}}\neq0\wedge\bra{\mathbf{2}}B_i\ket{\mathbf{2}}\neq0\},\exists p>0,&\,\,C_i=p[\mathbf{3}]_C\nonumber\\
\forall i\in\{i|\bra{\mathbf{1}}A_i\ket{\mathbf{1}}\neq0\wedge\bra{\mathbf{0}}B_i\ket{\mathbf{0}}\neq0\},\exists p>0,&\,\,C_i=p[\mathbf{3}]_C\nonumber\\
\forall i\in\{i|\bra{\mathbf{2}}A_i\ket{\mathbf{2}}\neq0\wedge\bra{\mathbf{1}}B_i\ket{\mathbf{1}}\neq0\},\exists p>0,&\,\,C_i=p[\mathbf{3}]_C\nonumber\\
\forall i\in\{i|\bra{\mathbf{3}}A_i\ket{\mathbf{3}}\neq0\wedge\bra{\mathbf{3}}B_i\ket{\mathbf{3}}\neq0\},\exists p>0,&\,\,C_i=p[\mathbf{3}]_C.
\end{eqnarray}
Since $A_i$ has at least one non-zero diagonal element and we have
\begin{equation}
\forall i,\,\,\bra{\mathbf{4}}A_i\ket{\mathbf{4}}=\bra{\mathbf{5}}A_i\ket{\mathbf{5}}=\bra{\mathbf{6}}A_i\ket{\mathbf{6}}=\bra{\mathbf{7}}A_i\ket{\mathbf{7}}=0,
\end{equation}
\begin{equation}
\forall i,\,\,\bra{\mathbf{0}}A_i\ket{\mathbf{0}}\neq0\vee\bra{\mathbf{1}}A_i\ket{\mathbf{1}}\neq0\vee\bra{\mathbf{2}}A_i\ket{\mathbf{2}}\neq0\vee\bra{\mathbf{3}}A_i\ket{\mathbf{3}}\neq0.
\end{equation}
The same property also holds for $B_i$.
Thus, 
\begin{equation}
\forall i,\,\,C_i\in\{p[\mathbf{0}]_C,p[\mathbf{1}]_C,p[\mathbf{2}]_C,p[\mathbf{3}]_C|p>0\}.
\end{equation}
The same property also holds for $A_i$ and $B_i$:
\begin{eqnarray}
\forall i,\,\,A_i\in\{p[\mathbf{0}]_A,p[\mathbf{1}]_A,p[\mathbf{2}]_A,p[\mathbf{3}]_A|p>0\}\nonumber\\
\forall i,\,\,B_i\in\{p[\mathbf{0}]_B,p[\mathbf{1}]_B,p[\mathbf{2}]_B,p[\mathbf{3}]_B|p>0\},\\
\end{eqnarray}
where $\mathcal{H}_A=\mathcal{H}_1\otimes\mathcal{H}_2$ and $\mathcal{H}_B=\mathcal{H}_3\otimes\mathcal{H}_4$.
\end{proof}

Next, we prove that the map $\Gamma$ cannot be implemented by an operation in finite-round LOCC.
\begin{proof}
If there exists an element in finite-round LOCC implementing $\Gamma$, there exists {\it accumulated operator} $\{A_{\mathbf{m}}\in \mathbf{Pos}(\mathcal{H}_1\otimes\mathcal{H}_2), B_{\mathbf{m}}\in \mathbf{Pos}(\mathcal{H}_3\otimes\mathcal{H}_4), C_{\mathbf{m}}\in \mathbf{Pos}(\mathcal{H}_5\otimes\mathcal{H}_6)\}$ which represents LOCC, where $\mathbf{m}=(m_1,m_2,\cdots,m_N)$ represents all the measurement outcomes and $m_i$ represents the $i$-th measurement outcome.
By using Lemma 3, for all $\mathbf{m}$ such that $A_{\mathbf{m}}\otimes B_{\mathbf{m}}\otimes C_{\mathbf{m}}\neq \mathbf{0}$, $(A_{\mathbf{m}}, B_{\mathbf{m}},C_{\mathbf{m}})$ is an element of Eq.\eqref{eq:elements}.
Thus, we obtain that a set of all the measurement outcomes $\mathbf{M}$ consists of eleven subsets corresponding to the measurement outcomes,
\begin{eqnarray}
\mathbf{M}&=&\mathbf{M}_{null}\cup \left(\bigcup_{c\in \mathbf{C}}\mathbf{M}_c\right),\\
\mathbf{C}&=&\{000,111,+++,---,01+,1+0,+01,0+-,\nonumber\\&&+-0,-0+,0-1,-10,10-,1-+,-+1,+1-\},
\end{eqnarray}
such that
\begin{eqnarray}
\forall \mathbf{m}\in\mathbf{M}_{null},\,\,{\rm tr}_{2,4,6}[A_{\mathbf{m}}\otimes B_{\mathbf{m}}\otimes C_{\mathbf{m}}]&=&0,\nonumber\\
\forall c\in\mathbf{C}, \forall \mathbf{m}\in\mathbf{M}_{c},\exists \alpha>0\,\,{\rm tr}_{2,4,6}[A_{\mathbf{m}}\otimes B_{\mathbf{m}}\otimes C_{\mathbf{m}}]&=&\alpha[c].
\end{eqnarray}

Since $\Gamma$ is invariant under the permutation of parties, without loss of generality, we can assume that Alice performs the last measurement and obtain the measurement result denoted by $m_N$. Since Bob and Charlie's accumulated operators do not depend on $m_N$, i.e., we have
\begin{eqnarray}
\forall m_1,\cdots,\forall m_N,\forall m'_N,&B_{(m_1,\cdots,m_{N-1},m_N)}=B_{(m_1,\cdots,m_{N-1},m'_N)},\\
\forall m_1,\cdots,\forall m_N,\forall m'_N,&C_{(m_1,\cdots,m_{N-1},m_N)}=C_{(m_1,\cdots,m_{N-1},m'_N)},
\end{eqnarray}
if $(m_1,\cdots,m_{N-1},m_N)\in\mathbf{M}_{c}$, for all $m'_N$
\begin{equation}
(m_1,\cdots,m_{N-1},m'_N)\in\mathbf{M}_{c}\cup\mathbf{M}_{null}.
\end{equation}
This property holds for the other subsets except $\mathbf{M}_{null}$.
Letting
\begin{eqnarray}
\mathbf{m}'&=&(m_1,\cdots,m_{N-1}),\nonumber\\
A_{\mathbf{m}'}&=&\sum_{m_N}A_{\mathbf{m}}\\
B_{\mathbf{m}'}&=&B_{(m_1,\cdots,m_{N-1},0)},\nonumber\\
C_{\mathbf{m}'}&=&C_{(m_1,\cdots,m_{N-1},0)},\\
\end{eqnarray}
we obtain
\begin{eqnarray}
\forall \mathbf{m}'\in\mathbf{M}'_{null},\,\,{\rm tr}_{2,4,6}[A_{\mathbf{m'}}\otimes B_{\mathbf{m'}}\otimes C_{\mathbf{m'}}]&=&0\nonumber\\
\forall c\in\mathbf{C}, \forall \mathbf{m'}\in\mathbf{M'}_{c},\exists \alpha>0\,\,{\rm tr}_{2,4,6}[A_{\mathbf{m'}}\otimes B_{\mathbf{m'}}\otimes C_{\mathbf{m'}}]&=&\alpha[c],
\end{eqnarray}
where $\mathbf{M}'_{null}$ and $\mathbf{M'}_{c}$ are subsets of measurement outcomes $(m_1,\cdots,m_{N-1})$.

Since ${\rm tr}_2[A_{\mathbf{m}'}]={\rm tr}_2[\sum_{m_N}A_{\mathbf{m}}]$ does not depend on $m_{N-1}$ if the $(N-1)$-th measurement is performed by Bob or Charlie, we can repeat this procedure until summing up all the measurement outcomes. Then we obtain
\begin{equation}
{\rm tr}_{2,4,6}\left[\sum_{\mathbf{m}}A_{\mathbf{m}}\otimes B_{\mathbf{m}}\otimes C_{\mathbf{m}}\right]=0.
\end{equation}
This is a contradiction to the property of the accumulated operator of which trace of the sum is identity.

\end{proof}

\chapter{Summary and Discussions of Part III}
\section{Summary}
We have studied the roles of quantum communication connecting local operations in DQC, in particular a class of quantum operations implemented in DQC called SEP. Since quantum communication can be implemented by entanglement assisted classical communication, we analyze entanglement and classical communication required for DQC.

First, we have derived the necessary and sufficient amount of the entanglement resource for a DQC task to perfectly discriminate a state from a given orthonormal basis states by one-way LOCC in terms of an entanglement measure, the Schmidt rank of a required entanglement resource. We have constructed a two-way LOCC protocol which consumes less amount of entanglement resources than the optimal one-way LOCC protocol. 

Second, we have developed a framework to describe deterministic joint quantum operations in DQC. In the framework, we have introduced a causal relation between the classical outputs and classical inputs of the local operations without predefined causal order but still within quantum mechanics, called ``classical communication" without predefined causal order (CC*). We have shown that local operations and CC* (LOCC*) with partial order can be simulated by local operations and normal classical communication. On the other hand, CC* without partial order, which can be interpreted as ``classical communication" without causal order, is necessary to implement non-LOCC SEP.
Since we show that LOCC* is equivalent to SEP, entanglement and classical communication required for implementing SEP by entanglement assisted LOCC can be substituted by CC* without using an entanglement resource. 

Third, we have investigated the power of CC* in terms of SLOCC and quantum processes.
We have shown that the super class of LOCC* is simulatable by SLOCC and vice versa, and LOCC* is simulatable by SLOCC with a constant success probability independent of an input state and vice versa.
Then, we have investigated the relationship between LOCC* and local operations connected by a classical quantum process (LOCQP).
We have shown that LOCQP is equivalent to a set of probabilistic mixtures of one-way LOCC in bipartite cases.
On the other hand, we have shown that LOCQP is not included in LOCC in tripartite cases by constructing an example of operations in non-LOCC SEP by using LOCQP.

\section{Discussion}
\begin{itemize}
\item {\bf The amount of entanglement required for local state discrimination}

We have analyzed the difference in the amount of entanglement required for local state discriminiation in one-round LOCC and multi-round LOCC in terms of the Schmidt rank.
However, the minimum amount of required entanglement in terms of entanglement entropy, widely used in quantum information science \cite{entropyent}, is still unknown.
Even the minimum entanglement entropy required for implementing $V_d$ defined in Eq.\eqref{eq:ent9state} is unknown.
In the following, we analyze a lower bound of the entanglement entropy necessary to implement $V_d$.
We define the Schmidt strength $H\left(U\right)$ of a unitary operator $U$ on $\mathcal{H}_{A}\otimes\mathcal{H}_{B}$ as
the Shannon entropy of the square of the operator Schmidt coefficients of $U$.
The Schmidt strength $H\left(V_{d}\right)$ gives a lower bound of the entanglement entropy necessary to implement $V_{d}$ by LOCC \cite{Wakakuwa}. 

Let an operator Schmidt decomposition of $V_d$ be
\begin{equation}
V_d=\sum_k\lambda_k A_k\otimes B_k,
\end{equation}
where $\{A_k\in\mathbf{L}(\mathcal{H}_A)\}_k$ and $\{B_k\in\mathbf{L}(\mathcal{H}_B)\}_k$ are sets of orthonormal operators. The Schmidt strength $H(V_d)$ is defined by
\begin{equation}
H(V_d)=\sum_k -\lambda_k^2 \log_2 (\lambda_k^2).
\end{equation}
In order to apply the operator Schmidt decomposition to $V_d$, we define $V_d'\in\mathbf{L}(\mathcal{H}_B\otimes\mathcal{H}_B:\mathcal{H}_A\otimes\mathcal{H}_A)$ by
\begin{equation}
V_d'=\sum_{i,j}(\bra{i}_B V_d\ket{j}_A)\otimes \ket{j}_A\bra{i}_B,
\end{equation}
where $\{\ket{i}_X\}_i\,\,(X=A,B)$ is the computational basis. By using the operator Schmidt decomposition of $V_d$, $V_d'$ can be decomposed into
\begin{equation}
V_d'=\sum_k\lambda_k \sqrt{\dim(\mathcal{H}_A)\dim(\mathcal{H}_B)}\ket{A_k}_{AA}\bra{B_k}_{BB},
\end{equation}
where $\ket{A_k}=\frac{1}{\sqrt{\dim(\mathcal{H}_A)}}\sum_i(A_k\ket{i}_A)\ket{i}_A$, $\bra{B_k}=\frac{1}{\sqrt{\dim(\mathcal{H}_B)}}\sum_i(\bra{i}_B B_k)\bra{i}_B$. Note that since $\{A_k\}_k$ and $\{B_k\}_k$ are sets of orthonormal operators, i.e. ${\rm tr}(A_k^{\dag}A_l)=\dim(\mathcal{H}_A)\delta_{k,l}$ and ${\rm tr}(B_k^{\dag}B_l)=\dim(\mathcal{H}_B)\delta_{k,l}$, $\{\ket{A_k}\}_k$ and $\{\ket{B_k}\}_k$ are sets of orthonormal states, i.e.
\begin{eqnarray}
\bra{A_k}A_l\rangle&=&\delta_{k,l},\\
\bra{B_k}B_l\rangle&=&\delta_{k,l}.
\end{eqnarray}
The operator Schmidt coefficients $\{\lambda_k\}_k$ is obtained by calculating the eigenvalues of 
\begin{equation}
V_d'V_d^{\prime\dag}=\sum_k\lambda_k^2 \dim(\mathcal{H}_A)\dim(\mathcal{H}_B)\ket{A_k}_{AA}\bra{A_k}_{AA}.
\end{equation}
By straightforward calculation, we obtain
\begin{equation}
V_d'V_d^{\prime\dag}=\begin{pmatrix}d+1&d-1&d-1\\d-1&d+1&d+1\\d-1&d+1&d+1\end{pmatrix},
\end{equation}
where we use the matrix representation in terms of the computational basis. Thus, the operator Schmidt coefficients are given by
\begin{equation}
\{\lambda_k^2\}_{k=0,1}=\left\{\frac{3(d+1)-\sqrt{9d^2-14d+9}}{6(d+1)},\frac{3(d+1)+\sqrt{9d^2-14d+9}}{6(d+1)}\right\}.
\end{equation}
$H\left(V_{d}\right)$ decreases when $d$ increases as shown in Fig.~\ref{fig:SS}.
Since when $d$ goes to infinity, $\lambda_0\rightarrow 0$, $\lambda_1\rightarrow 1$ and $-\lambda_k^2 \log_2 (\lambda_k^2)\rightarrow 0$ for $k=0,1$, $H(V_d)$ decreases to $0$ when $d\rightarrow\infty$.

\begin{figure}
\begin{center}
  \includegraphics[height=.4\textheight]{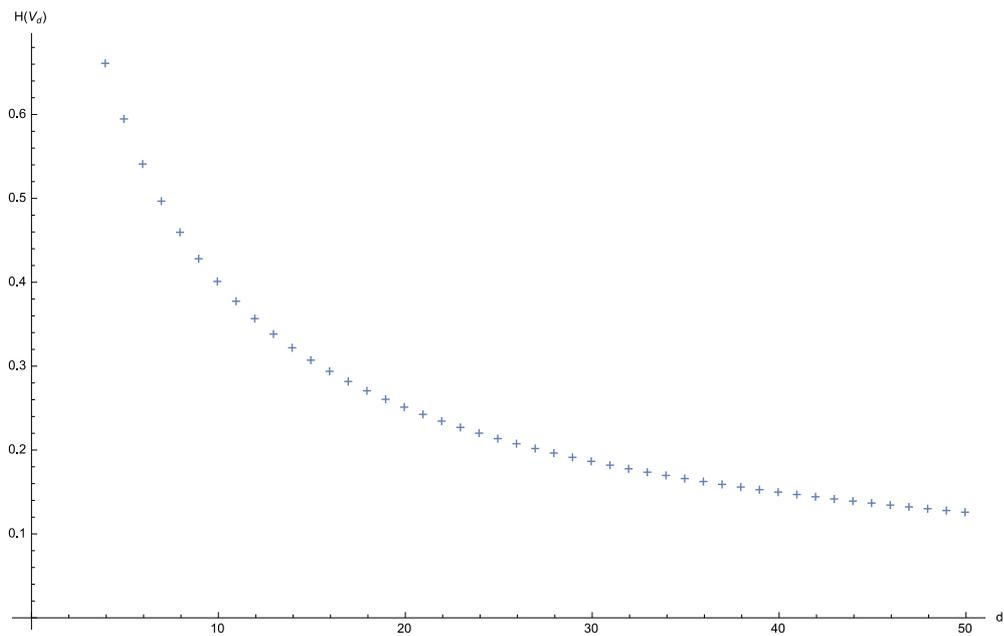}
  \end{center}
  \caption{ {\bf A lower bound of entanglement for implementing $V_d$ by LOCC.} A lower bound is calculated by evaluating $H(V_d)=\sum_i -\lambda_i^2 \log_2 (\lambda_i^2)$, where $\{\lambda_i\}_i$ is the operator Schmidt coefficients of $V_d$.
}  
\label{fig:SS}
\end{figure}

\item {\bf Multipartite LOCC* and LOCQP}

Multipartite LOCC* is defined by a set of CPTP maps $\mathcal{M}$ given in the form of
\begin{equation}
\mathcal{M}=\sum_{i_1,\cdots,i_N,o_1,\cdots,o_N}p(i_1,\cdots,i_N | o_1,\cdots,o_N)\mathcal{E}_{o_1|i_1}^{(1)}\otimes\cdots\otimes\mathcal{E}_{o_N|i_N}^{(N)},
\label{eq:mLOCC*}
\end{equation}
where $p(i_1,\cdots,i_N | o_1,\cdots,o_N)$ is a conditional probability distribution and $\{\mathcal{E}_{o_k|i_k}^{(k)}:\mathbf{L}(\mathcal{H}_{I_k})\rightarrow\mathbf{L}(\mathcal{H}_{O_k})\}_{o_k}$ is a local operation performed by the $k$-th party with a classical input $i_k$ and a classical output $o_k$. In Appendix B.9, we show that multipartite LOCC* as well as bipartite LOCC* is equivalent to SEP.

On the other hand, multipartite LOCQP is defined by a set of CPTP maps $\mathcal{M}$ given in the form of Eq.\eqref{eq:mLOCC*} where a conditional probability distribution $p(i_1,\cdots,i_N | o_1,\cdots,o_N)$ satisfies
\begin{eqnarray}
\sum_{i_1,\cdots,i_N,o_1,\cdots,o_N}p(i_1,\cdots,i_N | o_1,\cdots,o_N)\mathcal{E}_{o_1|i_1}^{(1)}\otimes\cdots\otimes\mathcal{E}_{o_N|i_N}^{(N)}\nonumber\\
\in \mathbf{C}(\mathcal{H}_{I_1}\otimes\cdots\otimes\mathcal{H}_{I_N}:\mathcal{H}_{O_1}\otimes\cdots\otimes\mathcal{H}_{O_N}),
\end{eqnarray}
for all quantum instruments $\{\mathcal{E}_{o_1|i_1}^{(1)}\}_{o_1}$, $\{\mathcal{E}_{o_2|i_2}^{(2)}\}_{o_2}$ and so on.
It is easy to verify that LOCC* is larger than LOCQP, however we do not know whether multipartite (more than two parties) LOCC* is strictly larger than multipartite LOCQP or not.
In \cite{Amin}, multipartite causally non-separable CQP is proposed.
This implies that we may be able to construct elements of non-LOCC SEP by using such multipartite causally non-separable CQP.


\end{itemize}

\part{Conclusion}
In this thesis, we have focused on two questions concerning the roles of entanglement in DQC:
\begin{itemize}
\item Which unitary operations are implementable over a given quantum network represented by an entanglement resource?
\item What are resources substituting entanglement and classical communication consumed in entanglement assisted LOCC for implementing SEP?
\end{itemize}

In Part II, we have investigated the first question by analyzing implementability of $k$-qubit unitary operations over the $(k,N)$-cluster networks where inputs and outputs of quantum computation are given in all separated nodes and quantum communication between nodes is restricted to sending just one-qubit while classical communication is freely allowed. We have considered a one-shot scenario where we can use a given cluster network only once and exact implementation without error is required.
We have presented a method to obtain quantum circuit representations of unitary operations implementable over a given cluster network.   For the $(k,N)$-cluster networks with $k=2,3$, we have shown that our method provides all implementable unitary operations over the cluster network. As a first step to find the fundamental primitive networks of network coding for quantum computation, we have shown that both the butterfly and grail networks are sufficient resources for implementing arbitrary two-qubit unitary operations, meanwhile the $(2,2)$-cluster network is not sufficient to implement arbitrary two-qubit unitary operations even probabilistically. 
Our methods are applicable to generalized cluster networks, which may be a first step to investigate quantum computation over a more general network.
It also gives a new insight for implementable unitary operations in MBQC where the angle of each projective measurement can be restricted.

In Part III, we have investigated the second question in terms of entanglement and causal relation in classical communication.
First, we have derived the amount of the entanglement resource that is necessary and sufficient for a DQC task that perfectly discriminates a state from a given orthonormal basis states by one-way LOCC in terms of the Schmidt rank of an entanglement resource. We have constructed a two-way LOCC protocol which consumes less amount of entanglement resources than the optimal one-way LOCC protocol. 
Second, we have developed a framework to describe deterministic joint quantum operations in DQC. In the framework, we have introduced a causal relation between the classical outputs and classical inputs of the local operations without predefined causal order but still within quantum mechanics, called ``classical communication" without predefined causal order (CC*). We have shown that local operations and CC* (LOCC*) with partial order can be simulated by normal LOCC. In contrast, CC* without partial order, which can be interpreted as ``classical communication" without causal order, is necessary to implement an operation in non-LOCC SEP.
Since we show that LOCC* is equivalent to SEP, entanglement and classical communication required for implementing SEP by entanglement assisted LOCC can be substituted by CC*. Note that in LOCC*, two assumptions of local operations, (a) they are partially ordered and (b) the choice of a local operation does not depend on resources connecting the local operation, are relaxed.

We have investigated the power of CC* in terms of SLOCC and quantum processes.
We have shown that the super class of LOCC* is simulatable by SLOCC and vice versa, and LOCC* is simulatable by SLOCC with a constant success probability independent of an input state and vice versa.
Next, we have investigated the relationship between LOCC* and local operations linked by a classical quantum process (LOCQP).
We have shown that LOCQP is equivalent to a set of probabilistic mixtures of one-way LOCC in bipartite cases.
On the other hand, we have shown that LOCQP is not included in LOCC in tripartite cases by constructing an example of operations in non-LOCC SEP by using LOCQP.
In \cite{OFC, Chiribella2}, a few examples of joint quantum operations consisting of two local operations linked by quantum processes that cannot be implemented by using the local operations connected by a causally ordered quantum communication are shown.
However, the joint quantum operations are implementable if we are allowed to use each local operation twice and a causally ordered quantum communication.
In contrast, our example of LOCQP, which is an operation in non-LOCC SEP, cannot be implemented even if finite times use of local operations connected by a causally ordered classical communication are allowed.
Thus, joint quantum operations respecting the restriction for classical quantum processes have a power of not only {\it folding} several local operations localized in a spacetime but more.
LOCC* gives a unified description of LOCC and SEP, and provides a new operational meaning of the gap between LOCC and SEP.
LOCQP gives a new method to construct an element in the gap between LOCC and SEP while very few examples are known.
LOCQP also gives a new insight into the interpretation of the restriction for classical quantum processes.

\appendix
\part{Appendix}
\chapter{Quantum computation over quantum networks}
\section{A LOCC protocol for controlled unitary operations}
We show a construction of a protocol to implement a three-qubit fully controlled unitary operation $C_{l,m;n}$ on qubits located at a set of vertically aligned nodes $\mathcal{V}_j^v$ over the $(k,N)$-cluster networks, where $l$ and $m$ represent two control qubits at nodes $v_{l,j}$ and $v_{m,j}$ respectively, and $n$ represents a target qubit at node $v_{n,j}$.    We present a LOCC protocol to implement $C_{l,m;n}$ assisted by the resource states consisting of the EPR pairs corresponding to the vertical edges $\mathcal{S}_j$ of the $(k,N)$-cluster networks.  

We consider to implement $C_{l,m;n}$ on a state of three qubits indexed by $Q_l$, $Q_m$ and $Q_n$ at node $v_{l,j}$, $v_{m,j}$ and $v_{n,j}$, respectively, and its explicit form is given by
\begin{eqnarray}
C_{l,m;n} (\{ u_n^{(ab)} \}):=\sum_{a=0}^1 \sum_{b=0}^1 \ket{ab}\bra{ab}_{lm} \otimes u_n^{(ab)}
\end{eqnarray}
where $\{ \ket{ab} \}_{a,b=\{ 0,1 \} }$ is the two-qubit computational basis of $\mathcal{H}_{Q_l}  \otimes \mathcal{H}_{Q_m}$ of the two controlled qubits and $u_n^{(ab)}$ acts on $\mathcal{H}_{Q_n}$ of the target qubit.  

To show how our LOCC protocol works, we consider an arbitrary state of the control qubits by 
$
\sum \lambda_{ab}\ket{ab}_{lm} \in \mathcal{H}_{Q_l}  \otimes \mathcal{H}_{Q_m} 
$
where
$\{ \lambda_{ab} \}$ is a set of arbitrary complex coefficients satisfying the normalization condition $\sum_{a,b} | \lambda_{ab} |^2 = 1$ and we represent an arbitrary state of the target qubit by $\ket{\phi} \in \mathcal{H}_{Q_n}$.  In the following, we show that our protocol transforms the joint state of controlled qubits and a target qubit to 
\begin{equation}
C_{l,m;n}  \sum_{a,b} \lambda_{ab}\ket{ab}_{lm} \ket{\phi}_n = \sum_{a,b}\lambda_{ab}\ket{ab}_{lm}u_n^{(ab)}\ket{\phi}_n. \nonumber
\end{equation}

\noindent
The protocol for implementing three qubit fully controlled unitary operations (see Fig.~\ref{fig:3CUcircuit}) :
\begin{enumerate}
\item  Ancillary qubits indexed by $Q_{l^\prime}$, $Q_{m^\prime}$ are introduced at nodes  $v_{l,j}$ and $v_{m,j}$ respectively.  Set both of the ancillary qubits to be in $\ket{0}$.   Each of the two states of control qubits $Q_l$ and $Q_m$ is transformed to a two-qubit state by applying a CNOT gate on the control qubit and the ancillary qubit at the same node, namely applying CNOT on $Q_l$ and $Q_{l^\prime}$ and also $Q_m$ and $Q_{m^\prime}$.  Then the joint state of five qubits $Q_l$, $Q_{l^\prime}$, $Q_m$, $Q_{m^\prime}$ and $Q_n$ is given by 
\begin{equation}
\sum_{a,b}\lambda_{ab}\ket{ab}_{lm}\ket{ab}_{l'm'}\ket{\phi}_n.
\end{equation}
\item By consuming the EPR pairs corresponding to the vertical edges $\mathcal{S}_j$ between $v_{l,j}$ and $v_{n,j}$ and also between  $v_{m,j}$ and $v_{n,j}$, perform quantum teleportation to transmit the states of qubits $Q_{l^\prime}$ and $Q_{m^\prime}$ from nodes $v_{l,j}$ and $v_{m,j}$ to $v_{n,j}$. A circuit representation of the protocol of quantum teleportation represented by $\mathcal{T}$ in Fig. \ref{fig:3CUcircuit} is given by Fig.~\ref{fig:Tcircuit}. Note that in case of $n \neq l\pm1$, we have to repeat the teleportation protocol to transmit a quantum state between the nodes via the neighboring nodes.  Thus all the EPR pairs corresponding to the vertical edges between $l$ and $n$ are consumed for performing teleportation.  We denote indices of two qubits at node $v_{n,j}$  representing the teleported states from $Q_{l^\prime}$ and $Q_{m^\prime}$ by $Q_{l''}$ and $Q_{m''}$, respectively. 
\item At node $v_{n,j}$, perform $C_{ l,m;n}$ on  $\mathcal{H}_{Q_{l''}}  \otimes \mathcal{H}_{Q_{m''}} \otimes \mathcal{H}_{Q_n}$.  Then we obtain the state given by
\begin{equation}
\sum_{a,b}\lambda_{ab}\ket{ab}_{lm}\ket{ab}_{l''m''}u_n^{(ab)}\ket{\phi}_n.
\end{equation}
\item At node $v_{n,j}$, we apply the Hadamard operations and perform projective measurements in the computational basis on both $\mathcal{H}_{Q_{l''}}$ and $\mathcal{H}_{Q_{m''}}$.   The measurement outcomes of qubits $Q_{l''}$  and $Q_{m''}$ are sent to nodes $v_{l,j}$ and $v_{m,j}$, respectively, by classical communication.   At each of nodes  $v_{l,j}$ and $v_{m,j}$, if the measurement outcome is $0$, do nothing, and if the outcome is $1$, perform $Z$ for a correction on qubit $Q_l$ or $Q_m$.   By straightforward calculation, we obtain the state of three qubits $Q_l$, $Q_m$ and $Q_n$ at nodes $v_{l,j}, v_{m,j}$ and $v_{l,n}$, respectively, given by
\begin{equation}
\sum_{a,b}\lambda_{ab}\ket{ab}_{lm}u_n^{(ab)}\ket{\phi}_n.
\end{equation}
\end{enumerate}

Therefore, $C_{l,m;n}$ is successfully applied on the control qubits at nodes $v_{l,j}$ and $v_{m,j}$ and the target qubit at node $v_{n,j}$ by LOCC assisted by the EPR pairs corresponding to the vertical edges $\mathcal{S}_j$ between nodes $v_{l,j}$ and $v_{m,j}$.

\begin{figure}
\begin{center}
  \includegraphics[height=.4\textheight]{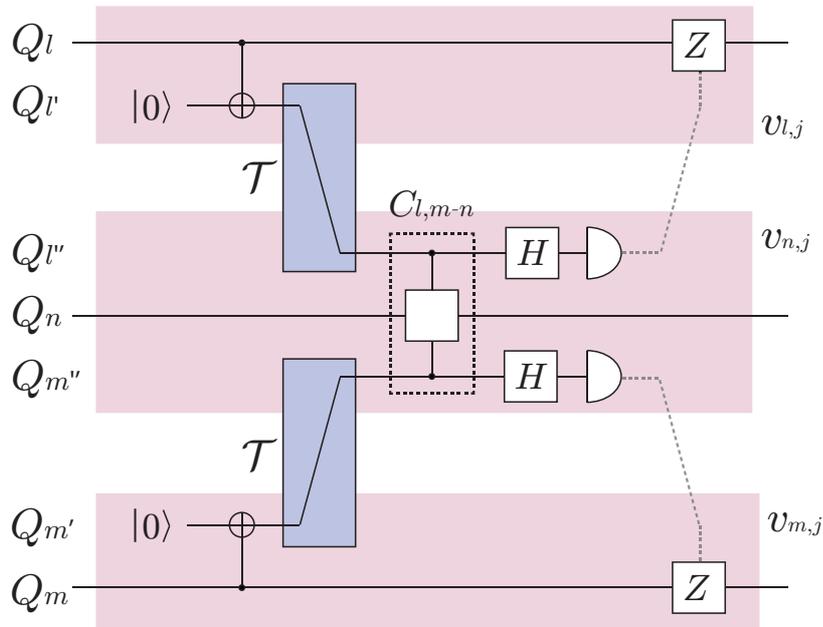}
  \end{center}
  \caption{A quantum circuit representation of the LOCC protocol implementing a three-qubit fully unitary operation $C_{ l,m;n}$, where qubits in the first shaded region are at the node $v_{l,j}$, those in the second shaded region are at the node $v_{n,j}$ and those in the third shaded region are at the node $v_{m,j}$. $\mathcal{T}$ represents teleportation of a qubit state. A circuit representation is given by Fig.~\ref{fig:Tcircuit}. The protocol consists of introducing ancillary qubits $Q_{l'}$  and $Q_{m'}$ at the nodes $v_{l,j}$ and $v_{m,j}$, respectively, teleporting ancillary qubit states from the nodes $v_{l,j}$ and $v_{m,j}$ to the node $v_{n,j}$ represented by qubits $Q_{l''}$  and $Q_{m''}$, applying $C_{ l,m;n}$ on controlled qubits $Q_{l''}$  and $Q_{m''}$ and a target qubit $Q_{n}$ at the node $v_{n,j}$, performing Hadamard operations and measurements in the computational basis on $Q_{l''}$  and $Q_{m''}$ at node $v_{n,j}$ and finally applying conditional $Z$ operations depending on the measurement outcome on two control qubits $Q_l$, $Q_m$ at nodes $v_{l,j}$ and $v_{m,j}$, respectively.}
\label{fig:3CUcircuit}
\end{figure}

\section{LOCC implementation of converted quantum circuits}
In Appendix A.1, we have shown a protocol to implement a three-qubit fully controlled unitary operation in a set of vertically aligned nodes $\mathcal{V}_j^v$. In some cases, we can implement more than one three-qubit or two-qubit controlled unitary operations {\it in parallel} by using the same resource.  We show how a sequence of controlled unitary operations represented by converted circuits can be implemented by LOCC assisted by the resource state given by a collection of $(k-1)$ EPR pairs corresponding to a set of vertically aligned edges $\mathcal{S}_j$ in this appendix.

We introduce a new notation for controlled unitary operations for simplifying and unifying descriptions of two-qubit and three-qubit controlled unitary operations.
We represent a two-qubit controlled unitary operations that controls the $i$-th qubit and targets  the $j$-th qubit as
\begin{equation}
(i,i;j),
\end{equation}
and a three-qubit controlled unitary  gates  that controls  the $i$-th and $j$-th qubit and targets the $k$-th qubit as
\begin{equation}
(i,j;k).
\end{equation}
Note that we represented $(i,i;j)$ as $C_{i;j}$ and $(i,j;k)$ as $C_{i,j;k}$ in the previous sections.
Let $G=\{g_n\}$ be a sequence of controlled unitary operations that is added in step 2 of the conversion protocol.
For example, the converted circuit represented by Fig.~A.2 is described by a sequence
\begin{eqnarray}
g_1&=&(1,1;2)
\label{g1}
\\
g_2&=&(4,4;2)\\
g_3&=&(1,4;2)\\
g_4&=&(4,4;5)\\
g_5&=&(4,4;3)\\
g_6&=&(5,5;6)\\
g_7&=&(4,4;5)
\label{g7}.
\end{eqnarray}

\begin{figure}
\begin{center}
  \includegraphics[height=.24\textheight]{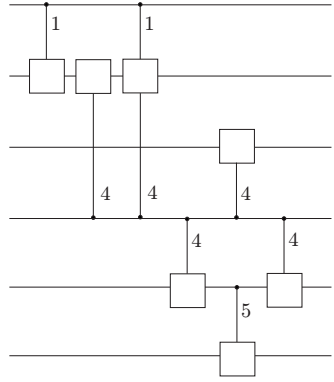}
  \end{center}
  \caption{An example of a converted quantum circuit obtained by step 2 of the conversion protocol.}
\label{fig:circuitsam}
\end{figure}

Let $ \mathcal{C}_i$ be a set of controlled unitary operations that controls the $i$-th qubit:
\begin{equation}
 \mathcal{C}_i=\{(a,b;c)\in G;\, a=i\vee b=i\}.
\end{equation}

For example, for $G$ defined by Eqs.~(\ref{g1})-(\ref{g7}),
\begin{eqnarray}
 \mathcal{C}_1=\{g_1,g_3\}
\label{c1}\\
 \mathcal{C}_4=\{g_2,g_3,g_4,g_5,g_7\}\\
\mathcal{C}_5=\{g_6\}\\
 \mathcal{C}_2=C_3=C_6=\emptyset.
\label{c2}
\end{eqnarray}

Define the {\it range} of $ \mathcal{C}_i\neq \emptyset$ as
\begin{eqnarray}
{\rm range}(\mathcal{C}_i)&=&(\min\{i,\min_{c}\{(a,b;c)\in \mathcal{C}_i\}\},\nonumber\\
&& \max\{i,\max_{c}\{(a,b;c)\in  \mathcal{C}_i\}\}).
\end{eqnarray}

For example,  for $\mathcal{C}_i$ defined by Eqs.~(\ref{c1})-(\ref{c2}),
\begin{eqnarray}
{\rm range}(\mathcal{C}_1)&=&(1,2)\\
{\rm range}(\mathcal{C}_4)&=&(2,5)\\
{\rm range}(\mathcal{C}_5)&=&(5,6).
\end{eqnarray}
 
 All the controlled unitary operations in $G$ are implementable by using the following subroutines extending the one presented in Appendix A.1.
\\\\
\noindent
Subroutines for implementing a sequence of controlled unitary operation in $G$:

{\bf Subroutines}
\begin{enumerate}
\item For applying gates in $\mathcal{C}_i$, we add an ancillary qubit state entangled to the $i$-th qubit state by preparing an ancillary qubit in $\ket{0}$ and applying CNOT where the ancillary qubit is the target qubit of CNOT.  Then the ancillary qubit state is sent from the $i$-th node $v_{i,j} \in \mathcal{V}_j^v$ to the target node by using teleportation.  If several different target qubits are included in $\mathcal{C}_i$, add another ancillary qubit by the same method at a target node, keep one of the ancillary qubits at the target node and send the other to the next target node.   We consume $n_i$ EPR pairs to teleport ancillary qubit states to the target nodes, where $n_i=b-a$ and ${\rm range}(\mathcal{C}_i)=(a,b)$. Since there is no overlap between ranges of $\mathcal{C}_i$ and there is no target unitary operation inserted  between control qubits, we can teleport all the ancillary qubit states entangled to the control states to all the target nodes by just consuming $(k-1)$ EPR pairs. 
\item
We apply all the controlled unitary operations in $G$ in the target nodes by using the teleported ancillary qubit states entangled to the control qubit states as the control qubits.   
\item
We decouple the ancillary qubit states by performing the projective measurements on the ancillary qubits similarly to the case of Appendix A.1 in the target nodes and apply correction unitary operations in the control nodes depending on the measurement outcomes.
\end{enumerate}

\section{Converted circuit of $(2,N)$ and $(3,N)$-cluster network}

First, we prove that any converted circuits of a $(2,N)$-cluster network can be simulated by a circuit consisting of a sequence of $N$  two-qubit unitary operations and local unitary operations.  In this case, 
only two-qubit unitary operations $(1,1;2)$ or $(2,2;1)$ can be added in step 2 of the conversion protocol. Since applying the gate $(1,1;2)$  sequentially for $k\in\mathbb{N}$ times can be simulated by just one use of gate $(1,1;2)$ and gate $(2,2;1)$ can be simulated by one use of gate $(1,1;2)$ and additional local unitary operations, any circuits generated in step 2 of the conversion protocol can be simulated by one use of $(1,1;2)$ and local unitary operations.

Next, we prove that any converted circuits of  a $(3,N)$-cluster network can be simulated by the circuit of a sequence of $N$  three-qubit fully controlled unitary operations given in the form of
\begin{eqnarray}
\ket{00}\bra{00}_{1,3}\otimes u^{(00)}_2+\ket{01}\bra{01}_{1,3}\otimes u^{(01)}_2\nonumber\\
+\ket{10}\bra{10}_{1,3}\otimes u^{(10)}_2+\ket{11}\bra{11}_{1,3}\otimes u^{(11)}_2
\end{eqnarray}
and local unitary operations.
In step 2 of the conversion protocol, every converted circuits can be simulated by six classes of circuits shown in Fig.~10.

\begin{figure}
\begin{center}
  \includegraphics[height=.33\textheight]{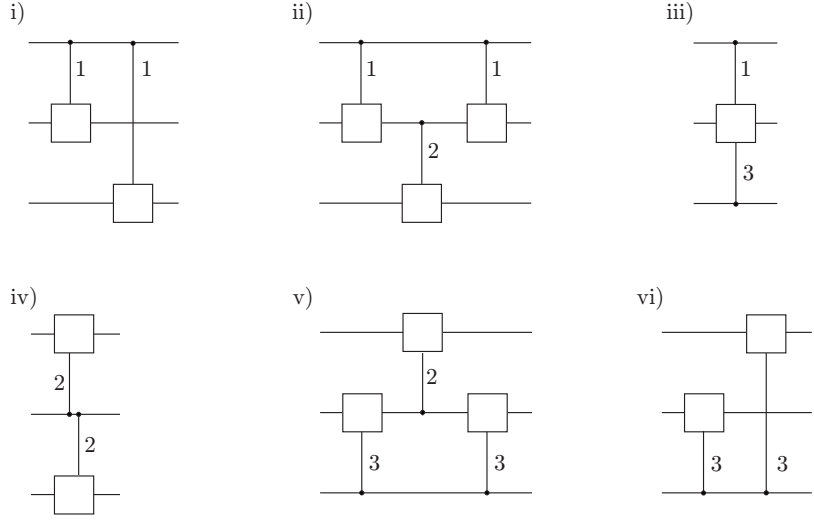}
  \end{center}
  \caption{The six classes of converted quantum circuits obtained by step 2 of the conversion protocol of a $(3,N)$-cluster network.}
\label{fig:convertedcircuit(3,N)}
\end{figure}

In the following, we show that all of these six classes (from class i) to class vi) represented in Fig.~10) can be simulated by a three-qubit fully controlled unitary operation and local unitary operations by investigating each class.
\begin{itemize}
\item[i)] A unitary operation obtained by circuit i) is given by
\begin{eqnarray}
\ket{0}\bra{0}_1\otimes u_2^{(0)}\otimes u_3^{(0)}+\ket{1}\bra{1}_1\otimes u_2^{(1)}\otimes u_3^{(1)}\nonumber\\
\overset{\mathrm{LU}}{=} \ket{0}\bra{0}_1\otimes \mathbb{I}_2\otimes \mathbb{I}_3+\ket{1}\bra{1}_1\otimes u_2^{(1)}u_2^{(0)\dag}\otimes u_3^{(1)}u_3^{(0)\dag}\label{eq:LU1}
\end{eqnarray}
where $u_j^{(i)}$ is a one-qubit unitary operation and $\overset{\mathrm{LU}}{=} $ represents local unitary equivalence. Diagonalize $u_2^{(1)}u_2^{(0)\dag}$ and $u_3^{(1)}u_3^{(0)\dag}$ as
\begin{eqnarray}
u_2^{(1)}u_2^{(0)\dag}&=&v_2 \begin{pmatrix}e^{i\theta_1}&0\\0&e^{i\theta_2}\end{pmatrix}v_2^{\dag}\\
u_3^{(1)}u_3^{(0)\dag}&=&v_3 \begin{pmatrix}e^{i\theta_3}&0\\0&e^{i\theta_4}\end{pmatrix}v_3^{\dag}.
\end{eqnarray}
Since the right handside of Eq.~\eqref{eq:LU1} is locally unitarily equivalent to a diagonal unitary operation in  the computational basis, this circuit can be simulated by a three-qubit fully controlled unitary operation and local unitary operations.

\item[ii)]  In circuit ii), the two-qubit controlled unitary operation $(2,2;3)$ can be decomposed into
\begin{eqnarray}
&\ket{0}\bra{0}_2\otimes u_3^{(0)}+\ket{1}\bra{1}_2\otimes u_3^{(1)}\nonumber\\
=&v_3\left(\ket{0}\bra{0}_2\otimes \mathbb{I}_3+\ket{1}\bra{1}_2\otimes \begin{pmatrix}e^{i\theta_1}&0\\0&e^{i\theta_2}\end{pmatrix}\right)v_3^{\dag}u_3^{(0)}\nonumber\\
=&(\mathbb{I}_2\otimes v_3)\nonumber\\
&\left(\begin{pmatrix}1&0\\0&e^{i\theta_1}\end{pmatrix}\otimes\ket{0}\bra{0}_3+\begin{pmatrix}1&0\\0&e^{i\theta_2}\end{pmatrix}\otimes\ket{1}\bra{1}_3\right)\nonumber\\
&(\mathbb{I}_2\otimes v_3^{\dag}u_3^{(0)}),
\end{eqnarray}
where $v_3$ is a unitary operation that diagonalizes $u_3^{(1)}u_3^{(0)\dag}$.
Thus, this circuit is locally unitarily equivalent to a three-qubit fully controlled unitary operation.

\item[iii)] Circuit iii) consists of just a three-qubit fully controlled unitary operation.

\item[iv)] Circuit iv) can be simulated by a three-qubit fully controlled unitary operation and local unitary operations since we can diagonalize a unitary operation obtained by the circuit in the same way as circuit i).

\item[v)] In the same way as circuit ii),  circuit v) is locally unitarily equivalent to a three-qubit fully controlled unitary operation.

\item[vi)] In the same way as circuit i), circuit vi)  is locally unitarily equivalent to a three-qubit fully controlled unitary operation.
\end{itemize}

\section{Maximally entangled state conversion by SEP}

We prove a lemma about convertibility of maximally entangled states in this appendix.
We analyze a property of bipartite separable maps which preserves entanglement of maximally entangled states.

Let $\ket{\Psi_{in}}=\frac{1}{\sqrt{d}}\sum_i^d \ket{A_i}\ket{B_i}$ and $\ket{\Psi_{out}}=\frac{1}{\sqrt{d}}\sum_i^d \ket{a_i}\ket{b_i}$, where $\{\ket{A_i}\in\mathcal{H}_A\}$ and $\{\ket{B_i}\in\mathcal{H}_B\}$ are orthonormal sets and $\{\ket{a_i}\in\mathcal{H}_a\}$ and $\{\ket{b_i}\in\mathcal{H}_b\}$ are orthonormal bases. Note that $\dim(\mathcal{H}_a)=\dim(\mathcal{H}_b)=d$ but the dimensions of $\mathcal{H}_A$ and $\mathcal{H}_B$ can be higher than $d$.

\begin{lemma}
Let $\{E_k\in\mathbf{L}(\mathcal{H}_A:\mathcal{H}_a)\},\{F_k\in\mathbf{L}(\mathcal{H}_B:\mathcal{H}_b)\}$ be sets of linear operators. If $\{ E_k \otimes F_k \}$ satisfies
\begin{equation}
\sum_k E_k^{\dag}E_k\otimes F_k^{\dag}F_k=\mathbb{I}_{AB}
\end{equation}
and for all $k$,
\begin{equation}
E_k\otimes F_k \ket{\Psi_{in}}=\sqrt{p_k}\ket{\Psi_{out}}
\end{equation}
is satisfied, then for all $\{k|p_k\neq0\}$,
\begin{equation}
\exists \alpha_k>0,\,\exists U_k^M\in\mathbf{U}(\mathbb{C}^d),\: E^M_k=\alpha_k U^M_k, F^M_k=\frac{\sqrt{p_k}}{\alpha_k}\overline{U_k^M},
\end{equation}
where $E^M_k$ and $F^M_k$ are $d$ by $d$ matrices such that $(E^M_k)_{i,j}=\bra{a_i}E_k\ket{A_j}$, $(F^M_k)_{i,j}=\bra{b_i}F_k\ket{B_j}$, $\mathbf{U}(\mathbb{C}^d)$ is the set of $d$ by $d$ unitary matrices and $\overline{U^M}$ is the complex conjugate of $U^M$.
\label{lemma:subunitary}
\end{lemma}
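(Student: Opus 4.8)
\textbf{Proof proposal for Lemma \ref{lemma:subunitary}.}

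The plan is to exploit the structure of maximally entangled states under local operators via the ``transpose trick'' $(E_k\otimes F_k)\ket{\Psi_{in}}\propto (E_k F_k^{\widetilde{T}} \otimes \mathbb{I})\ket{\Psi_{out}}$-type identities, translating the hypotheses into matrix equations among the matrices $E_k^M$ and $F_k^M$. Concretely, using the identity $(\mathbb{I}\otimes M)\frac{1}{\sqrt d}\sum_i\ket{A_i}\ket{B_i} = (M'^{T}\otimes\mathbb{I})\frac{1}{\sqrt d}\sum_i\ket{A_i}\ket{B_i}$ in the appropriate bases, the condition $E_k\otimes F_k\ket{\Psi_{in}}=\sqrt{p_k}\ket{\Psi_{out}}$ becomes, in matrix form, $E_k^M (F_k^M)^T = \sqrt{p_k d}\, W$ where $W$ is the $d\times d$ unitary matrix relating the Schmidt bases of $\ket{\Psi_{in}}$ and $\ket{\Psi_{out}}$ (after absorbing $W$ into a relabelling we may take $W=\mathbb{I}$, so $E_k^M (F_k^M)^T = \sqrt{p_k d}\,\mathbb{I}$ for all $k$ with $p_k\neq 0$). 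This already forces $E_k^M$ and $F_k^M$ to be square, invertible, and mutually inverse up to scalar: $(F_k^M)^T = \sqrt{p_k d}\,(E_k^M)^{-1}$.

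Next I would feed this back into the completeness relation $\sum_k E_k^\dagger E_k\otimes F_k^\dagger F_k = \mathbb{I}_{AB}$. Sandwiching between $\bra{A_i}\bra{B_j}$ and $\ket{A_{i'}}\ket{B_{j'}}$ restricted to the relevant $d$-dimensional blocks yields $\sum_k (E_k^M)^\dagger E_k^M \otimes \overline{(F_k^M)^\dagger F_k^M} = \mathbb{I}\otimes\mathbb{I}$ on $\mathbb{C}^d\otimes\mathbb{C}^d$ (the conjugate arising from the basis pairing on the $B$ side). Substituting $(F_k^M)^T=\sqrt{p_k d}(E_k^M)^{-1}$, so that $(F_k^M)^\dagger F_k^M = \overline{(F_k^M)^T}\,\overline{(F_k^M)} {}$ — more carefully, $\overline{(F_k^M)^\dagger F_k^M} = (F_k^M)^T\overline{F_k^M}^{\,} = p_k d\,(E_k^M)^{-1}\overline{(E_k^M)^{-1}}^{\,T}$ — the tensor equation decouples into a scalar constraint on each $E_k^M$. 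The key algebraic point is a rigidity statement: if $P_k := (E_k^M)^\dagger E_k^M$ are positive operators and $Q_k := p_k d\,(E_k^M)^{-1}((E_k^M)^{-1})^\dagger$ satisfy $\sum_k P_k\otimes Q_k = \mathbb{I}\otimes\mathbb{I}$ with each $Q_k$ determined by $P_k$ as $Q_k \propto P_k^{-1}$, then each $P_k$ must be a scalar multiple of the identity. I would prove this by taking partial traces / using the fact that $\mathbb{I}\otimes\mathbb{I}$ has operator Schmidt rank one, forcing all $P_k$ to be proportional to a common positive operator $P$ and all $Q_k$ proportional to $P^{-1}$; combined with $Q_k\propto P_k^{-1}\propto P^{-1}$ this is automatic, and then a normalization/positivity argument pins $P=\mathbb{I}$ (equivalently: $\sum_k c_k P\otimes P^{-1} = \mathbb{I}\otimes\mathbb{I}$ forces $P\otimes P^{-1}\propto\mathbb{I}\otimes\mathbb{I}$, hence $P\propto\mathbb{I}$). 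Therefore $(E_k^M)^\dagger E_k^M = \alpha_k^2\mathbb{I}$, i.e. $E_k^M = \alpha_k U_k^M$ for a unitary $U_k^M$ and $\alpha_k>0$; plugging into $E_k^M(F_k^M)^T=\sqrt{p_k d}\,\mathbb{I}$ gives $F_k^M = \frac{\sqrt{p_k}}{\alpha_k}\sqrt d\,\overline{U_k^M}$ (the stray $\sqrt d$ being absorbed by the normalization convention of $\ket{\Psi_{out}}$, matching the statement).

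The main obstacle I anticipate is the rigidity step: cleanly arguing that $\sum_k P_k\otimes Q_k=\mathbb{I}\otimes\mathbb{I}$ together with the reciprocal relation $Q_k\propto P_k^{-1}$ forces each $P_k$ scalar. One has to rule out the possibility of several non-scalar $P_k$ conspiring to sum to the identity; the operator-Schmidt-rank-one observation handles the ``$\sum P_k\otimes Q_k$'' side but one must also correctly track the complex conjugation coming from the $B$-side basis pairing and be careful that $E_k^M$ is genuinely invertible for $p_k\neq 0$ (which follows since $E_k^M(F_k^M)^T$ is a nonzero scalar times $\mathbb{I}$). A secondary bookkeeping nuisance is keeping the normalization factors $\sqrt d$ and the unitary $W$ relating the two Schmidt bases consistent throughout; I would fix conventions once at the start (absorb $W$, track $\sqrt d$) to avoid this obscuring the argument.
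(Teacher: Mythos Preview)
Your reduction to the matrix identities $E_k^M(F_k^M)^T=\sqrt{p_k}\,I_d$ (the $\tfrac{1}{\sqrt d}$ factors from $\ket{\Psi_{in}}$ and $\ket{\Psi_{out}}$ cancel, so there is no stray $\sqrt d$) and $\sum_k (E_k^M)^\dagger E_k^M\otimes (F_k^M)^\dagger F_k^M=I_{d^2}$, together with the substitution $F_k^M=\sqrt{p_k}\,((E_k^M)^{-1})^T$, is exactly the paper's setup.

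The gap is in your rigidity step. The fact that $\mathbb{I}\otimes\mathbb{I}$ has operator Schmidt rank one does \emph{not} force the terms in a decomposition $\sum_k P_k\otimes Q_k=\mathbb{I}\otimes\mathbb{I}$ with $P_k,Q_k>0$ to be proportional to a common operator. A concrete counterexample in $d=2$: take $P_1=\mathrm{diag}(2,1)$, $P_2=\mathrm{diag}(1,2)$, $Q_1=Q_2=\tfrac{1}{3}\mathbb{I}$; then $P_1\otimes Q_1+P_2\otimes Q_2=\mathbb{I}\otimes\mathbb{I}$ with $P_1,P_2$ strictly positive and not proportional. Schmidt rank one gives the conclusion you want only when the $\{P_k\}$ happen to be linearly independent, which you cannot assume. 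You correctly flagged this step as the main obstacle, but the resolution you propose does not use the reciprocal constraint $Q_k\propto\overline{P_k^{-1}}$ at the point where it is actually needed, and without it the argument fails.

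What the paper does instead is take the \emph{full trace} of the tensor identity. Writing $P_k=(E_k^M)^\dagger E_k^M$ with eigenvalues $\{\lambda_k^i\}_i$, the substitution gives
\[
\sum_{k:\,p_k\neq 0} p_k\Big(\sum_i\lambda_k^i\Big)\Big(\sum_j\tfrac{1}{\lambda_k^j}\Big)=d^2-\epsilon,\qquad \epsilon\ge 0,
\]
and Cauchy--Schwarz yields $\big(\sum_i\lambda_k^i\big)\big(\sum_j 1/\lambda_k^j\big)\ge d^2$ with equality iff all $\lambda_k^i$ coincide. Since $\sum_k p_k=1$ (apply the completeness relation to $\ket{\Psi_{in}}$), this forces $\epsilon=0$ and equality for every $k$, hence $P_k=\alpha_k^2\,\mathbb{I}$ and $E_k^M=\alpha_k U_k^M$. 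The reciprocal structure and the normalization $\sum_k p_k=1$ enter essentially here; partial traces or Schmidt-rank considerations alone do not suffice.
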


\begin{proof}
By straightforward calculation, we obtain
\begin{eqnarray}
\forall k,\,E^M_k (F^M_k)^{T}&=&\sqrt{p_k}I_d\nonumber\\
\Rightarrow \forall k\in\{k|p_k\neq0\},\,F^M_k&=&\sqrt{p_k}((E^M_k)^{-1})^{T}
\label{eq:transposeM},
\end{eqnarray}
and
\begin{equation}
\sum_{k} (E^M_k)^{\dag}E^M_k\otimes (F^M_k)^{\dag}F^M_k=I_{d^2}.
\label{eq:closureM}
\end{equation}
By using Eq.~(\ref{eq:transposeM}) and Eq.~(\ref{eq:closureM}), we obtain
\begin{eqnarray}
&&{\rm tr}\left(\sum_k E_k^{M\dag}E_k^M\otimes F_k^{M\dag}F_k^M\right)\nonumber\\
&=&{\rm tr}\left(\sum_{\{k|p_k\neq0\}} E_k^{M\dag}E_k^M\otimes F_k^{M\dag}F_k^M\right)+\epsilon\nonumber\\
&=&\sum_{\{k|p_k\neq0\}}p_k {\rm tr}\left(E_k^{M\dag}E_k^M\otimes \overline{(E_k^{M\dag}E_k^M)^{-1}}\right)+\epsilon=d^2\nonumber\\
&\Leftrightarrow& \sum_{\{k|p_k\neq0\}}p_k {\rm tr}\left(E_k^{M\dag}E_k^M\otimes \overline{(E_k^{M\dag}E_k^M)^{-1}}\right)=d^2-\epsilon,\nonumber\\
\label{eq:20150318}
\end{eqnarray}
where $\epsilon={\rm tr}\left(\sum_{\{k|p_k=0\}} E_k^{M\dag}E_k^M\otimes F_k^{M\dag}F_k^M\right)\geq0$.
Since $E_k^M$ is a regular matrix, we can let $P_k=E_k^{M\dag}E_k^M$ be a $d$ by $d$ positive matrix and $\{\lambda_k^i>0|i=0,1,\cdots,d-1\}$ be the set of eigenvalues of $P_k$. Then the eigenvalues of $ \overline{(E_k^{M\dag}E_k^M)^{-1}}=\overline{P_k^{-1}}$ are $\{1/\lambda_k^i|i=0,1,\cdots,d-1\}$ and the condition Eq.~\eqref{eq:20150318} is given by
\begin{equation}
\sum_{\{k|p_k\neq0\}} p_k\sum_i^d \lambda_k^i\sum_j^d \frac{1}{\lambda_k^j}=d^2-\epsilon.
\label{eq:d-e}
\end{equation}
Using the Cauchy-Schwarz inequality, we obtain
\begin{eqnarray}
\sum_i^d \lambda_k^i\sum_j^d \frac{1}{\lambda_k^j}&=&\left(\sum_i^d \sqrt{\lambda_k^i}^2\right)\left(\sum_j^d \sqrt{\frac{1}{\lambda_k^j}}^2\right)\nonumber\\
&\geq& \left(\sum_i^d1\right)^2=d^2.
\label{eq:csinequality}
\end{eqnarray}
The equality holds if and only if $\lambda_k^i=\alpha^2$ for all $i$.
By using  Eqs.~(\ref{eq:d-e})-(\ref{eq:csinequality}) and the fact that $\{p_k|p_k\neq0\}$ is a probability distribution, we obtain
for all $\{k|p_k\neq0\}$,
\begin{eqnarray}
\exists \alpha>0;\:P_k=E_k^{M\dag}E_k^M=\alpha^2\mathbb{I}_d,\\
\epsilon=0.
\end{eqnarray}

\end{proof}

\section{Two conditions in Theorem 1}

We show that two conditions in Theorem 1 are equivalent in the case of the $(3,N)$-cluster networks.
For $k=3$,
the $2^k$ by $2^k$ unitary matrix $V_i^M$ in Theorem 1 is written by
\begin{eqnarray}
V_i^M&=&E_{1,i}^{(0)}\otimes E_{2,i}^{(0,0)}\otimes E_{3,i}^{(0)}\nonumber\\
&&+E_{1,i}^{(0)}\otimes E_{2,i}^{(0,1)}\otimes E_{3,i}^{(1)}\nonumber\\
&&+E_{1,i}^{(1)}\otimes E_{2,i}^{(1,0)}\otimes E_{3,i}^{(0)}\nonumber\\
&&+E_{1,i}^{(1)}\otimes E_{2,i}^{(1,1)}\otimes E_{3,i}^{(1)}.
\end{eqnarray}

By using the result on local unitary equivalence of unitary operations with operator Schmidt rank 2 obtained by Cohen and Yu \cite{Cohen} (Theorem 1 of \cite{Cohen} ), we have
\begin{eqnarray}
V_i^M&\overset{\mathrm{LU}}{=}&\ket{0}\bra{0}_A\otimes W_{BC}^{(0)}+\ket{1}\bra{1}_A\otimes W_{BC}^{(1)}\\
&=&W_{AB}^{(0)}\otimes \ket{0}\bra{0}_C+W_{AB}^{(1)}\otimes\ket{1}\bra{1}_C,
\end{eqnarray}
where $W_{BC}^{(i)}$ and $W_{AB}^{(i)}$ are unitary matrices, $\overset{\mathrm{LU}}{=}$ represents a local unitary equivalence and we identify a three-qubit unitary operation on $\mathcal{H}_A\otimes\mathcal{H}_B\otimes\mathcal{H}_C$ as its matrix representation $V_i^M$.
Thus, it is shown that
\begin{eqnarray}
V_i^M&\overset{\mathrm{LU}}{=}&\ket{00}\bra{00}_{AC}\otimes W_{B}^{(00)}+\ket{01}\bra{01}_{AC}\otimes W_{B}^{(01)}\nonumber\\
&&+\ket{10}\bra{10}_{AC}\otimes W_{B}^{(10)}+\ket{11}\bra{11}_{AC}\otimes W_{B}^{(11)},\nonumber\\
\end{eqnarray}
where $W_B^{(ij)}$ is a $2$ by $2$ unitary matrix.
Statements i) and ii) in Theorem 2 of the main text are equivalent in the case of $(3,N)$-cluster networks since $V_i^M$ is a fully controlled  three-qubit unitary operation and $N$ fully controlled three-qubit unitary operations are implementable by a converted circuit of the $(3,N)$-cluster networks.

\section{Network coding for the butterfly network}

We show that the quantum circuit presented in Fig.~\ref{fig:butterflycircuit}
implements a two-qubit global unitary $U_{global}(x,y,z)$ given by Eq.\eqref{gloablpartofKCD}  for arbitrary parameters $x, y, z \in \mathbb{R}$.
$U_{global}(x,y,z)$ can be decomposed into
\begin{eqnarray}
U_{global}(x,y,z)=\sum_i \lambda_i\ket{\Phi^{(i)}}\bra{\Phi^{(i)}}
\end{eqnarray}
by using its eigenvalues $\{\lambda_i\}_i$ and eigenvectors $\{\ket{\Phi^{(i)}}\}_i$ such that
\begin{eqnarray}
\lambda_0=e^{i(x-y+z)},\lambda_1=e^{i(-x+y+z)},\lambda_2=e^{i(x+y-z)},\lambda_3=e^{i(-x-y-z)}
\end{eqnarray}
\begin{eqnarray}
\ket{\Phi^{(0)}}&=&\frac{1}{\sqrt{2}}(\ket{00}+\ket{11})\\
\ket{\Phi^{(1)}}&=&\frac{1}{\sqrt{2}}(\ket{00}-\ket{11})\\
\ket{\Phi^{(2)}}&=&\frac{1}{\sqrt{2}}(\ket{01}+\ket{10})\\
\ket{\Phi^{(3)}}&=&\frac{1}{\sqrt{2}}(\ket{01}-\ket{10}).
\end{eqnarray}
Thus, in order to show an arbitrary input state $\ket{\phi}$ is transformed into $U_{global}(x,y,z)\ket{\phi}$ through the quantum circuit, it is sufficient to show that the eigenvectors $\{\ket{\Phi^{(i)}}\}_i$ are transformed into $\{\lambda_i\ket{\Phi^{(i)}}\}_i$ and when a measurement $\{M_m\}_{m\in\Omega}$ is performed, the probability of obtaining a measurement outcome must be equal, i.e. for all $m\in\Omega$,
\begin{equation}
\bra{\Phi^{(0)}}M_m\ket{\Phi^{(0)}}=\bra{\Phi^{(1)}}M_m\ket{\Phi^{(1)}}=\bra{\Phi^{(2)}}M_m\ket{\Phi^{(2)}}=\bra{\Phi^{(3)}}M_m\ket{\Phi^{(3)}}
\end{equation}
not to break coherence between the eigenvectors.

We divide the quantum circuit into seven steps from step (i) to step (vii) as shown in Fig.~\ref{fig:butterflycircuit1}.  We show the detail of how the eigenvectors are transformed after each step.

First, we prepare a three-qubit input state
\begin{equation}
\ket{\Phi^{(i)}}_{1,3}\ket{0}_2
\end{equation}
in step (i), where we denote the index of the qubit corresponding to the first horizontal wire as 1 and that of the others likewise. After applying Hadamard gates in step (ii), we obtain
\begin{equation}
H_1 H_3\ket{\Phi^{(i)}}_{1,3}\ket{+}_2,
\end{equation}
where $\ket{\pm}=\frac{1}{\sqrt{2}}(\ket{0}\pm\ket{1})$.
After applying $C_{1,3;2}$ in step (iii), we obtain

\begin{equation}
\frac{1}{\sqrt{2}}\left(H_1 H_3\ket{\Phi^{(i)}}_{1,3}\ket{0}_2+Z_1H_1 Z_3H_3\ket{\Phi^{(i)}}_{1,3}\ket{1}_2\right).
\end{equation}
After applying Hadamard gates and Pauli X operations in step (iv), we obtain
\begin{subnumcases}
{\frac{1}{\sqrt{2}}\left(X_1 X_3\ket{\Phi^{(i)}}_{1,3}\ket{+}_2+\ket{\Phi^{(i)}}_{1,3}\ket{-}_2\right)=}
\ket{\Phi^{(i)}}_{1,3}\ket{0}_2 & ($i=0,2$) \nonumber\\
-\ket{\Phi^{(i)}}_{1,3}\ket{1}_2 & ($i=1,3$).\nonumber
\end{subnumcases}
\begin{eqnarray}
\end{eqnarray}
After applying $C'_{1,3;2}$ in step (v), we obtain
\begin{subnumcases}
{}
e^{i(-y+z)}\ket{\Phi^{(0)}}_{1,3}\ket{0}_2  \nonumber\\
ie^{i(y+z)}\ket{\Phi^{(1)}}_{1,3}\ket{1}_2 \nonumber\\
e^{i(y-z)}\ket{\Phi^{(2)}}_{1,3}\ket{0}_2  \nonumber\\
ie^{i(-y-z)}\ket{\Phi^{(3)}}_{1,3}\ket{1}_2.
\end{subnumcases}
After applying a single qubit unitary operation $u(x)$ given by
\begin{equation}
u(x) = \frac{1}{\sqrt{2}} \left(\begin{array}{cc} e^{i x} & -i e^{- i x} \\e^{i x} & i e^{- i x}\end{array}\right)
\end{equation}
in step (vi), we obtain
\begin{subnumcases}
{}
e^{i(x-y+z)}\ket{\Phi^{(0)}}_{1,3}\ket{+}_2  \nonumber\\
e^{i(-x+y+z)}\ket{\Phi^{(1)}}_{1,3}\ket{-}_2 \nonumber\\
e^{i(x+y-z)}\ket{\Phi^{(2)}}_{1,3}\ket{+}_2  \nonumber\\
e^{i(-x-y-z)}\ket{\Phi^{(3)}}_{1,3}\ket{-}_2.
\end{subnumcases}
After applying the projective measurement in the computational basis and conditional unitary opterations in step (vii), we obtain
\begin{subnumcases}
{}
e^{i(x-y+z)}\ket{\Phi^{(0)}}_{1,3}=\lambda_0 \ket{\Phi^{(0)}}_{1,3} \nonumber\\
e^{i(-x+y+z)}\ket{\Phi^{(1)}}_{1,3}=\lambda_1 \ket{\Phi^{(1)}}_{1,3} \nonumber\\
e^{i(x+y-z)}\ket{\Phi^{(2)}}_{1,3}=\lambda_2 \ket{\Phi^{(2)}}_{1,3}  \nonumber\\
e^{i(-x-y-z)}\ket{\Phi^{(3)}}_{1,3}=\lambda_3 \ket{\Phi^{(3)}}_{1,3}.
\end{subnumcases}
for any measurement outcome. We can verify that the probability of obtaining a measurement outcome is $\frac{1}{2}$ irrespective of eigenvectors.

\begin{figure}
\begin{center}
  \includegraphics[height=.2\textheight]{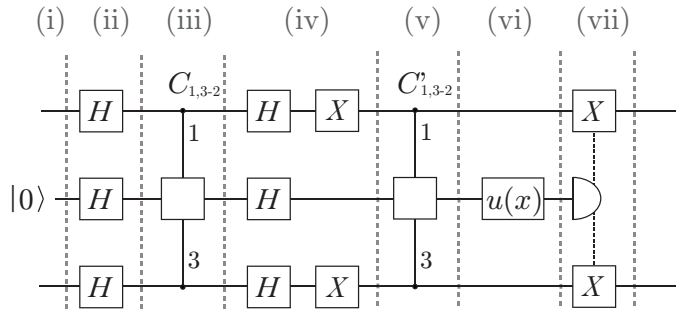}
  \end{center}
  \caption{A protocol to implement a two-qubit unitary operation $U_{global} (x,y,z)$ over the butterfly network. We consider 7 steps  presented in the quantum circuit and denote the steps by Roman numerals, (i) to (vii).  The symbols of gates of the circuit are same as the ones given for Fig.~4.5.}

\label{fig:butterflycircuit1}
\end{figure}

\section{Analysis of four qubit states}

We prove that there is no pure state of four qubits $\ket{\Phi}_{1,2,3,4}$ satisfying
\begin{eqnarray}
\label{eq:sch1}{\rm Sch}\#_{1,2}^{3,4}(\ket{\Phi})&=&4,\\
\label{eq:sch2}{\rm Sch}\#_{2,4}^{1,3}(\ket{\Phi})&=&2,\\
\label{eq:sch3}{\rm Sch}\#_{2,3}^{1,4}(\ket{\Phi})&=&2.
\end{eqnarray}
In \cite{4qubit}, it is shown that any pure states of four qubits can, up to permutations of the qubits, be transformed into one of the following nine families of states by determinant 1 SLOCC:
\begin{eqnarray}
\ket{\Phi_1}&=&\frac{a+d}{2}(\ket{0000}+\ket{1111})+\frac{a-d}{2}(\ket{0011}+\ket{1100})\nonumber\\
&&+\frac{b+c}{2}(\ket{0101}+\ket{1010})+\frac{b-c}{2}(\ket{0110}+\ket{1001})\nonumber\\
\ket{\Phi_2}&=&\frac{a+b}{2}(\ket{0000}+\ket{1111})+\frac{a-b}{2}(\ket{0011}+\ket{1100})\nonumber\\
&&+c(\ket{0101}+\ket{1010})+\ket{0110}\nonumber\\
\ket{\Phi_3}&=&a(\ket{0000}+\ket{1111})+b(\ket{0101}+\ket{1010})\nonumber\\
&&+\ket{0110}+\ket{0011}\nonumber\\
\ket{\Phi_4}&=&a(\ket{0000}+\ket{1111})+\frac{a+b}{2}(\ket{0101}+\ket{1010})\nonumber\\
&&+\frac{a-b}{2}(\ket{0110}+\ket{1001})\nonumber\\
&&+\frac{i}{\sqrt{2}}(\ket{0001}+\ket{0010}+\ket{0111}+\ket{1011})\nonumber\\
\ket{\Phi_5}&=&a(\ket{0000}+\ket{0101}+\ket{1010}+\ket{1111})\nonumber\\
&&+i\ket{0001}+\ket{0110}-i\ket{1011}\nonumber\\
\ket{\Phi_6}&=&a(\ket{0000}+\ket{1111})+\ket{0011}+\ket{0101}+\ket{0110}\nonumber\\
\ket{\Phi_7}&=&\ket{0000}+\ket{0101}+\ket{1000}+\ket{1110}\nonumber\\
\ket{\Phi_8}&=&\ket{0000}+\ket{1011}+\ket{1101}+\ket{1110}\nonumber\\
\ket{\Phi_9}&=&\ket{0000}+\ket{0111},\nonumber
\end{eqnarray}
where $a,b,c,d$ are complex parameters.

Since the Schmidt number of a state cannot be increased under SLOCC and determinant 1 SLOCC is invertible, the Schmidt number of a state is invariant under determinant 1 SLOCC. Thus, we show that no state of the  nine families simultaneously satisfies Eqs.~(\ref{eq:sch1})-(\ref{eq:sch3}).
There are three ways to divide four qubits into a pair of two qubits. We denote the set of Schmidt numbers of a four qubit state $\ket{\Phi}$ for all separations as ${\rm Sch}\#(\ket{\Phi})=\{{\rm Sch}\#_{1,2}^{3,4}(\ket{\Phi}),{\rm Sch}\#_{2,4}^{1,3}(\ket{\Phi}),{\rm Sch}\#_{2,3}^{1,4}(\ket{\Phi})\}$.

\begin{theorem}
There is no four qubit state $\ket{\Phi}\in\mathcal{H}_1\otimes\mathcal{H}_2\otimes\mathcal{H}_3\otimes\mathcal{H}_4$ such that
\begin{equation}
{\rm Sch}\#(\ket{\Phi})=\{4,2,2\}.
\label{eq:schcondition}
\end{equation}
\end{theorem}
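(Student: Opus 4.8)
The plan is to deduce the statement from the classification of four-qubit pure states up to invertible local operations into the nine families $\ket{\Phi_1},\dots,\ket{\Phi_9}$ listed above, and then to check, family by family, that the unordered triple ${\rm Sch}\#(\ket{\Phi})$ over the three $2{+}2$ bipartitions is never $\{4,2,2\}$. Two structural facts make this reduction legitimate. First, ${\rm Sch}\#_{S}^{S^c}(\ket{\Phi})$ equals the rank of the reduced density operator on $S$, equivalently the matrix rank of the reshaping of the coefficient tensor of $\ket{\Phi}$ along the cut $S|S^c$; a determinant-$1$ (hence invertible) local operator on each qubit multiplies each such reshaping on the left and on the right by invertible matrices, so it leaves all three Schmidt ranks unchanged, and ${\rm Sch}\#(\cdot)$ is an invariant of the determinant-$1$ SLOCC class. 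Second, the group $S_4$ permuting the qubits acts on the three bipartitions $\{12|34,\ 13|24,\ 14|23\}$ through the quotient $S_4\to S_3$, so permuting qubits only permutes the three entries of ${\rm Sch}\#(\ket{\Phi})$; hence the \emph{multiset} ${\rm Sch}\#(\ket{\Phi})$ is permutation-invariant. Therefore it suffices to verify the claim for the nine representatives.

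I would then split the nine families into two groups. For the sparse or partially product families a direct computation of the three $4\times4$ reshapings shows that every bipartition has rank at most $3$: for instance $\ket{\Phi_7}$ (qubit $4$ factors out) and $\ket{\Phi_9}$ (qubit $1$ factors out) give $\{{\le}2,{\le}2,{\le}2\}$, while $\ket{\Phi_8}$ gives $\{3,3,3\}$; $\ket{\Phi_5}$ and $\ket{\Phi_6}$ are handled by the same short computation for their finitely many parameter degenerations. In each of these cases the triple contains no $4$, so it is certainly not $\{4,2,2\}$.

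For the families that can attain a rank-$4$ cut — the generic family $\ket{\Phi_1}$, and $\ket{\Phi_2},\ket{\Phi_3},\ket{\Phi_4},\ket{\Phi_5}$ — I would write out the three reshapings explicitly in the computational basis. The key simplification is $\ket{\Phi_1}$: each of its three reshapings decomposes into two $2\times2$ blocks of the shape $\left(\begin{smallmatrix} x & y \\ y & x\end{smallmatrix}\right)$, whose determinant is $(x-y)(x+y)$, and the pairs $(x,y)$ occurring over the three cuts are exactly the three ways to split the four linear parameters $\{a+d,\ a-d,\ b+c,\ b-c\}$ into two pairs; the rank of a cut is the sum of its two block ranks. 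Assuming one cut has rank $4$ (both of its blocks nonsingular) and then requiring the other two cuts to have rank $2$, a short case analysis over the ways two block ranks can sum to $2$ (namely $2{+}0$, $1{+}1$, $0{+}2$) forces one of the four parameters to vanish in a manner that contradicts the rank-$4$ assumption; hence $\{4,2,2\}$ cannot occur for $\ket{\Phi_1}$. The remaining rich families are treated the same way: write the reshapings, and check that imposing a rank-$4$ cut forces at least one of the other two cuts to have rank $\ge 3$, so the triple is again never $\{4,2,2\}$.

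The main obstacle is the bookkeeping of parameter degeneracies: the families $\ket{\Phi_1},\dots,\ket{\Phi_5}$ depend on complex parameters that may vanish or coincide, and the rank of a reshaping (hence whether $4$ or $2$ appears in the triple) jumps precisely on these special loci, so the case split must be arranged so that every degeneration is accounted for. The block decomposition of $\ket{\Phi_1}$ keeps this under control for the generic family, and for the sparser families one only has to confirm the "rank $\le 3$" or "rank-$4$ forces rank-$\ge 3$ elsewhere" statement on finitely many degeneration patterns — routine linear algebra, but it must be carried out systematically rather than only for generic parameters.
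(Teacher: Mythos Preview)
Your plan is essentially the paper's own proof: invoke the Verstraete--Dehaene--De Moor--Verschelde nine-family SLOCC classification, note that determinant-$1$ local invertibles preserve all three $2{+}2$ Schmidt ranks while qubit permutations merely permute them, and then exclude $\{4,2,2\}$ family by family via the ranks/singular values of the three reshapings, handling the parameter degeneracies case by case (the paper does exactly this, with explicit eigenvalue formulas rather than your block-$2{\times}2$ description for $\ket{\Phi_1}$). One factual slip to fix: $\ket{\Phi_7}=\ket{0000}+\ket{0101}+\ket{1000}+\ket{1110}$ is \emph{not} product on qubit $4$ (or on any qubit) and actually has ${\rm Sch}\#(\ket{\Phi_7})=\{3,3,3\}$, not $\{\le 2,\le 2,\le 2\}$; this does not derail your argument since $\{3,3,3\}$ contains no $4$, but the reason you gave is incorrect.
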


\begin{proof}
By straightforward calculation, we can easily check that
\begin{eqnarray}
{\rm Sch}\#(\ket{\Phi_6})&=&\{n_6,n_6,n_6\}\\
{\rm Sch}\#(\ket{\Phi_7})&=&\{3,3,3\}\\
{\rm Sch}\#(\ket{\Phi_8})&=&\{3,3,3\}\\
{\rm Sch}\#(\ket{\Phi_9})&=&\{2,2,2\},
\end{eqnarray}
where $n_6=\#\left\{\sqrt{2},\frac{1}{2}\sqrt{1+4|a|^2}+\frac{1}{2},\frac{1}{2}\sqrt{1+4|a|^2}-\frac{1}{2}\right\}$ and $\#\mathcal{S}$ is the number of non-zero elements of set $\mathcal{S}$. Since $n_6=2$ or $n_6=3$, these four states do not satisfy Eq.~\eqref{eq:schcondition}.

An element of ${\rm Sch}\#(\ket{\Phi_5})$ is $\#\left\{1,\sqrt{2},2|a|\right\}$. To satisfy Eq.~\eqref{eq:schcondition}, $a=0$ is required. Then
\begin{equation}
{\rm Sch}\#(\ket{\Phi_5})=\{2,3,3\},
\end{equation}
which does not satisfy Eq.\eqref{eq:schcondition}.

An element of ${\rm Sch}\#(\ket{\Phi_4})$ is $\#\{|b|\}+\#\{x|x^3-(3|a|^2+2)x^2+(3|a|^4+2|a|^2+1)x-|a|^6=0\}$.
To satisfy Eq.~\eqref{eq:schcondition}, the element must be 2 or 4.
If the element is 2, since $\#\{x|x^3-(3|a|^2+2)x^2+(3|a|^4+2|a|^2+1)x-|a|^6=0\}$ is larger than 1 and is 2 if and only if $a=0$, we have
\begin{equation}
a=b=0.
\end{equation}
Then ${\rm Sch}\#(\ket{\Phi_4})=\{2,2,2\}$. Thus, the element must be 4.
Since $\#\{x|x^3-(3|a|^2+2)x^2+(3|a|^4+2|a|^2+1)x-|a|^6=0\}$ is 3 if and only if $a\neq0$,  we have
\begin{equation}
a\neq0,\,\,b\neq0.
\label{eq:schcon1}
\end{equation}
Another element of ${\rm Sch}\#(\ket{\Phi_4})$ is $\#\{|a-b|\}+\#\{x|64x^3+(\cdots)x^2+(\cdots)x-|a-b|^4|3a+b|^2=0\}$, where we abbreviate coefficients of $x^2$ and $x$. Since this element must be 2, it is necessary that
\begin{equation}
a-b=0\,\,or\,\,3a+b=0.
\label{eq:schcon2}
\end{equation}
The other element of ${\rm Sch}\#(\ket{\Phi_4})$ is $\#\{|a+b|\}+\#\{x|64x^3+(\cdots)x^2+(\cdots)x-|a+b|^4|3a-b|^2=0\}$, where we abbreviate coefficients of $x^2$ and $x$. Since this element must be 2, it is necessary that
\begin{equation}
a+b=0\,\,or\,\,3a-b=0.
\label{eq:schcon3}
\end{equation}
We can easily check that it is impossible to simultaneously satisfy Eqs.~(\ref{eq:schcon1})-(\ref{eq:schcon3}).

${\rm Sch}\#(\ket{\Phi_3})$ is $\{n_3,n_3',n_3'\}$, where
\begin{eqnarray}
n_3&=&\#\{\sqrt{2},|a+b|,|a-b|\},\\
n_3'&=&\#\{\sqrt{1+4|a|^2}\pm1,\sqrt{1+4|b|^2}\pm1\}.
\end{eqnarray}
To satisfy Eq.~\eqref{eq:schcondition}, $n_3'$ must be $2$, that is $a=b=0$. Then $n_3=1$, which does not satisfy Eq.~\eqref{eq:schcondition}.

${\rm Sch}\#(\ket{\Phi_2})$ is $\{n_2,n_2',n_2''\}$, where
\begin{eqnarray}
n_2&=&\#\{|a|,|b|,\sqrt{1+4|c|^2}\pm1\},\\
n_2'&=&\#\{|a+b\pm2c|,\sqrt{1+|a-b|^2}\pm1\},\\
n_2''&=&\#\{|a-b\pm2c|,\sqrt{1+|a+b|^2}\pm1\}.
\end{eqnarray}
In the following, we verify that $\{n_2,n_2',n_2''\}$ cannot be $\{4,2,2\}$, $\{2,4,2\}$ or $\{2,2,4\}$.
\begin{enumerate}
\item $\{n_2,n_2',n_2''\}\neq\{4,2,2\}$:

If $n_2=4$, it is necessary that
\begin{equation}
a\neq0,\,\,b\neq0,\,\,c\neq0.
\label{eq:schcon4}
\end{equation}
If $n_2'=2$, it is necessary that
\begin{eqnarray}
a-b=a+b+2c=0,\\
a-b=a+b-2c=0,\\
or\,\,a+b-2c=a+b+2c=0.
\end{eqnarray}
If $n_2''=2$, it is necessary that
\begin{eqnarray}
a+b=a-b+2c=0,\\
a+b=a-b-2c=0,\\
or\,\,a-b-2c=a-b+2c=0.
\label{eq:schcon5}
\end{eqnarray}
We can easily check that it is impossible to simultaneously satisfy  Eqs.~(\ref{eq:schcon4})-(\ref{eq:schcon5}).

\item $\{n_2,n_2',n_2''\}\neq\{2,4,2\}$:

If $n_2=2$, it is necessary that
\begin{eqnarray}
a=b=0,\\
a=c=0,\\
or\,\,b=c=0.
\end{eqnarray}
With the necessary condition for $n_2''=2$, we obtain that
\begin{equation}
a=b=c=0.
\end{equation}
Then, it is impossible to satisfy $n_2'=4$.

\item $\{n_2,n_2',n_2''\}\neq\{2,2,4\}$:

If $n_2=2$, it is necessary that
\begin{eqnarray}
a=b=0,\\
a=c=0,\\
or\,\,b=c=0.
\end{eqnarray}
With the necessary condition for $n_2'=2$, we obtain that
\begin{equation}
a=b=c=0.
\end{equation}
Then, it is impossible to satisfy $n_2''=4$.
\end{enumerate}

Finally, we analyze ${\rm Sch}\#(\ket{\Phi_1})$. ${\rm Sch}\#(\ket{\Phi_1})$ is $\{n_1,n_1',n_1''\}$, where
\begin{eqnarray}
n_1&=&\#\{|a|,|b|,|c|,|d|\},\\
n_1'&=&\#\{|a+b-c-d|,|a-b+c-d|,\nonumber\\
&&|-a+b+c-d|,|a+b+c+d|\},\\
n_1''&=&\#\{|-a+b+c+d|,|a-b+c+d|,\nonumber\\
&&|a+b-c+d|,|a+b+c-d|\}.
\end{eqnarray}
Note that $n_1$, $n_1'$ and $n_1''$ are invariant under permutation of $a$, $b$, $c$ and $d$.
We verify that $\{n_1,n_1',n_1''\}$ cannot be $\{4,2,2\}$, $\{2,4,2\}$ or $\{2,2,4\}$ in the following.
\begin{enumerate}
\item $\{n_1,n_1',n_1''\}\neq\{4,2,2\}$:

If $n_1=4$, it is necessary that
\begin{equation}
a\neq0,\,\,b\neq0,\,\,c\neq0,\,\,d\neq0.
\end{equation}
If $n_1'=2$, it is necessary that in general
\begin{eqnarray}
a+b-c-d=0,\,\,a-b+c-d=0\\
\Leftrightarrow a=d,\,\,b=c.
\end{eqnarray}
Then
\begin{equation}
n_1''=\#\{|2b|, |2a|, |2a|, |2b|\}=4.
\end{equation}

\item $\{n_1,n_1',n_1''\}\neq\{2,4,2\}$ and $\{n_1,n_1',n_1''\}\neq\{2,2,4\}$: 

If $n_1=2$, it is necessary that in general
\begin{equation}
a=0,\,\,b=0,\,\,c\neq0,\,\,d\neq0.
\end{equation}
Then
\begin{equation}
n_1'=n_1''=\#\{|c+d|,|c+d|,|c-d|,|c-d|\}.
\end{equation}

\end{enumerate}

\end{proof}

\chapter{Role of entanglement and causal relation in DQC}
\section{Entanglement for one-way LOCC}
\textbf{Theorem \ref{theorem local discrimination}.} 
{\it 
For any orthonormal basis $\left\{ \bigr|\psi_{j}\bigr\rangle_{AB}\right\} _{j=1}^{d_{A}d_{B}}\subset\mathcal{H}_{A}\otimes\mathcal{H}_{B}$,
\[
r_{min}\left(\left\{ \bigr|\psi_{j}\bigr\rangle_{AB}\right\} _{j=1}^{d_{A}d_{B}}\right)=d_{min}\left(\left\{ \bigr|\psi_{j}\bigr\rangle_{AB}\right\} _{j=1}^{d_{A}d_{B}}\right),
\]
 where $d_{min}\left(\left\{ \bigr|\psi_{j}\bigr\rangle_{AB}\right\} _{j=1}^{d_{A}d_{B}}\right)$
is defined by
\[
d_{min}\left(\left\{ \bigr|\psi_{j}\bigr\rangle_{AB}\right\} _{j=1}^{d_{A}d_{B}}\right):=\min_{\mathcal{H}_{A}=\bigoplus_{k}\mathcal{M}_{k}}\max_{k}\left\{ \dim\mathcal{M}_{k}\ \Big|\ \forall j,\exists k,\ \mbox{s.t.}\ \bigr|\psi_{j}\bigr\rangle_{AB}\in\mathcal{M}_{k}\otimes\mathcal{H}_{B}\right\} .
\]
}

\begin{proof}
First, we prove the inequality

\begin{equation}
r_{min}\left(\left\{ \bigr|\psi_{j}\bigr\rangle_{AB}\right\} _{j=1}^{d_{A}d_{B}}\right)\le d_{min}\left(\left\{ \bigr|\psi_{j}\bigr\rangle_{AB}\right\} _{j=1}^{d_{A}d_{B}}\right).\label{eq:r_min le d_min}
\end{equation}
 Suppose there exists an orthogonal decomposition $\mathcal{H}_{A}=\bigoplus_{k=1}^{K}\mathcal{M}_{k}$
such that for all $j$, there exists $k$ satisfying $\bigr|\psi_{j}\bigr\rangle_{AB}\in\mathcal{M}_{k}\otimes\mathcal{H}_{B}$.
Then, the following protocol discriminates any basis states in $\left\{ \bigr|\psi_{j}\bigr\rangle_{AB}\right\} _{j=1}^{d_{A}d_{B}}$
perfectly: 
\begin{enumerate}
\item Alice applies a measurement $\left\{ P_{k}\right\} _{k=1}^{K}$
on her system ($\mathcal{H}_{AA'}$), and sends outcome $k$ to
Bob who has system $\mathcal{H}_{B}$, where $P_{k}$ is the orthogonal
projector onto the subspace $\mathcal{M}_{k}$. 
\item Alice teleports the subspace $\mathcal{M}_{k}$ to Bob by using a
maximally entangled state $\bigr|\Phi\bigr\rangle_{A'B'}\subset\mathcal{H}_{A'}\otimes\mathcal{H}_{B'}$
whose Schmidt rank is given by ${\rm Sch}\#_{B'}^{A'}\left(\bigr|\Phi\bigr\rangle_{A'B'}\right)=\max_{k}\left\{ \dim\mathcal{M}_{k}\ \Big|\ \forall j,\exists k,\ \mbox{s.t.}\ \bigr|\psi_{j}\bigr\rangle_{AB}\in\mathcal{M}_{k}\otimes\mathcal{H}_{B}\right\} $. 
\item Bob discriminates the states $\bigr|\psi_{j}\bigr\rangle_{B'B}\in\mathcal{M}_{k}\otimes\mathcal{H}_{B}$
which is now on his subspace. 
\end{enumerate}
The existence of this protocol guarantees the inequality given by \eqref{eq:r_min le d_min}.

Second, we prove the inequality

\begin{equation}
r_{min}\left(\left\{ \bigr|\psi_{j}\bigr\rangle_{AB}\right\} _{j=1}^{d_{A}d_{B}}\right)\ge d_{min}\left(\left\{ \bigr|\psi_{j}\bigr\rangle_{AB}\right\} _{j=1}^{d_{A}d_{B}}\right).\label{eq:r_min ge d_min}
\end{equation}
Suppose $\bigr|\Phi\bigr\rangle_{A'B'}$ attains the optimum defined by Eq.~\eqref{eq:def_r_min}.
That is, a set of states $\left\{ \bigr|\psi_{j}\bigr\rangle_{AB}\otimes\bigr|\Phi\bigr\rangle_{A'B'}\right\} _{j=1}^{d_{A}d_{B}}\subset\mathcal{H}_{AA'}\otimes\mathcal{H}_{BB'}$
can be perfectly discriminated by one-way LOCC, and 
\begin{equation}
r_{min}\left(\left\{ \bigr|\psi_{j}\bigr\rangle_{AB}\right\} _{j=1}^{d_{A}d_{B}}\right)={\rm Sch}\#_{B'}^{A'}\left(\bigr|\Phi\bigr\rangle_{A'B'}\right).\label{eq:r_min=00003Dr_Sch}
\end{equation}
 Then, without loss of generality, we assume that all Alice's POVM
elements are rank one. That is, Alice first applies the POVM represented by $\left\{\bigr|\xi_{x}\bigr\rangle\bigl\langle\xi_{x}\bigr|_{AA'}\right\}$
on her system $\mathcal{H}_{AA'}$, where $\bigr|\xi_{x}\bigr\rangle_{AA'}$
is an unnormalized state on $\mathcal{H}_{AA'}$. We define an unnormalized
state $\bigr|\tilde{\psi}_{jx}\bigr\rangle_{AA'BB'}\in\mathcal{H}_{AA'}\otimes\mathcal{H}_{BB'}$
as a unnormalized state on the total system when the initial unknown
state is denoted by $\bigr|\psi_{j}\bigr\rangle_{AB}$ , and Alice applies the
POVM and obtains outcome $x$, namely,
\begin{eqnarray}
\bigr|\tilde{\psi}_{jx}\bigr\rangle_{A,B,A',B'} & := & (\bigr|\xi_{x}\bigr\rangle\bigl\langle\xi_{x}\bigr|_{AA'}\otimes I_{BB'})\bigr|\psi_{j}\bigr\rangle_{A,B}\otimes\bigr|\Phi\bigr\rangle_{A'B'}
\end{eqnarray}
Since Bob can discriminate any state from the set of states, $\bigl\langle\tilde{\psi}_{jx}|\tilde{\psi}_{j^{\prime}x}\bigr\rangle=\mu_{j}\delta_{jj'}$
holds for all $j$ and $j'$, where $\mu_{j}\ge0$ for all $j$. This can be rewritten as 
\begin{eqnarray}
\bigl\langle\psi_{j}\bigr|S_{x}\otimes Id_{B}\bigl|\psi_{j^{\prime}}\bigr\rangle & = & \mu_{j}\delta_{jj'},\label{eq:rel_1}
\end{eqnarray}
by introducing a positive operator $S_{x}$
on $\mathcal{H}_{A}$ defined as

\begin{eqnarray}
S_{x} & := & \sum_{k=1}^{\min\left(d_{A'},d_{B'}\right)}\lambda_{k}^{2}\bigl\langle e_{k}\bigr|_{A'}\left(\bigl|\xi_{x}\bigr\rangle\bigl\langle\xi_{x}\bigr|_{AA'}\right)\bigl|e_{k}\bigr\rangle_{A'},\label{eq:def_S_x}
\end{eqnarray}
where $\left\{ \lambda_{k}\right\} _{k=1}^{\min\left(d_{A'},d_{B'}\right)}$
and $\left\{ \bigr|e_{k}\bigr\rangle_{A'}\otimes\bigr|f_{k'}\bigr\rangle_{B'}\right\} _{k=1,\ k'=1}^{d_{A'},\quad d_{B'}}$
are the Schmidt coefficients and the Schmidt basis of $\bigr|\Phi\bigr\rangle_{A'B'}$,
respectively, thus $\bigr|\Phi\bigr\rangle_{A'B'}=\sum_{k}\lambda_{k}\bigr|e_{k}\bigr\rangle_{A'}\otimes\bigr|f_{k}\bigr\rangle_{B'}$.
Eq.~\eqref{eq:def_S_x} clearly shows
\begin{equation}
{\rm Sch}\#_{B'}^{A'}\left(\bigr|\Phi\bigr\rangle_{A'B'}\right)\ge\max_{x}\ \mbox{rank}\ S_{x}.\label{eq:r_Sch_phi_ge_max_rank_S}
\end{equation}
On the other hand, by using Eq.~\eqref{eq:rel_1}, we derive
\begin{eqnarray}
\sum_{j'}\ket{\psi_{j'}}\bra{\psi_{j'}}S_{x}\otimes I_{B}\bigl|\psi_{j}\bigr\rangle&=&\sum_{j'}\mu_j\delta_{jj'}\ket{\psi_{j'}}\\
\Leftrightarrow S_{x}\otimes I_{B}\bigl|\psi_{j}\bigr\rangle & = & \mu_{j}\bigl|\psi_{j}\bigr\rangle.\label{eq:rel_2}
\end{eqnarray}
This equation implies that $\left\{ \bigl|\psi_{j}\bigr\rangle\right\} _{j=1}^{d_{A}d_{B}}$
is a simultaneous eigenbasis for a set of positive operators $\left\{ S_{x}\otimes I_{B}\right\} _{x}$
on $\mathcal{H}_{A}\otimes\mathcal{H}_{B}$. In other words, a set
$\left\{ S_{x}\otimes I_{B}\right\} _{x}$ is commutative, i.e. $\left[S_{x}\otimes I_{B},S_{x'}\otimes I_{B}\right]=0$.
Thus a set $\left\{ S_{x}\right\} _{x}$ is commutative as well,
$\left[S_{x},S_{x'}\right]=0$. Hence there exists
a set of projectors $\left\{ P_{l}^{\left(x\right)}\right\} _{l,x}$
satisfying 
\begin{equation}
\forall x,\forall x',\forall l,\forall l',\left[P_{l}^{\left(x\right)},P_{l'}^{\left(x'\right)}\right]=0\label{eq:commutative_P_l^x}
\end{equation}
 such that for all $x$,
\[
S_{x}=\sum_{l}\theta_{l}^{\left(x\right)}P_{l}^{\left(x\right)}
\]
 gives a spectral decomposition of $S_{x}$, and $\theta_{l}^{\left(x\right)}\neq\theta_{l'}^{\left(x\right)}$
for $l\neq l'$. That is, for all $x$,
\begin{equation}
S_{x}\otimes I_{B}=\sum_{l}\theta_{l}^{\left(x\right)}P_{l}^{\left(x\right)}\otimes I_{B}\label{eq:spectral_decomp_S_x_I}
\end{equation}
gives a spectral decomposition of $S_{x}$. Eqs.~\eqref{eq:rel_2},
\eqref{eq:commutative_P_l^x}, and \eqref{eq:spectral_decomp_S_x_I}
lead to the existence of a set of projectors $\left\{ Q_{h}\right\} _{h}$
on $\mathcal{H}_{A}$ satisfying $Q_{h}Q_{h'}=\delta_{hh'}Q_{h}$,
$\sum_{h}Q_{h}=I_{A}$, and
\[
\forall h,\forall j,\ Q_{h}\otimes I_{B}\bigl|\psi_{j}\bigr\rangle=0\ {\rm or}\ \bigl|\psi_{j}\bigr\rangle,
\]
 and there exists a set of non-negative numbers $\left\{ \theta{}_{h}^{'\left(x\right)}\right\} _{h,x}$
such that 
\begin{eqnarray}
\forall x,\ S_{x}=\sum_{h}\theta{}_{h}^{'\left(x\right)}Q_{h},\nonumber\\
\forall h,\ \exists x,\ \theta{}_{h}^{'\left(x\right)}>0 \label{eq:spectral_decomp_S_2}
\end{eqnarray}
 Therefore, by defining a subspace $\mathcal{M}_{h}$ as the support
of $Q_{h}$, $\left\{ \mathcal{M}_{h}\right\} _{h}$ satisfies $\mathcal{H}_{A}=\bigoplus_{h}\mathcal{M}_{h}$
and for all $j$, there exists $h$ such that $\bigl|\psi_{j}\bigr\rangle\in\mathcal{M}_{h}\otimes\mathcal{H}_{B}$.
Hence, we derive
\begin{equation}
\max_{x}\mbox{rank}S_{x}\ge\max_{h}\mbox{rank}Q_{h}=\max_{h}\dim\mathcal{M}_{h}\ge d_{min}\left(\left\{ \bigr|\psi_{j}\bigr\rangle_{AB}\right\} _{j=1}^{d_{A}d_{B}}\right),\label{eq:max_rank_S_x_ge_max_rank_Q_h}
\end{equation}
 where Eq.~\eqref{eq:spectral_decomp_S_2} leads to the first inequality,
the definition of $\mathcal{M}_{h}$ implies the equality, and the
definition of $d_{min}\left(\left\{ \bigr|\psi_{j}\bigr\rangle_{AB}\right\} _{j=1}^{d_{A}d_{B}}\right)$
guarantees the last inequality. Finally, Eqs.~\eqref{eq:r_min=00003Dr_Sch},
\eqref{eq:r_Sch_phi_ge_max_rank_S}, and \eqref{eq:max_rank_S_x_ge_max_rank_Q_h}
implies the inequality given by \eqref{eq:r_min ge d_min}.

Now, we have proven inequalities \eqref{eq:r_min le d_min} and \eqref{eq:r_min ge d_min}.
This completes the proof of the theorem. 
\end{proof}

\section{LOCC and causal order}
We analyze the relationship between LOCC and special relativistic causal order of local operations performed in the spacetime by generalizing a LOCC scenario in a special relativistic framework.   In this scenario, suppose Alice and Bob reside in separate laboratories in spacetime obeying special relativity.  They perform local operations within each laboratory for $N$ times.  Each local operation is assumed to be implemented in a very short time so that it is performed at a single spacetime coordinate.    We assume that the order of local operations in each laboratory is fixed.   However, we do not require to fix the total ordering of local operations across the two parties in contrast to the standard LOCC scenario.

We denote the $k$-th local operations of Alice and Bob are represented by $\mathcal{A}_{o_k|i_k}^{(k)}$ and $\mathcal{B}_{o'_k|i'_k}^{(k)}$.   In this generalized scenario, Alice (Bob) receives classical input $i_k$ ($i'_k$) from outside her (his) laboratory and performs $\mathcal{A}_{o_k|i_k}^{(k)}$ ($\mathcal{B}_{o'_k|i'_k}^{(k)}$) and then sends classical output $o_k$ ($o'_k$) to outside her (his) laboratory.  The deterministic joint quantum operation in this case can be written in the form of Eq.(\ref{eq:LOCC*0}) by means of  $p\left(i_1,\cdots , i'_N|o_1, \cdots, o'_N \right)$  linking the classical outputs and the classical inputs of the local operations. 

The assumption of special relativity adds two conditions for linking two local operations.  An ordering denoted by ``$\prec$'' over local operations associated by the spacetime coordinates is introduced to represent these conditions.   We define $a\prec b$ if the spacetime coordinate of operation $a$ is in the past light cone of the spacetime coordinate of operation $b$.   If neither $a \prec b$ nor $b \prec a$ holds then the spacetime coordinates of $a$ and $b$ are {\it spacelike} separated.  Special relativity guarantees that this ordering is a partial ordering.  Thus, there exists a partial ordering ``$\prec$'' over a set of local quantum operations.  Since local operations within each laboratory are totally ordered by the assumption, the partial ordering must satisfy $\mathcal{A}_{o_1|i_1}^{(1)} \prec \cdots \prec \mathcal{A}_{o_N|i_N}^{(N)}$ and $\mathcal{B}_{o'_1|i'_1}^{(1)} \prec \cdots \prec \mathcal{B}_{o'_N|i'_N}^{(N)}$. Further, $p\left(i_1,\cdots , i'_N|o_1, \cdots, o'_N \right)$ must satisfy the no-signaling condition.  That is, for all local operations, an output of a local operation cannot depend on the input of another local operation performed not in the past.   For example, if $\mathcal{B}_{o'_l|i'_l}^{(l)} 
\prec \mathcal{A}_{o_k|i_k}^{(k)} \prec \mathcal{B}_{o'_m|i'_m}^{(m)}$, then 
\begin{eqnarray}
&p\left(i_1,\cdots,i_k,i'_1,\cdots, i'_l|o_1,\cdots,o_N,o'_1,\cdots,o'_N\right)\nonumber\\
 =&p\left(i_1,\cdots,i_k,i'_1,\cdots, i'_l|o_1,\cdots,o_{k-1},o'_1,\cdots,o'_{m-1}\right)
 \end{eqnarray}
  should be satisfied.

We can show that deterministic joint quantum operations is in LOCC if and only if it has a decomposition in Eq.(4), and there exists a partial ordering over local quantum operations and $p\left(i_1,\cdots , i'_N|o_1, \cdots, o'_N \right)$ in Eq.(\ref{eq:LOCC*0}) satisfies the no-signaling condition with respect to the partial ordering.  The proof of this statement is slightly involved and given in Appendix B.3.   This property of deterministic joint quantum operations in the form of Eq.(\ref{eq:LOCC*0}) suggests that LOCC is a set of all possible deterministic joint quantum operations implementable by local operations and {\it classical communication respecting causal order}.

\section{The rigorous proof of B.2}
In Results, we presented a statement that any deterministic joint quantum operation in LOCC can be written in the form of Eq.(\ref{eq:LOCC*0}) but not all quantum operations written as 
Eq.(\ref{eq:LOCC*0}) are LOCC.  In this section, we show a necessary 
and sufficient condition for a deterministic joint quantum operation in the form of Eq.(\ref{eq:LOCC*0}) to be in LOCC.  We show that the condition is related to special relativistic {\it causal order} of local operations.

We regard $p\left(i_1, \cdots,  i'_N|o_1, \cdots, o'_N\right)$ in Eq.(\ref{eq:LOCC*0}) linking the outputs and inputs of local operations as classical channel between $2 N$ inputs $\{o_k\}_{k=1}^N \cup \{o'_k\}_{k=1}^N$  and $2 N$ outputs $\{i_k\}_{k=1}^N \cup \{i'_k\}_{k=1}^N$. Note that an input $i_k$ for a local operation $\mathcal{A}^{(k)}_{o_k|i_k}$ is an output of the classical channel.   Suppose there exists a  strict partial ordering ``$\prec$'' on the set 
of  local operations $\{\mathcal{A}^{(k)}_{o_k|i_k}\}_{k=1}^N \cup 
\{\mathcal{B}^{(k)}_{o_k|i_k}\}_{k=1}^N$, where  ``strict'' means that 
neither $\mathcal{A}^{(k)}_{o_k|i_k}\prec 
\mathcal{A}^{(k)}_{o_k|i_k}$ nor $\mathcal{B}^{(k)}_{o_k|i_k} \prec 
\mathcal{B}^{(k)}_{o_k|i_k}$
holds, and $\{\mathcal{A}^{(k)}_{o_k|i_k}\}_{k=1}^N$ and $\{\mathcal{B}^{(k)}_{o_k|i_k}\}_{k=1}^N$ are local operations performed in the laboratories of Alice and Bob, respectively.  Each local operation is assumed to be implemented in a very short time so that it is performed at a single spacetime coordinate and is associated with the spacetime coordinate.  We assume the order of local operations in each laboratory is fixed and local operations are performed in the increasing order of $k$, which is the assumption of local temporal ordering we have introduced in Appendix B.2.

Suppose that this partial ordering $\prec$ represents causal order of the spacetime coordinates of local operations.  For example, $\mathcal{A}^{(k)}_{o_k|i_k} \prec \mathcal{B}^{(l)}_{o'_l|i'_l}$ 
means that $\mathcal{B}^{(l)}_{o'_l|i'_l}$ is performed after $\mathcal{A}^{(k)}_{o_k|i_k}$ had been performed.  Neither $\mathcal{A}^{(k)}_{o_k|i_k} \prec \mathcal{B}^{(l)}_{o'_l|i'_l}$ nor $ \mathcal{B}^{(l)}_{o'_l|i'_l}  \prec \mathcal{A}^{(k)}_{o_k|i_k} $ holds if the spacetime coordinates of  $\mathcal{A}^{(k)}_{o_k|i_k}$ and $\mathcal{B}^{(l)}_{o'_l|i'_l}$ are spacelike separated.  Due to the assumption of local temporal ordering, $\mathcal{A}^{(k)}_{o_{k+1}|i_{k+1}}$ is performed after $\mathcal{A}^{(k)}_{o_{k}|i_{k}}$, thus ``$\prec$''  satisfies  $\mathcal{A}^{(k)}_{o_{k}|i_{k}} \prec \mathcal{A}^{(k)}_{o_{k+1}|i_{k+1}}$ for all $k=1, \cdots ,N-1$. Similarly, it satisfies $\mathcal{B}^{(k)}_{o'_{k}|i'_{k}} \prec \mathcal{B}^{(k)}_{o'_{k+1}|i'_{k+1}}$ for all $k=1, \cdots ,N-1$. 

We define a set ${past}(\mathfrak{I})$ representing a set of all ``{\it past}'' 
inputs for a set of outputs $\mathfrak{I}$, where  $\mathfrak{I}$ is a subset of 
$\{i_k\}_{k=1}^N \cup \{i'_k\}_{k=1}^N$.   Formally, by introducing a set of local 
operations $Op\left(\mathfrak{I}\right)$ corresponding to $\mathfrak{I}$ 
as 
\begin{equation}
 Op\left(\mathfrak{I}\right):=\left\{ \mathcal{A}^{(k)}_{o_{k}|i_{k}} 
| i_k \in \mathfrak{I}\right\} \cup \left\{ \mathcal{B}^{(k)}_{o'_{k}|i'_{k}} 
| i'_k \in \mathfrak{I}\right\},
\end{equation}
${past}(\mathfrak{I})$ 
is defined as 
\begin{equation}
 past\left(\mathfrak{I}\right):=\{o_k | \exists \chi \in 
 Op\left(\mathfrak{I}\right), \mathcal{A}^{(k)}_{o_{k}|i_{k}} \prec 
 \chi \} \cup \{o'_k | \exists \chi \in 
 Op\left(\mathfrak{I}\right), \mathcal{B}^{(k)}_{o'_{k}|i'_{k}} \prec \chi \},
\end{equation}
where $\chi$ represent any element of $Op\left(\mathfrak{I}\right)$, that is, there exists 
 $l$ such that $\chi=\mathcal{A}^{(l)}_{o_{l}|i_{l}}$ 
or $\chi=\mathcal{B}^{(l)}_{o'_{l}|i'_{l}}$. Note that  $o_k$ is not 
in $past \left(\{i_k\}\right)$ by definition.

Next we define a classical channel respecting the causal order.  A classical channel $p\left(i_1, \cdots,  i'_N|o_1, \cdots, o'_N\right)$ linking the classical outputs and inputs of local operations  $\{\mathcal{A}^{(k)}_{o_k|i_k}\}_{k=1}^N \cup \{\mathcal{B}^{(k)}_{o_k|i_k}\}_{k=1}^N$  is said to be respecting causal order if there exists a partial ordering $\prec$ {  on the set of 
local operations $\{\mathcal{A}^{(k)}_{o_k|i_k}\}_{k=1}^N \cup \{\mathcal{B}^{(k)}_{o_k|i_k}\}_{k=1}^N$} satisfying the following conditions:
\begin{itemize}
 \item For all $k$, {  $\mathcal{A}^{(k)}_{o_{k}|i_{k}} \prec 
 \mathcal{A}^{(k)}_{o_{k+1}|i_{k+1}}$ and  $\mathcal{B}^{(k)}_{o'_{k}|i'_{k}} \prec \mathcal{B}^{(k)}_{o'_{k+1}|i'_{k+1}}$}.
\item For all $k$ and $l$, 
\begin{eqnarray}\label{App Eq p temporal ordering}
 &p\left(i_1, \cdots, i_k, i'_1, \cdots i'_l|o_1, 
\cdots, o_N, o'_1, \cdots, o'_N\right)\nonumber\\
\label{App Eq p temporal ordering}
= &p \left(i_1, \cdots, i_k, i'_1, 
\cdots i'_l| past \left ( \{ i_a \}_{ a=1 }^k \cup \{ i'_b \}_{ b=1 }^l  \right ) \right).
\end{eqnarray}

\end{itemize} 
Since the classical channel is represented by a conditional probability distribution,
\begin{eqnarray}
 &p\left(i_1, \cdots, i_k, i'_1, \cdots i'_l|o_1, 
\cdots, o_N, o'_1, \cdots, o'_N\right)\nonumber\\
\label{App Eq definition of p}
=& \sum _{i_{k+1},\cdots i_N,i'_{l+1} 
\cdots i_N}
p\left(i_1, \cdots, i_N, i'_1, \cdots i'_N|o_1,
\cdots, o_N, o'_1, \cdots, o'_N\right)
\end{eqnarray}
holds. Hense Eq.(\ref{App Eq p temporal ordering}) can be regarded to represent the no-signaling condition among multiple spacetime coordinates, namely, outputs of the classical channel never depend on the inputs that are not in the past of the outputs. In other words, classical communication represented by a classical channel respects causal order if and only if there exists a partial ordering among local operations and the classical channel satisfies the no-signaling condition with respect to the partial ordering.

We present the main proposition on the relationship between LOCC and causal order.
\begin{proposition}
A deterministic joint quantum operation is in LOCC, if and only if it has a decomposition in 
the form of Eq.(\ref{eq:LOCC*0}) with $p\left(i_1,  \cdots,  i'_N|o_1, \cdots, o'_N\right)$ respecting the causal order.
\end{proposition}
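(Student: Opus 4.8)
The plan is to prove the two implications separately, the forward one being essentially a matter of unpacking definitions and the converse being the substantial part.

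For the direction ``LOCC $\Rightarrow$ causal form'', I would start from the defining expression of finite-round LOCC, Eq.~\eqref{eq:LOCC2}. There the classical channel linking the outputs and inputs of the $2N$ local operations is the product of Kronecker deltas $\delta_{o_N,i'_N}\cdots\delta_{o_1,i'_1}$ together with the identifications $i'_k=o_k$ and $i_{k+1}=o'_k$. Choosing the alternating total order $\mathcal{A}^{(1)}\prec\mathcal{B}^{(1)}\prec\mathcal{A}^{(2)}\prec\cdots\prec\mathcal{B}^{(N)}$, which is a partial order satisfying $\mathcal{A}^{(k)}\prec\mathcal{A}^{(k+1)}$ and $\mathcal{B}^{(k)}\prec\mathcal{B}^{(k+1)}$, each input is a copy of a strictly earlier output, so the no-signaling condition holds trivially: the marginal on any down-closed family of inputs depends only on the outputs of the operations preceding it. Hence the LOCC map has the required causal form.

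For the converse, suppose $\mathcal{M}$ is written as in Eq.~\eqref{eq:LOCC*0} with a $p$ respecting some partial order $\prec$ satisfying the local monotonicity and the no-signaling condition. First I would fix a linear extension of $\prec$ compatible with the local orders (a topological sort), obtaining an interleaving $\chi_1\prec\chi_2\prec\cdots\prec\chi_{2N}$ of Alice's and Bob's operations. The target is then to realize $p$ by a forwarding protocol along this linear order: process the $\chi_m$'s in turn; before executing $\chi_m$ its owner receives the transcript of all classical outputs produced so far, locally (and possibly stochastically) generates the classical input $i^{(m)}$ from this transcript, executes $\chi_m$, and appends its output to the transcript, passing the transcript to the owner of $\chi_{m+1}$ --- a genuine round of classical communication when the owner changes, and a purely local step otherwise. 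Bundling each party's consecutive operations then reproduces exactly the form of Eq.~\eqref{eq:LOCC2}, so $\mathcal{M}\in\mathbf{LOCC}$ --- \emph{provided} the forwarding protocol reproduces $p$.

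The main obstacle is precisely that last proviso: one must show that a classical channel respecting causal order admits a causal factorization $p=\prod_{m=1}^{2N}q_m$ in which $q_m(i^{(m)}\mid\cdot)$ depends only on the outputs of the operations $\prec\chi_m$. I would prove this by induction on $2N$, peeling off the $\prec$-maximal operation $\chi_{2N}$: by the no-signaling condition its input $i^{(2N)}$ has a conditional distribution depending only on its past, which defines $q_{2N}$; conditioning on $i^{(2N)}$ yields a channel on the remaining $2N-1$ operations, and the bookkeeping step is to check that the no-signaling conditions descend to this reduced channel. Here some care is needed because the no-signaling condition is stated for prefixes of each party's chain rather than for arbitrary down-sets of $\prec$; the local monotonicity $\mathcal{A}^{(k)}\prec\mathcal{A}^{(k+1)}$, $\mathcal{B}^{(k)}\prec\mathcal{B}^{(k+1)}$ is exactly what makes prefixes of the linear extension restrict to prefixes of each chain, so that the inductive hypothesis applies. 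Verifying this combinatorial compatibility and that the reconstructed $\prod_m q_m$ equals $p$ is the technical heart of the argument; the quantum part (that the interleaved local operations are completely positive and that the overall map is CPTP, since $\mathcal{M}$ is assumed to be) is then automatic.
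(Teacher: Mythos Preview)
Your proposal is correct and follows essentially the same route as the paper. Both treat the forward direction as immediate from Eq.~\eqref{eq:LOCC2}, and for the converse both pick a linear extension of $\prec$ (the paper's bijection $f$), causally factorize $p$ along that total order, absorb the factors into enlarged local operations carrying the full transcript as classical input/output, and finally merge consecutive same-party operations into single instruments to recover the standard LOCC form Eq.~\eqref{eq:LOCC2}. The only cosmetic difference is in how the factorization is obtained: the paper writes it down directly via the chain rule, setting $Q(\mathbb{J}_k\mid\mathbb{O}_{k-1})=q(I_1,\ldots,I_k\mid O_1,\ldots,O_{k-1})/q(I_1,\ldots,I_{k-1}\mid O_1,\ldots,O_{k-2})$, whereas you propose to peel off the maximal element inductively; these produce the same factors. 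Your remark that one must check that prefixes in the linear extension restrict to prefixes of each party's chain (so that the no-signaling hypothesis, which is phrased for such prefixes, actually applies) is exactly the point the paper uses when passing from Eq.~(\ref{App Eq p temporal ordering}) to its reindexed form $q(I_1,\ldots,I_k\mid O_1,\ldots,O_{2N})=q(I_1,\ldots,I_k\mid O_1,\ldots,O_{k-1})$.
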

From the proposition, we immediately derive the following corollary about LOCC*.
\begin{corollary}
 A deterministic joint quantum operation is in LOCC* and not in LOCC, if and only if 
 it has a decomposition in the form of Eq.(\ref{eq:LOCC*0}) with $p\left(i_1, \cdots,  i'_N|o_1, \cdots, o'_N\right)$ and for all such decompositions,  
$p\left(i_1,  \cdots,  i'_N|o_1,  \cdots, o'_N\right)$ does not respect causal order.
\end{corollary}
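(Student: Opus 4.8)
The plan is to derive the corollary directly from the proposition together with the definition of LOCC*, and to concentrate the real work on the proposition. For the corollary: by definition a deterministic joint quantum operation $\mathcal{M}$ lies in LOCC* exactly when it is a CPTP map admitting a decomposition of the form~\eqref{eq:LOCC*0} for \emph{some} conditional probability distribution $p(i_1,\cdots,i'_N|o_1,\cdots,o'_N)$, whereas the proposition says $\mathcal{M}$ lies in LOCC exactly when it admits such a decomposition with $p$ respecting causal order (and LOCC $\subseteq$ LOCC* is then immediate). Hence $\mathcal{M}$ is in LOCC* but not in LOCC if and only if $\mathcal{M}$ admits a decomposition of the form~\eqref{eq:LOCC*0} and, by the contrapositive of the proposition, \emph{every} such decomposition has a $p$ that fails to respect causal order. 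So the corollary is a purely logical consequence of the proposition and the definition of LOCC*, requiring no separate argument.

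For the proposition I would prove the two implications separately. The ``only if'' direction is the easy one: a bipartite two-way LOCC map is, by its very definition~\eqref{eq:LOCC2}, already of the form~\eqref{eq:LOCC*0} with $p(i_1,\cdots,i'_N|o_1,\cdots,o'_N)=\delta_{o_1,i'_1}\,\delta_{o'_1,i_2}\,\delta_{o_2,i'_2}\cdots$, i.e.\ the channel that perfectly forwards Alice's output $o_1$ to Bob as $i'_1$, Bob's output $o'_1$ to Alice as $i_2$, and so on. I would exhibit the alternating strict partial order $\mathcal{A}^{(1)}\prec\mathcal{B}^{(1)}\prec\mathcal{A}^{(2)}\prec\mathcal{B}^{(2)}\prec\cdots$, note that it respects the local temporal orderings, and check that with respect to it the no-signaling condition of Appendix B.3 (Eq.~(\ref{App Eq p temporal ordering})) holds, since each input equals exactly one output lying strictly in its past, so every marginal of $p$ over an initial segment of inputs depends only on the outputs in the past of that segment. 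If the given LOCC protocol is not written in strictly alternating form, one first pads it with trivial identity operations, which does not change the realized CPTP map.

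The substantive direction is ``if''. Suppose $\mathcal{M}$ has a decomposition~\eqref{eq:LOCC*0} whose $p$ respects causal order via a strict partial order $\prec$. I would first extend $\prec$ to a total order $\chi_1\prec\chi_2\prec\cdots\prec\chi_{2N}$ of the $2N$ local operations; since $\prec$ already contains the local temporal orderings, so does the extension. Because the past (in the total order) of an initial segment contains the past computed in $\prec$, the no-signaling condition still holds for the total order, so the marginal $p(\mathrm{in}(\chi_1),\dots,\mathrm{in}(\chi_k)\mid\mathbf{o})$ depends only on $\mathrm{out}(\chi_1),\dots,\mathrm{out}(\chi_{k-1})$. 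Applying the chain rule of conditional probability then factorizes $p$ as a product $\prod_k q_k$, where $q_k$ is a stochastic map producing $\mathrm{in}(\chi_k)$ from the classical data $\bigl(\mathrm{in}(\chi_1),\mathrm{out}(\chi_1),\dots,\mathrm{in}(\chi_{k-1}),\mathrm{out}(\chi_{k-1})\bigr)$ of the operations preceding $\chi_k$. From this factorization I would build an explicit LOCC protocol: the parties carry along a classical register accumulating all inputs and outputs generated so far and forward it by ordinary classical communication in the order $\chi_1,\chi_2,\dots$; at step $k$ the party owning $\chi_k$ draws $\mathrm{in}(\chi_k)$ from $q_k$ applied to the accumulated register, runs the corresponding local quantum instrument, appends its output, and passes the register on. Averaging over all classical data reproduces~\eqref{eq:LOCC*0}, hence $\mathcal{M}$, and after interleaving with trivial identity operations the protocol is in the canonical alternating LOCC normal form~\eqref{eq:LOCC2}.

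The main obstacle is precisely this ``if'' direction. One must be careful that each chain-rule factor $q_k$ depends only on classical data genuinely available to the party performing $\chi_k$ -- this is exactly where no-signaling with respect to the total order is used, and it is the conceptual heart of the argument -- and one must carry out the bookkeeping of linearizing $\prec$ while keeping the local temporal orders, of the register-forwarding construction, and of recasting an arbitrary totally ordered protocol into the standard alternating LOCC form without altering the induced CPTP map. The ``only if'' direction and the corollary are then routine.
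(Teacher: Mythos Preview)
Your proposal is correct and follows essentially the same route as the paper. The paper likewise notes that the corollary is immediate from the proposition and the definition of LOCC*, that the ``only if'' direction of the proposition is trivial from the LOCC normal form, and for the ``if'' direction it extends the partial order to a total order, uses the no-signaling condition to show each chain-rule factor of $p$ depends only on past data, packages the accumulated classical history into enlarged inputs/outputs of new local instruments, and finally merges consecutive same-party operations to reach the alternating LOCC form --- exactly the steps you outline.
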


Since the proposition is a necessary and sufficient condition for LOCC, it gives a new characterization of LOCC in terms of $p\left(i_1, \cdots,  i'_N|o_1, \cdots,  o'_N\right)$ and similarly, the corollary gives a new characterization of a deterministic joint quantum operation in LOCC* and not in LOCC, which is nothing but a non-LOCC separable quantum operation as we have shown in Results,  in terms of causal order.  

\vspace{0.5cm}
\noindent{{\bf Proof}}

We provide a proof of the main proposition in the remaining part of this section.
Since the conventional definition of LOCC in Eq.(\ref{eq:LOCC2}) using the totally ordered local operations immediately gives a decomposition with $p\left(i_1, \cdots,  i'_N|o_1, \cdots, o'_N\right)$ respecting causal order, the ``only if'' part is trivial.  Hence we concentrate on proving the ``if'' part of the proposition. 

This proof consists of two parts.   In Part A, we show that for given local operations $\{\mathcal{A}^{(k)}_{o_k|i_k}\}_{k=1}^N \cup \{\mathcal{B}^{(k)}_{o_k|i_k}\}_{k=1}^N$ and classical channel $p\left(i_1, \cdots,  i'_N|o_1, \cdots, o'_N\right)$ representing a deterministic joint quantum operation by Eq.~(\ref{eq:LOCC*0}), it is always possible to construct sets of {\it totally} ordered local operations $\left\{\mathcal{C}^{(k)}_{\mathbb{O}_{f(k,0)}|\mathbb{J}_{f(k,0)}}\right\}_{k=1}^N$ and $\left\{\mathcal{D}^{(k)}_{\mathbb{O}_{f(k,1)}|\mathbb{J}_{f(k,1)}}\right\}_{k=1}^N$, where an input $\mathbb{J}_l$  of a local operation depends only on an output $\mathbb{O}_{l-1}$ of the previous local operation, representing the deterministic joint quantum operation by choosing an appropriate classical channel.     In Part B, we show that another pair of sets of totally ordered local operations $\left\{\mathcal{C'}^{(k)}_{\mathbb{O}_{f(k,0)}|\mathbb{J}_{f(k,0)}}\right\}_{k=1}^N$ and $\left\{\mathcal{D'}^{(k)}_{\mathbb{O}_{f(k,1)}|\mathbb{J}_{f(k,1)}}\right\}_{k=1}^N$ from the local operations and the classical channel constructed in Part A.  The standard form of LOCC given by Eq.~(\ref{eq:LOCC2}) is derived by using  this pair of sets of totally ordered local operations.

\subsubsection*{Part A}

Since there always exists a total ordering which preserves the structure of a given partial ordering on a finite set,
we define a map $f$ satisfying 
\begin{itemize}
 \item $f$ is a bijection from $\{1,\cdots, N\} \times \{0,1\}$ 
to $\{1,\cdots, 2N \}$.
\item $f(k,b)<f(k+1,b)$ for all $k$ and $b$. 
\item {  $\mathcal{A}^{(k)}_{o_k|i_k} \prec \mathcal{B}^{(l)}_{o'_l|i'_l} 
\Rightarrow f(k,0) < f(l,1)$, and $\mathcal{B}^{(k)}_{o'_k|i'_k} \prec \mathcal{A}^{(l)}_{o_l|i_l} 
       \Rightarrow f(k,1) < f(l,0)$} for all $k$ and $l$.
\end{itemize}
By means of $f$, we define a set of classical inputs $\{I_l\}_{l=1}^{2N}$ and a set of outputs $\{O_l\}_{l=1}^{2N}$ of which elements are given by
\begin{equation}\label{app eq def I O}
I_{f(k,0)}=i_k, \ I_{f(k,1)}=i'_k, \  O_{f(k,0)}=o_k, \  
 \mbox{and } 
       O_{f(k,1)}=o'_k 
\end{equation}
for all $k$. We further define a conditional probability distribution $q(I_1,\cdots 
I_{2N}|O_1,\cdots O_{2N})$ by 
\begin{align}\label{app eq def q}
 q(I_1,\cdots, 
I_{2N}|O_1,\cdots, O_{2N}):=
 p(i_1,\cdots, i_N, i'_1, \cdots i'_N|o_1,\cdots, o_N, o'_1, \cdots, o'_N),
\end{align}
where $I_l$ and $O_l$ are related to $i_k$, $i'_k$, $o_k$, and $o'_k$ by 
Eq. (\ref{app eq def I O}). Thus $q( \cdots | \cdots)$ is another representation of $p(\cdots |\cdots )$ in terms of
the newly defined classical inputs and outputs $\{I_l \}_{l=1}^{2N}$ and $\{O_l\}_{l=1}^{2N}$. Using $q(I_1,\cdots  I_{2N}|O_1,\cdots O_{2N})$,  the condition for $p\left(i_1, \cdots,  i'_N|o_1, \cdots, o'_N\right)$  to respect causal order given by Eq.(\ref{App Eq p temporal ordering}) for all $k$ and $l$ is expressed by
\begin{equation}\label{app eq q temporal order}
 q(I_1,\cdots, 
I_{k}|O_1,\cdots, O_{2N}) = q(I_1,\cdots, 
I_{k}|O_1,\cdots, O_{k-1})
\end{equation}
for all $k$. 

Next, we define sets of all possible values of $I_k$ and $O_k$ denoted by $\mathcal{I}_k$ and $\mathcal{O}_k$, respectively,  for $k=1, \cdots 2N$.  We represent variables whose variations are over $\prod_{j=1}^{k}\mathcal{I}_j \times \prod_{j=1}^{k-1}\mathcal{O}_j $ and $\prod_{j=1}^{k}\mathcal{I}_j \times \prod_{j=1}^{k}\mathcal{O}_j $ 
by $\mathbb{J}_k$ and $\mathbb{O}_k$, respectively.    
$\{ \mathbb{J}_k\}_{k=1}^{2N}$ and  $\{ \mathbb{O}_k \}_{k=1}^{2N}$ are 
classical inputs and outputs of new local operations which will be introduced latter.  We define a function $Q(\mathbb{J}_k|\mathbb{O}_{k-1})$ representing a classical channel linking between $\mathbb{J}_k$ and $\mathbb{O}_{k-1}$ as
\begin{align}
 Q(\mathbb{J}_k|\mathbb{O}_{k-1}) := &
 \frac{q(I_1,\cdots,I_k|O_1,\cdots,O_{k-1})}{q(I_1,\cdots,I_{k-1}|O_1,\cdots, O_{k-2})} 
 \cdot \delta \left(\mathbb{J}_k[1,k-1],  
 \mathbb{O}_{k-1}[1,k-1]\right) \nonumber \\
 &\cdot \delta \left(\mathbb{J}_k[k+1,2k-1], 
 \mathbb{O}_{k-1}[k,2k-2]\right), \quad (2 \le k \le 2N) \label{app eq Q 1}\\
  Q(\mathbb{J}_1) :=& q(I_1), \label{app eq Q 2}
\end{align}
where $\delta(x,y)$ is the Kronecker delta, $\mathbb{J}_k=(I_1,\cdots I_{k}, O_1, \cdots, O_{k-1})$ for $2 
\le k \le 2N$, $\mathbb{J}_1=I_1$, and $\mathbb{J}_k[l,m]$ is a vector 
consisting of the partial entries of $\mathbb{J}_k$, from the $l$-th entry through the $m$-th entry for $l<m$.

It is easy to check that $Q(\mathbb{J}_k|\mathbb{O}_{k-1})$ satisfies
\begin{align}
&\sum _{\mathbb{J}_1,\cdots, \mathbb{J}_{2N}} Q(\mathbb{J}_1) \cdot \delta(\mathbb{J}_1,\mathbb{O}_1[1])\prod 
 _{k=2}^{2N}Q(\mathbb{J}_k|\mathbb{O}_{k-1})\cdot \delta 
 (\mathbb{J}_k,\mathbb{O}_k[1,2k-1])  \nonumber \\
=& q(I_1^{(2N)},\cdots I_{2N}^{(2N)}|O_1^{(2N)}, \cdots, O_{2N}^{(2N)}) 
 \cdot 
 \nonumber \\
& \qquad  \prod _{k=2}^{2N}\delta \left( (I_1^{(k)},\cdots, I_{k-1}^{(k)}, 
 O_1^{(k)},\cdots, 
 O_{k-1}^{(k)}),(I_1^{(k-1)},\cdots, I_{k-1}^{(k-1)}, O_1^{(k-1)},\cdots, 
 O_{k-1}^{(k-1)}) \right ),
\label{eqn:Qrelation}
\end{align}
where $I_l^{(k)}$ and $O_l^{(k)}$ are given by $\mathbb{O}_k=(I_1^{(k)},\cdots, 
I_{k}^{(k)}, O_1^{(k)}, \cdots O_{k}^{(k)})$, and $\mathbb{O}_k[l]$ denotes the $l$-th entry of 
$\mathbb{O}_k$.  Eq.~(\ref{eqn:Qrelation}) indicates that $Q(\mathbb{J}_1) \prod 
 _{k=2}^{2N}Q(\mathbb{J}_k|\mathbb{O}_{k-1})$ and $q(I_1^{(2N)},\cdots I_{2N}^{(2N)}|O_1^{(2N)}, \cdots, O_{2N}^{(2N)})$ represent the same classical channel if $\mathbb{J}_k=\mathbb{O}_k[1,k-1]$, $\mathbb{O}_k[1,k-1]=\mathbb{O}_{k-1}[1,k-1]$, and $\mathbb{O}_k[k+1,2k]=\mathbb{O}_{k-1}[k,2k-2]$ are satisfied.

We define a local operation performed in Alice's laboratory for $\mathbb{J}_{f(k,0)}$ and $\mathbb{O}_{f(k,0)}$
denoted by
$\mathcal{C}^{(k)}_{\mathbb{O}_{f(k,0)}|\mathbb{J}_{f(k,0)}}$ as 
\begin{align}
 \mathcal{C}^{(k)}_{\mathbb{O}_{f(k,0)}|\mathbb{J}_{f(k,0)}} :=\delta \left (\mathbb{J}_{f(k,0)}, 
 \mathbb{O}_{f(k,0)}[1,2f(k,0)-1] \right ) \cdot 
 \mathcal{A}^{(k)}_{\mathbb{O}_{f(k,0)}[2f(k,0)]|\mathbb{J}_{f(k,0)}[f(k,0)]}, \label{app eq def mathcal C 1}
\end{align}
where $\mathbb{O}_{1}[1,1]:=\mathbb{O}_{1}[1]$ for $f(k,0)=1$ and  $\mathbb{J}_{k}[l]$ represents the $l$-th entry  of $\mathbb{J}_{k}$.
Similarly, we define a local operation performed in Bob's laboratory denoted by
$\mathcal{D}^{(k)}_{\mathbb{O}_{f(k,1)}|\mathbb{J}_{f(k,1)}}$ as 
\begin{align}
& \mathcal{D}^{(k)}_{\mathbb{O}_{f(k,1)}|\mathbb{J}_{f(k,1)}} :=\delta \left (\mathbb{J}_{f(k,1)}, 
 \mathbb{O}_{f(k,1)}[1,2f(k,1)-1] \right ) \cdot 
 \mathcal{B}^{(k)}_{\mathbb{O}_{f(k,1)}[2f(k,1)]|\mathbb{J}_{f(k,1)}[f(k,1)]}. \label{app eq def mathcal D 1}
\end{align}

By linking classical outputs and inputs of local operations
$\mathcal{C}^{(k)}_{\mathbb{O}_{f(k,0)}|\mathbb{J}_{f(k,0)}}$ and 
$\mathcal{D}^{(k)}_{\mathbb{O}_{f(k,1)}|\mathbb{J}_{f(k,1)}}$ respecting
the total ordering introduced by  the function $f$, we derive a representation of the deterministic joint quantum operation given by Eq.~(\ref{eq:LOCC*0}) as follows:
\begin{align}
& \sum_{\mathbb{J}_1,\cdots ,\mathbb{J}_{2N},\mathbb{O}_1,\cdots , 
 \mathbb{O}_{2N}}Q(\mathbb{J}_1)\cdot \left(\prod 
 _{k=2}^{2N}Q(\mathbb{J}_k|\mathbb{O}_{k-1}) \right) 
 \mathcal{C}^{(N)}_{\mathbb{O}_{f(N,0)}|\mathbb{J}_{f(N,0)}}\circ \cdots \circ
\mathcal{C}^{(1)}_{\mathbb{O}_{f(1,0)}|\mathbb{J}_{f(1,0)}} \nonumber\\
&\otimes \mathcal{D}^{(N)}_{\mathbb{O}_{f(N,1)}|\mathbb{J}_{f(N,1)}}\circ \cdots \circ
\mathcal{D}^{(1)}_{\mathbb{O}_{f(1,1)}|\mathbb{J}_{f(1,1)}} \nonumber \\
=&\sum_{\mathbb{J}_1,\cdots ,\mathbb{J}_{2N},\mathbb{O}_1,\cdots , 
 \mathbb{O}_{2N}}Q(\mathbb{J}_1) \cdot \delta\left( 
 \mathbb{J}_1|\mathbb{O}_1[1]\right)  \cdot \Big(\prod 
 _{k=2}^{2N}Q(\mathbb{J}_k|\mathbb{O}_{k-1})\cdot  \delta\left(\mathbb{J}_{k},\mathbb{O}_{k}[1,2k-1]\right)\Big)  \nonumber \\
& \qquad \mathcal{A}^{(N)}_{\mathbb{O}_{f(N,0)}[2f(N,0)]|\mathbb{J}_{f(N,0)}[f(N,0)]}\circ \cdots \circ
\mathcal{A}^{(1)}_{\mathbb{O}_{f(1,0)}[2f(1,0)]|\mathbb{J}_{f(1,0)}[f(1,0)]} \nonumber\\
&\qquad \otimes \mathcal{B}^{(N)}_{\mathbb{O}_{f(N,1)}[2f(N,1)]|\mathbb{J}_{f(N,1)}[f(N,1)]}\circ \cdots \circ
\mathcal{B}^{(1)}_{\mathbb{O}_{f(1,1)}[2f(1,1)]|\mathbb{J}_{f(1,1)}[f(1,1)]} 
\nonumber \\
=&\sum_{\mathbb{O}_1,\cdots , 
 \mathbb{O}_{2N}}
q(I_1^{(2N)},\cdots I_{2N}^{(2N)}|O_1^{(2N)}, \cdots, O_{2N}^{(2N)}) 
 \cdot 
 \nonumber \\
& \qquad  \prod _{k=2}^{2N}\delta \left( (I_1^{(k)},\cdots, I_{k-1}^{(k)}, 
 O_1^{(k)},\cdots, 
 O_{k-1}^{(k)}),(I_1^{(k-1)},\cdots, I_{k-1}^{(k-1)}, O_1^{(k-1)},\cdots, 
 O_{k-1}^{(k-1)}) \right ), 
 \nonumber \\
& \qquad \mathcal{A}^{(N)}_{\mathbb{O}_{f(N,0)}[2f(N,0)]|\mathbb{O}_{f(N,0)}[f(N,0)]}\circ \cdots \circ
\mathcal{A}^{(1)}_{\mathbb{O}_{f(1,0)}[2f(1,0)]|\mathbb{O}_{f(1,0)}[f(1,0)]} \nonumber\\
&\qquad \otimes \mathcal{B}^{(N)}_{\mathbb{O}_{f(N,1)}[2f(N,1)]|\mathbb{O}_{f(N,1)}[f(N,1)]}\circ \cdots \circ
\mathcal{B}^{(1)}_{\mathbb{O}_{f(1,1)}[2f(1,1)]|\mathbb{O}_{f(1,1)}[f(1,1)]} 
\nonumber\\
=&\sum_{\mathbb{O}_{2N}}
q(I_1^{(2N)},\cdots I_{2N}^{(2N)}|O_1^{(2N)}, \cdots, O_{2N}^{(2N)}) 
 \cdot 
 \nonumber \\
& \qquad \mathcal{A}^{(N)}_{\mathbb{O}_{2N}[N+f(N,0)]|\mathbb{O}_{2N}[f(N,0)]}\circ \cdots \circ
\mathcal{A}^{(1)}_{1,\mathbb{O}_{2N}[N+f(1,0)]|\mathbb{O}_{2N}[f(1,0)]} \nonumber\\
&\qquad \otimes \mathcal{B}^{(N)}_{N,\mathbb{O}_{2N}[N+f(N,1)]|\mathbb{O}_{2N}[f(N,1)]}\circ \cdots \circ
\mathcal{B}^{(1)}_{\mathbb{O}_{2N}[N+f(1,1)]|\mathbb{O}_{2N}[f(1,1)]} 
\nonumber \\
=& \sum_{i_1,\cdots,i_N,i'_1,\cdots, 
 i'_N,o_1,\cdots,o_N,o'_1,\cdots,o'_N}p\left(i_1,\cdots,i_N,i'_1,\cdots, 
 i'_N|o_1,\cdots,o_N,o'_1,\cdots,o'_N\right) \nonumber \\
& \qquad \mathcal{A}^{(N)}_{o_N|i_N}\circ \cdots \circ
\mathcal{A}^{(1)}_{o_1|i_1} \otimes \mathcal{B}^{(N)}_{o'_N|i'_N}\circ \cdots \circ
\mathcal{B}^{(1)}_{o'_1|i'_1}, \label{app eq longest}
\end{align}
 where we used Eqs.~(\ref{app eq def mathcal C 1}) and (\ref{app eq def 
 mathcal D 1}) in the first equality, Eqs.~(\ref{app eq Q 1}) and 
 (\ref{app eq Q 2}) in the second equality, Eq.~(\ref{app eq def q}) in 
 the forth equality, respectively.
Note that in the first line of Eq.~(\ref{app eq longest}), 
classical input $\mathbb{J}_k$ of a local operation only depends on classical 
output $\mathbb{O}_{k-1}$  of the previous local operation in $Q(\mathbb{J}_k|\mathbb{O}_{k-1})$.

\subsection*{Part B}
In this part, we define new local operations performed in Alice's and Bob's laboratories denoted by
$\mathcal{C'}^{(k)}_{\mathbb{O}_{f(k,0)}|\mathbb{O}_{f(k,0)-1}}$ and 
$\mathcal{D'}^{(k)}_{\mathbb{O}_{f(k,1)}|\mathbb{O}_{f(k,1)-1}}$, respectively,
and show that the first line of Eq.~(\ref{app eq longest}) can be transformed to
the standard form of LOCC using these local operations. 
We define
$\mathcal{C'}^{(k)}_{\mathbb{O}_{f(k,0)}|\mathbb{O}_{f(k,0)-1}}$ and 
$\mathcal{D'}^{(k)}_{\mathbb{O}_{f(k,1)}|\mathbb{O}_{f(k,1)-1}}$ 
as 
\begin{align}
 \mathcal{C'}^{(k)}_{\mathbb{O}_{f(k,0)}|\mathbb{O}_{f(k,0)-1}}:= 
 \sum_{\mathbb{J}_{f(k,0)}}Q(\mathbb{J}_{f(k,0)}|\mathbb{O}_{f(k,0)-1})
 \mathcal{C}^{(k)}_{\mathbb{O}_{f(k,0)}|\mathbb{J}_{f(k,0)}}, \nonumber \\
 \mathcal{D'}^{(k)}_{\mathbb{O}_{f(k,1)}|\mathbb{O}_{f(k,1)-1}}:= 
 \sum_{\mathbb{J}_{f(k,1)}}Q(\mathbb{J}_{f(k,1)}|\mathbb{O}_{f(k,1)-1})
 \mathcal{C}^{(k)}_{\mathbb{O}_{f(k,1)}|\mathbb{J}_{f(k,1)}}, \label{app eq def mathcal C' D'}
\end{align}
where we have used $Q(\mathbb{J}_{1}|\mathbb{O}_{0}):=Q(\mathbb{J}_{1})$.  By means of Eqs.~(\ref{app eq Q 1}) and (\ref{app eq Q 2}), we obtain a relation
\begin{align}
 &  \sum_{\mathbb{O}_{f(k,0)}}\mathcal{C'}^{(k)}_{\mathbb{O}_{f(k,0)}|\mathbb{O}_{f(k,0)-1}} = \sum_{I_{f(k,0)},O_{f(k,0)}}\frac{q\left(I_1,\cdots , I_{f(k,0)}|O_1, 
 \cdots, O_{f(k,0)-1}\right)}{q\left(I_1,\cdots , I_{f(k,0)-1}|O_1, 
 \cdots, 
 O_{f(k,0)-2}\right)}\mathcal{A}^{(k)}_{O_{f(k,0)}|\left(I_{f(k,0)}\right)},\label{app  eq tp}
\end{align}
where $I_{f(N,0)}$  and $O_{f(N,0)}$ are the $f(N,0)$-th 
and $2f(N,0)$-th entries of $\mathbb{O}_{f(N,0)}$, respectively, and 
$\{I_l \}_{l=1}^{f(k,0)-1}$ and $\{O_l \}_{l=1}^{f(k,0)-1}$ are given 
by
$$\mathbb{O}_{f(k,0)-1}=\left(I_1, \cdots,  I_{f(k,0)-1}|O_1, \cdots, O_{f(k,0)-1} \right).$$
 Eq.~(\ref{app eq q temporal order}) represents the property of $q(I_1,\cdots, I_{2N}|O_1,\cdots, O_{2N})$ respecting causal order and Eq.~(\ref{app eq tp}) guarantees that  
$\{ 
\mathcal{C'}^{(k)}_{\mathbb{O}_{f(k,0)}|\mathbb{O}_{f(k,0)-1}}\}_{\mathbb{O}_{f(k,0)}}
$ is indeed a quantum instrument for all $k$. Similarly,  $\{ 
\mathcal{D'}^{(k)}_{\mathbb{O}_{f(k,1)}|\mathbb{O}_{f(k,1)-1}}\}_{\mathbb{O}_{f(k,1)}}
$ is also shown to be a quantum instrument for all $k$. 

Now the deterministic joint quantum operation given in the form of Eq.~(\ref{eq:LOCC*0}) can be represented in terms of quantum instruments $\{ 
\mathcal{C'}^{(k)}_{\mathbb{O}_{f(k,0)}|\mathbb{O}_{f(k,0)-1}}\}_{\mathbb{O}_{f(k,0)}}
$ and $\{ 
\mathcal{D'}^{(k)}_{\mathbb{O}_{f(k,1)}|\mathbb{O}_{f(k,1)-1}}\}_{\mathbb{O}_{f(k,1)}}
$ as
\begin{align}
 & \sum_{i_1,\cdots,i_N,i'_1,\cdots, 
 i'_N,o_1,\cdots,o_N,o'_1,\cdots,o'_N}p\left(i_1,\cdots,i_N,i'_1,\cdots, 
 i'_N|o_1,\cdots,o_N,o'_1,\cdots,o'_N\right) \nonumber \\
& \qquad \mathcal{A}^{(N)}_{o_N|i_N}\circ \cdots \circ
\mathcal{A}^{(1)}_{o_1|i_1} \otimes \mathcal{B}^{(N)}_{o'_N|i'_N}\circ \cdots \circ
\mathcal{B}^{(1)}_{o'_1|i'_1}, \nonumber \\
=&
\sum_{\mathbb{O}_1,\cdots , \mathbb{O}_{2N}}
 \mathcal{C'}^{(N)}_{\mathbb{O}_{f(N,0)}|\mathbb{O}_{f(N,0)-1}}\circ \cdots \circ
\mathcal{C'}^{(1)}_{\mathbb{O}_{f(1,0)}|\mathbb{O}_{f(1,0)-1}} \otimes \mathcal{D'}^{(N)}_{\mathbb{O}_{f(N,1)}|\mathbb{O}_{f(N,1)-1}}\circ \cdots \circ
\mathcal{D'}^{(1)}_{\mathbb{O}_{f(1,1)}|\mathbb{O}_{f(1,1)-1}}. 
\label{app eq final}
\end{align}
by using Eq.~(\ref{app eq def mathcal C' D'}).  
The right hand side of Eq.~(\ref{app eq final}) is almost 
in the standard form of LOCC.  The only case of Eq.~(\ref{app eq final}) not fitting in LOCC is the case that Alice (Bob) 
successively performs two local operations.   This case is absorbed in LOCC in the following manner.  Suppose Alice 
successively performs two local operations $\mathcal{C'}^{(k-1)}_{\mathbb{O}_{f(k-1,0)}|\mathbb{O}_{f(k-1,0)-1}}$ and $\mathcal{C'}^{(k)}_{\mathbb{O}_{f(k,0)}|\mathbb {O}_{f(k,0)-1}}$ where $f(k,0)=f(k-1,0)+1$.  Since 
\begin{equation}
\left \{\sum_{\mathbb{O}_{f(k-1,0)}}\mathcal{C'}^{(k)}_{\mathbb{O}_{f(k,0)}|\mathbb
{O}_{f(k,0)-1}}\circ 
\mathcal{C'}^{(k-1)}_{\mathbb{O}_{f(k-1,0)}|\mathbb{O}_{f(k-1,0)-1}} \right 
\}_{\mathbb{O}_{f(k,0)}} \nonumber 
\end{equation}
is a quantum instrument representing a local operation performed in Alice's laboratory,  
we regards these successive local operations as a single local 
operation performed in Alice's laboratory.  Similarly, we combine successive local operations performed in Bob's laboratory as a single local operation.  By repeating this procedure, we can rewrite in the standard form of LOCC, where a sequence of local operations are performed alternatively in Alice's and Bob's laboratories.  Hence, we can conclude that Eq.(\ref{app eq final}) reduces to a standard decomposition of LOCC given in the form of Eq.(\ref{eq:LOCC2}).
Therefore,  the ``{\it if}'' part of the proposition is proven.

\section{Formal mathematical formulation of LOCC*}
We denote the Hilbert spaces  of the systems of Alice's quantum input and Bob's quantum input as $\mathcal{H}_X$ and $\mathcal{H}_Y$, respectively, and those of Alice's  quantum output and Bob's quantum output as $\mathcal{H}_A$ and $\mathcal{H}_B$, respectively.   Since LOCC* is defined as a set of CPTP maps $\mathcal{M}: \mathbf{L}(\mathcal{H}_X \otimes \mathcal{H}_Y) \rightarrow \mathbf{L}(\mathcal{H}_A \otimes \mathcal{H}_B) $ satisfying Eq.\eqref{eq:LOCC*},  LOCC*  is represented by a set of CPTP maps in the CJ representation $M\in \mathbf{L}(\mathcal{H}_X\otimes\mathcal{H}_A\otimes\mathcal{H}_Y\otimes \mathcal{H}_B)$ as 
\begin{eqnarray}
M=\sum_{i_A,i_B,o_A,o_B}p(i_A,i_B|o_A,o_B) A_{o_A|i_A}\otimes B_{o_B|i_B},
\label{eq:LOCC*CJ}
\end{eqnarray}
where $i_A$ and $i_B$ are classical inputs of Alice and Bob, respectively, and $o_A$ and $o_B$ are classical outputs of Alice and Bob respectively, and $p(i_A,i_B|o_A,o_B)$ is a conditional probability distribution, $A_{o_A|i_A}\in \mathbf{L}(\mathcal{H}_X\otimes\mathcal{H}_A)$ is the CJ operator of Alice's local operation and $B_{o_B|i_B}\in \mathbf{L}(\mathcal{H}_Y\otimes\mathcal{H}_B)$ is the CJ operator of the Bob's local operation. Note that these CJ operators must satisfy
\begin{eqnarray}
\forall i_A,\forall o_A, A_{o_A|i_A}&\geq& 0,\\
\forall i_A, \mathrm{tr}_{A} \left[\sum_{o_A}A_{o_A|i_A}\right]&=& \mathbb{I}_{X},\\
\forall i_B,\forall o_B, B_{o_B|i_B}&\geq& 0,\\
\forall i_B, \mathrm{tr}_{B}\left[\sum_{o_B}B_{o_B|i_B}\right]&=& \mathbb{I}_{Y}.
\end{eqnarray}

By taking a special conditional probability distribution given by $p(i_A,i_B|o_A,o_B)=\delta_{i_A,o_B}\delta_{i_B,o_A}$ in Eq.\eqref{eq:LOCC*CJ}, the set of CPTP maps $M\in \mathbf{L}(\mathcal{H}_X\otimes \mathcal{H}_A\otimes\mathcal{H}_Y\otimes\mathcal{H}_B)$ satisfying
\begin{eqnarray}
M=\sum_{a,b} A_{a|b}\otimes B_{b|a}
\label{eq:simpleformLOCC*CJ}
\end{eqnarray}
is easily shown to be in a subset of LOCC*.   We shall show the converse that any element of LOCC* can be decomposed into this form.
For a given conditional probability distribution $p(i_A,i_B|o_A,o_B)$, let local operations of Alice and Bob be 
\begin{eqnarray}
A_{i_B,x|o_A,o_B}&:=&\sum_{i_A} p(i_A,i_B|o_A,o_B)A_{x|i_A}\\
B_{o_A,o_B|i_B,x}&:=&\delta_{o_A}^{(x)}B_{o_B|i_B},
\end{eqnarray}
by introducing a new  index $x$.  Since these newly introduced operators $A_{i_B,x|o_A,o_B}$ and $B_{o_A,o_B|i_B,x}$ satisfy
\begin{eqnarray}
\forall o_A,\forall o_B,\,\,\mathrm{tr}_A\left[\sum_{i_B,x}A_{i_B,x|o_A,o_B}\right]=\mathbb{I}_X,\,\,\,\,\,\,\,\,\,\,\,\,
\forall i_B,\forall x,\,\,\mathrm{tr}_B\left[\sum_{o_A,o_B}B_{o_A,o_B|i_B,x}\right]=\mathbb{I}_Y,\nonumber\\
\end{eqnarray}
they are elements of valid local operations.
By using these local operations, we have
\begin{eqnarray}
\sum_{i_B,x,o_A,o_B}A_{i_B,x|o_A,o_B}\otimes B_{o_A,o_B|i_B,x}&=&\sum_{i_A,i_B,o_A,o_B}p(i_A,i_B|o_A,o_B)A_{o_A|i_A}\otimes B_{o_B|i_B}.\nonumber\\
\end{eqnarray}
Further introducing new classical indices  $a:=(i_B, x)$ and $b:=(o_A, o_B)$, we obtain the form of Eq.\eqref{eq:simpleformLOCC*CJ} where
$A_{a|b}\in \mathbf{L}(\mathcal{H}_X\otimes\mathcal{H}_A)$ and $B_{b|a}\in \mathbf{L}(\mathcal{H}_Y\otimes\mathcal{H}_B)$  satisfy
\begin{eqnarray}
\forall b,\forall a,A_{a|b}&\geq& 0\\
\forall a,\mathrm{tr}_A\left[\sum_a A_{a|b}\right]&=&\mathbb{I}_X\\
\forall a,\forall b, B_{b|a}&\geq& 0\\
\forall a, \mathrm{tr}_B\left[\sum_b B_{b|a}\right]&=&\mathbb{I}_Y.
\end{eqnarray}

\section{Equivalence of LOCC* and SEP}
 For a deterministic joint quantum operation $\mathcal{M}: \mathbf{L}(\mathcal{H}_X \otimes \mathcal{H}_Y) \rightarrow \mathbf{L}(\mathcal{H}_A \otimes \mathcal{H}_B) $ in SEP  defined by Eq.\eqref{eq:SEP}, the corresponding CJ  operator $M\in \mathbf{L}(\mathcal{H}_X\otimes \mathcal{H}_A\otimes\mathcal{H}_Y\otimes\mathcal{H}_B)$ is given by
\begin{equation}
M=\sum_{k=1}^L E_k^A\otimes E_k^B,
\label{eq:defSEP}
\end{equation}
where $E_k^A\in \mathbf{L}(\mathcal{H}_X\otimes \mathcal{H}_A)$ and $E_k^B\in \mathbf{L}(\mathcal{H}_Y\otimes \mathcal{H}_B)$ satisfy
\begin{eqnarray}
E_k^A\geq0,\,\,\,\,\,\,\,\,\,\,\,\, E_k^B \geq0,
\label{eq:SEPop}
\end{eqnarray}
for all $k$.    It is easy to see that LOCC* represented by Eq.\eqref{eq:simpleformLOCC*CJ} is a subset of SEP.   We shall show the converse that it is possible to construct local operators $A_{a|b}\in \mathbf{L}(\mathcal{H}_X\otimes\mathcal{H}_A)$ and $B_{b|a}\in \mathbf{L}(\mathcal{H}_Y\otimes\mathcal{H}_B)$ LOCC* for any given SEP element.

First we show that we can restrict SEP operators to
\begin{eqnarray}
\mathrm{tr}_A\left[E_k^A\right]\leq \mathbb{I}_X,\,\,\,\,\,\,\,\,\,\,\,\, \mathrm{tr}_B\left[E_k^B\right]\leq\mathbb{I}_Y,
\label{eq:2}
\end{eqnarray}
for all $k$ in addition to Eq.\eqref{eq:SEPop} without loss of generality.
Since the CJ operator $M$ defined by Eq. \eqref{eq:defSEP} is TP and SEP operators are positive, we obtain
\begin{equation}
\forall k,\mathrm{tr}_{AB}\left[E_k^A\otimes E_k^B\right]=\mathrm{tr}_A\left[E_k^A\right]\otimes \mathrm{tr}_B\left[E_k^B\right]\leq \mathbb{I}_{XY}.
\label{eq:1}
\end{equation}
Since $\mathrm{tr}_A\left[E_k^A\right]$ and $\mathrm{tr}_B\left[E_k^B\right]$ are positive, they are diagonalizable.   Let the eigenvalues of $\mathrm{tr}_A\left[E_k^A\right]$ and $\mathrm{tr}_B\left[E_k^B\right]$ be $\{\lambda^{A}_i\}$ and $\{\lambda^{B}_j\}$, respectively. Eq.\eqref{eq:1} implies
\begin{equation}
\max_i\{\lambda^A_i\}\max_j\{\lambda^B_j\}\leq 1.
\end{equation}
If $\max_i\{\lambda^A_i\}\leq 1$ and $\max_j\{\lambda^B_j\}\leq 1$, Eq.\eqref{eq:2} is satisfied. We assume that $\max_i\{\lambda^A_i\}\geq 1$ and define
\begin{eqnarray}
\tilde{E_k^A}=\frac{1}{\max_i\{\lambda^A_i\}}E_k^A,\,\,\,\,\,\,\,\,\,\,\,\,
\tilde{E_k^B}=\max_i\{\lambda^A_i\}E_k^B.
\end{eqnarray}
Due to $\tilde{E_k^A}\otimes \tilde{E_k^B}=E_k^A\otimes E_k^B$,  SEP operators $E_k^A$ and $E_k^B$ can be replaced by $\tilde{E_k^A}$ and $\tilde{E_k^B}$. Then the validity of Eq.\eqref{eq:2} is certified.

Now we present a construction of local operators for LOCC* as follows.
\begin{eqnarray}
A_{k|k}&=&E_k^A\,\,\,\,\,\,\mathrm{for}\,\,1\leq k\leq L\\
B_{k|k}&=&E_k^B\,\,\,\,\,\,\mathrm{for}\,\,1\leq k\leq L\\
A_{L+1|k}&=&\frac{1}{\dim(\mathcal{H}_A)}\left(\mathbb{I}_X-\mathrm{tr}_A\left[E_k^A\right]\right)\otimes\mathbb{I}_A\,\,\,\,\,\,\mathrm{for}\,\,1\leq k\leq L\\
B_{L+1|k}&=&\frac{1}{\dim(\mathcal{H}_B)}\left(\mathbb{I}_Y-\mathrm{tr}_B\left[E_k^B\right]\right)\otimes\mathbb{I}_B\,\,\,\,\,\,\mathrm{for}\,\,1\leq k\leq L\\
A_{a|b}&=&\frac{1}{\dim(\mathcal{H}_A)}\mathbb{I}_X\otimes\mathbb{I}_A\,\,\,\,\,\,\mathrm{for}\,\, b>L,L+2\leq a\leq L+3\,\, \mathrm{and}\,\,a+b\equiv 1\,\mod\,2\nonumber\\\\
B_{b|a}&=&\frac{1}{\dim(\mathcal{H}_B)}\mathbb{I}_Y\otimes\mathbb{I}_B\,\,\,\,\,\, \mathrm{for}\,\, a>L,L+2\leq b\leq L+3\,\, \mathrm{and}\,\,a+b\equiv 0\,\mod\,2\nonumber\\\\
A_{a|b}&=&0\,\,\,\,\,\,\,\mathrm{else},\\
B_{b|a}&=&0\,\,\,\,\,\,\, \mathrm{else}.
\end{eqnarray}

\section{CC* and classical quantum processes}
 A quantum process \cite{OFC} is a higher order map transforming quantum operations to another quantum operation.   We analyze the relationship between  CC* in LOCC* and classical quantum processes in the higher order formalism using the CJ operator representation.  We denote the set of all positive semi-definite operators on a Hilbert space $\mathcal{H}$ as $\mathbf{Pos}(\mathcal{H})$ and the set of all CJ operators on $\mathcal{H}_{I}\otimes\mathcal{H}_{O}$ representing CPTP maps from $\mathcal{H}_I$
to $\mathcal{H}_O$ as $CPTP(\mathcal{H}_{I}:\mathcal{H}_{O})=\{ Q \in \mathbf{Pos}(\mathcal{H}_{I}\otimes\mathcal{H}_{O})|\mathrm{tr}_{O} Q =\mathbb{I}_I\}$.

To be consistent with quantum mechanics, it has been proven in \cite{OFC} that a quantum process linking two local operations represented by the CJ operators $M_A\in \mathbf{L}(\mathcal{H}_{I_A}\otimes\mathcal{H}_{O_A})$ and $M_B\in \mathbf{L}(\mathcal{H}_{I_B}\otimes\mathcal{H}_{O_B})$ as inputs can be represented by a positive semi-definite operator $W \in \mathbf{Pos}(\mathcal{H}_{I_A}\otimes\mathcal{H}_{O_A}\otimes\mathcal{H}_{I_B}\otimes\mathcal{H}_{O_B})$ called a {\it process matrix}  satisfying
\begin{equation}
\forall M_A\in CPTP(\mathcal{H}_{I_A}:\mathcal{H}_{O_A}), \forall M_B\in CPTP(\mathcal{H}_{I_B}:\mathcal{H}_{O_B}),\,\, \mathrm{tr}\big[W(M_A^\mathrm{T}\otimes M_B^\mathrm{T})\big]=1,
\label{eq:process}
\end{equation}
where $Q^\mathrm{T}$ represents the transposition of $Q$ with respect to the computational basis used in defining the CJ representation. 

A deterministic joint quantum operation $\mathcal{M}: \mathbf{L}(\mathcal{H}_X\otimes\mathcal{M}_Y)\rightarrow \mathbf{L}(\mathcal{H}_A\otimes\mathcal{H}_B)$ can be regarded to be obtained by transforming two local operations  $\mathcal{A}: \mathbf{L}(\mathcal{H}_{I_A}\otimes\mathcal{H}_X)\rightarrow \mathbf{L}(\mathcal{H}_{O_A}\otimes\mathcal{H}_A)$ and $\mathcal{B}: \mathbf{L}(\mathcal{H}_{I_B}\otimes\mathcal{H}_Y)\rightarrow \mathbf{L}(\mathcal{H}_{O_B}\otimes\mathcal{H}_B)$ by  a quantum process $W \in \mathbf{Pos}(\mathcal{H}_{I_A}\otimes\mathcal{H}_{O_A}\otimes\mathcal{H}_{I_B}\otimes\mathcal{H}_{O_B})$  linking the Hilbert spaces $\mathcal{H}_{I_A}$, $\mathcal{H}_{O_A}$, $\mathcal{H}_{I_B}$ and $\mathcal{H}_{O_B}$.  The CJ operator $M \in \mathbf{L}(\mathcal{H}_X\otimes\mathcal{H}_Y\otimes \mathcal{H}_A\otimes\mathcal{H}_B)$ of $\mathcal{M}$ obtained by transforming the CJ operators of local operations $A \in \mathbf{L}(\mathcal{H}_{I_A}\otimes\mathcal{H}_X \otimes \mathcal{H}_{O_A}\otimes\mathcal{H}_A )$ and $B\in \mathbf{L}(\mathcal{H}_{I_B}\otimes\mathcal{H}_Y \otimes \mathcal{H}_{O_B}\otimes\mathcal{H}_B )$ corresponding to $\mathcal{A}$ and $\mathcal{B}$, respectively, by $W$ satisfying Eq.(\ref{eq:process})  is represented by 
\begin{equation}
M=\mathrm{tr}_{I_A,O_A,I_B,O_B}\big[W(A^{\mathrm{T}_{I_A,O_A}}\otimes B^{\mathrm{T}_{I_B,O_B}})\big],
\label{eq:LOCC**Wmatrix}
\end{equation}
where $Q^{\mathrm{T}_{I_X,O_X}}$ is the partial transposition of $Q$, taking the transposition only in terms of $\mathcal{H}_{I_X}$ and $\mathcal{H}_{O_X}$.

Quantum communication from Alice to Bob or Bob to Alice can be described by $W$ if $W$ is equivalent to the CJ operator  representing a quantum channel  linking two causally ordered local operations belonging to different parties. Moreover, quantum processes can represent ``quantum communication'' without causal order, which is not implementable when the partial order of local operations is fixed but is not be ruled out in the framework of quantum mechanics \cite{OFC}.    The restriction for quantum processes given by Eq.~\eqref{eq:process} can be interpreted to represent a new kind of causality required for linking local operations in quantum mechanics.

 We consider a special class of quantum process where Alice and Bob can communicate only  by ``classical'' states, namely, the states on the Hilbert spaces of $W$ ($\mathcal{H}_{I_A}$, $\mathcal{H}_{O_A}$, $\mathcal{H}_{I_B}$ and $\mathcal{H}_{O_B}$) are restricted to be diagonal with respect to the computational basis.  In such a case,  {\it classical} quantum processes representing classical communication between the parties can be described by a conditional probability distribution $p(i_A,i_B|o_A,o_B)$ satisfying
\begin{equation}
\sum_{i_A,i_B,o_A,o_B} p(i_A,i_B|o_A,o_B)A_{o_A|i_A}\otimes B_{o_B|i_B}\in CPTP(\mathcal{H}_X\otimes\mathcal{H}_Y:\mathcal{H}_A\otimes \mathcal{H}_B)
\label{eq:LOCC**}
\end{equation}
for all quantum instruments $\{A_{o_A|i_A}\}_{o_A}$ and $\{B_{o_B|i_B} \}_{o_B}$, where $\{A_{o_A|i_A}\in \mathbf{Pos}(\mathcal{H}_X\otimes \mathcal{H}_A)\}_{o_A}$ and $\{B_{o_B|i_B}\in \mathbf{Pos}(\mathcal{H}_Y\otimes \mathcal{H}_B)\}_{o_B}$ satisfying $\sum_{o_A}\mathrm{tr}_{A}[A_{o_A|i_A}]=\mathbb{I}_X$ and $\sum_{o_B}\mathrm{tr}_{B}[B_{o_B|i_B}]=\mathbb{I}_Y$.
The proof is given in Appendix B.7.

A  deterministic joint quantum operation $\mathcal{M}$ implementable by  a classical quantum process connecting two local operations is given in the form of
\begin{equation}
\mathcal{M}=\sum_{i_A,i_B,o_A,o_B}p(i_A,i_B|o_A,o_B)\mathcal{A}_{o_A|i_A}\otimes\mathcal{B}_{o_B|i_B},
\end{equation}
where $p(i_A,i_B|o_A,o_B)$ is a conditional probability distribution satisfying Eq.\eqref{eq:LOCC**}. We refer to the set of such joint quantum operations  as LOCQP.  By definition, LOCC* is a larger set than LOCQP  since the condition for $p(i_A,i_B|o_A,o_B)$ given by Eq.\eqref{eq:LOCC**} should be satisfied for {\it all} local operations in LOCQP while it should be satisfied only for {\it some} local operations in LOCC*.   In Appendix B.7, we prove that LOCQP is equivalent to a probability mixture of one-way LOCC in  a bipartite scenario.   Thus,  CC* used in implementing a deterministic joint quantum computation in LOCC* but not in LOCC cannot be represented by a classical quantum process.  This indicates that the required causal property for classical communication liking local operations is weakened for CC* comparing to the special relativistic causality and also to the restriction for classical quantum processes.

\section{CQP and LOCC}
This section consists of two parts. In Part A, we give a rigorous definition of a classical quantum process (CQP) and show a correspondence between a CQP and a conditional probability distribution. In Part B, we show that a set of joint quantum operations consisting local operations and CQP (LOCQP) is equivalent to a set of probabilistic mixture of one-way LOCC in bipartite cases
\subsubsection*{Part A}
A {\it classical quantum process} represents a link of local operations for a situation that Alice and Bob can communicate only by ``classical" states.  In this case, the states on the local input and output Hilbert spaces $\mathcal{H}_{I_A}$, $\mathcal{H}_{O_A}$, $\mathcal{H}_{I_B}$ and $\mathcal{H}_{O_B}$ of a process matrix $W$ are restricted to be diagonal with respect to the computational basis. By denoting the computational basis of $\mathcal{H}_{X}$ as $\{\ket{x}_{X}\}$,  local operations $ M_A\in CPTP(\mathcal{H}_{I_A}:\mathcal{H}_{O_A})$, $ M_B\in CPTP(\mathcal{H}_{I_B}:\mathcal{H}_{O_B})$, and a classical quantum process described by a diagonal process matrix $W\in \mathbf{Pos}(\mathcal{H}_{I_A}\otimes\mathcal{H}_{O_A}\otimes\mathcal{H}_{I_B}\otimes\mathcal{H}_{O_B})$ can be decomposed into
\begin{eqnarray}
 M_A&=&\sum_{i_A,o_A}p_A(o_A|i_A)\ket{i_A,o_A}\bra{i_A,o_A}_{I_A,O_A},\\
 M_B&=&\sum_{i_B,o_B}p_B(o_B|i_B)\ket{i_B,o_B}\bra{i_B,o_B}_{I_B,O_B},\\
W&=&\sum_{i_A,o_A,i_B,o_B}w(i_A,i_B,o_A,o_B)\ket{i_A}\bra{i_A}_{I_A}\otimes\ket{i_B}\bra{i_B}_{I_B}\otimes\ket{o_A}\bra{o_A}_{O_A}\otimes\ket{o_B}\bra{o_B}_{O_B},\nonumber\\ \label{eq:diagonalW}
\end{eqnarray}
 where $w(i_A,i_B,o_A,o_B)$ represents a diagonal element of $W$, and $p_A(o_A|i_A)$ and $p_B(o_B|i_B)$ are conditional probability distributions since $M_A$ and $M_A$ are CPTP maps.  In the following, we show that classical quantum processes correspond to conditional probability distributions, namely, $w(i_A,i_B,o_A,o_B)$ must be a conditional probability distribution so that Eq.~\eqref{eq:process} holds.
The non-negativity of $W$ implies $w(i_A,i_B,o_A,o_B)\geq 0$ and the condition given by Eq.~\eqref{eq:process} is equivalent to 
\begin{equation}
\sum_{i_A,i_B,o_A,o_B} w(i_A,i_B,o_A,o_B)p_A(o_A|i_A)p_B(o_B|i_B)=1
\end{equation}
for all conditional probability distributions $p_A(o_A|i_A)$ and $p_B(o_B|i_B)$.  By choosing $p_A(o_A|i_A)=\delta_{o_A,a}$ and $p_B(o_B|i_B)=\delta_{o_B,b}$, we obtain $\sum_{i_A,i_B} w(i_A,i_B,a,b)=1$ for arbitrary $a$ and $b$.  Thus $w(i_A,i_B,o_A,o_B)$ can be represented by a conditional probability distribution conditioned by $o_A$ and $o_B$ and we define
\begin{equation}
p(i_A,i_B|o_A,o_B):=w(i_A,i_B,o_A,o_B).
\end{equation}

Next, we define a set of deterministic joint quantum operations consisting of local operations linked by a classical quantum process, denoted by LOCC**. Local operations $A\in CPTP(\mathcal{H}_{I_A}\otimes\mathcal{H}_{X}:\mathcal{H}_{O_A}\otimes\mathcal{H}_{A})$ and $B\in CPTP(\mathcal{H}_{I_B}\otimes\mathcal{H}_{Y}:\mathcal{H}_{O_B}\otimes\mathcal{H}_{B})$ linked by a diagonal process matrix $W\in \mathbf{Pos}(\mathcal{H}_{I_A}\otimes\mathcal{H}_{I_B}\otimes\mathcal{H}_{O_A}\otimes\mathcal{H}_{O_B})$ given in the form of Eq.~\eqref{eq:diagonalW} can be decomposed into
\begin{eqnarray}
A&=&\sum_{i_A,o_A}A_{o_A|i_A}\otimes\ket{i_A,o_A}\bra{i_A,o_A}_{I_A,O_A},\\
B&=&\sum_{i_B,o_B}B_{o_B|i_B}\otimes\ket{i_B,o_B}\bra{i_B,o_B}_{I_B,O_B},
\end{eqnarray}
where $\{A_{o_A|i_A}\in \mathbf{Pos}(\mathcal{H}_X\otimes \mathcal{H}_A)\}_{o_A}$ and $\{B_{o_B|i_B}\in \mathbf{Pos}(\mathcal{H}_Y\otimes \mathcal{H}_B)\}_{o_B}$ are the CJ operators of quantum instruments since $A$ and $B$ are CPTP maps.
By straightforward calculation, the CJ operator of a deterministic joint quantum operation is written by
\begin{eqnarray}
M&=&\mathrm{tr}_{I_A,O_A,I_B,O_B}\big[W(A^{\mathrm{T}_{I_A,O_A}}\otimes B^{\mathrm{T}_{I_B,O_B}})\big]
\\
&=&\sum_{i_A,i_B,o_A,o_B}p(i_A,i_B|o_A,o_B)A_{o_A|i_A}\otimes B_{o_B|i_B},
\label{eq:def1_LOCC*}
\end{eqnarray}
 which shows the equivalence of the representations of $M$ given by Eq.~\eqref{eq:LOCC*CJ} and Eq.~\eqref{eq:LOCC**Wmatrix}  when the process is diagonal/classical.  LOCQP is defined by a set of deterministic joint quantum operations of which CJ operators can be represented by Eq.~\eqref{eq:def1_LOCC*}, where $p(i_A,i_B|o_A,o_B)$ satisfies
\begin{equation}
\sum_{i_A,i_B,o_A,o_B} p(i_A,i_B|o_A,o_B)p_A(o_A|i_A)p_B(o_B|i_B)=1
\label{eq:def1_CC**}
\end{equation}
for all conditional probability distributions $p_A(o_A|i_A)$ and $p_B(o_B|i_B)$.   

Further, the condition given by Eq.~\eqref{eq:def1_CC**} is equivalent to that of Eq.~\eqref{eq:LOCC**} if $p(i_A,i_B|o_A,o_B)$ is a conditional probability distribution.  This can be shown as follows.   By letting $\dim(\mathcal{H}_X)=\dim(\mathcal{H}_Y)=\dim(\mathcal{H}_A)=\dim(\mathcal{H}_B)=1$, $A_{o_A|i_A}=p(o_A|i_A)$ and $B_{o_B|i_B}=p(o_B|i_B)$, it is easily checked that Eq.~\eqref{eq:def1_CC**} holds if Eq.~\eqref{eq:LOCC**} holds.  To show the converse, we first check that a map decomposable in the form of Eq.~\eqref{eq:LOCC**} is completely positive.  This is also easily checked since every term in the summation is non-negative.   Then we show that a map decomposable in the form of Eq.~\eqref{eq:LOCC**} is trace preserving when Eq.~\eqref{eq:def1_CC**} holds in the following.
For any operator $\sigma\in \mathbf{L}(\mathcal{H}_X\otimes\mathcal{H}_Y)$,
\begin{eqnarray}
&\mathrm{tr}_{A,B}\left[\mathrm{tr}_{X,Y}\big[\sum_{i_A,i_B,o_A,o_B} p(i_A,i_B|o_A,o_B)(A_{o_A|i_A}\otimes B_{o_B|i_B}) \sigma^\mathrm{T}\big]\right]\\
=&\sum_{k,l}\lambda_{k,l}\sum_{i_A,i_B,o_A,o_B} p(i_A,i_B|o_A,o_B)\mathrm{tr}_{A,X}\big[A_{o_A|i_A}\rho_k^\mathrm{T}\big]\mathrm{tr}_{B,Y}\big[B_{o_B|i_B}\rho_l^\mathrm{T}\big]\\
=&\sum_{k,l}\lambda_{k,l}=\mathrm{tr}[\sigma],
\end{eqnarray}
where $\sigma$ is decomposed as $\sigma=\sum_{k.l}\lambda_{k,l}\rho_k\otimes\rho_l$ by using density operators $\{\rho_k\}_k$ as a basis of the linear space of the operator.  Note that $\mathrm{tr}_{A,X}\big[A_{o_A|i_A}\rho_k^T\big]$ is a conditional probability distribution conditioned by $i_A$ since it satisfies $\sum_{o_A}\mathrm{tr}_{A,X}\big[A_{o_A|i_A}\rho_k^T\big]=\mathrm{tr}_{A,X}\big[\sum_{o_A}A_{o_A|i_A}\rho_k^T\big]=\mathrm{tr}_{X}\big[\rho_k^T\big]=1$.

\subsubsection*{Part B}
We show a proof of the following lemma.

\begin{lemma}
LOCQP is equivalent to a probability mixture of one-way LOCC in bipartite cases.
\end{lemma}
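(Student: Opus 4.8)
The plan is to prove the two inclusions separately: (i) every probability mixture of one-way LOCC lies in LOCQP, and (ii) every element of LOCQP is such a probability mixture. Direction (i) is the easy half. A one-way LOCC map from Alice to Bob corresponds to the CC* given by $p(i_A,i_B|o_A,o_B) = p(i_B|o_A)\delta_{\text{const}}$-type distributions that only carry information forward; concretely a one-way channel from Alice to Bob is $p(i_A,i_B|o_A,o_B)=\delta_{i_A,0}\,r(i_B|o_A)$ for some stochastic matrix $r$, and one checks directly that this satisfies Eq.~\eqref{eq:def1_CC**} (equivalently Eq.~\eqref{eq:LOCC**}) for all local operations, since Bob's input never influences Alice and Alice's output is a genuine broadcast. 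A convex combination $\sum_\lambda q_\lambda \mathcal{M}_\lambda$ of one-way LOCC maps (mixing those from Alice to Bob and from Bob to Alice) is then realized by a single CC* obtained by attaching a shared classical label $\lambda$: enlarge the classical input/output alphabets so that both parties carry $\lambda$ through, use $p$ to distribute $\lambda$ with weights $q_\lambda$ and then apply the $\lambda$-th one-way channel. One must verify that the resulting $p$ still satisfies Eq.~\eqref{eq:LOCC**} for all quantum instruments — this reduces to the observation that a convex combination of valid classical quantum processes is again a valid classical quantum process, which follows from linearity of the trace-preservation condition in $p$.

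For direction (ii), the key is to exploit the ``for all local operations'' quantifier in Eq.~\eqref{eq:def1_CC**}. First I would reduce to the case where Alice's and Bob's local operations are just classical channels with trivial quantum input/output, as is done at the end of Appendix~B.7: plugging in $\dim(\mathcal{H}_X)=\dim(\mathcal{H}_Y)=\dim(\mathcal{H}_A)=\dim(\mathcal{H}_B)=1$ shows the constraint on $p$ is purely combinatorial, namely $\sum_{i_A,i_B,o_A,o_B}p(i_A,i_B|o_A,o_B)p_A(o_A|i_A)p_B(o_B|i_B)=1$ for all stochastic $p_A,p_B$. The plan is to show that this forces $p$ to be a \emph{causally separable} classical process: a convex mixture of a one-way channel from Alice to Bob (where $i_A$ is independent of everything, $i_B$ depends only on $o_A$) and a one-way channel from Bob to Alice. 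This is the bipartite analogue of the statement, used in Appendix~B.8, that bipartite causally non-separable classical quantum processes do not exist — in the classical bipartite case every valid process is a probability mixture of the two causal orders. I would prove this by a direct argument: choosing deterministic $p_A,p_B$ shows $p(i_A,i_B|o_A,o_B)$ concentrates appropriately; then analyzing how $p$ must behave under all deterministic choices (a finite extreme-point argument over the polytope of stochastic matrices $p_A,p_B$) pins down the structure. Once $p$ is a mixture $q_A \cdot (\text{Alice}\to\text{Bob channel}) + q_B\cdot(\text{Bob}\to\text{Alice channel})$, the corresponding LOCQP map $\mathcal{M}=\sum p(i_A,i_B|o_A,o_B)\mathcal{A}_{o_A|i_A}\otimes\mathcal{B}_{o_B|i_B}$ splits, by linearity in $p$, into $q_A$ times a one-way LOCC map from Alice to Bob plus $q_B$ times a one-way LOCC map from Bob to Alice, which is exactly a probability mixture of one-way LOCC.

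\textbf{The hard part} will be the finite/convexity argument establishing causal separability of every valid bipartite classical quantum process — i.e.\ showing the constraint ``$\sum p\,p_A\,p_B = 1$ for all $p_A,p_B$'' has no solutions beyond mixtures of the two causal orders. The slickest route is probably to reinterpret $1 - \sum p\,p_A\,p_B$ as a bilinear form in the parameters of $p_A$ and $p_B$ that must vanish identically; expanding $p$ in the basis dual to deterministic strategies and using that the deterministic strategies are the vertices of the stochastic polytopes, one gets a system of equations on the coefficients of $p$ whose solution set is precisely the causally separable cone. I would carry this out by first handling the smallest alphabet sizes by hand to see the pattern, then giving the general inductive or direct linear-algebra argument. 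A subtlety to watch is normalization bookkeeping: $p$ itself need not be, a priori, a probability distribution in $(i_A,i_B)$ for each $(o_A,o_B)$ until one shows it — but as already noted in Part~A of Appendix~B.7, choosing deterministic $p_A,p_B$ immediately forces $\sum_{i_A,i_B}p(i_A,i_B|o_A,o_B)=1$, so this is harmless. With causal separability in hand, the rest is the routine linearity argument sketched above, and assembling the two directions completes the proof that LOCQP equals the set of probability mixtures of one-way LOCC.
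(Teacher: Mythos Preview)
Your overall logical structure matches the paper's proof: both arguments reduce the claim to the causal separability of bipartite classical quantum processes, and then show by routine bookkeeping that a causally separable classical process yields a convex combination of one-way LOCC maps. The paper, however, does \emph{not} prove causal separability from scratch---it simply invokes the result of Oreshkov--Costa--Brukner \cite{OFC} that every bipartite classical (diagonal) process matrix decomposes as $W = qW_{A\to B}+(1-q)W_{B\to A}$, and then unpacks each piece: from the comb conditions Eq.~\eqref{eq:AtoB} one reads off that $p_{A\to B}(i_A,i_B|o_A,o_B)$ is independent of $o_B$ and factors as $p_{A\to B}(i_A)\,p'_{A\to B}(i_B|o_A,i_A)$, which is exactly what one needs to rewrite the resulting map in the one-way LOCC form $\sum_m \mathcal{A}_m\otimes\mathcal{B}_{|m}$.

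Your plan differs in that you propose to \emph{re-derive} bipartite classical causal separability via an extreme-point/bilinear argument on the constraint $\sum p\,p_A\,p_B=1$. This is a legitimate route and would make the proof self-contained, but be aware that you have correctly flagged it as the nontrivial step and your sketch (``expand in a basis dual to deterministic strategies, solve the resulting linear system'') stops short of a proof. The argument in \cite{OFC} proceeds instead by expanding $W$ in a Hilbert--Schmidt basis of traceless operators and showing that the CPTP constraint kills exactly those tensor-structure terms that would obstruct the convex decomposition; the diagonal case is a specialization of that. If you carry out your polytope argument you should find it equivalent, but the paper's choice to cite rather than reprove is the shorter path. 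The easy direction (i), which you spell out and the paper leaves implicit, is indeed immediate from the fact that each one-way classical channel already defines a valid process matrix and the set of process matrices is convex.
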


\begin{proof}
In \cite{OFC}, it has been shown that any classical quantum process is {\it causally separable}, i.e. the CJ operator of a classical quantum process $W$ can be decomposed into the form
\begin{equation}
W=qW_{A\rightarrow B}+(1-q)W_{B\rightarrow A},
\end{equation}
where $q\in [0,1]$, $W_{A\rightarrow B}\in \mathbf{Pos}(\mathcal{H}_{I_A}\otimes\mathcal{H}_{O_A}\otimes\mathcal{H}_{I_B}\otimes\mathcal{H}_{O_B})$ is diagonal with respect to the computational basis and satisfies the conditions given by 
\begin{eqnarray}
W_{A\rightarrow B}=\mathbb{I}_{O_B}\otimes W_{I_A,O_A,I_B},\,\,\,\,\,\,\,\,\,
\mathrm{tr}_{I_B}\left[W_{I_A,O_A,I_B}\right]=\mathbb{I}_{O_A}\otimes\rho_{I_A},\,\,\,\,\,\,\,\,\,
\mathrm{tr}_{I_A}\left[\rho_{I_A}\right]=1,
\label{eq:AtoB}
\end{eqnarray}
and the similar conditions are satisfied by $W_{B\rightarrow A}\in \mathbf{Pos}(\mathcal{H}_{I_A}\otimes\mathcal{H}_{O_A}\otimes\mathcal{H}_{I_B}\otimes\mathcal{H}_{O_B})$.  

In \cite{Chiribella1},   it has been proven that an operator $W_{A\rightarrow B}$ satisfying the conditions given by Eq.~\eqref{eq:AtoB}  but not necessarily being diagonal in the computational basis corresponds to a special type of quantum process called {\it quantum comb} where Alice's operation and Bob's operation are linked by quantum communication from Alice to Bob.   Thus a causally separable process can be interpreted as a probabilistic mixture of quantum communication from Alice to Bob and that from Bob to Alice. When a causally separable process is classical (diagonal with respect to the computational basis), the process can be interpreted as a probabilistic mixture of classical communication from Alice to Bob and that from Bob to Alice.

 Let denote the diagonal elements of $W_{A\rightarrow B}$ and $W_{B\rightarrow A}$ with respect to the computational basis  by $p_{A\rightarrow B}(i_A,i_B|o_A,o_B)$ and $p_{B\rightarrow A}(i_A,i_B|o_A,o_B)$, respectively.   It is easy to verify that  $p_{A\rightarrow B}(i_A,i_B|o_A,o_B)$ and $p_{B\rightarrow A}(i_A,i_B|o_A,o_B)$ are conditional probability distributions since $W_{A\rightarrow B}$ and $W_{B\rightarrow A}$ correspond to classical quantum processes.

Then $p(i_A,i_B|o_A,o_B)$ satisfying Eq.~\eqref{eq:LOCC**}
can be decomposed into
\begin{equation}
p(i_A,i_B|o_A,o_B)=qp_{A\rightarrow B}(i_A,i_B|o_A,o_B)+(1-q)p_{B\rightarrow A}(i_A,i_B|o_A,o_B).
\end{equation}
Eq.~\eqref{eq:AtoB} implies $p_{A\rightarrow B}(i_A,i_B|o_A,o_B)$ does not depend on $o_B$.  We define $p_{A\rightarrow B}(i_A,i_B|o_A):=p_{A\rightarrow B}(i_A,i_B|o_A,o_B)$.  The operator $W_{I_A,O_A,I_B}$ in Eq.~\eqref{eq:AtoB} is given by
\begin{equation}
W_{I_A,O_A,I_B}=\sum_{i_A,i_B,o_A}p_{A\rightarrow B}(i_A,i_B|o_A)\ket{i_A}\bra{i_A}_{I_A}\otimes\ket{o_A}\bra{o_A}_{O_A}\otimes\ket{i_B}\bra{i_B}_{I_B}.
\end{equation}
Due to Eq.~\eqref{eq:AtoB}, $\sum_{i_B}p_{A\rightarrow B}(i_A,i_B|o_A)$ does not depend on $o_A$.  We define $p_{A\rightarrow B}(i_A):=\sum_{i_B}p_{A\rightarrow B}(i_A,i_B|o_A)$.
We further define a set $X=\{x|p_{A\rightarrow B}(x)\neq 0\}$ and
\begin{eqnarray}
p'_{A\rightarrow B}(i_B|o_A,i_A)&:=&\frac{p_{A\rightarrow B}(i_A,i_B|o_A)}{p_{A\rightarrow B}(i_A)}\,\,\,\,(i_A\in X)\\
p'_{A\rightarrow B}(i_B|o_A,i_A)&:=&p'_{A\rightarrow B}(i_B)\,\,\,\,(i_A\notin X),
\end{eqnarray}
where $p'_{A\rightarrow B}(y)$ is an arbitrary probability distribution.  Note that $p'_{A\rightarrow B}(i_B|o_A,i_A)$ satisfies all properties required for a conditional probability distribution. Thus, we have
\begin{equation}
p_{A\rightarrow B}(i_A,i_B|o_A,o_B)=p_{A\rightarrow B}(i_A)p'_{A\rightarrow B}(i_B|o_A,i_A),
\end{equation}
and similarly,
\begin{equation}
p_{B\rightarrow A}(i_A,i_B|o_A,o_B)=p_{B\rightarrow A}(i_B)p'_{B\rightarrow A}(i_A|o_B,i_B).
\end{equation}
Combining these results, a deterministic joint quantum operation $\mathcal{M}$ in LOCQP is given by
\begin{eqnarray}
\mathcal{M}&=&q\sum_{i_A,i_B,o_A,o_B}p_{A\rightarrow B}(i_A,i_B|o_A,o_B)\mathcal{A}_{o_A|i_A}\otimes\mathcal{B}_{o_B|i_B}\nonumber\\
&&+(1-q)\sum_{i_A,i_B,o_A,o_B}p_{B\rightarrow A}(i_A,i_B|o_A,o_B)\mathcal{A}_{o_A|i_A}\otimes\mathcal{B}_{o_B|i_B}.
\end{eqnarray}
Since  $\sum_{i_A,i_B,o_A,o_B}p_{A\rightarrow B}(i_A,i_B|o_A,o_B)\mathcal{A}_{o_A|i_A}\otimes\mathcal{B}_{o_B|i_B}=\sum_m\mathcal{A}_m\otimes\mathcal{B}_{|m}$ where 
\begin{equation}
m:=(i_A,o_A),~\mathcal{A}_{i_A,o_A}:=p_{A\rightarrow B}(i_A)\mathcal{A}_{o_A|i_A},~\mathcal{B}_{|i_A,o_A}:=\sum_{i_B,o_B}p'_{A\rightarrow B}(i_B|o_A,i_A)\mathcal{B}_{o_B|i_B}
\end{equation}
represents operations in one-way LOCC,  the deterministic joint quantum operation $\mathcal{M}$ in LOCQP is concluded to be a probability mixture of one-way LOCC.

\end{proof}

\section{Classical causally non-separable process}
We analyze a tripartite case for describing deterministic joint quantum operations.
Consider Alice performs a CPTP map $A\in CPTP(\mathcal{H}_1\otimes\mathcal{H}_7:\mathcal{H}_2\otimes\mathcal{H}_8)$, Bob performs a CPTP map $B\in CPTP(\mathcal{H}_3\otimes\mathcal{H}_9:\mathcal{H}_4\otimes\mathcal{H}_{10})$, Charlie performs a CPTP map $C\in CPTP(\mathcal{H}_5\otimes\mathcal{H}_{11}:\mathcal{H}_6\otimes\mathcal{H}_{12})$ and spaces $\mathcal{H}_i$ $(7\leq i\leq 12)$ are connected by a process matrix $W\in \mathbf{Pos}(\mathcal{H}_7\otimes\mathcal{H}_8\otimes\mathcal{H}_9\otimes\mathcal{H}_{10}\otimes\mathcal{H}_{11}\otimes\mathcal{H}_{12})$, where
$\dim(\mathcal{H}_i)=2$ $(7\leq i\leq 12)$.
\begin{equation}
W=\frac{1}{8}\left[\mathbb{I}_{7,8,9,10,11,12}+Z_7\mathbb{I}_8\mathbb{I}_9Z_{10}Z_{11}Z_{12}+Z_7Z_8Z_9\mathbb{I}_{10}\mathbb{I}_{11}Z_{12}+\mathbb{I}_7Z_8Z_9Z_{10}Z_{11}\mathbb{I}_{12}\right]
\end{equation}
is known to be a classical causally non-separable process matrix \cite{Amin}, where we abbreviate $\otimes$.
This process matrix corresponds to CC* given in Eq.~\eqref{eq:nonSEP}.

\section{Multipartite LOCC*}
 For a deterministic joint quantum operation $\mathcal{M}$ in SEP, the corresponding CJ  operator $M$ is given by
\begin{equation}
M=\sum_{l=1}^L E_l^{(1)}\otimes\cdots\otimes E_l^{(N)},
\label{eq:defSEP}
\end{equation}
where $E_l^{(n)}\in \mathbf{L}(\mathcal{H}_{I_n}\otimes \mathcal{H}_{O_n})$ satisfies
\begin{eqnarray}
E_l^{(n)}\geq0,
\end{eqnarray}
for all $l$ and $n$.    It is easy to see that LOCC* represented by Eq.\eqref{eq:mLOCC*} is a subset of SEP.   We shall show the converse that it is possible to construct a probability distribution $p(i_1,\cdots,i_N|o_1,\cdots,o_N)$ and local operators $\{\{E_{o_n|i_n}^{(n)}\in \mathbf{L}(\mathcal{H}_{I_n}\otimes \mathcal{H}_{O_n})\}_{o_n}\}_{n=1}^N$ in LOCC* for any given SEP element.

Without loss of generality, we can assume that
\begin{eqnarray}
\mathrm{tr}_{O_n}\left[E_l^{(n)}\right]\leq \mathbb{I}_{I_n},
\label{eq:2}
\end{eqnarray}
for all $l$ and $n$.

Now we present a construction of local operators for LOCC* as follows.
\begin{eqnarray}
p(i_1,\cdots,i_N|o_1,\cdots,o_N)&=&\prod_{n=1}^N\delta_{i_n,o_n}\\
E_{l|l}^{(n)}&=&E_l^{(n)}\,\,\,\,\,\,\mathrm{for}\,\,1\leq n\leq N,\,\,1\leq l\leq L \\
E_{L+1|l}^{(n)}&=&\frac{1}{\dim(\mathcal{H}_{O_n})}\left(\mathbb{I}_{I_n}-\mathrm{tr}_{O_n}\left[E_l^{(n)}\right]\right)\otimes\mathbb{I}_{O_n}\nonumber\\
&&\mathrm{for}\,\,1\leq n\leq N,\,\,1\leq l\leq L+1\\
E_{1|L+1}^{(n)}&=&\frac{1}{\dim(\mathcal{H}_{O_n})}\mathbb{I}_{I_n}\otimes\mathbb{I}_{O_n}\,\,\,\,\,\,\mathrm{for}\,\,1\leq n\leq N\\
E_{o_n|i_n}^{(n)}&=&0\,\,\,\,\,\,\,\mathrm{else}.
\end{eqnarray}

\newpage

\end{document}